\newtheorem*{proposition*}{Proposition}
\newtheorem*{theorem*}{Theorem}
\newtheorem*{conjecture*}{Conjecture}
\newtheorem*{claim*}{Claim}
\newtheorem*{lemma*}{Lemma}
\newtheorem*{corollary*}{Corollary}
\newtheorem{theorem}{Theorem}[section]
\newtheorem{proposition}[theorem]{Proposition}
\newtheorem{lemma}[theorem]{Lemma}
\newtheorem{corollary}[theorem]{Corollary}
\newtheorem*{definition*}{Definition}
\newtheorem*{assumption*}{Assumption}
\newtheorem*{remark*}{Remark}
\newtheorem{remark}{Remark}[section]
\newcommand{\R}{\mathbb{R}}
\newcommand{\s}{\mathbb{S}}
\newcommand{\N}{\mathbb{N}}
\newcommand{\scrM}{\mathcal{M}}
\newcommand{\scrQ}{\mathcal{Q}}
\DeclareMathOperator{\tr}{\textnormal{tr}}
\DeclareMathOperator{\arctanh}{\textnormal{arctanh}}
\newcommand{\snabla}{\slashed{\nabla}}
\numberwithin{equation}{section}
\begin{document}

\title[Linear waves on constant radius limits]{Linear waves on constant radius limits of cosmological black hole spacetimes}
\author{Dejan Gajic}
\address{University of Cambridge, Department of Applied Mathematics and Theoretical Physics, Wilberforce Road, Cambridge CB3 0WA, United Kingdom}
\email{D.Gajic@damtp.cam.ac.uk}

\begin{abstract}
In this paper we consider the Klein-Gordon equation on spherically symmetric background spacetimes with a constant area radius. The spacetimes under consideration are Nariai and Pleba\'nski-Hacyan, and can be considered constant radius limits of Reissner-Nordstr\"om-de Sitter spacetimes. We prove boundedness in the case of a non-negative Klein Gordon mass and decay unless the mass is zero. In the latter case we prove decay of solutions that are supported on all harmonic modes with angular momentum $l\geq 1$. We show that the $l=0$ modes of solutions to the massless Klein-Gordon equation do not decay. They are subject to conservation laws along degenerate Killing horizons. We apply the estimates in Nariai to give decay of solutions to the massive Klein-Gordon equation on an $n$-dimensional de Sitter background, using only the vector field method and with no restrictions on the positive Klein-Gordon mass.
\end{abstract}

\maketitle

\tableofcontents

\section{Introduction}

Understanding the asymptotics and global causal structures present in solutions $(\mathcal{M},g)$ to the vacuum Einstein equations 
\begin{equation}
\label{eq:einsteineqs}
\textnormal{Ric}(g)-\frac{1}{2}R g+\Lambda g=0,
\end{equation}
where $\Lambda$ is the \emph{cosmological constant}, is intimately connected to the question of stability, in the context of the Cauchy problem. In the $\Lambda=0$ setting, an expectation, dubbed the \emph{Final State conjecture}, is that every vacuum black hole solution eventually settles down to a member of the Kerr family.

A necessary precondition for this expectation to hold up is the stability of subextremal Kerr. This remains an open problem. See the lecture notes \cite{dafrod5} for more details.

The picture in the $\Lambda>0$ case is more intricate. The spherically symmetric vacuum setting already provides a wider variety of solutions. Unlike members of the Schwarzschild family, the analogous spherically symmetric Schwarzschild-de Sitter spacetimes (see Figure \ref{fig:schdes}) have an \emph{extremal} limit, ``extremal Schwarzschild-de Sitter", composed of an (infinite) sequence of collapsing and expanding spacetime regions. The metric in the collapsing regions can be constructed as the limit of a sequence of subextremal Schwarzschild-de Sitter interior regions, in which the difference between the event horizon and the cosmological horizon radii goes to zero. Analogously, an expanding region of extremal Schwarzschild-de Sitter can be constructed as a limit of the expanding regions of a sequence of subextremal Schwarzschild-de Sitter spacetimes, in which the difference between the radii of the horizons goes to zero. 
\begin{figure}
\begin{center}
\includegraphics[width=3.5in]{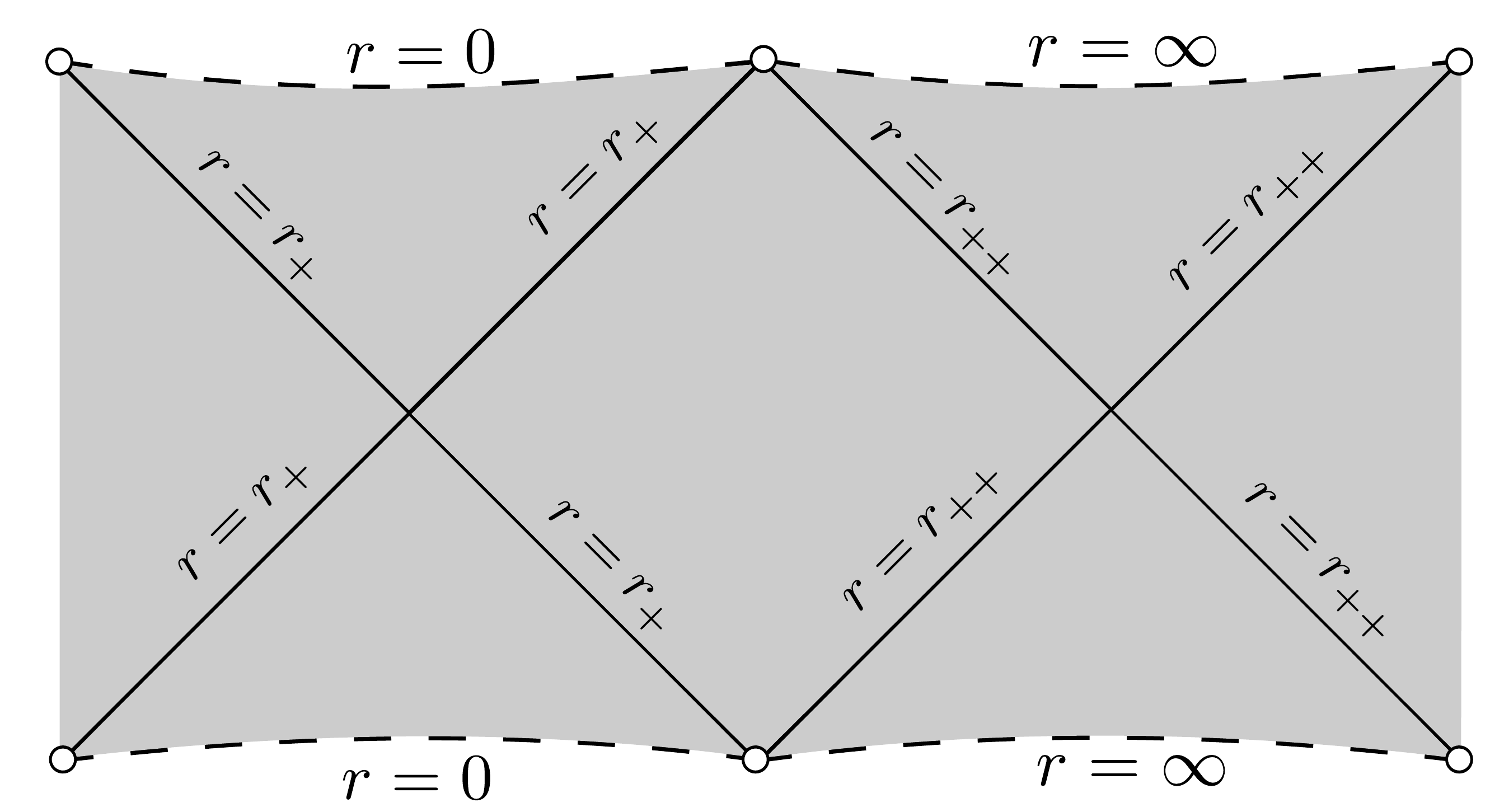}
\caption{The Penrose diagram of subextremal Schwarzschild-de Sitter, where each point in the diagram corresponds to a sphere of radius $r$. The radius of the event horizon is $r_+$ and the radius of the cosmological horizon is $r_{++}$.}
\label{fig:schdes}
\end{center}
\end{figure}
There exists another, lesser-known limit of subextremal Schwarzschild-de Sitter spacetimes, which we will call the \emph{constant radius limit}.

A Nariai spacetime is the maximal extension of a region that can be viewed as the limit of the regions \emph{between} the event and cosmological horizons of a sequence of subextremal Schwarzschild-de Sitter spacetimes, in which the difference between the horizon radii goes to zero. The Nariai radius is constant everywhere and equal to the radius of the event horizons of the corresponding extremal limit. Nariai is a homogeneous, geodesically complete spacetime that was first constructed in \cite{nar1,nar2}. Ginsparg-Perry showed in \cite{ginper1} that Nariai arises as a constant radius limit. 

If we add charge and consider the electrovacuum Einstein equations with $\Lambda >0$, we find additional constant radius limits of Reissner-Nordstr\"om-de Sitter: \emph{charged} Nariai, \emph{cosmological} Bertotti-Robinson and Pleba\'nski-Hacyan \cite{plebhac1}. Similarly, in the electrovacuum $\Lambda=0$ setting, the constant radius limit of Reissner-Nordstr\"om is the (regular) Bertotti-Robinson spacetime, first described in \cite{ber1,rob1}. We refer to all these spacetimes as \emph{constant radius spacetimes}. See Section \ref{sec:birkh} for a definition and an overview of their main properties.

Motivated by the strong cosmic censorship conjecture in the cosmological setting, Dafermos-Rendall considered the Einstein-Vlasov matter model with $\Lambda>0$, describing collisionless dust, in spherical, hyperbolic and planar symmetry in \cite{dafrend1}. They showed in particular that asymptotically extremal Schwarzschild-de Sitter black hole regions and asymptotically Nariai regions can form in the evolution of spherically symmetric Einstein-Vlasov data, even for expanding data that is arbitrarily close to homogeneity. Moreover, they proved that the formation of such regions constitutes the only possible obstruction to the validity of the strong cosmic censorship conjecture for this matter model. See Theorem 1.3 of \cite{dafrend1}.

The role of constant radius spacetimes in understanding the asymptotics and global structures of solutions to (\ref{eq:einsteineqs}) should therefore not be underestimated. A natural starting point is to address their stability properties.

The numerical study \cite{bey2} investigated spacetimes arising from perturbations of Nariai initial data, with topology $\s^1\times \s^2$, within the Gowdy symmetry class. Note however that the restriction of Gowdy symmetry sidesteps an important feature present in the dynamics under spherical symmetry, namely the formation of asymptotically Schwarzschild-de Sitter black holes.

Results pertaining to the nonlinear stability of solutions to (\ref{eq:einsteineqs}) were first obtained by Friedrich in \cite{fried1} for de Sitter, with the use of conformal methods. The first full nonlinear stability result in the $\Lambda=0$ case followed soon after in the monumental proof of global stability of the Minkowski spacetime by Christodoulou-Klainerman in \cite{chrklai1}. Their proof relies on quantitative estimates for connection and curvature quantities and motivates the study of quantitative decay of solutions to the linear wave equation
\begin{equation}
\label{eq:waveequation}
\square_g\psi=0,
\end{equation}
in a sufficiently robust setting. Indeed, (\ref{eq:waveequation}) on a fixed background should be viewed as a ``poor man's linearisation'' of the Einstein equations. It is in this spirit that we initiate a study of (\ref{eq:waveequation}), and more generally the Klein-Gordon equation,
\begin{equation}
\label{eq:kgequation}
(\square_g-\mu^2)\psi=0,
\end{equation}
with $\mu\in \R$, on fixed constant radius spacetime backgrounds.

\subsection{Previous results on the wave equation in black hole spacetimes}
\label{sec:prevres}

The study of (\ref{eq:waveequation}) on black hole backgrounds has been a field of active research for many years.

The first mode stability results on Schwarzschild and Kerr backgrounds were obtained in \cite{regwhe1, vish1, whi1}. Boundedness of the full solution to (\ref{eq:waveequation}) was obtained by Kay-Wald \cite{kaywald1} and in a more robust and complete setting by Dafermos-Rodnianski in \cite{dafrod2}, who introduced the celebrated red-shift vector field. More recently, decay properties have been proved in Schwarzschild and Kerr, see \cite{bluester1,dafrod3, dafrod4,luk1,marmettattoh1, toh1,dafrod5, dafrodsch1,andblue1}. The results in Schwarzschild can be extended to subextremal Reissner-Norstr\"om, see \cite{dafrod5} and \cite{bluesof1}.

Of particular interest is the case of an extremal black hole background. A mathematical analysis of (\ref{eq:waveequation}) on extremal black hole backgrounds was first performed by Aretakis in a series of papers \cite{are1,are2,are4,are5}. Aretakis established quantitative decay rates in time for $\psi$ and all its derivatives outside the event horizon, and moreover for $\psi$ and its tangential derivatives along the event horizon. Perhaps most surprisingly, non-decay was established for transversal derivatives $\partial_r\psi$ along the event horizon, together with blow-up of $\partial_r^k\psi$, for $k\geq2$. These instability properties have been extended to the Maxwell equations and linearised gravity by Lucietti-Reall in \cite{lure1} and to higher-dimensional black hole spacetimes by Murata in \cite{mu1}. Moreover, the instability was shown to also hold for initial data supported away from the event horizon in a numerical setting, by Lucietti-Murata-Reall-Tanahashi in \cite{lumure1} (this result was soon after proved in \cite{are5}).

A mathematical analysis of (\ref{eq:waveequation}) in the region between the event horizon and cosmological horizon of Schwarzschild-de Sitter was carried out by Bony-H\"afner \cite{bohaf1} in a scattering theory framework. See also results by Melrose-S\'a Barreto-Vasy \cite{mebava1}. They proved exponential decay in time of solutions $\psi-\psi_0$ to (\ref{eq:waveequation}) away from the event horizon, where $\psi_0$ is the zeroth harmonic mode. In \cite{dafrod6} Dafermos-Rodnianski included the event horizon in their analysis, using the vector field method. They established decay in time, faster than any given polynomial rate. Exponential decay in time was proved in slowly rotating Kerr-de Sitter by Dyatlov in \cite{dya1,dya3}, including the event horizon. An analysis of (\ref{eq:waveequation}) and (\ref{eq:kgequation}) in the expanding region of Schwarzschild-de Sitter and Kerr-de Sitter was performed by Schlue in \cite{schlue1}.

The Klein-Gordon equation with a non-positive mass,
\begin{equation}
\label{eq:kgnonposmass}
(\square_g+\mu^2)\psi=0,
\end{equation}
has been studied in the $\Lambda<0$ setting on asymptotically AdS black hole spacetimes in \cite{hol1,holsmu3,holwar1,holsmu4}. In particular, Holzegel \cite{hol1} and Holzegel-Warnick \cite{holwar1} obtained boundedness of solutions to (\ref{eq:kgnonposmass}) in Kerr-AdS and Holzegel-Smulevici \cite{holsmu3,holsmu4} established logarithmic decay in time, showing moreover that generic solutions cannot decay faster than logarithmically.

\subsection{Properties of constant radius spacetimes}
\label{sec:birkh}
An $n$-dimensional de Sitter spacetime is denoted by $dS_n$, and an $n$-dimensional anti-de Sitter spacetime is denoted by $AdS_n$. We define the constant radius spacetimes $(\mathcal{M},g)$ to be solutions of the electrovacuum Einstein equations with $\Lambda>0$ that are isometric to warped products $\mathcal{\mathcal{Q}}\times_r \s^2$, where the spherical area radius $r=r_0$ is constant and $\mathcal{Q}$ is a 2-dimensional Lorentzian manifold that is isometric to either $dS_2$, $AdS_2$ or $\R^{1+1}$ with a constant Gaussian curvature $K$.

Table \ref{tbl:cr} gives an overview of the geometry of $\mathcal{Q}$ corresponding to each combination of $\Lambda$ and charge parameter $e\in \R$.\footnote{The charge parameter $e\in \R$ determines uniquely the electromagnetic tensor in spherically symmetric electrovacuum spacetimes.} The values appearing in Table \ref{tbl:cr} are derived in Section \ref{sec:derivation}. Moreover, the constant curvature $K$ is given by
\begin{equation}
\label{eq:K}
K=r_0^{-2}(1-2e^2 r_0^{-2}).
\end{equation}

\begin {table}[h!]
\caption{The relation between $\mathcal{Q}$, $r_0$ and the parameters $\Lambda>0$ and $e\in \R$.}
\centering
\begin{tabular}{ l | l | l| c | c }
\label{tbl:cr}
$\Lambda$ & $e^2$ & $r_0^2$ & $\mathcal{Q}$ & Spacetime \\
\hline                     
$=0$ & $\neq 0$ & $=e^2$ & $AdS_2$ & Bertotti-Robinson\\
$>0$ & $< \frac{1}{2 \Lambda}$ & $=\frac{1}{2\Lambda}\left(1-\sqrt{1-4\Lambda e^2}\right)<2e^2$ & $AdS_2$ & cosmological\\
\vspace{1pt} & \vspace{1pt} & \vspace{1pt} & \vspace{1pt} & Bertotti-Robinson\\
$>0$ & $< \frac{1}{2 \Lambda}$ & $=\frac{1}{2\Lambda}\left(1-\sqrt{1-4\Lambda e^2}\right)>2e^2$ & $dS_2$ & charged Nariai\\
$>0$ & $< \frac{1}{2 \Lambda}$ & $=\frac{1}{2\Lambda}\left(1+\sqrt{1-4\Lambda e^2}\right)$ & $dS_2$ & charged Nariai\\
$>0$ & $=0$ &=$\frac{1}{\Lambda}$ & $dS_2$ & Nariai\\
$>0$ & $=\frac{1}{2 \Lambda}$ & $=\frac{1}{2 \Lambda}$ & $\R^{1+1}$ & Pleba\'nski-Hacyan
\end{tabular}
\end{table}

All constant radius spacetimes are geodesically complete and homogeneous. Their causal structures are represented in Figure \ref{fig:cr}. The Penrose diagrams of (cosmological) Bertotti-Robinson, (charged) Nariai, and Pleba\'nski-Hacyan are identical to the Penrose diagrams of $AdS_n$, $dS_n$ and $\R^{n+1}$, respectively, where $n=2$. These follow from constructing conformal compactifications of $AdS_2$, $dS_2$ and $\R^{1+1}$, and embedding them in a compact region of $\R^2$. See \cite{hael} for a construction in the $n=4$ case.\footnote{When $n>2$ the conformal compactifications of $AdS_n$, $dS_n$ and $\R^{n+1}$ are instead embedded in $\R\times \s^{n-1}$.} 

 \begin{figure}[h!]
\centering
\begin{subfigure}[b]{0.3\textwidth}
\centering
\includegraphics[width=1in]{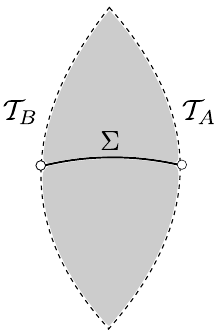}
\caption{(Cosmological) Bertotti-Robinson}
\end{subfigure}
\begin{subfigure}[b]{0.3\textwidth}
\includegraphics[width=\textwidth]{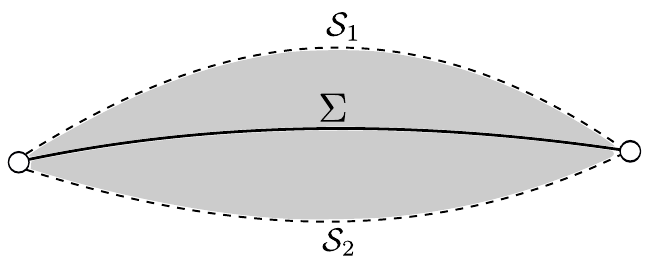}
\caption{(Charged) Nariai}
\end{subfigure}
\begin{subfigure}[b]{0.3\textwidth}
\centering
\includegraphics[width=\textwidth]{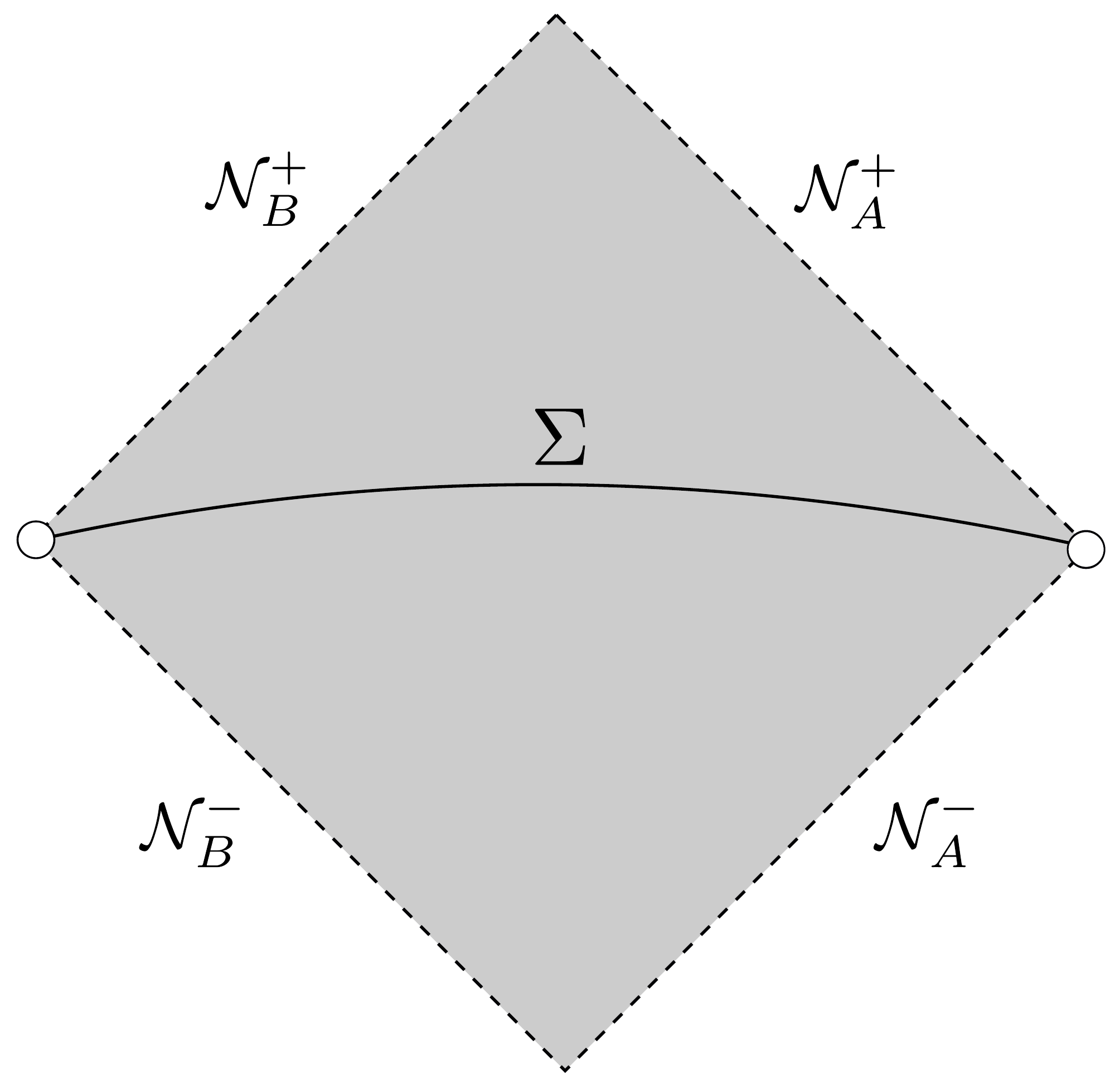}
\caption{Pleba\'nski-Hacyan}
\end{subfigure}
\caption{The Penrose diagrams of the electrovacuum spherically symmetric constant radius spacetimes. Each point on the diagram represents a sphere of radius $r$, where $r$ has the \emph{same} value everywhere. The constant radius spacetimes are geodesically complete and the dashed lines represent points in the spacetimes that are reached by geodesics in infinite affine time.}
\label{fig:cr}
\end{figure}

In global coordinates $(\tilde{t},\tilde{x})$ on $dS_2$, with $\tilde{t}\in \R$, $\tilde{x}\in \R$, and $(\theta,\phi)$ the standard spherical polar coordinates on $\s^2$, the metric on (charged) Nariai takes the form:
\begin{equation*}
g_{\textnormal{Nariai}}=K^{-1}(-d\tilde{t}^2+\cosh^2\tilde{t}d\tilde{x}^2)+r_0^2(d\theta^2+\sin^2\theta d\phi^2).
\end{equation*}
Here, $r_0^2(e,\Lambda)$ takes on the values given in Table \ref{tbl:cr} and $K(e,\Lambda)$ is given by (\ref{eq:K}).

Similarly, there exists global coordinates $(\tilde{t},\tilde{x})$ on $AdS_2$, with $\tilde{t}\in \R$, $\tilde{x}\in \R$, such that the metric on (cosmological) Bertotti-Robinson is given by,
\begin{equation*}
g_{\textnormal{Bertotti-Robinson}}=|K|^{-1}(-\cosh^2\tilde{x} d\tilde{t}^2+d\tilde{x}^2)+r_0^2(d\theta^2+\sin^2\theta d\phi^2).
\end{equation*}

Finally, the metric on Pleba\'nski-Hacyan is simply given by
\begin{equation*}
g_{\textnormal{Pleba\'nski-Hacyan}}=-dt^2+dx^2+r_0^2(d\theta^2+\sin^2\theta d\phi^2),
\end{equation*}
where $(t,x)$ are global coordinates on $\R^{1+1}$, with $t\in\R$ and $x\in \R$.

Constant radius spacetimes are often forgotten in the literature, when appealing to Birkhoff's Theorem for spherically symmetric electrovacuum spacetimes with $\Lambda\geq 0$. Birkhoff's Theorem is stated as follows:
\begin{theorem}[Birkhoff's Theorem]
A spherically symmetric solution $(\mathcal{M},g)$ of the electrovacuum Einstein equations with a non-negative cosmological constant $\Lambda$ is locally isometric to a Reissner-Nordstr\"om-de Sitter solution with parameters $\Lambda\geq 0$ and $e,M\in \R$, \textbf{or a constant radius spacetime with parameters $\Lambda\geq 0$ and $e\in \R$}.
\end{theorem}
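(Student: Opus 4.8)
The plan is to prove Birkhoff's theorem by reducing the electrovacuum Einstein equations under spherical symmetry to a system of ODEs in the area radius $r$, and then carrying out a careful case distinction according to whether $\mathrm{d}r$ is nondegenerate or not. Concretely, I would write $(\mathcal{M},g)$ as a warped product $\mathcal{Q}\times_r \s^2$ with quotient metric $g_{\mathcal{Q}}$ on a $2$-dimensional Lorentzian manifold $\mathcal{Q}$, and regard the area radius $r\colon\mathcal{Q}\to(0,\infty)$ as a function. The electromagnetic field of a spherically symmetric electrovacuum solution is, by the symmetry reduction of the Maxwell equations, purely ``electric'' and determined by a single charge parameter $e\in\R$, with energy-momentum tensor $T_{\mu\nu}$ quadratic in $e\,r^{-2}$; this is the standard fact alluded to in the footnote defining $e$. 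Feeding this stress tensor into \eqref{eq:einsteineqs} with $\Lambda\geq 0$ yields: (i) a Hamiltonian-type constraint expressing $g_{\mathcal{Q}}(\nabla r,\nabla r)$ algebraically in terms of $r$, $e$, $\Lambda$ and a constant of integration $M$ — the ``mass function'' — and (ii) an equation for the Hessian of $r$ relating $\nabla^2 r$ to $g_{\mathcal{Q}}$ with an $r$-dependent coefficient, together with the statement that $g_{\mathcal{Q}}$ has Gaussian curvature determined by $r$, $e$, $\Lambda$.

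The case distinction is the heart of the argument. \textbf{Case 1: $\mathrm{d}r\neq 0$ on a dense open set.} Then $r$ can be used as a coordinate on $\mathcal{Q}$; the Hessian equation for $r$ forces the metric into the form $g_{\mathcal{Q}}=-D(r)\,\mathrm{d}t^2+D(r)^{-1}\mathrm{d}r^2$ with $\partial_t$ a Killing field (one recovers staticity/homogeneity in the standard way from $\nabla^2 r \propto g_{\mathcal{Q}}$), and the constraint identifies $D(r)=1-\tfrac{2M}{r}+\tfrac{e^2}{r^2}-\tfrac{\Lambda}{3}r^2$, i.e. the Reissner--Nordström--de Sitter form. \textbf{Case 2: $\mathrm{d}r\equiv 0$}, i.e. $r\equiv r_0$ is constant. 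Then the Hamiltonian constraint (i) becomes the algebraic relation $0=1+\tfrac{e^2}{r_0^2}\cdot(\text{something})-\Lambda r_0^2$ — precisely the two-root condition in Table~\ref{tbl:cr} — pinning $r_0^2$ to one of the tabulated values as a function of $(\Lambda,e)$; and equation (ii) degenerates to the statement that $g_{\mathcal{Q}}$ has \emph{constant} Gaussian curvature $K=r_0^{-2}(1-2e^2r_0^{-2})$, which is \eqref{eq:K}. A $2$-dimensional Lorentzian manifold of constant curvature $K$ is locally isometric to $dS_2$ ($K>0$), $AdS_2$ ($K<0$), or $\R^{1+1}$ ($K=0$); reading off the sign of $K$ from the tabulated $r_0^2$ reproduces exactly the ``$\mathcal{Q}$'' column of Table~\ref{tbl:cr}. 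This is the constant radius spacetime alternative, which I would note is the case that is ``often forgotten'' in textbook statements because one implicitly assumes $\mathrm{d}r$ has isolated zeros.

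I would then remark that the dichotomy is genuinely local — on the open set where $\mathrm{d}r\neq 0$ one is in Case 1, and if $\mathrm{d}r$ vanishes on an open set then by the constraint $r$ is locally constant there and one is in Case 2 — and that a connectedness/analyticity argument (the metric coefficients solve analytic ODEs) shows the two cases cannot be glued along an interface: either $r$ is constant everywhere on a connected component, or $\mathrm{d}r\neq 0$ on a dense set. Since the theorem claims only \emph{local} isometry, this suffices, and the detailed verification of (i), (ii) and the values in Table~\ref{tbl:cr} is deferred to Section~\ref{sec:derivation}.

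The main obstacle I anticipate is not any single computation but the bookkeeping in Case 2: one must check that the full set of Einstein equations — not merely the trace or the $\s^2$-components — is consistent when $r\equiv r_0$, i.e. that the Hessian equation, the spherical-block equation and the quotient-block equation all collapse to the single curvature condition \eqref{eq:K} plus the algebraic fixing of $r_0^2$, with no further constraints that would rule out some tabulated branch. Equivalently, one must verify that for each admissible $(\Lambda,e)$ the warped product built from the relevant constant-curvature $\mathcal{Q}$ and the sphere of radius $r_0$ does solve \eqref{eq:einsteineqs} with the electric field of charge $e$ — a check most cleanly organized by computing the Ricci tensor of a warped product in terms of $\mathrm{Ric}(g_{\mathcal{Q}})$, $\nabla^2 r$, $|\nabla r|^2$, and the sphere's curvature, and then specializing to $\nabla r=0$.
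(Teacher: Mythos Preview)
The paper does not actually prove Birkhoff's Theorem; it is stated as a classical result, with the emphasis on the constant-radius alternative that is ``often forgotten''. What the paper does carry out, in Section~\ref{sec:derivation} using the double-null reduction of Appendix~\ref{app:reform}, is precisely your Case~2: assuming $r\equiv r_0$, equation~\eqref{omgnai} gives the constant Gaussian curvature $K=r_0^{-4}(r_0^2-2e^2)$, equation~\eqref{eefin} collapses to the algebraic constraint $e^2-r_0^2+\Lambda r_0^4=0$, and the Hawking-mass identity~\eqref{eq:varpi} gives~\eqref{eq:polyeqcr1}; the sign of $K$ then yields the table. Your Case~2 outline matches this exactly, though you phrase it via the Hessian of $r$ and a Hamiltonian constraint rather than the paper's explicit double-null equations $\partial_u\partial_v r$, $\partial_u\partial_v\log\Omega^2$.

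Your Case~1 and the gluing dichotomy are not addressed in the paper at all, so there is nothing to compare there. Your sketch for Case~1 is the standard one and is correct. One small caution on the dichotomy: the analyticity argument you invoke is fine, but the cleaner route using only the equations already at hand is to note that if $\partial_u r=\partial_v r=0$ on an open set, then~\eqref{eefin} forces $r$ to satisfy the polynomial $e^2-r^2+\Lambda r^4=0$ there, hence $r$ is locally constant; conversely, isolated zeros of $\mathrm{d}r$ (bifurcation spheres) sit inside the Reissner--Nordstr\"om--de Sitter family and do not require a separate case. This avoids appealing to analyticity and keeps the argument at the level of the PDE system you have already written down.
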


In Section \ref{sec:kvf} we will show that the constant radius spacetimes share the property that each sphere of constant radius forms the intersection of an ingoing and an outgoing degenerate Killing horizon, with respect to a suitable Killing vector field. The degenerate Killing horizons are isometric to the horizons of either extremal Reissner-Nordstr\"om, extremal Schwarzschild-de Sitter or extremal Reissner-Nordstr\"om-de Sitter. Constant radius spacetimes can be viewed in particular as examples of \emph{near-horizon spacetimes of extremal black holes}, which all contain two intersecting degenerate Killing horizons. For an extensive review on near-horizon spacetimes, see \cite{kulu1} and the references therein.
 
 \subsection{Overview of results and techniques}
 \label{sec:overviewresults}
Since (cosmological) Bertotti-Robinson is not globally hyperbolic, the Cauchy problem for (\ref{eq:kgequation}) is not well-posed without additional boundary conditions. (Cosmological) Bertotti-Robinson spacetimes will therefore not be studied in this paper. 

We will investigate the asymptotic behaviour of solutions to (\ref{eq:kgequation}) on Nariai (from now on we include charged Nariai when we refer to Nariai) and Pleba\'nski-Hacyan (PH). This section provides a brief overview of the results proved in this paper. The main theorems are formulated in Section \ref{sec:maintheorems}.

\subsubsection{The Klein-Gordon equation on $dS_2$ and $\R^{1+1}$}
We can decompose solutions $\psi$ of (\ref{eq:kgequation}) into harmonic modes $\psi=\sum_{l=0}^{\infty}\psi_l$ satisfying the 1+1-dimensional Klein-Gordon equation
\begin{equation}
\label{eq:kqequationfixedl}
(\square_{g_{\mathcal{Q}}}-\mu^2-m_l^2)\psi_l=0,
\end{equation}
where $\square_{g_{\mathcal{Q}}}$ is the wave operator corresponding to the induced metric $g_{\mathcal{Q}}$ on $\mathcal{Q}$, where $\mathcal{Q}=dS_2$ in the Nariai case and $\mathcal{Q}=\R^{1+1}$ in the PH case. The mass parameter $\mu^2+m_l^2$ increases with $l$ and $m_0=0$. See Section \ref{sec:sphericalharmonics} for more details. In order to obtain uniform boundedness and decay of the full solution $\psi$, we will in particular determine the precise dependence on $l$ in the estimates for $\psi_l$.

\subsubsection{Boundedness and decay in Nariai: a local method}
In Nariai we can either make use of a \emph{local} or a \emph{global} red-shift effect to establish boundedness and decay. 

In Sections \ref{sec:boundn} and Section \ref{sec:decn} we restrict to a causally independent \emph{static region} in Nariai, which is bounded by two cosmological horizons. Since Nariai is homogeneous, there is no loss of generality in doing so. In Section \ref{sec:kvf} we show that there exists a timelike Killing vector field, which we call $T$, in the interior of the static region. The cosmological horizons are non-degenerate Killing horizons with respect to $T$. We foliate the static region with compact spacelike hypersurfaces by the isometric flow along $T$ of an initial spacelike hypersurface. The energy flux of the current $J^T$ through the spacelike hypersurfaces is non-negative definite and conserved, but it degenerates at the horizons. In order to obtain uniform boundedness of a non-degenerate energy, we appeal to a theorem of Dafermos-Rodnianski \cite{dafrod5} that ensures the existence of a \emph{local} red-shift vector field $N$ corresponding to any non-degenerate Killing horizon in a stationary spacetime. The energy flux of the current $J^N$ is positive definite, non-degenerate and uniformly bounded near the horizon. Pointwise boundedness estimates then follow by commuting with $T$ and applying standard elliptic and Sobolev estimates.

In order to prove uniform energy decay in the static region in Nariai, we first prove integrated local energy decay. We construct a suitable Morawetz vector field, which we call $V$, such that spacetime integrals of the divergence of the current $J^V$ control spacetime integrals of derivatives of $\psi$, after a slight modification. Integrated local energy decay is a powerful and robust tool for proving energy decay for solutions to wave equations.

The main obstruction to integrated local energy decay in Nariai is the \emph{trapping} of null geodesics. Each sphere foliating the Nariai spacetime contains a trapped null geodesic. Whenever trapped null geodesics are present in a spacetime, derivatives are lost in energy estimates, see \cite{rals} and \cite{sbier1}. Fortunately, though trapping is present at each sphere in Nariai, it can be considered unstable, as shown in Section \ref{sec:trapping}. Trapping in Nariai results in the loss of an angular derivative in the integrated local energy decay statement and in the final energy decay statement. 

Trapping is a high-frequency effect, so if we restrict $\psi$ to a single harmonic mode $\psi_l$ with $l\geq 1$, it does not form an obstruction and we are able to prove a stronger energy decay statement.

We make further use of the local red-shift effect to control integrated energies near the cosmological horizons.

The advantage of restricting to a static region in Nariai and proving integrated local energy decay is that the method does not involve the geometry outside the static region.

\subsubsection{Boundedness and decay in Nariai: a global method}

In Sections \ref{sec:globalboundnariai} and \ref{sec:globaldecaynariai} we work in global coordinates on $dS_2$, introduced in Section \ref{sec:narstco}, to construct a global vector field $\tilde{N}$. The energy flux of the corresponding energy current $J^{\tilde{N}}$ through a constant $\tilde{t}$ slice, where $\tilde{t}$ is a standard global time coordinate in $dS_2$, is controlled by an initial energy. We make use of the exponential expansion in $\tilde{t}$ of the volume form corresponding to constant $\tilde{t}$ slices to bound the $L^2$ norm of $\psi$ at a constant $\tilde{t}$ by the initial $\tilde{N}$-energy. By commuting with the spacelike Killing vector field present in $dS_2$, and with the Killing vector fields generating spherical symmetry we are able to obtain pointwise boundedness of $\psi$ from standard Sobolev estimates. 

Moreover, after a modification of the energy current $J^{\tilde{N}}$, we prove exponential decay in $\tilde{t}$ of the $\tilde{N}$-energy, resulting in pointwise exponential decay in $\tilde{t}$ of $\psi-\psi_0$ if $\mu=0$ and of the full solution $\psi$ if $\mu\neq 0$.

\subsubsection{Decay in $n$-dimensional de Sitter space}
\label{sec:subsubdsn}
The local method applied to fixed modes $\psi=\psi_0$ provides in particular a proof, using only vector field methods, of pointwise decay of solutions to the $\mu\neq 0$ Klein-Gordon equation (\ref{eq:kqequationfixedl}) in the static region of $dS_2$. Similarly, the global method applied to $\psi=\psi_0$ also gives pointwise decay.

We generalise the global method for decay of $\psi=\psi_0$ to higher dimensions to obtain pointwise time decay of solutions to (\ref{eq:kgequation}) with $\mu\neq 0$ on $n$-dimensional de Sitter space, with $n\geq 2$. De Sitter space $dS_n$ is an $n$-dimensional submanifold of $\R^{n+1}$ of constant positive curvature. It is diffeomorphic to $\R\times \s^{n-1}$ and can be globally foliated by Cauchy hypersurfaces that are diffeomorphic to $\s^{n-1}$.

Boundedness and decay results in $dS_4$ for solutions to (\ref{eq:kgequation}), with the restriction $\mu=0$ or $\mu^2\geq\frac{2\Lambda}{3}$, were proved by Schlue in his thesis \cite{schluethesis} using energy methods, by considering the static region and expanding region in $dS_4$ separately. Schlue also made use of the presence of a local red-shift effect near cosmological horizons and a global red-shift effect due to expansion.

Exponential decay for smooth solutions to (\ref{eq:kgequation}) with $\mu\neq 0$, was established in a general class of asymptotically de Sitter Lorentzian manifolds by Vasy in \cite{vas1}, using scattering theory.

In the literature one often encounters a restriction to the open subset $dS_{n,flat}$ of $dS_n$, that can be covered by a single chart $(t',x_1',\ldots,x'_{n-1})\in \R\times \R^{n-1}$, with $n\geq 2$. This is referred to as the ``flat slicing'' of de Sitter space. The Klein-Gordon equation (\ref{eq:kgequation}) in $dS_{n,flat}$ is equivalent to the following equation for the rescaled function $u=e^{\frac{n-1}{2}\sqrt{\frac{\Lambda}{3}}t}\psi$:
\begin{equation*}
\left(-\partial_{t'}^2+e^{-2\sqrt{\frac{\Lambda}{3}}t}\sum_{i=1}^{n-1}\partial_{x'_i}^2-\left(\mu^2-\frac{(n-1)^2\Lambda}{12}\right)\right)u=0.
\end{equation*}

In \cite{yag1} Yagdjian constructed fundamental solutions for (\ref{eq:kgequation}) in $dS_{n,flat}$ to show in particular exponential decay of $||\psi||_{H^s(\R^{n-1})}(t')$ in $t'$, with $s>\frac{n-1}{2}$, when the Klein-Gordon mass $\mu^2$ satisfies the bounds $\mu^2\in (0,\frac{((n-1)^2-1)\Lambda}{12})\cup[\frac{(n-1)^2\Lambda}{12},\infty)$.\footnote{The quantity $\mu^2-\frac{(n-1)^2\Lambda}{12}$ is treated as an ``effective mass''.} Similar results for (\ref{eq:kgequation}) with $\mu^2>\frac{(n-1)^2\Lambda}{12}$ were obtained in a general class of asymptotically de Sitter spacetimes by Baskin \cite{bas1}.

Note that the generalisation of the global method for decay in Nariai gives exponential decay of $||\psi||_{H^s(\s^{n-1})}(\tilde{t})$ in $\tilde{t}$ in the \emph{full} spacetime $dS_n$, or $||\psi||_{H^s(\R^{n-1})}(t')$ in $t'$ in the subset $dS_{n,flat}$, with $s>\frac{n-1}{2}$ and \emph{only} the restriction $\mu\neq 0$.

\subsubsection{Boundedness and decay in Pleba\'nski-Hacyan}

In PH there exists a globally timelike Killing vector field that is simply the generator of time translations in the 2-dimensional quotient spacetime, $\mathcal{Q}=\R^{1+1}$. We denote this vector field by $T$. We immediately obtain uniform boundedness of the non-degenerate energy flux of the current $J^T$ through a foliation of isometric spacelike hypersurfaces.

Although there is a globally timelike Killing vector field present in PH, we need to adopt a different strategy compared to Nariai to obtain pointwise decay of $\psi$. We can prove integrated local energy decay, with the loss of an angular derivative due to the presence of unstable trapping at each sphere foliating PH, similar to the Nariai case. However, the robust method introduced in \cite{dafrod7} for obtaining energy decay from integrated local energy decay and energy boundedness, even in the case of the massless equation (\ref{eq:waveequation}), fails because PH is not asymptotically flat. In fact, we show that the energy flux of the current $J^T$ through the future null boundaries $\mathcal{N}_A^+$ and $\mathcal{N}_B^+$ of PH vanishes, where we consider solutions $\psi-\psi_0$ in the $\mu=0$ case. This implies that one cannot obtain energy decay by foliating PH with hypersurfaces asymptoting at $\mathcal{N}^+_A$ and $\mathcal{N}_B^+$.

In \cite{klai1} Klainerman established pointwise decay of solutions to (\ref{eq:kgequation}) with $\mu\neq 0$ in $\R^{n+1}$ by using only vector field methods. In particular, if we take $n=1$ these results extend to the fixed modes $\psi_l$ in PH, with $l\geq 1$, by (\ref{eq:kqequationfixedl}). The arguments in \cite{klai1} rely fundamentally on the presence of a timelike Killing vector field and a Killing vector field that generates boosts. Since both Killing vector fields are present in PH, we are able to apply the methods in \cite{klai1} to obtain pointwise decay of the full solution $\psi$ in PH, where we consider $\psi-\psi_0$ in the $\mu=0$ case.

\subsubsection{Non-decay and conservation laws}

As mentioned in Section \ref{sec:birkh}, we can view each null hypersurface in a constant radius spacetime as a degenerate Killing horizon with respect to a suitable Killing field. One might expect conservation laws to form potential problems for decay statements, in light of the behaviour of solutions to (\ref{eq:waveequation}) in extremal black holes and the general picture presented by Aretakis in \cite{are3}. In the case of equation (\ref{eq:waveequation}) in both Nariai and PH these conservation laws are manifested in the conservation of (higher-order) transversal derivatives of $\psi_0$ along each null hypersurface. This follows also from d'Alembert's formula, since $\psi_0$ satisfies a 1+1-dimensional wave equation in double-null coordinates. Moreover, by d'Alembert's formula there is no pointwise decay of $\psi_0$ for generic initial data. 

The results in \cite{are3} for spherically symmetric manifolds also imply that the higher harmonic modes $\psi_l$, $l\geq 1$ are not subject to any (higher-order) conservation laws along degenerate Killing horizons.

\subsection{Outline}
We construct and discuss the geometry of constant radius spacetimes in Section \ref{sec:geometry}. In Section \ref{sec:basictools} we give a short overview of some properties of the Klein-Gordon equation (\ref{eq:kgequation}) on Lorentzian manifolds. The main theorems of the paper, corresponding to the results described in Section \ref{sec:overviewresults}, are then stated in Section \ref{sec:maintheorems}.

We first consider Nariai. We restrict to the static region in Nariai, proving uniform boundedness in Section \ref{sec:boundn} and uniform decay in Section \ref{sec:decn}, using a local method. 

We apply a global method for an alternative proof of uniform boundedness in Section \ref{sec:globalboundnariai} and uniform decay in Section \ref{sec:globaldecaynariai}. The same method is generalised in Section \ref{sec:decaydsn} to obtain uniform decay in $n$-dimensional de Sitter space.

Subsequently, we consider PH. We prove uniform boundedness in Section \ref{sec:boundph} and uniform decay in Section \ref{sec:decph}. 

Finally, in Section \ref{sec:nondec} we show that $\psi_0$ does not decay in either Nariai and PH, in the case where $\mu=0$.

\subsection{Open problems}
A remaining open problem is to venture outside of spherical symmetry by considering (\ref{eq:kgequation}) on a rotating Nariai background. One can construct rotating Nariai by considering a sequence of Kerr-de Sitter spacetimes in which the radii of the event horizon and cosmological horizon approach an extremal value in the limit, with a strictly smaller Cauchy horizon radius. The limit of the corresponding sequence of regions between the event horizons and cosmological horizons is the static region of a spacetime which is called \emph{rotating Nariai}, see \cite{boman}. The metric in the stationary region of rotating Nariai is given by
\begin{equation*}
\begin{split}
g_{rN;a}&=(r_0^2+a^2\cos^2\theta)\left[-f(r_0)(r_0-r_-)(1-x^2)dt^2+\frac{1}{f(r_0)(r_0-r_-)(1-x^2)}dx^2\right]\\
&+\frac{r_0^2+a^2\cos^2\theta}{1+\frac{\Lambda a^2}{3}\cos^2\theta}d\theta^2+\frac{(1+\frac{\Lambda a^2}{3}\cos^2\theta)\sin^2\theta}{r_0^2+a^2\cos^2\theta}\left(2ar_0 x dt+\frac{r_0^2+a^2}{1+\frac{\Lambda a^2}{3}}d\phi\right)^2,
\end{split}
\end{equation*}
where $r_-(a)$ is the radius of the Cauchy horizon, $f(r)$ is a positive linear function of $r>0$ and $r_0(a)>r_-(a)$ is the radius of the event horizons of extremal Kerr-de Sitter with coinciding cosmological and event horizons. For a vanishing angular momentum parameter $a$ of Kerr-de Sitter, rotating Nariai reduces to regular Nariai. Moreover, 2-surfaces of constant $t$ and $x$ are ellipsoids of constant area. Not only is the trapping of null geodesics more complicated in rotating Nariai, there is also \emph{superradiance} present, which is absent in the non-rotating case.

Similarly, one can consider the limit of a sequence of regions between the Cauchy horizon and event horizon of Kerr-de Sitter to obtain the rotating version of Bertotti-Robinson. Moreover, by letting all three horizon radii approach the same value in the limit, one can obtain the rotating version of Pleba\'nski-Hacyan. The metric of rotating Pleba\'nski-Hacyan is given by
\begin{equation*}
\begin{split}
g_{rPH}&=\frac{(r_0^2+a^2\cos^2\theta)}{2f(r_0)}\left(-x^2dt^2+dx^2\right)+\frac{r_0^2+a^2\cos^2\theta}{1+\frac{\Lambda a^2}{3}\cos^2\theta}d\theta^2\\
&+\frac{(1+\frac{\Lambda a^2}{3}\cos^2\theta)\sin^2\theta}{r_0^2+a^2\cos^2\theta}\left(\frac{2ar_0x^2}{f(r_0)} dt+\frac{r_0^2+a^2}{1+\frac{\Lambda a^2}{3}}d\phi\right)^2.
\end{split}
\end{equation*}
In the above expression $a$ is not a free parameter, but rather the value of the angular momentum parameter such that $r_0(a)$ is the radius of the event horizon of extremal Kerr-de Sitter in the case that all horizons coincide. The term $-x^2dt^2+dx^2$ can be viewed as the $\R^{1+1}$ metric in Rindler coordinates\footnote{Rindler coordinates are coordinates that cover a subset of Minkowski spacetime. An observer at rest with respect to Rindler time has a constant positive proper acceleration.}, with a Rindler horizon at $x=0$.

\subsection{Acknowledgements}
I would like to thank Mihalis Dafermos for introducing the problem to me and for his comments and invaluable advice, and Volker Schlue for his comments on the manuscript.

\section{The geometry of constant radius spacetimes}
We will discuss the geometric properties of constant radius spacetimes and introduce suitable coordinate charts.
\label{sec:geometry}
\subsection{A classification of the constant radius spacetimes}
\label{sec:derivation}
In this section we will show that any spacetime with a warped product structure $\mathcal{Q}\times_r\s^2$, where $r=r_0$ is constant,  satisfying the electrovacuum Einstein equations, is locally isometric to a constant radius spacetime, as defined in Section \ref{sec:birkh}. Moreover, we will derive the relations between the geometry of $\mathcal{Q}$ and the value of $r_0$, as they appear in Table \ref{tbl:cr}. 

We can choose local double-null coordinates on $\mathcal{Q}$ such that the metric on $\mathcal{Q}\times_r\s^2$ becomes
\begin{equation}
\label{def:crmetricdoublenull}
g=-\Omega^2(u,v)dudv+r^2\slashed{g}_{\s^2}.
\end{equation}
In this form, the Einstein equations reduce to a system of partial differential equations on the 1+1-dimensional Lorentzian manifold $\mathcal{Q}$, where $r: \mathcal{Q}\to (0,\infty)$. See for example \cite{dafrod1}. We rederive the system of partial differential equations on $\mathcal{Q}$ in Appendix \ref{app:reform} for $\Lambda>0$ to set the notation.

One of these pde, see also (\ref{eenull1b}), is
\begin{equation}
\label{omgnai}
\partial_u \partial_v \log (\Omega^2)=\frac{K}{2}\Omega^2=\frac{2}{r^2}\partial_u r \partial_v r+ \left[\frac{1}{2r^2}-\frac{e^2}{r^4} \right]\Omega^2,
\end{equation}
where $e\in \R$ is the charge corresponding to the spherically symmetric electromagnetic tensor and $K$ is the Gaussian curvature of $\mathcal{Q}$. 

Under the assumption of constant $r=r_0$, it follows from (\ref{omgnai}) that $K$ is constant and is given by
\begin{equation*}
K=r_0^{-4}(r_0^2-2e^2).
\end{equation*}

All maximally symmetric $n+1$ dimensional Lorentzian manifolds with the same constant curvature are locally isometric, see \cite{eis} for a proof. Moreover, we do not need the assumption of maximal symmetry if $n=1$, so we can conclude that $\mathcal{Q}$ must be locally isometric to $dS_2$ if $K>0$, $AdS_2$ if $K<0$ and $\R^{1+1}$ if $K=0$. Given $\Lambda$ and $e$, we will determine the possible signs of $K$ and the values of $r_0$.

In the absence of matter fields other than the electromagnetic stress-energy tensor, we have that
\begin{equation*}
m(r)+\frac{e^2}{2r}-\frac{\Lambda}{6}r^3=M,
\end{equation*}
where $M\in \R$ is a constant and $m(r)$ is the Hawking mass, defined by
\begin{equation*}
m(r)=r\left(\frac{1}{2}+2\Omega^{-2}\partial_u r\partial_v r\right).
\end{equation*}
See also (\ref{eq:varpi}). Hence, $m(r_0)=\frac{r_0}{2}$ and we deduce that the following polynomial equation must hold
\begin{equation}
\label{eq:polyeqcr1}
r_0^2-2Mr_0+e^2-\frac{\Lambda}{3}r_0^4=0.
\end{equation}
This equation also appears in Reissner-Nordstr\"om(-de Sitter) spacetimes for the radii of the event horizons, Cauchy horizons and cosmological horizons.

We also have an equation for $\partial_u\partial_v r$, see (\ref{eefin}),
\begin{equation}
\label{eq:polyeqcr2}
\partial_u \partial_v r=-\frac{1}{r}\partial_ur \partial_v r+(e^2-r^2+\Lambda r^4)\frac{\Omega^2}{4r^3}.
\end{equation}
Since $r=r_0$, the equation above is equivalent to the polynomial equation
\begin{equation*}
e^2-r_0^2+\Lambda r_0^4=0.
\end{equation*}

\subsubsection{$\Lambda=0$}

Consider first the special case $\Lambda=0$.  By (\ref{eq:polyeqcr1}) and (\ref{eq:polyeqcr2}) we find that $r_0^2=e^2=M^2$ and $K=-\frac{1}{e^2}<0$. The corresponding spacetimes are therefore locally isometric to a member of the 1-parameter Bertotti-Robinson family, described in Section \ref{sec:birkh}.

\subsubsection{$\Lambda>0$, $e=0$ or $e^2=\frac{1}{4\Lambda}$}
Suppose now that $\Lambda>0$.  Then (\ref{eq:polyeqcr1}) implies that
\begin{equation*}
r_0^2=\frac{1}{2\Lambda}\left(1\pm \sqrt{1-4\Lambda e^2}\right).
\end{equation*}

Consider the special case $e=0$. Then we must have that $r_0^2=\frac{1}{\Lambda}$ and $K=\Lambda$. The corresponding spacetimes are locally isometric to a member of the 1-parameter Nariai family, described in Section \ref{sec:birkh}.

Consider the special case $e^2=\frac{1}{4\Lambda}$. Then $r_0^2=\frac{1}{2\Lambda}=2e^2$, so $K=0$. The corresponding spacetimes are locally isometric to a member of the 1-parameter Pleba\'nski-Hacyan family, described in Section \ref{sec:birkh}.

\subsubsection{$\Lambda>0$, $0<e^2<\frac{1}{4\Lambda}$}
We are left with the cases where $0<e^2<\frac{1}{4\Lambda}$. 

If $r_0^2>\frac{1}{2\Lambda}$, we must have
\begin{equation*}
r_0^2=\frac{1}{2\Lambda}\left(1+\sqrt{1-4\Lambda e^2}\right).
\end{equation*}
Moreover, $r_0^2>2e^2$, so $K>0$. We refer to the corresponding 2-parameter family of spacetimes $dS_2\times_{r_0} \s^2$ as \emph{charged} Nariai solutions.

If $r_0^2<\frac{1}{2\Lambda}$, we must have
\begin{equation*}
r_0^2=\frac{1}{2\Lambda}\left(1-\sqrt{1-4\Lambda e^2}\right).
\end{equation*}
If $r_0^2<2e^2$, then $K<0$ and we refer to the corresponding 2-parameter of spacetimes $AdS_2\times_{r_0}\s^2$ as \emph{cosmological} Bertotti-Robinson solutions. If $2e^2<r_0^2<\frac{1}{2\Lambda}$, $K>0$, and we again refer to the corresponding 2-parameter family of spacetimes $dS_2\times_{r_0}\s^2$ as charged Nariai solutions.

\begin{remark}
We can also consider the $\Lambda<0$ case. If $e\neq 0$, spherically symmetric electrovacuum spacetimes with a constant radius function $r_0$ must satisfy
\begin{equation*}
r_0^2=\frac{1}{2|\Lambda|}\left(-1+\sqrt{1+4|\Lambda| e^2}\right).
\end{equation*}
Consequently, $K<0$ for all $e\neq 0$ and $\mathcal{Q}$ must be locally isometric to $AdS_2$.
\end{remark}

\subsection{Killing vector fields in constant radius spacetimes}
\label{sec:kvf}
We can easily find the Killing vector fields corresponding to the induced metric $g_{\mathcal{Q}}$ on $\mathcal{Q}$. Choose $(u,v)$ coordinates such that $\Omega^2(0,v)=1$ and $\Omega^2(u,0)=1$, where $u,v \in \R$. Note that these coordinates do not cover the entire $dS_2$ or $AdS_2$ manifolds. 

For all constant radius spacetimes we can then construct the following basis of Killing vector fields on $\mathcal{Q}$,
\begin{align*}
T&=\left(1-\frac{K}{4}{u^2}\right)\frac{\partial}{\partial u}+\left(1-\frac{K}{4}{v^2}\right)\frac{\partial}{\partial v},\\
X&=\left( 1+ \frac{K}{4}v^2 \right) \frac{\partial}{\partial v}-\left( 1+ \frac{K}{4}u^2 \right) \frac{\partial}{\partial u},\\
Y&=v\frac{\partial}{\partial v}- u \frac{\partial}{\partial u}.
\end{align*}
One can check that $T$, $X$ and $Y$ indeed span a 3-dimensional Lie algebra. In particular, in PH the vector field $T$ generates time translations, $X$ generates space translations and $Y$ generates boosts in $\R^{1+1}$. More generally, in the context of near-horizon spacetimes, $Y$ is the generator of the \emph{dilation symmetry}.

Furthermore, $T+X$ and $T-X$ generate outgoing and ingoing degenerate Killing horizons $\{u=0\}$ and $\{v=0\}$, respectively. By homogeneity of the spacetimes, each null hypersurface in a constant radius spacetime is a degenerate horizon with respect to a suitable Killing vector field.

\subsection{Coordinate charts on Nariai}
Consider the Nariai spacetime of Section \ref{sec:birkh}.
\subsubsection{Global coordinates}
\label{sec:narstco}
There exist coordinates $(\tilde{t},\tilde{x},\theta,\phi)$, where $\tilde{t}\in \R$ and $\tilde{x}\in \R$ are coordinates on $dS_2$, that cover the entire Nariai spacetime, such that
\begin{equation*}
g=K^{-1}(-d\tilde{t}^2+\cosh^2\tilde{t}d\tilde{x}^2)+r_0^2\slashed{g}_{\s^2}.
\end{equation*}
Via the global coordinates above, we will derive an expression for the term $\Omega^2(u,v)$ appearing in (\ref{def:crmetricdoublenull}), with the gauge choice $\Omega^2(u,0)=\Omega^2(0,v)=1$. We will also see that the isometry corresponding to the spacelike Killing vector field $X$ is naturally expressed in these global coordinates. In order to write the above metric in a double-null foliation, we define $\hat{t}(\tilde{t})$ such that
\begin{equation*}
\frac{d\hat{t}}{d\tilde{t}}= \frac{1}{\cosh \tilde{t}}. 
\end{equation*}
One can check that the following suffices,
\begin{equation*}
\hat{t}(\tilde{t})=2 \arctan\left(\tanh\left(\frac{\tilde{t}}{2}\right)\right),
\end{equation*}
where $\hat{t}\in (-\frac{\pi}{2},\frac{\pi}{2})$. The inverse is given by
\begin{equation*}
\tilde{t}(\hat{t})=2\arctanh\left(\tan\left(\frac{\hat{t}}{2}\right)\right).
\end{equation*}
Now,
\begin{equation*}
\begin{split}
g&=K^{-1} \cosh^2(\tilde{t}(\hat{t}))\left(-d\hat{t}^2+d\tilde{x}^2\right)+r_0^2\slashed{g}_{\s^2}\\
&=-K^{-1}\cosh^2(\tilde{t}(\hat{t}))d\hat{u}d\hat{v}+r_0^2\slashed{g}_{\s^2},
\end{split}
\end{equation*}
where $\hat{u}=\hat{t}-\tilde{x}$ and $\hat{v}=\hat{t}+\tilde{x}$. We write $\tilde{t}(\hat{t})=\tilde{t}(\hat{u},\hat{v})$. To obtain $\Omega^2(0,v)=1$, we first define $v(\hat{v})$ as the solution to the ODE
\begin{align*}
\frac{dv}{d\hat{v}}&=K^{-\frac{1}{2}}\cosh^2\left(\tilde{t}(0,\hat{v})\right)=K^{-\frac{1}{2}} \cos^{-2}\left(\frac{\hat{v}}{2}\right),\\
v(0)&=0.
\end{align*}
We integrate to find
\begin{equation*}
v(\hat{v})=\frac{2}{\sqrt{K}}\tan \frac{\hat{v}}{2}.
\end{equation*}
Hence, we can only cover the region $-\pi<\hat{v}< \pi$ if we replace the outgoing null coordinate $\hat{v}$ with $v$. The metric becomes
\begin{equation*}
g=-\frac{\cosh^2(\tilde{t}(\hat{u},\hat{v}))}{\sqrt{K}\cosh^2(\tilde{t}(0,\hat{v}))}d\hat{u}dv.
\end{equation*}
Define $u(\hat{u})$ by
\begin{align*}
\frac{du}{d\hat{u}}&=\frac{\cosh^2\tilde{t}(\hat{u},0))}{\sqrt{K}\cosh^2(\tilde{t}(0,0))}=K^{-\frac{1}{2}}\cosh^2(\tilde{t}(\hat{u},0))=K^{-\frac{1}{2}}\cos^{-2}\left(\frac{\hat{u}}{2}\right),\\
u(0)&=0.
\end{align*}
Then
\begin{equation*}
u(\hat{u})=\frac{2}{\sqrt{K}} \tan \frac{\hat{u}}{2}.
\end{equation*}
We need to restrict to $-\pi<\hat{u}<\pi$. Consequently, 
\begin{equation}
\label{def:nariainicegaugedoublenull}
\begin{split}
g&=-\frac{\cosh^2(\tilde{t}(\hat{u},\hat{v}))}{\cosh^2(\tilde{t}(0,\hat{v}))\cosh^2(\tilde{t}(\hat{u},0))}dudv+r_0^2\gamma\\
&=-\frac{\cos^2\left(\frac{\hat{u}}{2}\right)\cos^2\left(\frac{\hat{v}}{2}\right)}{\cos^2\left(\frac{\hat{u}+\hat{v}}{2}\right)}dudv+r_0^2\gamma,
\end{split}
\end{equation}
where $u,v\in \R$. We see immediately that $\Omega^2(0,v)=\Omega^2(u,0)=1$. The $(u,v)$ coordinates only cover the region $-\pi<\hat{u},\hat{v}<\pi$ of the spacetime with moreover $-\pi <\hat{u}+\hat{v}<\pi$.

The coordinate vector field $\partial_{\tilde{x}}$ is a Killing vector field. In $(u,v)$ coordinates we find that
\begin{equation*}
\begin{split}
X&=\sqrt{K}\partial_{\tilde{x}},\\
T&=\sqrt{K}\left[\left(\sin^2 \left(\frac{\hat{u}}{2}\right)-\cos^2 \left(\frac{\hat{u}}{2}\right)\right)\partial_{\hat{u}}+\left(\sin^2 \left(\frac{\hat{v}}{2}\right)-\cos^2 \left(\frac{\hat{v}}{2}\right)\right)\partial_{\hat{v}}\right].
\end{split}
\end{equation*}
$T$ is timelike in the open rectangle $\mathcal{R}:=\{(\hat{u},\hat{v},\theta,\phi)\in \mathcal{M}\,|\,(\hat{u},\hat{v})\in (-\frac{\pi}{2},\frac{\pi}{2})^2\}$.

\subsubsection{Static coordinates}
\label{sec:staticcoordinates}
Although the chart $(\tilde{t},\tilde{x},\theta,\phi)$ covers the entire Nariai spacetime, the Killing vector field $T$ is not naturally expressed. Indeed, we can equip the rectangle $\mathcal{R}$ introduced above with $\emph{static coordinates}$, which we will denote by $(t,x,\theta,\phi)$, with $t\in \R$, $x\in \left(-\frac{1}{\sqrt{K}},\frac{1}{\sqrt{K}}\right)$ and $\theta,\phi$ coordinates on $\s^2$, such that the metric is given by
\begin{equation}
g=-(1-K x^2)dt^2+(1-Kx^2)^{-1}dx^2+r_0^2\slashed{g}_{\s^2},
\end{equation}
where $K>0$ is the Gaussian curvature introduced above. In order to construct null coordinates, we introduce the tortoise function $x_*$ such that
\begin{align*}
\frac{dx_*}{dx}&=(1-Kx^2)^{-1},\\
x_*(0)&=0.
\end{align*}
We integrate the above ode to find,
\begin{align*}
x_*(x)&=\frac{1}{\sqrt{K}}\arctanh \left( \sqrt{K}x\right),\\
x(x_*)&=\frac{1}{\sqrt{K}}\tanh \left( \sqrt{K}x_*\right).
\end{align*}
Hence, $x_*\in \R$, with $x_*\to \infty$ when $x\to \frac{1}{\sqrt{K}}$ and $x_*\to -\infty$ when $x\to -\frac{1}{\sqrt{K}}$. Now define the null coordinates $u$ and $v$ by
\begin{align*}
u&:=t-x_*,\\
v&:=t+x_*.
\end{align*}
Then
\begin{equation}
\label{def:nariaistaticcoords}
g=-(1-Kx^2)dudv+r_0^2\slashed{g}_{\s^2}.
\end{equation}

\begin{figure}
\begin{center}
\includegraphics[width=2.5in]{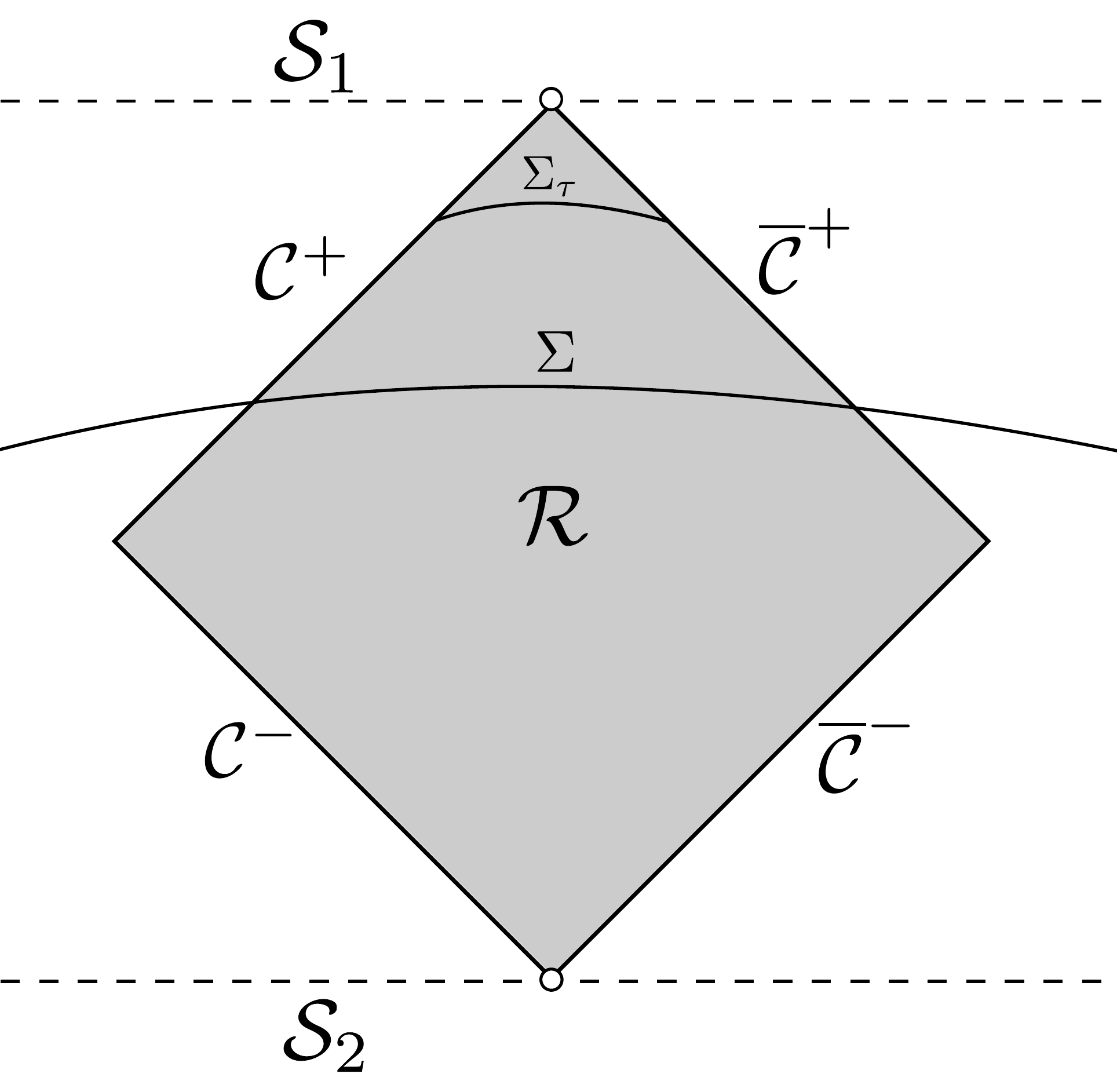}
\caption{Part of the Penrose diagram of Nariai with a shaded static region $\mathcal{R}$.}
\label{fig:staticnariai}
\end{center}
\end{figure}

We can relate the null coordinates appearing in (\ref{def:nariaistaticcoords}) to the null coordinates in which the metric is expressed by (\ref{def:nariainicegaugedoublenull}), that moreover define the vector field $T$, to establish that in static coordinates $T$ is given by
\begin{equation*}
T=\partial_t.
\end{equation*}
Static coordinates are therefore the coordinates in which the vector field $T$ is most naturally expressed. As we remarked in the aforementioned section, $T$ is timelike in $\mathcal{R}$ and becomes null on $x=\pm \frac{1}{\sqrt{K}}$. We characterise the boundary of $\mathcal{R}$ as follows in terms of ingoing and outgoing \emph{cosmological horizons},
\begin{equation*}
\partial \mathcal{R}=\mathcal{C}^+\cup\mathcal{C}^-\cup\overline{\mathcal{C}}^+\cup\overline{\mathcal{C}}^-,
\end{equation*}
where we can express in global double null coordinates $(\hat{u},\hat{v})$
\begin{align*}
\mathcal{C}^+&=\left\{\hat{u}=\frac{\pi}{2},\, \hat{v}\geq -\frac{\pi}{2}\right\},\\
\overline{\mathcal{C}}^+&=\left\{\hat{v}=\frac{\pi}{2},\, \hat{u}\geq -\frac{\pi}{2}\right\},\\
\mathcal{C}^-&=\left\{\hat{v}=-\frac{\pi}{2},\, \hat{u}\leq \frac{\pi}{2}\right\},\\
\overline{\mathcal{C}}^-&=\left\{\hat{u}=-\frac{\pi}{2},\, \hat{v}\leq \frac{\pi}{2}\right\}.
\end{align*}
Moreover,
\begin{align*}
\{x=\frac{1}{\sqrt{K}}\}&=\mathcal{C}^+\cup \mathcal{C}^-,\\
\{x=-\frac{1}{\sqrt{K}}\}&=\overline{\mathcal{C}}^+ \cup\overline{\mathcal{C}}^-.
\end{align*}

By passing to ingoing or outgoing Eddington-Finkelstein-type coordinates, we can easily see that $\mathcal{C}^+$ and $\overline{\mathcal{C}}^+$ have a positive surface gravity $\kappa=\sqrt{K}$ with respect to the timelike Killing vector field $T$. Positivity of the surface gravity will play an important role in proving uniform boundedness and decay of solutions to (\ref{eq:kgequation}).

We can express the global time $\tilde{t}$ as a function of the static coordinates $t$ and $x$,
\begin{equation*}
\tilde{t}(t,x)=2 \arctanh\left(\tan\left(\frac{\hat{t}(t,x)}{2}\right)\right),
\end{equation*}
 where
 \begin{equation*}
 \hat{t}(t,x)=\arctan\left(\tanh\left(\frac{\sqrt{K}}{2}(t+x_*(x))\right)\right)+\arctan\left(\tanh\left(\frac{\sqrt{K}}{2}(t-x_*(x))\right)\right)
 \end{equation*}
 In particular, $\tilde{t}(t,0)=\sqrt{K}t$ and
 \begin{equation}
 \label{eq:tildetvst}
 \tilde{t}(t,x)\geq \sqrt{K}(t-|x_*(x)|).
 \end{equation}
\subsection{The constant radius limit}
For completeness, we will give a precise construction of Nariai and Pleba\'nski-Hacyan as the constant radius limits of the region between the cosmological and event horizons of a sequence of subextremal Reissner-Nordstr\"om-de Sitter spacetimes. A similar construction can be performed for the region between the event and Cauchy horizons, where the constant radius limit is Bertotti-Robinson.

The Reissner-Nordstr\"om-de Sitter metric can be expressed as
\begin{equation*}
g=-\frac{f(r)}{r^2}(r-r_+)(r_{++}-r)d\tilde{t}^2+\frac{r^2}{f(r)(r-r_+)(r_{++}-r)}dr^2+r^2 \slashed{g}_{\s^2},
\end{equation*}
where $\tilde{t}\in \R$, $r\in (r_+,r_{++})$ and $r_+$ and $r_{++}$ are the radii of the event horizon and cosmological horizon, respectively. See for example \cite{gripod1}. The function $f(r)$ is a quadratic polynomial with $f(r)>0$ for $r\in [r_+,\infty)$. Moreover, $r_{+}< r_0 <r_{++}$ for all subextremal solutions, where $r_0$ is the radius of the corresponding extremal solution.

We define $\epsilon=\epsilon(M,e)>0$ by $\epsilon=r_{++}-r_0=r_0-r_+$ and we introduce the shifted coordinate $\rho:=r-r_0$, $\rho\in(-\epsilon,\epsilon)$, such that
\begin{equation*}
g=-\frac{f(r)}{r^2}(\rho+\epsilon)(\epsilon-\rho)d\tilde{t}^2+\frac{r^2}{f(r)(\rho+\epsilon)(\epsilon-\rho)}d\rho^2+r^2 \slashed{g}_{\s^2}.
\end{equation*}
Now rescale, $\chi:=\epsilon^{-1}\rho$ and $\tau:=\epsilon \tilde{t}$, to obtain
\begin{equation*}
g=-\frac{f(r)}{r^2} (1-\chi^2)d\tau^2+\frac{r^2}{f(r)(1-\chi^2)}d\chi^2+r^2\slashed{g}_{\s^2},
\end{equation*}
where $\chi\in(-1,1)$ and $\tau\in \R$. Fix $\Lambda>0$ and consider a sequence of subextremal Reissner-Nordstr\"om-de Sitter metrics $g_n$, with parameters $M_n$ and $e_n$, such that $\lim_{n\to \infty} \epsilon_n=0$. Then the metric components $(g_n)_{\alpha \beta}$ in $(\tau,\chi,\theta,\phi)$ coordinates, with $\tau\in \R$, $\chi\in (-1,1)$, $0<\theta<\pi$ and $0<\phi<2\pi$ converge to a limit $g_N$, with
\begin{equation}
\label{eq:nariaimetricversion1}
g_N=-\frac{f(r_0)}{r_0^2} (1-\chi^2)d\tau^2+\frac{r_0^2}{f(r_0)(1-\chi^2)}d\chi^2+r_0^2\slashed{g}_{\s^2}.
\end{equation}
We introduce a final rescaling, $x=\frac{r_0}{\sqrt{f(r_0)}}\chi$ and $t=\frac{\sqrt{f(r_0)}}{r_0}\tau$, to rewrite (\ref{eq:nariaimetricversion1}) as
\begin{equation}
\label{eq:nariaimetricversion2}
g_N=-\left(1-\frac{f(r_0)}{r_0^2}x^2\right)dt^2+\left(1-\frac{f(r_0)}{r_0^2}x^2\right)^{-1}dx^2+r_0^2 \slashed{g}_{\s^2}.
\end{equation}
The metric (\ref{eq:nariaimetricversion2}) is precisely a Nariai metric in static coordinates, where $K=\frac{f(r_0)}{r_0^2}$ and $x\in (-\frac{1}{\sqrt{K}},\frac{1}{\sqrt{K}})$.

Pleba\'nski-Hacyan can be constructed as the constant radius limit of a sequence of extremal Reissner-Nordstr\"om-de Sitter spacetimes with the radii of two horizons coinciding. We consider extremal Reissner-Nordstr\"om-de Sitter in ingoing Eddington-Finkelstein-type coordinates $(\tilde{v},r)$,
\begin{equation*}
g=\frac{h(r)}{r^2}(r-r_+)^2(r-r_-)d\tilde{v}^2+2d\tilde{v}dr+r^2\slashed{g}_{\s^2},
\end{equation*}
where $h(r)$ is a first-order polynomial with $h(r)>0$ and $r_-$ is the radius of the Cauchy horizon and $r_-<r_0<r_+=r_{++}$. The case where $r_-=r_+$ proceeds analogously. Here, $r\in(r_-,r_+)$ and $\tilde{v}\in \R$. 

Let $\epsilon=\epsilon(e)>0$ be defined by $\epsilon=r_{++}-r_0=r_0-r_-$ and consider the shifted coordinate $\rho=r-r_0$, $\rho\in (-\epsilon,\epsilon)$. Moreover, rescale $\chi=\epsilon^{-1}\rho$, $\chi\in (-1,1)$, and $v=\epsilon \tilde{v}$, $v\in \R$, such that
\begin{equation*}
g=-\epsilon \frac{h(r)}{r^2}(\chi-1)(1-\chi^2)dv^2+2dvd\chi+r^2\slashed{g}_{\s^2}.
\end{equation*}
As above, we consider a sequence of charge parameters $e_n$ such that \newline $\lim_{n\to \infty} \epsilon(e_n)=0$ and we obtain the limiting spacetime,
\begin{equation*}
g_{PH}=2dvd\chi+r_0^2\slashed{g}_{\s^2},
\end{equation*}
where we can extend $\chi\in \R$. If we rescale $u:=-2\chi$, we see that the induced metric $g_{\mathcal{Q}}$ is the standard metric on $\R^{1+1}$ in double-null coordinates, $u\in \R$, $v\in\R$.
\pagebreak
\subsection{Foliations by spacelike hypersurfaces}
\subsubsection{Nariai}
\label{sec:foliationsnariai}
Recall from Section \ref{sec:narstco} that we can cover Nariai by the global coordinates $(\tilde{t},\tilde{x},\theta,\phi)$ with the metric given by
\begin{equation*}
g=K^{-1}\left(-d\tilde{t}^2+\cosh^2\tilde{t}d\tilde{x}^2\right)+r_0^2 \slashed{g}_{\s^2}.
\end{equation*} 
We can therefore foliate Nariai by spacelike hypersurfaces $\tilde{\Sigma}_{\tau}:=\{\tilde{t}=\tau\}$, where we take $\tau\in \R$. See Figure \ref{fig:globalnariai}. The induced volume form on $\Sigma_{\tau}$ is given by
\begin{equation*}
d\mu_{\tilde{\Sigma}_{\tau}}=K^{-\frac{1}{2}}r_0^2\cosh \tau\,dxd\mu_{\s^2}.
\end{equation*}
Note that the volume form is expanding exponentially in the time parameter $\tau$. The global foliation of Nariai will be used in Sections \ref{sec:globalboundnariai} and \ref{sec:globaldecaynariai}.

We will also consider a different foliation of a static region $\mathcal{R}$ in Nariai. See Figure \ref{fig:staticnariai}. Let $\Sigma$ be an SO(3)-invariant spacelike hypersurface in Nariai, that is given by a level set of a smooth function $h_N: \mathcal{M} \to \R$, i.e\, $\Sigma= h_N^{-1}(\{0\})$. Let $n_{\Sigma}$ be the future-directed unit normal to $\Sigma$. We make the assumption that the following uniform bounds hold
\begin{align*}
C_1&\leq -g^{-1}(dh_N,dh_N)\leq C_2,\\
C_1&\leq -g(n_{\Sigma},T)\leq C_2,
\end{align*}
where $C_1$ and $C_2$ are positive constants. We consider a static region $\mathcal{R}$ such that $\Sigma$ intersects the cosmological horizons $\mathcal{C}^+$ and $\overline{\mathcal{C}}^+$, see Figure \ref{fig:staticnariai}. By the assumptions on the normal, we can express the volume form corresponding to the metric induced on $\Sigma$ by
\begin{equation*}
d\mu_{\Sigma}= b_N(\rho) d\rho d\mu_{\s^2},
\end{equation*}
where $(\rho,\theta,\phi)$ are coordinates along $\Sigma$, with $\rho\in \R$, and $b_N: \R \to \R_+$ is a function that is uniformly bounded above and away from zero everywhere.
\begin{figure}
\begin{center}
\includegraphics[width=3.0in]{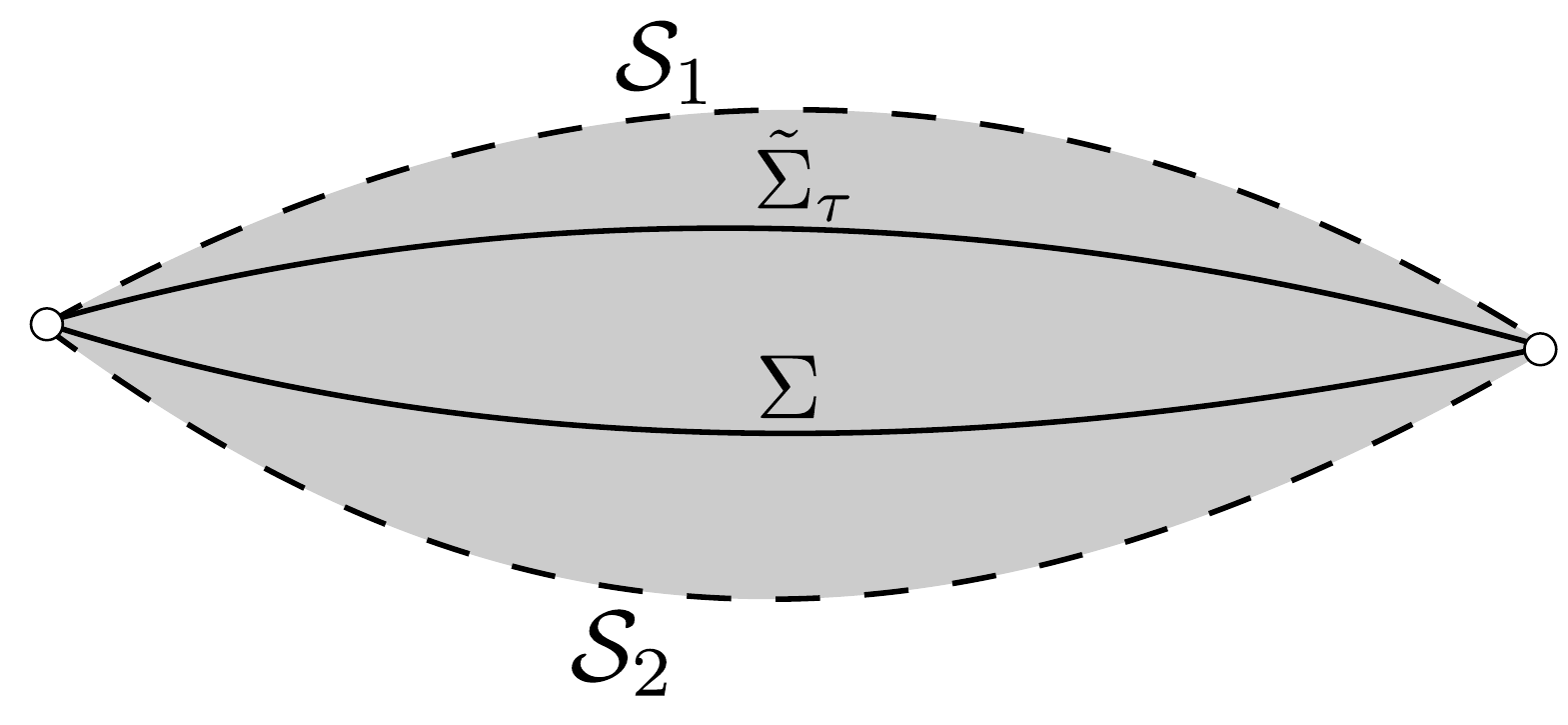}
\caption{The global foliation of Nariai.}
\label{fig:globalnariai}
\end{center}
\end{figure}

Let $\Sigma_{\tau}:=\phi_{\tau}(\Sigma\cap \mathcal{R})$, where $\phi_{\tau}$ is the isometric flow corresponding to the vector field $T$. Consequently,
\begin{equation*}
\mathcal{R}\cap J^+(\Sigma)=\bigcup_{\tau\in[0,\infty)} \Sigma_{\tau}.
\end{equation*}
By construction, the volume form restricted to $\Sigma_{\tau}$ can be expressed in the Lie propagated coordinates $(\rho,\theta,\phi)$ on $\Sigma_{\tau}$
\begin{equation*}
d\mu_{\Sigma_{\tau}}= b_N(\rho) d\rho d\mu_{\s^2}
\end{equation*}
We will use the shorthand notation $n_{\tau}:=n_{\Sigma_{\tau}}$.

The uniform estimates in Section \ref{sec:boundn} and Section \ref{sec:decn} will be carried out in the region $\mathcal{R}(0,\tau)$, defined by
\begin{equation*}
\mathcal{R}(0,\tau):=\bigcup_{\bar{\tau}\in[0,\tau)} \Sigma_{\bar{\tau}}.
\end{equation*}
More generally, we will consider the regions
\begin{equation*}
\mathcal{R}(\tau_1,\tau_2):=\bigcup_{\bar{\tau}\in[\tau_1,\tau_2)} \Sigma_{\bar{\tau}}.
\end{equation*}

\subsubsection{Pleba\'nski-Hacyan}
In PH, we can express the metric $g$ on $\mathcal{M}=\R^2\times \s^2$ in  $(t,x,\theta,\phi)$ coordinates, where $(t,x)$ are rectilinear coordinates on $\R^{1+1}$, to obtain
\begin{equation*}
g=-dt^2+dx^2+r_0^2\slashed{g}_{\s^2}.
\end{equation*}

As in the Nariai case, we take $\Sigma$ to be the $SO(3)$-invariant level set of a smooth function $h_{PH}: \mathcal{M} \to \R$ and require
\begin{align*}
C_1&\leq -g^{-1}(dh_{PH},dh_{PH})\leq C_2,\\
C_1&\leq -g(n_{\Sigma},T)\leq C_2,
\end{align*}
where $C_1$ and $C_2$ are positive constants.
\label{sec:foliationsph}
\begin{figure}
\begin{center}
\includegraphics[width=2.5in]{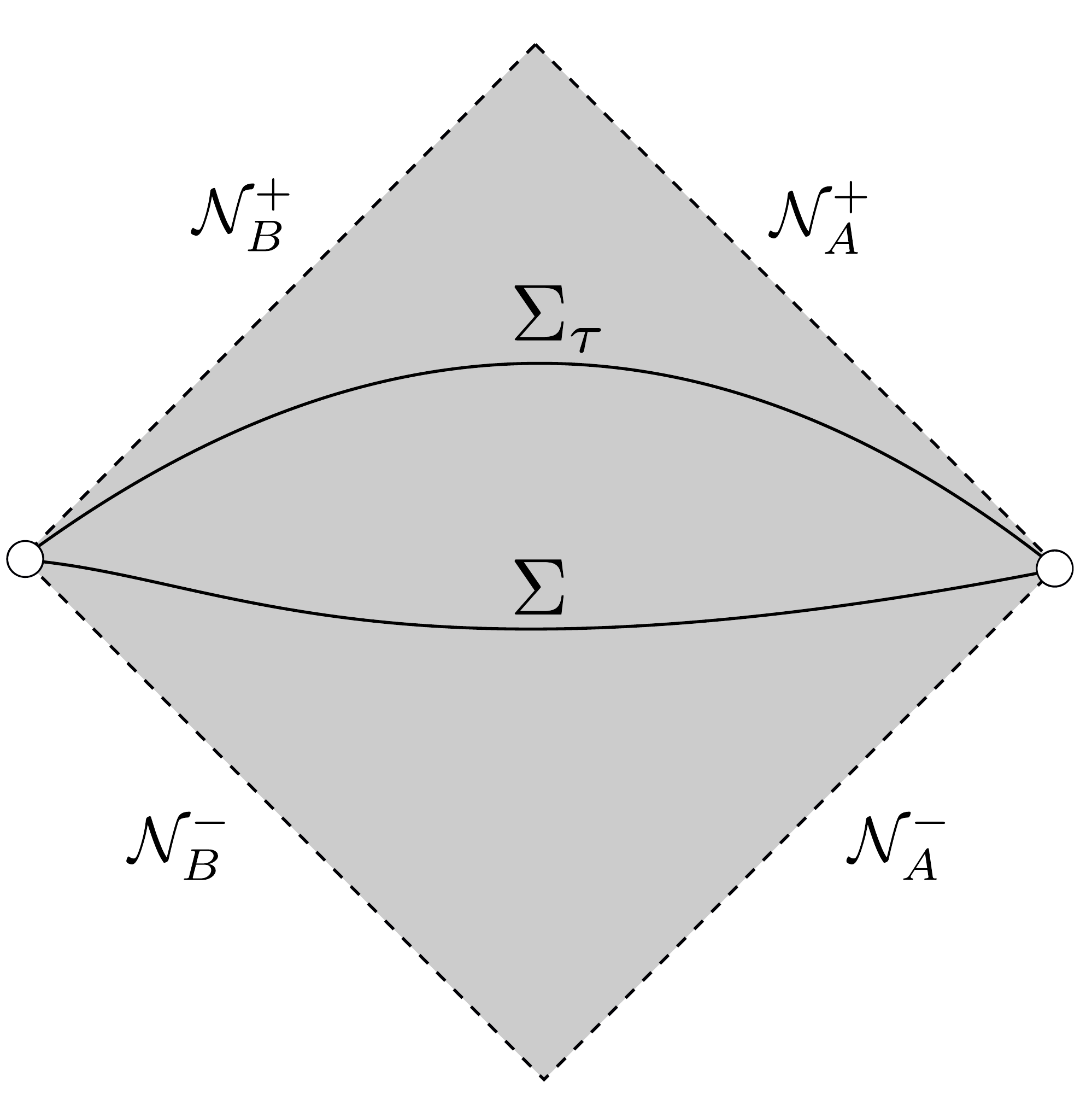}
\caption{The global foliation of PH.}
\label{fig:globalph}
\end{center}
\end{figure}
Since $T$ is a globally timelike Killing vector field, we can define $\Sigma_{\tau}:=\phi_{\tau}(\Sigma)$, such that
\begin{equation*}
\mathcal{M}\cap J^+(\Sigma)=\bigcup_{\tau\in[0,\infty)} \Sigma_{\tau}.
\end{equation*}
Moreover, we define the region $\mathcal{R}(0,\tau)$ by
\begin{equation*}
\mathcal{R}(0,\tau)=\bigcup_{\bar{\tau}\in[0,\tau)} \Sigma_{\bar{\tau}}.
\end{equation*}
See Figure \ref{fig:globalph}. More generally, we can consider the regions
\begin{equation*}
\mathcal{R}(\tau_1,\tau_2):=\bigcup_{\bar{\tau}\in[\tau_1,\tau_2)} \Sigma_{\bar{\tau}}.
\end{equation*}

We can construct coordinates $(\rho,\theta,\phi)$ on $\Sigma$, with $\rho\in \R$, such that the the volume form restricted to $\Sigma_{\tau}$ is given by
\begin{equation*}
d\mu_{\Sigma_{\tau}}= b_{PH}(\rho) d\rho d\mu_{\s^2},
\end{equation*}
where $b_{PH}: \R \to \R_+$ is a function that is uniformly bounded above and away from zero everywhere.

For our purposes it is also convenient to consider a foliation of the interior of a lightcone in PH. See Figure \ref{fig:hypph}. We fix the origin $(t,x)=(0,0)$ to lie in in $\Sigma$. Let $(\tau,\rho,\theta,\phi)$ be coordinates on $\Sigma_{\tau}$, where $\rho=x|_{\Sigma}$. Then the point $(\tau,0)$ is equal to $(t,0)$ in $(t,x)$ coordinates. Moreover, by the assumptions on $\Sigma$, we have that $\tau \sim t$. We consider the lightcone $\mathcal{C}$ defined by
\begin{equation*}
\mathcal{C}:=\{t\leq |x|\}.
\end{equation*}

$\mathcal{C}$ can be foliated by hyperboloids $\mathcal{H}_{s}=\{(t,x)\in \mathcal{Q}\: :\: t^2-x^2=s^2,\:t\geq 0\}$, where $s>0$. The future-directed normal $n_{\mathcal{H}_{s}}$ to $\mathcal{H}_s$ is given by,
\begin{equation*}
n_{\mathcal{H}_s}=\frac{1}{s}\left(t \frac{\partial}{\partial t}+x\frac{\partial}{\partial x}\right).
\end{equation*}
Indeed, since $s$ is constant on $\mathcal{H}_{s}$, $ds^{\sharp}$ points in the direction of the normal and
\begin{equation*}
2 s ds=d(s^2)=d(t^2-x^2)=2tdt-2xdx.
\end{equation*}

\begin{figure}
\begin{center}
\includegraphics[width=3.5in]{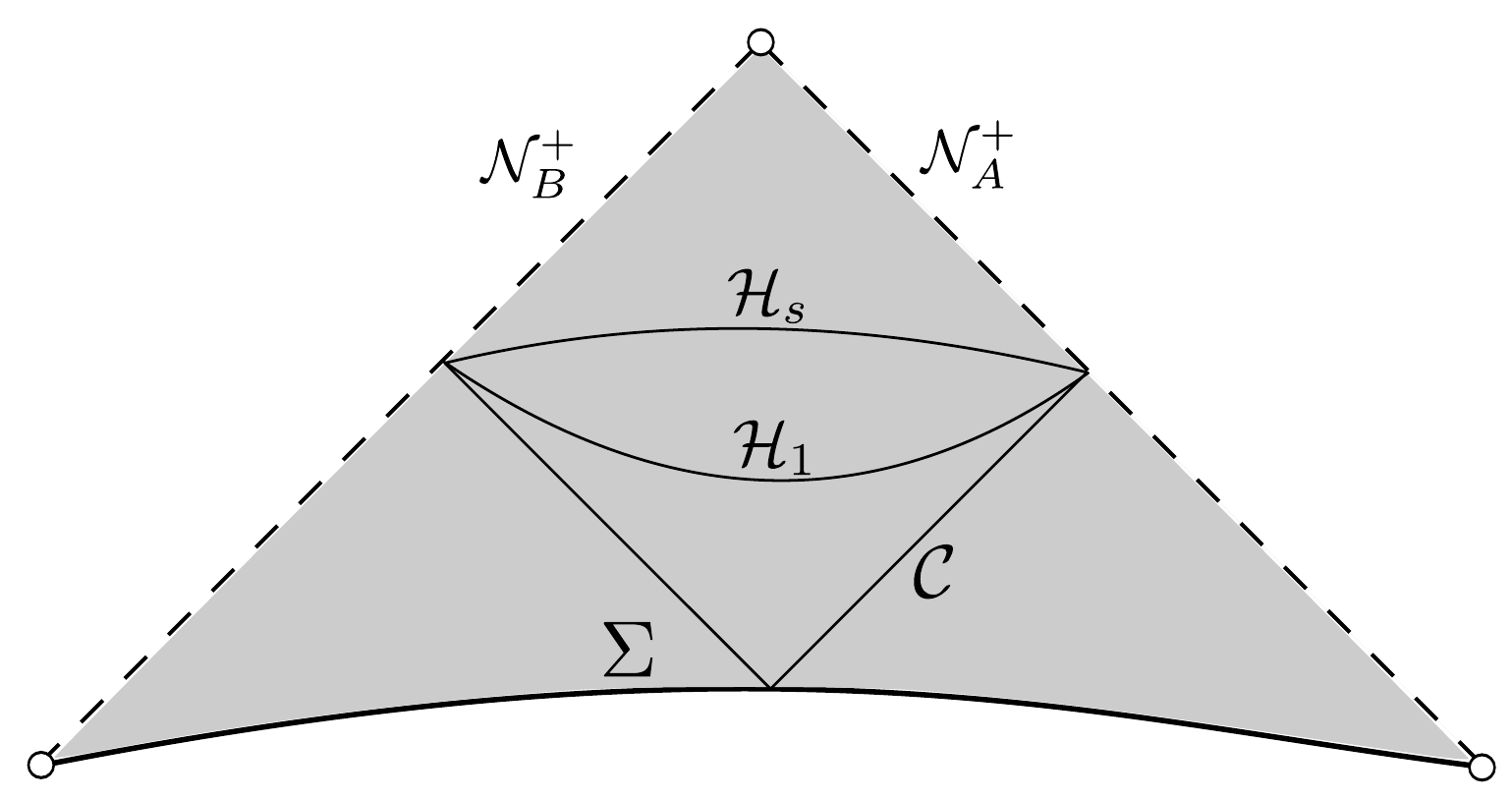}
\caption{The hyperboloidal foliation of the lightcone $\mathcal{C}$ in PH.}
\label{fig:hypph}
\end{center}
\end{figure}

\subsection{Trapping of null geodesics}
\label{sec:trapping}
In \cite{sbier1}, Sbierski showed that on a globally hyperbolic Lorentzian manifold, there exist solutions to (\ref{eq:waveequation})  with an energy that is localised around any given null geodesic. As a consequence, he showed in various black hole spacetimes that the presence of trapped null geodesics (a precise definition of ``trapped'' in Nariai and PH will be given below) must necessarily lead to a loss of derivatives in local energy decay statements. 

Since Nariai and PH are both spherically symmetric and have a timelike Killing vector field if we restrict to a static region in Nariai, the dynamics of geodesics in both spacetimes can easily be derived and a sensible definition of trapped null geodesics can be given. 

\subsubsection{Nariai}
Let $\gamma$ be a null geodesic in $\mathcal{R}$ of Nariai. We say $\gamma$ is \emph{trapped with respect to the foliation $\Sigma_{\tau}$} if there exists a compact subset $K\subset \Sigma_0$ such that $\textnormal{supp} \,\gamma \subset \bigcup_{\tau\geq 0}\phi_{\tau}(K)$.

Without loss of generality, we may assume $\gamma_{\theta}=\frac{\pi}{2}$. Let us denote
\begin{align*}
\epsilon&=-g(T,\dot{\gamma})=(1-Kx^2)\dot{\gamma}_t,\\
\ell&=g(\partial_{\phi},\dot{\gamma})=r_0^2 \dot{\gamma}_{\phi}.
\end{align*}
By the properties of Killing vector fields along geodesics, the quantities $\epsilon>0$ and $\ell>0$ are conserved along $\gamma$.

The equation $g(\dot{\gamma},\dot{\gamma})=0$ can now be rewritten as
\begin{equation*}
\epsilon^2=\dot{\gamma}_x^2+ V_N(x),
\end{equation*}
where $V_N$ is the effective potential corresponding to the geodesic, given by
\begin{equation*}
V_N(x)=\frac{\ell^2}{r_0^2}(1-Kx^2).
\end{equation*}
The potential has a maximum at $x=0$. Consequently, there exist null geodesics of the form $\gamma(s)=(t(s),0,\phi(s),\frac{\pi}{2})$ such that $\textnormal{supp}\,\gamma \subset \{x=0\}$. The submanifold $\{x=0\}$ is called the \emph{photon sphere}. 

Each geodesic is characterised by a triple $(x(0),\epsilon,\ell)$. Trapped null geodesics correspond to the triple $(0,\epsilon,\ell)$. A non-trivial perturbation of $x(0)=0$, causes the resulting null geodesic to no longer be trapped. That is to say, no longer contained in a bounded region $-x_0\leq x \leq x_0$. The trapping can therefore be considered unstable.

Note that by homogeneity of the spacetime, each sphere foliating Nariai contains a trapped null geodesic, i.e.\ each sphere is a photon sphere.

\subsubsection{Pleba\'nski-Hacyan}
As in Narai, we say a null geodesic $\gamma$ is trapped in PH with respect to the foliation $\Sigma_{\tau}$ if there exists a compact subset $K\subset \Sigma$ such that $\textnormal{supp} \,\gamma \subset \bigcup_{\tau\geq 0}\phi_{\tau}(K)$. Moreover, we can define the following conserved quantities along a geodesic $\gamma$ with fixed $\gamma_{\theta}=\frac{\pi}{2}$:
\begin{align*}
\epsilon&=-g(T,\dot{\gamma})=\dot{\gamma}_t,\\
\ell&=g(\partial_{\phi},\dot{\gamma})=r_0^2 \dot{\gamma}_{\phi}.
\end{align*}
We have that
\begin{equation*}
\epsilon^2=\dot{\gamma}_x^2+ V_{PH}(x),
\end{equation*}
where the effective potential $V_{PH}$ is the constant function
\begin{equation*}
V_N(x)=\frac{\ell^2}{r_0^2}.
\end{equation*}

There exist trapped null geodesics of the form $\gamma(s)=(t(s),x_0,\phi(s),\frac{\pi}{2})$ for each $x_0\in \R$, such that $\epsilon^2=\frac{\ell^2}{r_0^2}$. Hence, each sphere foliating PH is a photon sphere.

In this case, the perturbations of a triple $(x(0),\epsilon,\ell)=(x_0,\epsilon=\frac{\sqrt{\ell}}{r_0},\ell)$ corresponding to a trapped null geodesic, result in the geodesic remaining trapped, if and only if the constraint $\epsilon^2=\frac{\ell^2}{r_0^2}$ remains satisfied. In this sense, perturbations resulting in trapped null geodesics form a codimension 1 subset of all perturbations of the triple $(x(0),\epsilon,\ell)$. The trapping is unstable.

\section{The Klein-Gordon equation on Lorentzian manifolds}
\label{sec:basictools}
\subsection{The Cauchy problem}
\label{sec:cauchykgequation}
Let $\Sigma$ be a Cauchy hypersurface in a globally hyperbolic Lorentzian manifold $(\scrM,g)$. We consider the equation (\ref{eq:kgequation}) with initial data imposed on $\Sigma$.
\begin{theorem}
\label{waveeq}
For a fixed $\mu\in \R$ and $\Psi\in H^2_{\textnormal{loc}}(\Sigma)$, $\Psi'\in H^1_{\textnormal{loc}}(\Sigma)$, there exists a unique $\psi: \scrM \to \R$, with $\psi|_{\mathcal{S}} \in H^2_{\textnormal{loc}}(\mathcal{S})$, $n_{\mathcal{S}}\psi|_{\mathcal{S}} \in H^1_{\textnormal{loc}}(\mathcal{S})$, for all spacelike submanifolds $\mathcal{S}\subset \scrM$ with unit future normal $n_{\mathcal{S}}$ in $\scrM$, satisfying
\begin{align*}
 (\square_g -\mu^2)\psi&=0,\\
\psi|_{\Sigma}&=\Psi,\\
n_{\Sigma}\psi|_{\Sigma}&=\Psi'.
\end{align*}
For $m \geq 1$, if $\Psi \in H^{m+1}_{\textnormal{loc}}(\Sigma)$, $\Psi' \in H^{m}_{\textnormal{loc}}(\Sigma)$, then $\psi|_{\mathcal{S}} \in H^{m+1}_{\textnormal{loc}}(\mathcal{S})$, $n_{\mathcal{S}}\psi|_{\mathcal{S}} \in H^{m}_{\textnormal{loc}}(\mathcal{S})$. Moreover, if $\Psi_1$, $\Psi_1'$ and $\Psi_2$, $\Psi_2'$ as above and $\Psi_1=\Psi_2$, $\Psi_1'=\Psi_2'$ in an open set $\mathcal{U}\subset \Sigma$, then $\psi_1=\psi_2$ in $\mathcal{M}\setminus J^{\pm}(\Sigma \setminus \bar{\mathcal{U}})$.
\end{theorem}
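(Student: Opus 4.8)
This is the standard well-posedness statement for a linear second-order hyperbolic equation, so I would prove it by the classical energy method; I only sketch the steps.

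\emph{Set-up and the basic energy estimate.} By global hyperbolicity, $(\scrM,g)$ is isometric (Bernal--S\'anchez) to $\R_t\times\Sigma$ with $g=-N^2\,dt^2+h_t$, each leaf $\Sigma_t:=\{t\}\times\Sigma$ a smooth Cauchy hypersurface, and one may take $\Sigma=\Sigma_0$; in these coordinates $\square_g-\mu^2$ is a linear operator with smooth coefficients whose principal part is Lorentzian. To a solution $\psi$ I associate the energy--momentum tensor $\mathbf{T}_{\alpha\beta}[\psi]=\partial_\alpha\psi\,\partial_\beta\psi-\tfrac12 g_{\alpha\beta}\big(g^{\gamma\delta}\partial_\gamma\psi\,\partial_\delta\psi+\mu^2\psi^2\big)$ and, for the future unit normal $n=N^{-1}\partial_t$, the current $J_\alpha=\mathbf{T}_{\alpha\beta}[\psi]\,n^\beta$; since $\psi$ solves (\ref{eq:kgequation}) one has $\nabla^\alpha\mathbf{T}_{\alpha\beta}[\psi]=0$, hence $\nabla^\alpha J_\alpha=\mathbf{T}^{\alpha\beta}[\psi]\,\nabla_\alpha n_\beta$ pointwise. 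Because $\mu\in\R$, and so $\mu^2\geq0$, the flux density $J\cdot n=\tfrac12\big((n\psi)^2+|\nabla_{\Sigma}\psi|_{h_t}^2+\mu^2\psi^2\big)$ is nonnegative and, together with the trivial bound for $\|\psi\|_{L^2}$ obtained by integrating $\partial_t\psi$ in $t$, controls $\|\psi\|_{H^1(\Sigma_t)}^2+\|n\psi\|_{L^2(\Sigma_t)}^2$. Integrating $\nabla^\alpha J_\alpha$ over a truncated causal diamond $J^-(p)\cap J^+(\Sigma)$, applying the divergence theorem, and bounding the deformation term $\mathbf{T}^{\alpha\beta}[\psi]\nabla_\alpha n_\beta$ by a geometry-dependent multiple of $J\cdot n$, yields a Gr\"onwall inequality for the energy on the leaves of the diamond. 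This gives, for $(\Psi,\Psi')\in H^2_{\mathrm{loc}}\times H^1_{\mathrm{loc}}$, the a priori bound for $\|\psi\|_{H^1}+\|n\psi\|_{L^2}$ on a leaf in terms of the $H^1\times L^2$ norm of the data on its past domain of dependence, and, when the data vanishes there, it forces $J\cdot n\equiv0$ on the whole diamond — here one uses that the flux of $J$ through the null part of $\partial J^-(p)$ has a favorable sign, which is exactly the dominant energy condition for $\mathbf{T}[\psi]$, valid since $\mu^2\geq0$.

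\emph{Existence, uniqueness and higher regularity.} Uniqueness at the $H^2\times H^1$ level is immediate from the a priori estimate applied to a difference of solutions. Existence then follows by the usual duality argument — the same estimate for the adjoint problem solved backwards in time lets one represent $\psi$ as a bounded functional on test data via Riesz representation (Galerkin approximation works equally well) — and is classical. To propagate regularity I would commute (\ref{eq:kgequation}) with the spatial coordinate vector fields, treating the resulting commuted equations as a system in which the sources are of lower order relative to the differentiated unknown, and use the equation itself to trade each $\partial_t^2$ for spatial derivatives plus lower-order terms; the zeroth-order term $\mu^2\psi$ and the first-order terms cause no difficulty. Inducting on $m$ upgrades the a priori estimate to the $H^{m+1}\times H^m$ level and shows $\psi\in C^0\big(\R;H^{m+1}_{\mathrm{loc}}(\Sigma)\big)\cap C^1\big(\R;H^{m}_{\mathrm{loc}}(\Sigma)\big)$. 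Since a compact piece of an arbitrary spacelike hypersurface $\mathcal{S}$ can be covered by charts in which $\mathcal{S}$ is a graph over $\Sigma$, the stated regularity of $\psi|_{\mathcal{S}}$ and $n_{\mathcal{S}}\psi|_{\mathcal{S}}$ follows from this time-regularity by a partition of unity.

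\emph{The localization statement and the main obstacle.} For the last assertion, set $\psi=\psi_1-\psi_2$, so the data of $\psi$ vanishes on the open set $\mathcal{U}\subset\Sigma$; it suffices to show $\psi\equiv0$ on $\scrM\setminus J^\pm(\Sigma\setminus\bar{\mathcal{U}})$. If $p$ lies in this set then $J^-(p)\cap\Sigma$ is a compact subset of $\bar{\mathcal{U}}$, and the energy argument above forces $\psi(p)=0$ as soon as $J^-(p)\cap\Sigma\subset\mathcal{U}$; the remaining boundary case is handled by exhausting $\mathcal{U}$ from within by open sets with compact closure in $\mathcal{U}$ and letting them increase to $\mathcal{U}$, and the past cone is treated identically. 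Conceptually nothing here is deep — it is the standard linear theory — but the step requiring the most care is running the energy identity on a region with a \emph{null} boundary component: one must set up the divergence theorem on the truncated diamond and check that the null flux of $J$ has the right sign, which is precisely where the hypothesis $\mu\in\R$ (hence $\mu^2\geq0$) is used. The higher-regularity bootstrap, though routine, is also mildly delicate because the non-product form $g=-N^2dt^2+h_t$ of the metric requires one to distinguish carefully between $\partial_t$ and the unit normal $n$ when trading time for spatial derivatives.
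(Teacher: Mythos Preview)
Your sketch is correct and follows the classical energy-method approach to well-posedness for linear hyperbolic equations. However, you should know that the paper does \emph{not} actually prove this theorem: it is stated without proof as a standard background result, and the paper immediately moves on to discuss the domain of dependence property as a consequence. So there is no ``paper's own proof'' to compare against; your outline is essentially what one finds in standard references (e.g.\ Leray, Choquet-Bruhat, or the lecture notes of Dafermos--Rodnianski cited in the paper), and would be entirely appropriate if a proof were required.
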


The last statement in the theorem above is called the domain of dependence property of the wave equation. In particular, it implies that a solution $\psi$ of (\ref{eq:kgequation}) in the static region $\mathcal{R}\cap J^+(\Sigma)$ of Nariai is independent of the data $\Psi,\Psi'$ on $\Sigma\setminus \Sigma_0$. This follows from the fact that there exist no causal geodesics connecting points in $\Sigma\setminus \Sigma_0$ to points in $\mathcal{R}\cap J^+(\Sigma)$. For this reason, we are able to restrict our analysis to the region $\mathcal{R}$ of Nariai.

Note that in both Nariai and PH we do not impose any decay of $\Psi,\Psi'$ along the initial hypersurface $\Sigma$.

\subsection{Spherical harmonic mode decomposition}
\label{sec:sphericalharmonics}
Since Nariai and PH are spherically symmetric, we can decompose $\psi$ as follows:
\begin{equation*}
\psi(t,x,\theta,\phi)=\sum_{l=0}^{\infty}\sum_{m=-l}^{m=l}\psi_{m,l}(t,x) Y^{m,l}(\theta,\phi),
\end{equation*}
where $Y^{m,l}\in L^2(\s^2)$ are the spherical harmonics, which form an orthonormal basis under the standard inner product on $L^2(\s^2)$. For convenience, we denote by $(t,x,\theta,\phi)$ the static coordinates in the region $\mathcal{R}$ of Nariai and also the standard global coordinates in PH.

We define the harmonic modes by
\begin{equation*}
\psi_l(t,x,\theta,\phi):=\sum_{m=-l}^{m=l}\psi_{m,l}(t,x) Y^{m,l}(\theta,\phi).
\end{equation*}
In particular, since $Y^{0,0}\equiv 1$, $\psi_0$ is a spherically symmetric harmonic mode. Moreover, we denote
\begin{equation*}
\psi_{l\geq L}:=\sum_{l=L}^{\infty} \psi_l.
\end{equation*}

By the properties of spherical harmonics, we have that
\begin{equation*}
\square_g(\psi_{m,l}Y^{m,l})=\left(S \psi_{m,l}-\frac{l(l+1)}{r_0^2}\psi_{m,l}\right)Y^{m,l},
\end{equation*}
where $S$ is an operator on $\mathcal{Q}$. By linear independence of $Y^{m,l}$, we therefore have that $(\square_g-\mu^2) \psi=0$ if and only if $(\square_g-\mu^2)(\psi_{m,l}Y^{m,l})=0$, so in particular, each $\psi_l$ satisfies (\ref{eq:kgequation}) separately. Moreover, for $\psi=\psi_l$, (\ref{eq:kgequation}) reduces to (\ref{eq:kqequationfixedl}) with
\begin{equation*}
m_l^2=r_0^{-2}l(l+1).
\end{equation*}

In the massless $\mu=0$ case, we can find an explicit expression for $\psi_0$ in terms of the spherical means of the initial data,
\begin{align*}
\Psi_0(t,x)&=\frac{1}{\textnormal{Area}\,S_r}\int_{\s^2} \Psi(t,x,\theta,\phi) r^2d\mu_{g_{\s^2}},\\
\Psi_0'(t,x)&=\frac{1}{\textnormal{Area}\,S_r}\int_{\s^2} \Psi'(t,x,\theta,\phi) r^2d\mu_{g_{\s^2}}.
\end{align*}
by using the property that $r=r_0$ is constant on the spacetimes. 

Indeed, in a double-null foliation $(u,v,\theta,\phi)$, we have that
\begin{equation}
\label{spsymkg}
\begin{split}
0&=(\square_g-\mu^2)\psi_0=\frac{1}{\sqrt{-\det g}}\partial_{\alpha}\left(\sqrt{-\det g} g^{\alpha \beta}\partial_{\beta}\psi_0\right)-\mu^2\psi_0\\
&=-4\Omega^{-2}\partial_u\partial_v\psi_0-\mu^2\psi_0.
\end{split}
\end{equation}

When $\mu=0$, (\ref{spsymkg}) is equivalent to the 1+1-dimensional wave equation in $(u,v)$ coordinates. We therefore have that
\begin{equation*}
\partial_u\psi_0(u,v)=\partial_u\psi_0(u,v_{\Sigma}(u)),
\end{equation*}
where $(u,v_{\Sigma}(u))$ is a point on $\Sigma$.

Consequently, in both Nariai and PH, we can write down an explicit expression for $\psi_0$ that is reminiscent of d'Alembert's formula,
\begin{equation}
\label{dalm1}
\psi_0(u,v)=\psi_0(u_{\Sigma}(v),v)+\int_{u_{\Sigma}(v)}^u \partial_u\psi_0(\bar{u},v_{\Sigma}(\bar{u}))\,d\bar u.
\end{equation}

\subsection{The vector field method}
\label{sec:vecbasictools}
Let $V$ be a vector field in a Lorentzian manifold $(\mathcal{M},g)$. We consider the stress-energy tensor $\mathbf{T}[\psi]$ corresponding to (\ref{eq:kgequation}), with components
\begin{equation*}
\mathbf{T}_{\alpha \beta}[\psi]=\partial_{\alpha}\psi \partial_{\beta} \psi-\frac{1}{2}g_{\alpha \beta} \left(\partial^{\gamma} \psi \partial_{\gamma} \psi+\mu^2\psi^2\right).
\end{equation*}
Let $J^V[\psi]$ denote the energy current corresponding to $V$, which is obtained by applying $V$ as a vector field multiplier, i.e.\ in components
\begin{equation*}
J^V_{\alpha}(\psi)=\mathbf{T}_{\alpha \beta}[\psi] V^{\beta}.
\end{equation*}
An energy flux is an integral of $J^V[\psi]$ contracted with the normal to a hypersurface with the natural volume form corresponding to the metric induced on the hypersurface. If the hypersurface is null, the volume form is chosen such that Stokes' theorem holds as below.

Consider a bounded spacetime region $\mathcal{D}$, with a boundary $\partial \mathcal{D}$ that is a union of spacelike hypersurfaces $\mathcal{S}_1$, $\mathcal{S}_2$, timelike hypersurfaces $\mathcal{T}_1$, $\mathcal{T}_2$ and null hypersurfaces $\mathcal{N}_1$ and $\mathcal{N}_2$. We have by Stokes' Theorem that
\begin{figure}[h! t]
\begin{center}
\includegraphics[width=3.0in]{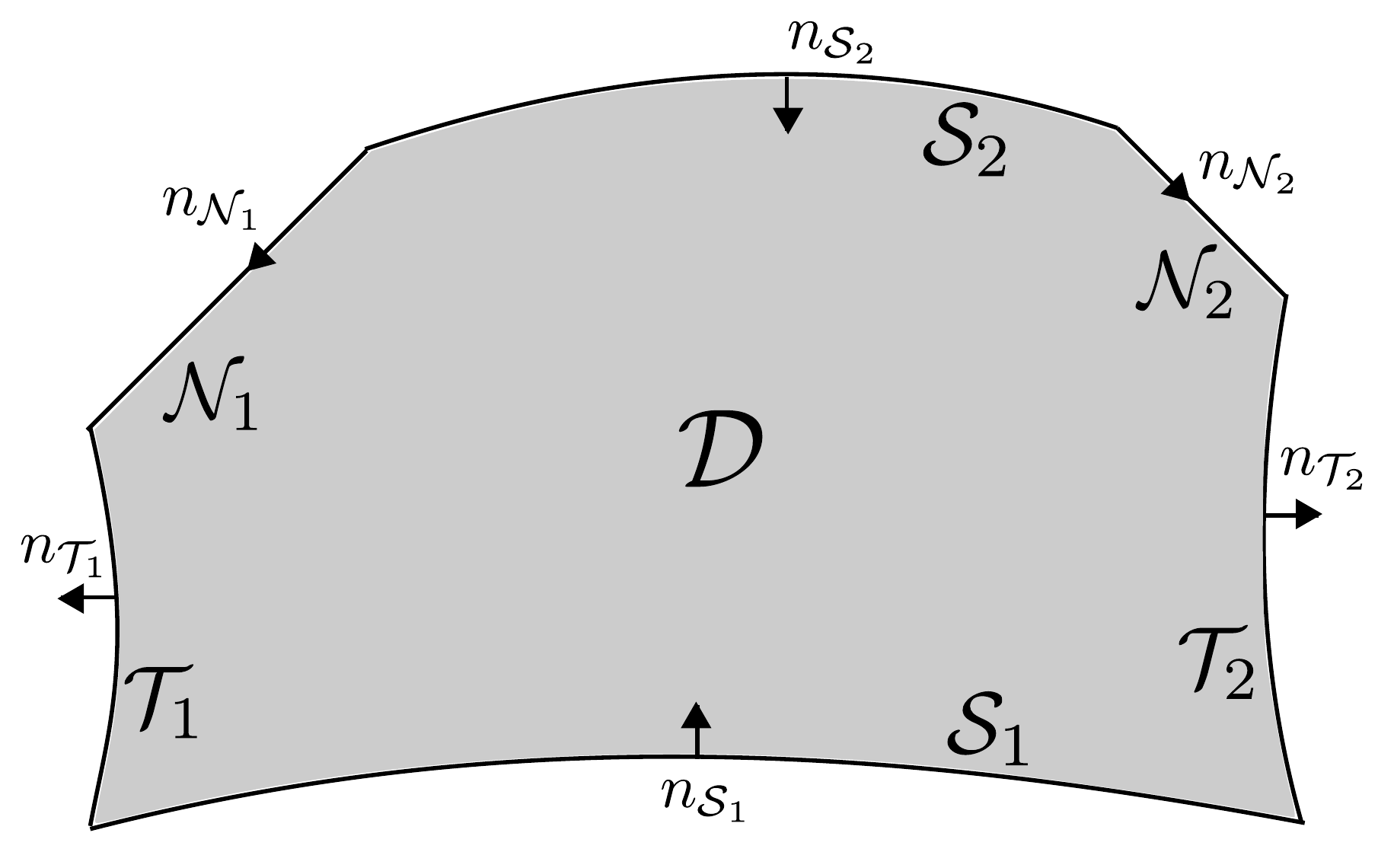}
\caption{A diagrammatic representation of a spacetime region $\mathcal{D}$ with boundary $\partial \mathcal{D}$ and a choice of normal directions $n_{\partial \mathcal{D}}$.}
\end{center}
\end{figure}
\begin{equation*}
\begin{split}
\int_{\mathcal{D}} \textnormal{div}\, J^V[\psi]&=\int_{\mathcal{S}_1}J^V[\psi]\cdot n_{\mathcal{S}_1}+\int_{\mathcal{S}_2}J^V[\psi]\cdot n_{\mathcal{S}_2}+\int_{\mathcal{T}_1}J^V[\psi]\cdot n_{\mathcal{T}_1}+\int_{\mathcal{T}_2}J^V[\psi]\cdot n_{\mathcal{T}_2}\\
&+\int_{\mathcal{N}_1}J^V[\psi]\cdot n_{\mathcal{N}_1}+\int_{\mathcal{N}_2}J^V[\psi]\cdot n_{\mathcal{N}_2},
\end{split}
\end{equation*}
where the null normals $n_{\mathcal{N}_i}$ and the volume form on $\mathcal{N}_i$ are chosen such that the theorem holds as above.

In the language of \cite{chr2} we refer to the divergence term 
\begin{equation*}
K^V[\psi]:=\mathbf{T}^{\alpha \beta}[\psi]\nabla_{\alpha}V_{\beta}=\textnormal{div} J^V[\psi]
\end{equation*}
as the \emph{compatible current} to $J^V[\psi]$.

The vector field method consists of applying Stokes' Theorem with carefully chosen vector fields to suitable spacetime regions. In particular, if $(\mathcal{M},g)$ is a spacetime satisfying the Dominant Energy Condition, $J^V\cdot W$ satisfies the following positivity property:
\begin{lemma}
\label{dec}
Let $V,W$ be future-directed causal vector fields. Then $J^W[\psi]\cdot V\geq 0$ if $(\mathcal{M},g)$ satisfies the dominant energy condition.
\end{lemma}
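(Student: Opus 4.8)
The plan is to reduce the claim to a pointwise algebraic inequality for the Klein--Gordon stress-energy tensor, and then to verify that inequality by working in an orthonormal frame adapted to one of the two vectors. Note first that, by the definition of the energy current and the symmetry of $\mathbf{T}_{\alpha\beta}[\psi]$, the quantity in question is $J^W[\psi]\cdot V=\mathbf{T}_{\alpha\beta}[\psi]W^\beta V^\alpha=\mathbf{T}[\psi](V,W)$, which is symmetric in $V$ and $W$. Hence it suffices to show that $\mathbf{T}[\psi](V,W)\geq 0$ at an arbitrary point for every pair of future-directed causal vectors $V,W$ at that point --- a purely pointwise, algebraic statement. (If one prefers to read the hypothesis as the bare assertion that the Klein--Gordon stress-energy tensor satisfies the dominant energy condition, the lemma is then immediate from the definition; I would instead include this short verification, which also makes clear that $\mu^2\geq 0$ is exactly what is needed, irrespective of the sign of $\mu$.)

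Second, by continuity of $\mathbf{T}[\psi](V,W)$ in the pair $(V,W)$ and the fact that future-directed timelike vectors are dense in the future-directed causal cone at the point, it is enough to treat the case where $V$ is future-directed timelike; the general case then follows by passing to a limit. Assuming $V$ timelike, I would normalise $g(V,V)=-1$ and complete $V=e_0$ to a $g$-orthonormal frame $\{e_0,e_1,e_2,e_3\}$ in signature $(-,+,+,+)$. In this frame the future-causality of $W$ reads $W^0>0$ and $W^0\geq\big(\sum_i (W^i)^2\big)^{1/2}$, and a direct computation from the formula for $\mathbf{T}_{\alpha\beta}[\psi]$ gives $\mathbf{T}_{00}=\tfrac12\big((e_0\psi)^2+\sum_i(e_i\psi)^2+\mu^2\psi^2\big)$ and $\mathbf{T}_{0i}=(e_0\psi)(e_i\psi)$, since $g_{0i}=0$.

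Third, I would combine these ingredients: writing $\mathbf{T}[\psi](V,W)=\mathbf{T}_{00}W^0+\sum_i\mathbf{T}_{0i}W^i$, applying the Cauchy--Schwarz inequality to the spatial sum together with $W^0\geq\big(\sum_i(W^i)^2\big)^{1/2}$ bounds this below by $W^0\Big(\mathbf{T}_{00}-|e_0\psi|\,\big(\sum_i(e_i\psi)^2\big)^{1/2}\Big)$, and the parenthesised expression equals $\tfrac12\big(|e_0\psi|-\big(\sum_i(e_i\psi)^2\big)^{1/2}\big)^2+\tfrac12\mu^2\psi^2$, which is non-negative because $\mu^2\geq 0$. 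Therefore $\mathbf{T}[\psi](V,W)\geq 0$, which is the assertion of the lemma.

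This argument is essentially elementary once the reduction is made, so there is no serious obstacle; the only mild points requiring care are the continuity/density step that allows one to assume $V$ is timelike rather than merely causal (and the trivial degenerate cases $W=0$ or $d\psi=0$), and keeping the metric-signature conventions consistent when reading off the components $\mathbf{T}_{00}$ and $\mathbf{T}_{0i}$ in the adapted frame.
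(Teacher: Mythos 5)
Your argument is correct. Note, however, that the paper does not actually supply a proof of this lemma: it is stated as a standard fact in Section \ref{sec:vecbasictools} and used as a black box, so there is no ``paper proof'' to compare against. What you have written is the standard verification that the Klein--Gordon stress-energy tensor with $\mu^2\geq 0$ satisfies the dominant energy condition, and each step checks out: the symmetry $J^W[\psi]\cdot V=\mathbf{T}[\psi](V,W)$, the reduction to timelike $V$ by density of the open timelike cone in the closed causal cone, the frame components $\mathbf{T}_{00}=\tfrac12\bigl((e_0\psi)^2+\sum_i(e_i\psi)^2+\mu^2\psi^2\bigr)$ and $\mathbf{T}_{0i}=(e_0\psi)(e_i\psi)$, and the Cauchy--Schwarz estimate yielding the non-negative square plus $\tfrac12\mu^2\psi^2$. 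You are also right to flag the slightly loose phrasing of the hypothesis: the dominant energy condition relevant here is a property of $\mathbf{T}[\psi]$ itself (automatic for $\mu^2\geq 0$), not of the background matter content of $(\mathcal{M},g)$, and your computation makes explicit that $\mu^2\geq 0$ is exactly what is used --- consistent with the paper's separate treatment of the $\mu^2<0$ (anti-de Sitter) case, where this positivity fails. The only cosmetic alternative to your density step is to verify the inequality directly in a null frame by decomposing both causal vectors into non-negative combinations of future-directed null vectors; this avoids the limiting argument but buys nothing essential.
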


Moreover, if $(\mathcal{M},g)$ is stationary, with a timelike Killing vector field $T$, $K^T$ vanishes and we obtain conservation of the energy fluxes of $J^T$ with respect to suitable spacelike hypersurfaces.

\section{Main theorems}
\label{sec:maintheorems}

\subsection{Results in Nariai and $dS_n$}
In this section we present the results for (\ref{eq:kgequation}) on a Nariai background and on $n$-dimensional de Sitter space.
\subsubsection{Results in Nariai}
We first consider the Cauchy problem for (\ref{eq:kgequation}) (see Section \ref{sec:cauchykgequation}) in a static region $\mathcal{R}$ in Nariai, introduced in Section \ref{sec:staticcoordinates}, that is covered by the static coordinates $(t,x,\theta,\phi)$.  We foliate $\mathcal{R}$ by the compact spacelike hypersurfaces $\Sigma_{\tau}$, see Section \ref{sec:foliationsnariai}.

The current $J^T[\psi]$ corresponding to the Killing vector field $T=\partial_t$ is defined in Section \ref{sec:vecbasictools}. The energy flux of $J^T$ can be estimated by
\begin{equation*}
J^T[\psi]\cdot n_{\Sigma_{\tau}}\sim (\partial_t\psi)^2+(1-Kx^2)(\partial_x\psi)^2+|\snabla\psi|^2+\mu^2\psi^2,
\end{equation*}
and it degenerates at the cosmological horizons $\{x=\pm \frac{1}{\sqrt{K}}\}$.

In Section \ref{sec:boundn} we consider the timelike local red-shift vector fields $N$ and $\overline{N}$. The corresponding currents $J^N[\psi]$ and $J^{\overline{N}}[\psi]$ are each positive definite and non-degenerate at one of the cosmological horizons, so
\begin{equation*}
J^N[\psi]\cdot n_{\Sigma_{\tau}}+ J^{\overline{N}}[\psi]\cdot n_{\Sigma_{\tau}} \sim (\partial_t\psi)^2+(\partial_x\psi)^2+|\snabla\psi|^2+\mu^2\psi^2.
\end{equation*}
The differences between $N$, $\overline{N}$ and $T$ are denoted $R=N-T$ and $\overline{R}=\overline{N}-T$.

Furthermore, by spherical symmetry of Nariai we can decompose the solutions $\psi$ to (\ref{eq:kgequation}) into spherical harmonic modes $\psi_l$, defined in Section \ref{sec:sphericalharmonics}.

We use the shorthand notation $\Omega^k \psi$, to denote the angular derivatives
\begin{equation*}
\Omega_1^{j_1}\Omega_2^{j_2}\Omega_3^{j_3}(\psi),
\end{equation*}
where $\Omega_i$, $i=1,2,3$ are the generators of the $SO(3)$ symmetry and $j_1+j_2+j_3=k$. Since $\Omega^k$ is a product of Killing vector fields, we can commute $\Omega^k$ with the Klein-Gordon operator.

We also consider the Cauchy problem for (\ref{eq:kgequation}) in the entire Nariai spacetime, where we work in global coordinates $(\tilde{t},\tilde{x},\theta,\phi)$. We foliate the entire spacetime by the spacelike hypersurfaces $\tilde{\Sigma}_{\tau}$, see Section \ref{sec:foliationsnariai}.

The global vector field $\tilde{N}$ is defined by,
\begin{equation*}
\tilde{N}=\frac{1}{\cosh \tilde{t}}\partial_{\tilde{t}},
\end{equation*}
with corresponding energy current
\begin{equation*}
J^{\tilde{N}}\cdot n_{\tilde{\Sigma}_{\tau}}\,d\mu_{\tilde{\Sigma}_{\tau}}\sim \left[(\partial_{\tilde{t}}\psi)^2+\frac{1}{\cosh^2 \tilde{t}}(\partial_{\tilde{x}}\psi)^2+|\snabla\psi|^2+\mu^2\psi^2\right] d\mu_{\s^2}d\tilde{x}.
\end{equation*}
We can moreover commute (\ref{eq:kgequation}) with the globally spacelike Killing vector field $X=\sqrt{K}\partial_{\tilde{x}}$.

All integrals below are with respect to the natural volume form corresponding to the induced metric. The following statements hold in Nariai:

\begin{theorem}[Boundedness in the static region of Nariai]
\label{th:boundn}
There exists a constant $C=C(e,\Lambda,\Sigma)>0$ such that for all $\mu\in \R$
\begin{equation*}
\int_{\Sigma_{\tau}} J^N[\psi]\cdot n_{\Sigma_{\tau}}+J^{\overline{N}}[\psi]\cdot n_{\Sigma_{\tau}}\leq C E_{N}[\psi],
\end{equation*}
where
\begin{align*}
E_{N}[\psi]&:=\int_{\Sigma_{0}} J^N[\psi]\cdot n_{\Sigma_{0}}+J^{\overline{N}}[\psi]\cdot n_{\Sigma_{0}}.
\end{align*}

Additionally, there exists a constant $C=C(\mu,e,\Lambda,\Sigma)>0$ such that
\begin{equation*}
\begin{split}
||\psi||_{L^{\infty}(\Sigma_{\tau})}&\leq C \Big[\sqrt{E_{N}[\psi]+E_N[T\psi]+E_{N}[R \psi]+E_{N}[\overline{R}\psi]}\\
&+||\Psi_0||_{L^{\infty}(\Sigma_0)}+||\Psi_0'||_{L^1(\Sigma_0)}+||\nabla_{\Sigma}\Psi_0||_{L^1(\Sigma_0)}\Big],
\end{split}
\end{equation*}
If $\mu\neq 0$, we can remove the $L^1$ and $L^{\infty}$ norms of $\Psi_0$ and $\Psi_0'$ on the right-hand side of the above inequality and we have that $C=\mu^{-2}\tilde{C}(e,\Lambda,\Sigma)$.
\end{theorem}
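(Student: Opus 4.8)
\textbf{Proof proposal for Theorem \ref{th:boundn}.}

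The plan is to establish the energy bound first and then upgrade it to the pointwise bound via commutation and Sobolev embedding. For the energy estimate, the key object is the local red-shift vector field $N$ (and its mirror image $\overline{N}$) provided by the theorem of Dafermos--Rodnianski quoted in Section \ref{sec:overviewresults}: because $\mathcal{C}^+$ has positive surface gravity $\kappa=\sqrt{K}$ with respect to the timelike Killing field $T$, there is a timelike $N$, equal to $T$ away from a neighbourhood of $\mathcal{C}^+$, such that $K^N[\psi]\gtrsim J^N[\psi]\cdot n$ pointwise in that neighbourhood, with a complementary estimate $K^N[\psi]\lesssim J^T[\psi]\cdot n \lesssim J^N[\psi]\cdot n + J^{\overline N}[\psi]\cdot n$ away from it. Applying Stokes' theorem (Section \ref{sec:vecbasictools}) to $J^N[\psi]$ in the region $\mathcal{R}(0,\tau)$, whose boundary consists of $\Sigma_0$, $\Sigma_\tau$ and portions of the horizons $\mathcal{C}^+$, $\overline{\mathcal{C}}^+$ (where the flux has a good sign by Lemma \ref{dec}, since $N$ is causal there), and estimating the bulk term $\int K^N[\psi]$ by the flux through the slices via the structure above, a Grönwall argument in $\tau$ — using that $T$ is Killing so only the red-shift regions contribute a non-controllable sign, and that $b_N$ is uniformly bounded above and below — yields $\int_{\Sigma_\tau} J^N[\psi]\cdot n_{\Sigma_\tau} \lesssim \int_{\Sigma_0} J^N[\psi]\cdot n_{\Sigma_0}$. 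Running the symmetric argument for $\overline{N}$ near $\overline{\mathcal{C}}^+$ and adding gives the first displayed inequality. Note $\mu$ does not enter with a bad sign: $\mathbf{T}_{\alpha\beta}$ contributes $+\mu^2\psi^2$ to $J^N\cdot n$, and $K^N$ picks up a harmless $\mu^2 \psi^2$ term controlled by the flux, so the constant is $\mu$-independent here.

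For the pointwise bound, the strategy is the standard commute-and-embed argument adapted to the fact that $\psi_0$ (the $l=0$ mode) does \emph{not} decay and must be treated separately. Since $T$ and the rotation generators $\Omega_i$ are Killing, $T\psi$, $\Omega^k\psi$ again solve (\ref{eq:kgequation}); $R=N-T$ and $\overline R=\overline N-T$ are not Killing, but the standard red-shift commutation shows $\square_g(R\psi)$, $\square_g(\overline R\psi)$ are controlled by $\psi$ and its first derivatives near the horizons, so the energy estimate just proved applies to $T\psi$, $R\psi$, $\overline R\psi$ as well. On each slice $\Sigma_\tau$, writing $\psi = \psi_0 + \psi_{l\geq 1}$: for $\psi_{l\geq 1}$ the operator $\square_g$ on $\Sigma_\tau$ is elliptic with a strictly positive zeroth-order part ($m_l^2\geq 2 r_0^{-2}$), so elliptic estimates bound $\|\psi_{l\geq 1}\|_{H^2(\Sigma_\tau)}$ by $\int_{\Sigma_\tau}\big(J^N[\psi]\cdot n + J^N[T\psi]\cdot n\big)$ plus angular-derivative energies, and a Sobolev inequality on the $3$-manifold $\Sigma_\tau$ (controlling $L^\infty$ by $H^2$, using uniform bounds on $b_N$ and the geometry) finishes this piece — this is where commuting with $\Omega^k$ for $k$ up to $2$ enters, absorbed into $E_N[\psi]$ via the elliptic estimate. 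For $\psi_0$, instead use the d'Alembert representation (\ref{dalm1}): $\psi_0(u,v) = \psi_0(u_\Sigma(v),v) + \int \partial_u\psi_0\, d\bar u$, bound the first term by $\|\Psi_0\|_{L^\infty(\Sigma_0)}$, and the integral term by $\int_{\Sigma_0}|\partial_u\psi_0|$ along the relevant outgoing segment (transported without growth since $T$ is Killing and, for $\mu=0$, $\partial_u\psi_0$ is literally constant along $v=\mathrm{const}$); after passing from $(u,v)$ to the coordinates on $\Sigma_0$ and using that $b_N$ is bounded, this is $\|\Psi_0'\|_{L^1(\Sigma_0)} + \|\nabla_\Sigma \Psi_0\|_{L^1(\Sigma_0)}$ up to constants. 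Adding the two contributions gives the second displayed bound. When $\mu\neq 0$ there is no special $l=0$ mode: the zeroth-order term $-\mu^2$ makes $\square_g$ on $\Sigma_\tau$ (globally elliptic with positive zeroth order) directly invertible, the elliptic estimate applies uniformly in $l\geq 0$, the $\psi_0$ terms disappear, and tracking the dependence on $\mu$ through the elliptic estimate (the zeroth-order coefficient is $\mu^2$, so the inverse costs $\mu^{-2}$) yields $C=\mu^{-2}\tilde C(e,\Lambda,\Sigma)$.

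The main obstacle I anticipate is the $\psi_0$ part of the pointwise estimate when $\mu=0$: the elliptic machinery that handles $l\geq 1$ fails there because $\square_g$ restricted to a slice has trivial kernel on the $l=0$ part (it is just $\partial_x\big((1-Kx^2)\partial_x\cdot\big)$ with no zeroth-order term), so one genuinely cannot bound $\|\psi_0\|_{L^\infty}$ by an energy flux alone — the d'Alembert formula and the appearance of $\|\Psi_0\|_{L^\infty}$, $\|\Psi_0'\|_{L^1}$, $\|\nabla_\Sigma\Psi_0\|_{L^1}$ on the right is not an artefact of the proof but reflects the conservation law along the degenerate horizons discussed in Section \ref{sec:overviewresults}. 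The care needed is in relating the null coordinate $u$ on $\mathcal{R}$, the static time $t$, and the coordinate $\rho$ along $\Sigma$ uniformly, so that the $L^1$ norm along an outgoing null ray is dominated by an $L^1$ norm over $\Sigma_0$ with a constant depending only on the geometry of $\Sigma$ (via $C_1, C_2, b_N$); this uses (\ref{eq:tildetvst}) and the explicit form of the tortoise coordinate. A secondary technical point is verifying that the red-shift commutation errors for $R\psi$, $\overline R\psi$ really are controlled by $E_N[\psi] + E_N[T\psi]$ and do not require a higher-order energy — this is standard but must be checked with the explicit $N$.
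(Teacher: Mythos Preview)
Your proposal is correct and follows essentially the same route as the paper: the energy bound via the Dafermos--Rodnianski red-shift vector fields and a Gr\"onwall-type argument (the paper phrases it as a differential inequality $f'+f\leq CD_0$), then pointwise bounds via elliptic estimates plus Sobolev on $\Sigma_\tau$ for $\psi_{l\geq 1}$, commutation with $T,R,\overline R$, and d'Alembert for $\psi_0$ when $\mu=0$. One small clarification: you do not actually need to commute with $\Omega^k$ --- the elliptic estimate on $\Sigma_\tau$ (the paper's (\ref{est:standardelliptic})) directly controls the full $H^2$ norm, angular derivatives included, from $J^N[\psi]\cdot n$ and $J^N[N\psi]\cdot n$, which is why only $E_N[\psi], E_N[T\psi], E_N[R\psi], E_N[\overline R\psi]$ appear on the right-hand side.
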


Theorem \ref{th:boundn} is proved in Propositions \ref{prop:nondegeboundednessnariai} and \ref{prop:pointboundn}.

\begin{theorem}[Decay in the static region of Nariai]
\label{th:decn}
There exists a constant $C=C(e,\Lambda,\Sigma)>0$ such that for all $\mu \in \R$ and $k\in \N$ we can estimate
\begin{equation*}
\int_{\Sigma_{\tau}} J^{N}[\psi]\cdot n_{\Sigma_{\tau}}+J^{\overline{N}}[\psi]\cdot n_{\Sigma_{\tau}}\leq C\tau^{-k}\sum_{|m|\leq k}E_N[\Omega^m \psi].
\end{equation*}

Additionally, if we restrict $\psi=\psi_{l\geq 1}$ in the $\mu=0$ case, there exists a constant $C_k=C_k(k,e,\Lambda,\Sigma,\mu)>0$, such that for all $\mu \in \R$
\begin{equation*}
||\psi||_{L^{\infty}(\Sigma_{\tau})}\leq C_k\tau^{-\frac{k}{2}}\sqrt{\sum_{|m|\leq k}E_N[\Omega^m \psi]+E_{N}[\Omega^m T\psi]+E_N[\Omega^m R \psi]+E_{N}[\Omega^m \overline{R}\psi]}.
\end{equation*}
If $\mu\neq 0$, we have that $C_k=\mu^{-2}\tilde{C}_k(e,\Lambda,\Sigma)$.

Moreover, there exist constants $C=C(\mu,e,\Lambda,\Sigma)>0$ and $\tilde{c}(e,\Lambda,\Sigma)>0$, such that for $c(l)= \frac{\tilde{c}}{l(l+1)}$ when $l\geq 1$ and $c(0)=\tilde{c}$, we can estimate
\begin{align*}
\int_{\Sigma_{\tau}} J^{N}[\psi_l]\cdot n_{\Sigma_{\tau}}+J^{\overline{N}}[\psi_l]\cdot n_{\Sigma_{\tau}}&\leq e^{-c(l)\tau}E_N[\psi_l],\\
||\psi_l||_{L^{\infty}(\Sigma_{\tau})}&\leq Ce^{-\frac{c(l)}{2}\tau}\sqrt{E_N[\psi_l]+E_{N}[ T\psi_l]+E_N[R \psi_l]+E_{N}[ \overline{R}\psi_l]},
\end{align*}
where in the second estimate we need to take $l\geq 1$ if $\mu=0$ and we have that $C=\mu^{-2}\tilde{C}(e,\Lambda,\Sigma)$, if $\mu\neq 0$.
\end{theorem}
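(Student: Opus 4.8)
\emph{Overall strategy.} The plan is the standard one for wave equations in the presence of trapping: prove an integrated local energy decay (ILED) estimate in $\mathcal{R}\cap J^{+}(\Sigma)$ with the loss of one angular derivative, combine it with the energy boundedness of Theorem~\ref{th:boundn} and with the red-shift estimates near the cosmological horizons $\mathcal{C}^{+},\overline{\mathcal{C}}^{+}$, and then run a hierarchy/pigeonhole argument to upgrade the bulk estimate to time decay of $\int_{\Sigma_{\tau}}(J^{N}+J^{\overline{N}})[\psi]\cdot n_{\Sigma_{\tau}}$. For a fixed mode $\psi_{l}$ with $l\ge 1$ (or $l=0$ and $\mu\neq 0$) trapping is a high-frequency effect and so costs nothing, which upgrades the polynomial decay to exponential decay with a mode-dependent rate. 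Pointwise decay is then read off by Sobolev embedding on the compact leaves $\Sigma_{\tau}$, after commuting with the Killing fields $T$ and $\Omega_{i}$.

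\emph{Step 1: the Morawetz estimate.} I would take a multiplier $V=f(x)\partial_{x}$ in the static coordinates of Section~\ref{sec:staticcoordinates}, together with a zeroth-order modification of the form $\textnormal{div}\big(w(x)\psi\,\grad\psi\big)$ and a small multiple of $T$, with $f$ increasing, $f(0)=0$, $f'>0$ near $x=0$, and $f,w$ bounded. Computing $K^{V}[\psi]$ and the modification terms and choosing $f,w$ appropriately, the modified bulk current should be pointwise $\gtrsim (\partial_{x}\psi)^{2}+(\partial_{t}\psi)^{2}+\mu^{2}\psi^{2}+|\snabla\psi|^{2}$ away from the photon sphere $\{x=0\}$, degenerating at $\{x=0\}$ only in the $(\partial_{t}\psi)^{2}$, $\mu^{2}\psi^{2}$ and $|\snabla\psi|^{2}$ directions. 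Because the effective potential $V_{N}(x)=\tfrac{\ell^{2}}{r_{0}^{2}}(1-Kx^{2})$ has a nondegenerate maximum at $x=0$ (Section~\ref{sec:trapping}), the degeneracy can be arranged to be a single power of $x$; one then trades one angular derivative by commuting with $\Omega_{i}$, which commutes with $\square_{g}$. For a fixed mode $\psi_{l}$ the "lost" quantity is exactly $m_{l}^{2}\psi_{l}^{2}=\tfrac{l(l+1)}{r_{0}^{2}}\psi_{l}^{2}$ (respectively $\mu^{2}\psi_{0}^{2}$ when $l=0$), so no commutation is needed and the Morawetz estimate is nondegenerate with a constant proportional to $\tfrac{1}{l(l+1)}$ (respectively $\tfrac{1}{\mu^{2}}$). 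The boundary terms of this identity on $\mathcal{C}^{+}\cup\overline{\mathcal{C}}^{+}$ and on $\Sigma_{0},\Sigma_{\tau}$ are controlled using the positive surface gravity $\kappa=\sqrt{K}>0$ together with the local red-shift vector fields $N,\overline{N}$ of Theorem~\ref{th:boundn}, whose spacetime bulk term dominates the degenerate $J^{T}$-bulk near each horizon; combining the red-shift identity, the Morawetz identity and Theorem~\ref{th:boundn} yields
\[
\int_{\mathcal{R}(0,\tau)}\big(J^{N}[\psi]\cdot n+J^{\overline{N}}[\psi]\cdot n\big)\ \lesssim\ E_{N}[\psi]+E_{N}[\Omega\psi],
\]
and for a fixed mode the same with $E_{N}[\Omega\psi]$ removed and the $l$- or $\mu$-dependent constant in front of the right-hand side.

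\emph{Step 2: from ILED to decay.} Since $(J^{N}+J^{\overline{N}})[\psi]\cdot n$ is a nonnegative flux controlled by the red-shift energy, which is nonincreasing up to the (favorably signed, controllable) fluxes through the cosmological horizons, the bulk bound above and the pigeonhole argument of Dafermos--Rodnianski give $\int_{\Sigma_{\tau}}(J^{N}+J^{\overline{N}})[\psi]\cdot n\lesssim \tau^{-1}\sum_{|m|\le 2}E_{N}[\Omega^{m}\psi]$; iterating, after commuting with $\Omega^{m}$ for $|m|\le k$, produces the stated $\tau^{-k}$ estimate with $\sum_{|m|\le k}E_{N}[\Omega^{m}\psi]$. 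For a fixed mode the nondegenerate bulk bound reads $\int_{\tau_{1}}^{\tau_{2}}\!\int_{\Sigma_{\bar\tau}}(J^{N}+J^{\overline{N}})[\psi_{l}]\cdot n\,d\bar\tau\lesssim c(l)^{-1}\int_{\Sigma_{\tau_{1}}}(J^{N}+J^{\overline{N}})[\psi_{l}]\cdot n$ with $c(l)\sim \tfrac{\tilde c}{l(l+1)}$ (resp. $c(0)\sim\tilde c$, $\mu\neq 0$); together with monotonicity this is a Gr\"onwall-type inequality yielding $\int_{\Sigma_{\tau}}(J^{N}+J^{\overline{N}})[\psi_{l}]\cdot n\le e^{-c(l)\tau}E_{N}[\psi_{l}]$ after redefining $\tilde c$.

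\emph{Step 3: pointwise decay, and the main obstacle.} On each compact leaf, Sobolev embedding gives $\|\psi\|_{L^{\infty}(\Sigma_{\tau})}^{2}\lesssim\sum_{j\le 2}\|\nabla_{\Sigma}^{j}\psi\|_{L^{2}(\Sigma_{\tau})}^{2}$; one trades $\nabla_{\Sigma}$-derivatives for $T$- and $\Omega_{i}$-derivatives (Killing, commuting with $\square_{g}$) and for $\partial_{x}$-derivatives, using elliptic estimates on $\Sigma_{\tau}$ and the equation~(\ref{eq:kgequation}) to convert these into the first-order nondegenerate energies $E_{N}[\psi],E_{N}[T\psi],E_{N}[R\psi],E_{N}[\overline{R}\psi]$ (the commutations with $R=N-T$ and $\overline{R}=\overline{N}-T$ near the horizons handled by the red-shift commutation identity, whose error terms have a good sign up to lower order), plus, when $\mu=0$, a zeroth-order term that is controllable only for $\psi=\psi_{l\ge 1}$ — for $l=0$, $\mu=0$ there is genuine non-decay, cf. Section~\ref{sec:nondec} — while for $\mu\neq 0$ the $\mu^{2}\psi^{2}$ term in $J^{N}$ recovers the $L^{2}$ control at the cost of a factor $\mu^{-2}$. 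Running the argument of Steps 1--2 on the commuted quantities $\Omega^{m}\psi,\Omega^{m}T\psi,\Omega^{m}R\psi,\Omega^{m}\overline{R}\psi$ then gives decay of all four energy families, and inserting the rates of Step 2 yields the pointwise bounds $\tau^{-k/2}$ and $e^{-c(l)\tau/2}$ (the square roots because pointwise bounds pass through $\sqrt{\text{energy}}$). The main obstacle is the construction of $f$ and $w$ in Step 1 so that the modified Morawetz bulk is simultaneously coercive in all derivatives away from $x=0$, degenerate at $x=0$ in a way that costs exactly one angular derivative with the sharp $1/(l(l+1))$ weight per mode, and compatible with the red-shift multipliers at \emph{both} cosmological horizons — the photon sphere and the two horizons must all be accommodated by a single globally consistent multiplier on $\mathcal{R}$.
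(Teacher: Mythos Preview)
Your proposal is correct and follows essentially the same route as the paper: a Morawetz multiplier $V=f\partial_{x_*}$ with a first-order (Lagrangian) modification, combined with a second radial multiplier to recover $(\partial_t\psi)^2$ and the red-shift vector fields near $\mathcal{C}^+,\overline{\mathcal{C}}^+$, yields the ILED with one angular derivative lost (Corollary~\ref{cor:iednfin}); the pigeonhole iteration then gives $\tau^{-k}$ decay (Proposition~\ref{prop:energydecaynariai}), a Gr\"onwall lemma gives the mode-wise exponential decay (Proposition~\ref{prop:singlemodedec}), and Sobolev/elliptic estimates plus commutation with $T,R,\overline{R}$ give the pointwise statements (Proposition~\ref{prop:pointwisedecnariai}). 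Two small corrections: adding a multiple of $T$ contributes nothing to the bulk since $K^T=0$ --- the paper instead uses a second (unmodified) multiplier $W=g(x_*)\partial_{x_*}$ to regain $(\partial_t\psi)^2$ once $|\snabla\psi|^2$ is under control --- and one pigeonhole step costs only $\sum_{|m|\le 1}$, not $\sum_{|m|\le 2}$.
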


Theorem \ref{th:decn} is proved in Propositions \ref{prop:energydecaynariai}, \ref{prop:singlemodedec} and \ref{prop:pointwisedecnariai}.

\begin{theorem}[Global boundedness in Nariai]
\label{th:globboundn}
For any $\tau\geq 0$, we can estimate
\begin{equation*}
\int_{\tilde{\Sigma}_{\tau}}J^{\tilde{N}}[\psi]\cdot n_{\tau}\leq \int_{\tilde{\Sigma}}J^{\tilde{N}}[\psi]\cdot n_{\tilde{\Sigma}}=:\tilde{E}_N[\psi].
\end{equation*}

Moreover, there exists a constant $C=C(K,\mu)>0$ such that
\begin{equation*}
\begin{split}
\label{eq:globalnariaipointwise}
||\psi||^2_{L^{\infty}(\tilde{\Sigma}_{\tau})}&\leq C \sum_{|k|+|m|\leq 2, |m|\leq 1} \tilde{E}_N[\Omega^m X^k \psi]+C\Big(||\Psi_0||_{L^{\infty}(\tilde{\Sigma}_{\tau_0})}^2\\
&+||\Psi'_0||_{L^1(\tilde{\Sigma}_{\tau_0})}^2+||\partial_{\tilde{x}}\Psi_0||^2_{L^1(\tilde{\Sigma}_{\tau_0})}\Big),
\end{split}
\end{equation*}
where in the $\mu\neq 0$ case $C=\tilde{C}(K)r_0^{-2}\mu^{-2}$, and we can remove the $L^1$ and $L^{\infty}$ norms of $\Psi_0$ and $\Psi_0'$ on the right-hand side of the above inequality.
\end{theorem}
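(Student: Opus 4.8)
The plan is to work in the global coordinates $(\tilde{t},\tilde{x},\theta,\phi)$ with $g=K^{-1}(-d\tilde{t}^2+\cosh^2\tilde{t}\,d\tilde{x}^2)+r_0^2\slashed{g}_{\s^2}$, foliate by $\tilde{\Sigma}_\tau=\{\tilde{t}=\tau\}$, and apply the energy identity (Stokes' Theorem from Section \ref{sec:vecbasictools}) to the current $J^{\tilde N}[\psi]$ with $\tilde N=(\cosh\tilde{t})^{-1}\partial_{\tilde{t}}$ on the region $\bigcup_{0\le\bar\tau\le\tau}\tilde\Sigma_{\bar\tau}$. First I would compute the deformation tensor of $\tilde N$ and the corresponding bulk term $K^{\tilde N}[\psi]=\textnormal{div}\,J^{\tilde N}[\psi]$. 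The key structural point is that $\partial_{\tilde t}$ is a conformal-type direction and the $\cosh\tilde t$ weight in $\tilde N$ is chosen precisely so that, using $\square_g\psi=\mu^2\psi$ to eliminate second derivatives, the bulk term $K^{\tilde N}[\psi]$ has a favourable sign (or at worst vanishes). Concretely, since the volume form on $\tilde\Sigma_\tau$ is $K^{-1/2}r_0^2\cosh\tau\,d\tilde x\,d\mu_{\s^2}$, the weighting makes $J^{\tilde N}\cdot n_{\tilde\Sigma_\tau}\,d\mu_{\tilde\Sigma_\tau}$ equal (up to a fixed constant) to $[(\partial_{\tilde t}\psi)^2+(\cosh\tilde t)^{-2}(\partial_{\tilde x}\psi)^2+|\snabla\psi|^2+\mu^2\psi^2]\,d\tilde x\,d\mu_{\s^2}$, as stated. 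Then the energy identity reads
\begin{equation*}
\int_{\tilde\Sigma_\tau}J^{\tilde N}[\psi]\cdot n_{\tilde\Sigma_\tau}-\int_{\tilde\Sigma}J^{\tilde N}[\psi]\cdot n_{\tilde\Sigma}=-\int_{\bigcup_{0\le\bar\tau\le\tau}\tilde\Sigma_{\bar\tau}}K^{\tilde N}[\psi],
\end{equation*}
and the first assertion follows once $K^{\tilde N}[\psi]\ge 0$. The expected main obstacle is verifying this sign: one must carefully track the $\mu^2\psi^2$ contribution and the cross terms coming from $\partial_{\tilde t}(\cosh\tilde t)^{-1}=-\sinh\tilde t(\cosh\tilde t)^{-2}$, checking that the coefficient of $(\partial_{\tilde t}\psi)^2-(\cosh\tilde t)^{-2}(\partial_{\tilde x}\psi)^2-|\snabla\psi|^2-\mu^2\psi^2$ arising from the expansion of the slices has the right definiteness. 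If the bulk term is not pointwise nonnegative but only its integral is controlled, one instead absorbs it via Grönwall in $\tau$; I would first attempt the clean monotonicity and fall back to Grönwall only if forced.

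For the pointwise estimate, the strategy is to first bound $\|\psi\|_{L^2(\tilde\Sigma_\tau)}$ and then run Sobolev embedding on $\tilde\Sigma_\tau\cong\R\times\s^2$ after commuting with enough Killing vector fields. The crucial mechanism for the $L^2$ bound is the \emph{exponential expansion} of the volume form: writing $\psi(\tau,\tilde x,\omega)=\psi(\tau_0,\tilde x,\omega)+\int_{\tau_0}^\tau\partial_{\tilde t}\psi\,d\bar t$ for a reference time $\tau_0$, squaring, and integrating against $d\tilde x\,d\mu_{\s^2}$ (the slice measure \emph{without} the $\cosh\tau$ factor), one estimates $\int(\partial_{\tilde t}\psi)^2\,d\tilde x\,d\mu_{\s^2}$ at time $\bar t$ by $(\cosh\bar t)^{-1}$ times the (bounded, by part one) $\tilde N$-energy, and $\int_{\tau_0}^\tau(\cosh\bar t)^{-1/2}\,d\bar t$ is uniformly bounded in $\tau$ by the convergence of $\int^\infty(\cosh\bar t)^{-1/2}\,d\bar t$ — wait, rather one uses $\int(\cosh\bar t)^{-1}\,d\bar t<\infty$ after a Cauchy–Schwarz in $\bar t$ against $\int(\cosh\bar t)^{-1}(\partial_{\tilde x}\psi)^2$ is not what is needed; the point is simply that $\int_{\tau_0}^\tau \|\partial_{\tilde t}\psi\|_{L^2(d\tilde x d\mu_{\s^2})}\,d\bar t \lesssim \int_{\tau_0}^\infty (\cosh\bar t)^{-1/2}\,d\bar t\,\sqrt{\tilde E_N[\psi]}<\infty$, so the $L^2$ norm on the flat measure is controlled uniformly by the initial $\tilde N$-energy plus the initial-slice $L^2$ norm of $\psi$. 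The spherically symmetric part $\psi_0$ does not see the angular Sobolev gain and must be handled separately: here I would invoke the d'Alembert representation (\ref{dalm1}) for $\psi_0$ (valid since $\psi_0$ solves a $1{+}1$-dimensional equation), which produces exactly the $\|\Psi_0\|_{L^\infty}$, $\|\Psi_0'\|_{L^1}$, $\|\partial_{\tilde x}\Psi_0\|_{L^1}$ terms on the right-hand side; if $\mu\neq0$ the zeroth mode also satisfies a massive equation and can be absorbed into the energy with a $\mu^{-2}$ loss, matching the claimed constant $C=\tilde C(K)r_0^{-2}\mu^{-2}$.

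Finally, to upgrade $L^2$ control to $L^\infty$ on the non-compact slices $\R\times\s^2$, I commute the equation with the spacelike Killing field $X=\sqrt K\partial_{\tilde x}$ and with the rotation generators $\Omega^m$ (both commute with $\square_g$, as noted), so that $X^k\Omega^m\psi$ again solves (\ref{eq:kgequation}) and inherits the uniform $\tilde N$-energy and $L^2$ bounds. Then a standard Sobolev inequality on $\R\times\s^2$, using one $\tilde x$-derivative and up to two angular derivatives, yields $\|\psi\|_{L^\infty(\tilde\Sigma_\tau)}^2\lesssim\sum_{|k|+|m|\le 2,\,|m|\le1}\|X^k\Omega^m\psi\|_{L^2(\tilde\Sigma_\tau)}^2$, and combining with the $L^2$ bound from the previous paragraph gives the stated inequality. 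The only subtlety is bookkeeping the index constraint $|k|+|m|\le2$, $|m|\le1$ — at most one angular derivative is needed because the sphere is two-dimensional so $H^2(\s^2)\hookrightarrow L^\infty(\s^2)$ wait that needs two; rather one distributes: one $\tilde x$-derivative for the $\R$ factor and the remaining budget for $\s^2$, and a refined product Sobolev embedding $H^1_{\tilde x}H^1_\omega\cap H^1_{\tilde x}H^0_\omega\cdots$ closes it — and tracking that the constant depends only on $K$ (through the fixed geometry of the slices) and, in the $\mu\neq0$ reduction of the zeroth mode, on $r_0^{-2}\mu^{-2}$.
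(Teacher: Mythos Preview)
Your treatment of the energy monotonicity (Part 1) is essentially the paper's: compute $K^{\tilde N}$ with $f=\cosh^{-1}\tilde t$, observe that $f'=-f\tanh\tilde t$ makes the coefficient of $|\snabla\psi|^2+\mu^2\psi^2$ vanish and the coefficient of $(\partial_{\tilde t}\psi)^2+\cosh^{-2}\tilde t(\partial_{\tilde x}\psi)^2$ equal to $Kf\tanh\tilde t\ge 0$, so clean monotonicity holds and no Gr\"onwall is needed.

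There is, however, a genuine gap in your route to the $L^2$ bound. You claim that $\int(\partial_{\tilde t}\psi)^2\,d\tilde x\,d\mu_{\s^2}$ at time $\bar t$ is controlled by $(\cosh\bar t)^{-1}\tilde E_N[\psi]$, so that $\int_{\tau_0}^\tau\|\partial_{\tilde t}\psi\|_{L^2}\,d\bar t$ converges. This is false: the whole point of the weight $f=\cosh^{-1}\tilde t$ is that it \emph{cancels} the $\cosh\tilde t$ in the volume form, so the $\tilde N$-energy controls $\int(\partial_{\tilde t}\psi)^2\,d\tilde x\,d\mu_{\s^2}$ only \emph{uniformly}, with no decay in $\bar t$. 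Your fundamental-theorem-of-calculus argument therefore yields $\|\psi(\tau)\|_{L^2}\lesssim \|\psi(\tau_0)\|_{L^2}+(\tau-\tau_0)\sqrt{\tilde E_N}$, which is useless.

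The paper bypasses this entirely. The $\tilde N$-energy already contains $\int K^{-1}(|\snabla\psi|^2+\mu^2\psi^2)\,d\tilde x\,d\mu_{\s^2}$ (this is equation (\ref{eq:globalnariaieb2})). For $\mu\neq 0$ this gives the $L^2$ bound of $\psi$ immediately, with the $\mu^{-2}$ loss; for $\mu=0$ and $l\ge 1$ the Poincar\'e inequality on $\s^2$ converts the $|\snabla\psi|^2$ control into $\psi^2$ control. Only $\psi_0$ with $\mu=0$ genuinely requires the d'Alembert formula. This also explains the constraint $|m|\le 1$ that confused you: since the energy itself already carries one angular derivative via $|\snabla\psi|^2$, commuting once with $\Omega$ yields control of $|\snabla\Omega\psi|^2\sim|\snabla^2\psi|^2$, which together with two $X$-commutations covers all of $H^2(\R\times\s^2)$.
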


Theorem \ref{th:globboundn} is proved in Proposition \ref{prop:eboundglobalnariai} and Corollary \ref{cor:pointboundglobalnariai}.

\begin{theorem}[Global decay in Nariai]
\label{th:globdecn}
Let $\psi=\psi_{l\geq 1}$ if $\mu=0$. Then there exist constants $C=C(K,r_0,\mu)>0$ and $c=c(K,\mu,r_0)>0$, such that
\begin{equation*}
\label{eq:finaledecglobalnariaithm}
\int_{\tilde{\Sigma}_{\tau}}J^{\tilde{N}}[\psi]\cdot n_{\tau}\leq Ce^{-c\tau}\tilde{E}_N[\psi],
\end{equation*}
where in the case $\mu\neq 0$,  $C=\tilde{C}(r_0,K)e^{\tilde{c}(K)\tau_0}$, with \newline $\tau_0>\max\{1,\log (2\sqrt{2}\sqrt{K}|\mu|^{-1})\}$ and $c=(1+\mu^{-2}+\mu^2)^{-1}\tilde{c}(K,r_0)$.

Moreover,
\begin{equation*}
||\psi||^2_{L^{\infty}(\tilde{\Sigma}_{\tau})}\leq Ce^{-c\tau} \sum_{|k|+|m|\leq 2, |m|\leq 1} \tilde{E}_N[\Omega^m X^k \psi].
\end{equation*}
\end{theorem}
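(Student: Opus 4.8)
\section*{Proof proposal for Theorem \ref{th:globdecn}}

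The plan is to work mode by mode. By Section~\ref{sec:sphericalharmonics} each $\psi_l$ satisfies the $1+1$-dimensional equation \eqref{eq:kqequationfixedl} on $dS_2$, which in the global coordinates of Section~\ref{sec:narstco} reads $-\partial_{\tilde t}^2\psi_l-\tanh\tilde t\,\partial_{\tilde t}\psi_l+\cosh^{-2}\tilde t\,\partial_{\tilde x}^2\psi_l=\nu_l^2\psi_l$ with $\nu_l^2:=K^{-1}(\mu^2+m_l^2)$; in the massless case the restriction $\psi=\psi_{l\geq1}$ guarantees $\nu_l^2\geq 2K^{-1}r_0^{-2}>0$, and if $\mu\neq0$ then $\nu_l^2\geq K^{-1}\mu^2>0$. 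The crucial feature is the friction term $-\tanh\tilde t\,\partial_{\tilde t}\psi_l$, which for $\tilde t>0$ is a genuine damping; this is the global red-shift effect. I would first record the basic energy identity: contracting the equation with $\partial_{\tilde t}\psi_l$ and integrating in $\tilde x$ gives, for $\mathcal E_l(\tilde t):=\tfrac12\int_{\R}[(\partial_{\tilde t}\psi_l)^2+\cosh^{-2}\tilde t\,(\partial_{\tilde x}\psi_l)^2+\nu_l^2\psi_l^2]\,d\tilde x$ (which is comparable to $\int_{\tilde\Sigma_\tau}J^{\tilde N}[\psi_l]\cdot n_\tau$ up to a fixed $K$-dependent constant),
$\tfrac{d}{d\tilde t}\mathcal E_l=-\tanh\tilde t\int_{\R}[(\partial_{\tilde t}\psi_l)^2+\cosh^{-2}\tilde t\,(\partial_{\tilde x}\psi_l)^2]\,d\tilde x\leq0$ for $\tilde t\geq0$. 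This reproves monotonicity but does not by itself give decay, since the right-hand side does not see $\psi_l^2$.

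The decay comes from the promised modification of the current, which at the level of a fixed mode amounts to setting $\mathcal E_{l,\delta}:=\mathcal E_l+\delta\int_{\R}\psi_l\,\partial_{\tilde t}\psi_l\,d\tilde x$. Using the equation one computes $\tfrac{d}{d\tilde t}\int_{\R}\psi_l\partial_{\tilde t}\psi_l\,d\tilde x=\int_{\R}(\partial_{\tilde t}\psi_l)^2\,d\tilde x-\int_{\R}\cosh^{-2}\tilde t\,(\partial_{\tilde x}\psi_l)^2\,d\tilde x-\nu_l^2\int_{\R}\psi_l^2\,d\tilde x-\tanh\tilde t\int_{\R}\psi_l\partial_{\tilde t}\psi_l\,d\tilde x$. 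I would then fix $\tilde t_0$ so large that $\tanh\tilde t_0\geq\tfrac12$ and $\cosh^{-2}\tilde t_0$ is small compared with the worst effective mass (all $\nu_l^2$ if $\mu\neq0$, all $\nu_l^2$ with $l\geq1$ if $\mu=0$), which is why $\tilde t_0$ must be taken $\gtrsim\log(\sqrt K/|\mu|)$ in the $\mu\neq0$ case, as in the statement; and take $\delta>0$ small relative to that worst effective mass. For $\tilde t\geq\tilde t_0$, Cauchy--Schwarz absorptions then give simultaneously $\mathcal E_{l,\delta}\sim\mathcal E_l$ and $\tfrac{d}{d\tilde t}\mathcal E_{l,\delta}\leq-c_0\,\mathcal E_{l,\delta}$, hence $\mathcal E_l(\tilde t)\lesssim e^{-c_0(\tilde t-\tilde t_0)}\mathcal E_l(\tilde t_0)$. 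The point to verify with care is that $c_0,\delta,\tilde t_0$ can be chosen \emph{uniformly in $l$}: the friction controls the two spatial derivative terms with $l$-independent constants, and a large $m_l^2$ only increases $\nu_l^2$, so the uniform rate is $c_0\sim\min\{1,K^{-1}(\mu^2+m_{l_{\min}}^2)\}$, which is what produces the degeneration of the rate as $\mu\to0$. On $[\,0,\tilde t_0\,]$ there is no damping, so I would close the argument using the a priori bound $\mathcal E_l(\tilde t_0)\leq C\,\mathcal E_l(0)$ supplied by Theorem~\ref{th:globboundn}; this costs the factor $e^{c_0\tilde t_0}$ appearing in the final constant.

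Finally I would sum over $l$: since $J^{\tilde N}$ is quadratic in $\psi$ and diagonal in the spherical-harmonic decomposition, orthogonality of the $Y^{m,l}$ on $\s^2$ gives $\int_{\tilde\Sigma_\tau}J^{\tilde N}[\psi]\cdot n_\tau=\sum_l\int_{\tilde\Sigma_\tau}J^{\tilde N}[\psi_l]\cdot n_\tau$, and the uniform rate yields $\int_{\tilde\Sigma_\tau}J^{\tilde N}[\psi]\cdot n_\tau\leq Ce^{-c\tau}\tilde E_N[\psi]$. For the pointwise bound I would commute the Klein--Gordon equation with the Killing fields $X=\sqrt K\,\partial_{\tilde x}$ and the spherical generators $\Omega_i$, apply the energy decay just proved to $\Omega^mX^k\psi$ for $|k|+|m|\leq2$, $|m|\leq1$, recover the $L^2(\tilde\Sigma_\tau)$-norm of the undifferentiated $\psi$ from the zeroth-order part of the energy (directly from $\mu^2\psi^2$ if $\mu\neq0$, and from $|\snabla\psi|^2\gtrsim r_0^{-2}\psi^2$ by the Poincar\'e inequality on $\s^2$ if $\mu=0$ and $l\geq1$), and then feed this into exactly the Sobolev-embedding argument on $\tilde\Sigma_\tau\cong\R\times\s^2$ used for Theorem~\ref{th:globboundn} (Corollary~\ref{cor:pointboundglobalnariai}), where the wave equation is used to trade a pair of angular derivatives for $\tilde t$- and $\tilde x$-derivatives. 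The main obstacle is the modification step: arranging $\mathcal E_{l,\delta}$ to stay equivalent to $\mathcal E_l$ while turning the energy \emph{identity} into a strictly negative differential \emph{inequality}, with the cross term $\delta\tanh\tilde t\int\psi_l\partial_{\tilde t}\psi_l$ absorbed using only the damped $(\partial_{\tilde t}\psi_l)^2$ and the mass term $\nu_l^2\psi_l^2$ --- this is precisely what forces $\delta$ (hence the rate $c$) to be small relative to $\mu$ and $\tilde t_0$ large relative to $\log|\mu|^{-1}$ --- together with checking that every constant comes out uniform in the harmonic number so that the mode sum preserves the exponential rate.
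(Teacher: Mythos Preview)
Your proposal is correct and is essentially the same mechanism as the paper's: your cross term $\delta\int\psi_l\partial_{\tilde t}\psi_l$ is exactly the paper's modification $\tfrac{h'}{2}\psi\partial_\alpha\psi$ (with $h=-\cosh^{-1}\tilde t$) contracted with $n_\tau$, and your differential inequality $\tfrac{d}{d\tilde t}\mathcal E_{l,\delta}\leq -c_0\mathcal E_{l,\delta}$ is the infinitesimal version of the paper's $\int_{\s^2}K^{\tilde N,1}\geq C\int_{\s^2}J^{\tilde N}\cdot n_\tau$ followed by Stokes and the Gr\"onwall-type Lemma~\ref{lm:gronwall}.

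There is one genuine organisational difference worth noting. The paper treats $\psi_{l\geq1}$ \emph{en bloc}, absorbing the badly-signed $\psi^2$ term in $K^{\tilde N,1}$ into $|\snabla\psi|^2$ via the Poincar\'e inequality on $\s^2$; this forces a separate argument for $\psi_0$ when $\mu\neq0$, which the paper handles by the auxiliary-metric trick (Proposition~\ref{prop:globedecnariaipsi0}). Your mode-by-mode setup with effective mass $\nu_l^2=K^{-1}(\mu^2+m_l^2)$ handles $l=0$ on the same footing as $l\geq1$ and so sidesteps that trick entirely. Two small inaccuracies: first, in your direct computation no term of the form $\cosh^{-2}\tilde t\,\psi_l^2$ ever appears, so the requirement ``$\cosh^{-2}\tilde t_0$ small compared with the worst effective mass'' (and hence the $\log|\mu|^{-1}$ in $\tau_0$) is an artefact of the paper's route, not yours---you only need $\tanh\tilde t_0\geq\tfrac12$ and $\delta\lesssim\min_l\nu_l$. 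Second, for the Sobolev step no ``trading via the wave equation'' is used: the energy already contains $|\snabla\psi|^2$, so one commutation with $\Omega_i$ yields $|\snabla\Omega_i\psi|^2$, which after summing in $i$ controls $|\snabla^2\psi|^2$; together with $X,X^2$ this gives the full $H^2(\R\times\s^2)$ norm needed for Proposition~\ref{sobs2}.
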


Theorem \ref{th:globdecn} is proved in Corollaries \ref{prop:globedecnariai} and \ref{prop:globedecnariaipointwise}.

\begin{theorem}[Non-decay in Nariai for $\mu=0$]
\label{th:nondecn}
Let $\mu=0$. There exists a constant 
\begin{equation*}
\uline{\psi}=\uline{\psi}(e,\Lambda,\Sigma_0,\Psi_0,\Psi'_0),
\end{equation*}
such that in $\mathcal{R}$
\begin{equation*}
\psi(u,v,\theta,\phi) \to \uline{\psi},
\end{equation*}
as $u\to \infty$ and $v\to \infty$. For generic $\Psi_0$ and $\Psi_0'$, $\uline{\psi}$ is non-vanishing.
\end{theorem}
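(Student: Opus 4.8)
The plan is to decompose $\psi=\psi_0+\psi_{l\geq 1}$, to show that the higher modes $\psi_{l\geq 1}$ decay to zero at the future corner of $\mathcal{R}$ (the point where $\mathcal{C}^{+}$ and $\overline{\mathcal{C}}^{+}$ meet), and to show that $\psi_0$ stabilises to a constant which, for generic data, is non-zero. First observe that, since $\Sigma_\tau=\phi_\tau(\Sigma_0\cap\mathcal{R})$ with $\phi_\tau$ the flow of $T=\partial_t$, a point of $\mathcal{R}\cap J^{+}(\Sigma_0)$ with static null coordinates $(u,v)$ lies on $\Sigma_\tau$ with $\tau=\tfrac{u+v}{2}-T_0\!\left(\tfrac{v-u}{2}\right)\geq\min(u,v)-C$, where $t=T_0(x_*)$ defines $\Sigma_0$ and $|T_0'|<1$ by the assumptions of Section~\ref{sec:foliationsnariai}; hence $u\to\infty$ together with $v\to\infty$ forces $\tau\to\infty$. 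Applying the pointwise decay of Theorem~\ref{th:decn} to $\psi_{l\geq 1}$ (which solves the massless equation and is supported on $l\geq 1$) gives $\|\psi_{l\geq 1}\|_{L^{\infty}(\Sigma_\tau)}\to 0$, hence $\psi_{l\geq 1}\to 0$ as $u,v\to\infty$; for data of low regularity one first argues for smooth data and passes to the limit using the uniform boundedness of Theorem~\ref{th:boundn}.

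It remains to analyse $\psi_0$. By (\ref{spsymkg}) with $\mu=0$, and since $\Omega^2>0$ on $\mathcal{R}$, $\psi_0$ satisfies $\partial_u\partial_v\psi_0=0$, so $\psi_0(u,v)=F(u)+G(v)$ on $\mathcal{R}\cap J^{+}(\Sigma_0)$; convergence of $\psi_0$ at the corner is then equivalent to the existence of $F_\infty:=\lim_{u\to\infty}F(u)$ and $G_\infty:=\lim_{v\to\infty}G(v)$, in which case $\psi_0\to F_\infty+G_\infty$. Here the geometry of Nariai enters decisively: the cosmological horizons have positive surface gravity $\kappa=\sqrt{K}$, so along an outgoing ray $\{v=v_0\}$ one has $\Omega^2\sim e^{-\kappa u}$ as $u\to\infty$, and the ray meets $\overline{\mathcal{C}}^{+}$ at finite affine parameter. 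Since $\psi_0$ solves the $(1+1)$-dimensional equation (\ref{spsymkg}) on $\mathcal{Q}$ with $H^{2}\times H^{1}$ Cauchy data, it is continuous on $\mathcal{Q}$ (Sobolev embedding in $1+1$ dimensions), in particular across $\overline{\mathcal{C}}^{+}$; therefore $\psi_0|_{\{v=v_0\}}$ has a limit as $u\to\infty$, i.e. $F_\infty+G(v_0)=\psi_0|_{\overline{\mathcal{C}}^{+}}(v_0)$ exists, and symmetrically $F(u_0)+G_\infty=\psi_0|_{\mathcal{C}^{+}}(u_0)$ exists. (Alternatively one can argue quantitatively: for $\psi=\psi_0$ one has $\Omega^m\psi_0=0$ for $m\geq 1$, so Theorem~\ref{th:decn} yields exponential decay of $\int_{\Sigma_\tau}(J^{N}+J^{\overline{N}})[\psi_0]\cdot n_{\Sigma_\tau}$; since $\partial_u\psi_0=F'(u)$ is independent of $v$ and $u$ sweeps $[u_{\min}(\tau),\infty)$ on $\Sigma_\tau$ with $u_{\min}(\tau)\sim\tau$, this non-degenerate energy controls $\int_{u_{\min}(\tau)}^{\infty}|F'(u)|^2\,du$, and a Cauchy--Schwarz estimate over dyadic intervals in $u$ then shows $F$ is Cauchy.)

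Combining the two parts, $\psi(u,v,\theta,\phi)\to F_\infty+G_\infty=:\underline{\psi}$ as $u,v\to\infty$, uniformly in $(\theta,\phi)$. Writing $F_\infty$ out via d'Alembert's formula (\ref{dalm1}), say $F_\infty=\psi_0\!\left(u_\Sigma(v_0),v_0\right)-G(v_0)+\int_{u_\Sigma(v_0)}^{\infty}\partial_u\psi_0\!\left(\bar u,v_\Sigma(\bar u)\right)d\bar u$ with the integrand a fixed expression in $(\Psi_0,\Psi_0')$, exhibits $\underline{\psi}$ as an absolutely convergent linear functional of the spherical means $(\Psi_0,\Psi_0')$ of the data on $\Sigma_0$, depending also on $e,\Lambda,\Sigma_0$. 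This functional is not identically zero, since constant data $\psi\equiv c$ (admissible, no decay of the data being imposed) gives $\psi_0\equiv c$ and hence $\underline{\psi}=c$; therefore $\{\underline{\psi}=0\}$ is a proper closed subspace of the data space and $\underline{\psi}\neq 0$ for generic $(\Psi_0,\Psi_0')$.

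The step I expect to be the main obstacle is establishing the existence of $F_\infty$ and $G_\infty$. The splitting $\psi_0=F(u)+G(v)$ is immediate from d'Alembert, but finite initial energy alone does not prevent $F$ from diverging --- for instance logarithmically --- or oscillating as $u\to\infty$; ruling this out is exactly what forces one to use the non-degeneracy of the cosmological horizons (equivalently, that they lie at finite affine distance), the same red-shift mechanism that drives the estimates of Sections~\ref{sec:boundn} and~\ref{sec:decn}.
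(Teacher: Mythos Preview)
Your proposal is correct and follows essentially the same approach as the paper: decompose $\psi=\psi_0+\psi_{l\geq 1}$, invoke the pointwise decay of Theorem~\ref{th:decn} for the higher modes, and use d'Alembert's formula~(\ref{dalm1}) for $\psi_0$ to extract the limiting constant. The paper's proof in Section~\ref{sec:nondec} is much terser---it simply writes down the d'Alembert integral and asserts convergence without justifying the existence of the limit---so your discussion of why $F_\infty,G_\infty$ exist (via continuity across the non-degenerate horizon, or equivalently via the exponential energy decay of $\psi_0$) supplies a detail the paper leaves implicit.
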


Theorem \ref{th:nondecn} is proved in Section \ref{sec:nondec}.

\subsubsection{Results in $dS_n$}
Theorem \ref{th:globdecn} applied to $\psi=\psi_0$ implies in particular uniform global decay of solutions to (\ref{eq:kgequation}) with $\mu\neq 0$ in $dS_2$. The results can in fact be generalised to obtain uniform global decay of solutions $\psi$ to (\ref{eq:kgequation}) with $\mu\neq 0$ in $dS_n$, with $n\geq 2$. 

Let $(\tilde{t},\theta_1,\ldots,\theta_{n-2},\phi)$ be global coordinates on $dS_n$, where $\tilde{t}\in \R$ and $(\theta_1,\ldots,\theta_{n-2},\phi)$ are the standard $n$-spherical polar coordinates on $\s^{n-1}$, in which the metric is given by
\begin{equation*}
g_{dS_n}=K^{-1}(-d\tilde{t}^2+\cosh^2\tilde{t} \slashed{g}_{\s^{n-1}}),
\end{equation*}
where $K>0$. See for example \cite{hael}. We now take the global red-shift vector field to be 
\begin{equation*}
\tilde{N}=\frac{1}{\cosh^{(n-1)}\tilde{t}}\partial_{\tilde{t}}.
\end{equation*}

\begin{theorem}[Decay in $dS_n$ for $\mu\neq 0$]
\label{th:decaydsn}
Let $s\in \N$ be the smallest integer satisfying $s>\frac{n-1}{2}$. Then there exists constants $C=C(n,K,\mu)>0$ and $c=c(n,K,\mu)>0$ such that
\begin{equation*}
|\psi|^2(\tilde{t},\theta_1,\ldots,\theta_{n-2},\phi)\leq Ce^{-c\tilde{t}} \sum_{|k|\leq s} \tilde{E}[\Omega^k \psi],
\end{equation*}
where $\Omega_i$, $i=1,\ldots, n$ are the generators of $(n-1)$-spherical symmmetry, $C=\tilde{C}(n,K)e^{\tilde{c}(n,K)\tau_0}$, with $\tau_0>\max\{1,\log (2\sqrt{2}\sqrt{K}\sqrt{\frac{n}{2}}|\mu|^{-1})\}$ and $c=(1+\mu^{-2})^{-1}\tilde{c}(n,K)$ and
\begin{equation*}
\tilde{E}[\psi]:=\int_{\{\tilde{t}=0\}}\sqrt{K}J^{\tilde{N}}[\psi]\cdot \partial_{\tilde{t}}.
\end{equation*}
\end{theorem}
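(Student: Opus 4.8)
The plan is to mimic the global method in Nariai (Theorem \ref{th:globdecn}) by working with the globally timelike, $(n-1)$-spherically symmetric vector field $\tilde{N}=(\cosh\tilde{t})^{-(n-1)}\partial_{\tilde{t}}$ and its energy current $J^{\tilde{N}}[\psi]$, exploiting the exponential expansion of the volume form $d\mu_{\tilde{\Sigma}_{\tau}}\sim (\cosh\tilde{t})^{n-1}d\mu_{\s^{n-1}}$ on the global slices $\tilde{\Sigma}_{\tau}=\{\tilde{t}=\tau\}$. First I would decompose $\psi$ into spherical harmonics $\psi=\sum_{l}\psi_l$ on $\s^{n-1}$, so that each $\psi_l$ solves a $1+1$-dimensional Klein-Gordon equation on $dS_2$ with effective mass $\mu^2+m_l^2$, $m_l^2=K l(l+n-2)$. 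The key computation is the divergence identity: one shows that $\mathrm{div}\,J^{\tilde{N}}[\psi]=K^{\tilde{N}}[\psi]$ splits into a manifestly good term plus a term that is controlled by the bulk energy, because the deformation tensor of $\tilde{N}$ has a definite sign coming from the $\tanh\tilde{t}$ factor in $\partial_{\tilde{t}}$-derivatives of $(\cosh\tilde{t})^{-(n-1)}$; the mass term $\mu^2\psi^2$ is what provides the crucial coercivity, which is why $\mu\neq 0$ is needed and why all constants blow up as $\mu\to 0$.

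The core of the argument is an energy estimate of the form
\begin{equation*}
\frac{d}{d\tilde{t}}\int_{\tilde{\Sigma}_{\tilde{t}}}J^{\tilde{N}}[\psi]\cdot n_{\tilde{t}}\,d\mu_{\tilde{\Sigma}_{\tilde{t}}}\leq -c(\tilde{t})\int_{\tilde{\Sigma}_{\tilde{t}}}J^{\tilde{N}}[\psi]\cdot n_{\tilde{t}}\,d\mu_{\tilde{\Sigma}_{\tilde{t}}},
\end{equation*}
valid once $\tilde{t}\geq \tau_0$ with $\tau_0$ chosen so that the expansion dominates any bad lower-order contributions — this is the origin of the threshold $\tau_0>\max\{1,\log(2\sqrt2\sqrt K\sqrt{n/2}|\mu|^{-1})\}$. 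Here one uses that for $\tilde{t}$ large, $\tanh\tilde{t}$ is close to $1$, the cross terms from commuting are subleading, and the volume-form growth converts the marginally-coercive bulk term into a strictly coercive one after a modification of the current (adding a suitable multiple of $\psi\,\tilde{N}\psi$ times a cutoff, exactly as in the Nariai global decay proof). Grönwall then gives exponential decay of the $\tilde{N}$-energy, $\int_{\tilde{\Sigma}_{\tau}}J^{\tilde{N}}[\psi]\cdot n_\tau\leq Ce^{-c\tau}\tilde{E}[\psi]$, with $c=(1+\mu^{-2})^{-1}\tilde{c}(n,K)$ and $C=\tilde{C}(n,K)e^{\tilde{c}(n,K)\tau_0}$ absorbing the estimate on $[0,\tau_0]$ via plain energy boundedness.

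To pass from the energy decay to the claimed pointwise decay, I would commute the equation with the Killing generators $\Omega_i$ of $(n-1)$-spherical symmetry — since $\Omega^k$ is a product of Killing fields it commutes with $\square_g-\mu^2$, so $\Omega^k\psi$ satisfies the same equation and the same exponential energy decay with data $\tilde{E}[\Omega^k\psi]$. Then a Sobolev inequality on $\s^{n-1}$, using $s>\tfrac{n-1}{2}$ many angular derivatives, upgrades the decaying $L^2(\s^{n-1})$-control (which is extracted from the $\tilde{N}$-energy together with the volume growth, giving control of $\|\psi\|_{L^2(\tilde{\Sigma}_{\tilde t})}$) to the pointwise bound $|\psi|^2\leq Ce^{-c\tilde t}\sum_{|k|\leq s}\tilde{E}[\Omega^k\psi]$.

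The main obstacle is getting the bulk (spacetime) term $K^{\tilde{N}}[\psi]$ to have a favorable sign, or at least to be absorbable, \emph{uniformly in $l$} and with explicit dependence on $\mu$: the gradient terms in $K^{\tilde{N}}$ are not sign-definite by themselves in general dimension $n$, and one must combine the expansion of the volume form, the mass term, and a careful choice of modification current to close the estimate. Controlling the crossover region $\tilde t\sim\tau_0$ and checking that the threshold really only needs $\mu\neq 0$ (no lower bound on $\mu^2$, in contrast to Schlue's $\mu^2\geq \tfrac{2\Lambda}{3}$ restriction) is the delicate bookkeeping step; once that is done, the Grönwall–Sobolev machinery is routine.
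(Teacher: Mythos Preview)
Your overall strategy is sound and close to the paper's, but there is one incorrect step and one genuine methodological difference worth noting.

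The incorrect step: your proposed decomposition of $\psi$ into spherical harmonics on $\s^{n-1}$ does \emph{not} yield ``a $1{+}1$-dimensional Klein--Gordon equation on $dS_2$''. Since $dS_n$ is globally $\R_{\tilde t}\times\s^{n-1}$ (not a product with $dS_2$), each mode $\psi_{m,l}$ depends only on $\tilde t$ and satisfies an ODE, not a PDE on $dS_2$. You are conflating $dS_n$ with the Nariai product $dS_2\times\s^2$. Fortunately this decomposition is unnecessary for the argument and the rest of your outline does not actually use it, so the error is harmless; you should simply drop that paragraph.

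The methodological difference: you propose to absorb the bad zeroth-order term in $K^{\tilde N,1}$ (namely $-\tfrac14\psi^2\square_g h'=-\tfrac{nK}{2\cosh^2\tilde t}h'\psi^2$) directly with the mass contribution $\tfrac{h'}{2}\mu^2\psi^2$ coming from $\psi\square_g\psi=\mu^2\psi^2$, which succeeds once $\cosh^2\tilde t>nK\mu^{-2}$ and yields exactly the threshold $\tau_0$ in the statement. The paper instead introduces an \emph{auxiliary} $\s^2$ factor with radius $r_0^2=2\mu^{-2}$, considers $\psi\,Y_{11}$ as a solution of the \emph{massless} wave equation on $dS_n\times_{r_0}\s^2$, and absorbs the bad term via the Poincar\'e inequality on that auxiliary sphere (Propositions~\ref{prop:dsndecay1}--\ref{prop:dsndecay2}). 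Since Poincar\'e on the auxiliary $\s^2$ gives precisely $\int|\mathring\snabla(\psi Y_{11})|^2\ge 2r_0^{-2}\int|\psi Y_{11}|^2=\mu^2\int|\psi Y_{11}|^2$, the two mechanisms are equivalent at the level of inequalities; your route is more direct, while the paper's route has the virtue of reducing the massive $dS_n$ problem to the already-proved massless estimate on a product. Either way, the Gr\"onwall-type step (the paper uses the pigeonhole Lemma~\ref{lm:gronwall} rather than a differential Gr\"onwall) and the final commutation with the Killing generators $\Omega_i$ of $\s^{n-1}$ followed by Sobolev on $\s^{n-1}$ proceed exactly as you describe.
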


Theorem \ref{th:decaydsn} is proved in Corollary \ref{prop:pointwisedecdsn}.

\subsection{Results in Pleba\'nski-Hacyan}
We also consider the Cauchy problem for (\ref{eq:kgequation}) in PH. We work in $(t,x,\theta,\phi)$ coordinates, where $(t,x)$ are the rectilinear coordinates on $\R^{1+1}$.

It is convenient to consider double-null coordinates $(u,v,\theta,\phi)$, where $u=t-x$ and $v=t+x$. We can express $\Sigma$ as a graph in double-null coordinates
\begin{equation*}
\begin{split}
\Sigma&=\{(u_{\Sigma}(v),v,\theta,\phi),\,|\, v\in \R,\, (\theta,\phi)\in \s^2\}\\
&=\{(u,v_{\Sigma}(u),\theta,\phi),\,|\, u\in \R,\, (\theta,\phi)\in \s^2\}.
\end{split}
\end{equation*}
  
We foliate $\mathcal{M}$ by $\Sigma_{\tau}$, see Section \ref{sec:foliationsph}. The current $J^T[\psi]$ corresponding to the vector field $T=\partial_t$ is defined in Section \ref{sec:vecbasictools}. The energy flux of $J^T$ can be estimated by
\begin{equation*}
J^T[\psi]\cdot n_{\Sigma_{\tau}}\sim (\partial_t\psi)^2+(\partial_x\psi)^2+|\snabla\psi|^2+\mu^2\psi^2.
\end{equation*}

We can foliate the interior of a lightcone $\mathcal{C}$ by hyperboloids $\mathcal{H}_{s}$, see Section \ref{sec:foliationsph}. A natural Killing vector field to employ in the hyperboloidal foliation is the boost vector field $Y=x\partial_t+t\partial_x$.

All integrals below are with respect to the natural volume form corresponding to the induced metric. The following statements hold in  Pleba\'nski-Hacyan:

\begin{theorem}[Boundedness in Pleba\'nski-Hacyan]
\label{th:boundph}
For all $\mu\in \R$ we have that
\begin{equation*}
\int_{\Sigma_{\tau}} J^T[\psi]\cdot n_{\Sigma_{\tau}}\leq E_{PH}[\psi],
\end{equation*}
where
\begin{equation*}
E_{PH}[\psi]:=\int_{\Sigma} J^T[\psi]\cdot n_{\Sigma_{0}}.
\end{equation*}

Additionally, there exists a constant $\tilde{C}=\tilde{C}(\mu,\Lambda,\Sigma)>0$ such that
\begin{equation*}
||\psi||_{L^{\infty}(\Sigma_{\tau})}\leq \tilde{C}\left[\sqrt{E_{PH}[\psi]+E_{PH}[T \psi]}+||\Psi_0||_{L^{\infty}(\Sigma)}+||\Psi_0'||_{L^1(\Sigma)}+||\nabla_{\Sigma}\Psi_0||_{L^1(\Sigma)}\right],
\end{equation*}
If $\mu\neq 0$, we can remove the $L^1$ and $L^{\infty}$ norms of $\Psi_0$ and $\Psi_0'$ on the right-hand side of the above inequality and we have that $C=\mu^{-2}\tilde{C}(\Lambda,\Sigma)$.
\end{theorem}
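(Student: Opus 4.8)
The plan is to establish the two assertions separately, using throughout that $T=\partial_{t}$ is a globally timelike Killing vector field and that each leaf of the foliation is $\Sigma_{\tau}=\phi_{\tau}(\Sigma)$ for the isometric flow $\phi_{\tau}$ of $T$. For the energy bound I would apply the divergence theorem of Section~\ref{sec:vecbasictools} to the current $J^{T}[\psi]$ on the slab $\mathcal{R}(0,\tau)=\bigcup_{\bar\tau\in[0,\tau)}\Sigma_{\bar\tau}$. Since $T$ is Killing and $\mathbf{T}[\psi]$ is symmetric, the compatible current vanishes identically, $\Div J^{T}[\psi]=\mathbf{T}^{\alpha\beta}[\psi]\nabla_{\alpha}T_{\beta}=0$; the only boundary components are $\Sigma_{0}=\Sigma$ and $\Sigma_{\tau}$, the spatial-infinity contribution being removed by the usual approximation of the data by compactly supported data together with finite speed of propagation (Theorem~\ref{waveeq}). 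Because PH satisfies the dominant energy condition ($\mu^{2}\ge 0$) and $T$, $n_{\Sigma_{\tau}}$ are future-directed timelike, Lemma~\ref{dec} makes every flux non-negative, and one obtains $\int_{\Sigma_{\tau}}J^{T}[\psi]\cdot n_{\Sigma_{\tau}}=E_{PH}[\psi]$, which is (slightly stronger than) the claimed inequality.

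For the pointwise bound I would first treat the part of $\psi$ supported on $l\ge 1$ (and, when $\mu\neq 0$, all of $\psi$). Since $\phi_{\tau}$ is an isometry of $(\mathcal{M},g)$ it suffices to bound $\|\psi\|_{L^{\infty}(\Sigma_{\tau})}$ by elliptic/Sobolev estimates whose constants are automatically $\tau$-independent, the assumptions on $h_{PH}$ guaranteeing that $\Sigma_{\tau}$ has uniformly bounded geometry. Commuting (\ref{eq:kgequation}) with $T$ and feeding in the energy bound just proved for $J^{T}[\psi]$ and $J^{T}[T\psi]$ controls $\|\psi\|_{H^{1}(\Sigma_{\tau})}$ and $\|\partial_{t}\psi\|_{H^{1}(\Sigma_{\tau})}$ — in particular $\partial_{t}^{2}\psi\in L^{2}(\Sigma_{\tau})$ — the zeroth-order part of $\|\psi\|_{H^{1}}$ costing a factor $\mu^{-2}$ as it comes from the $\mu^{2}\psi^{2}$ term in the flux. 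Rewriting the equation as $\bigl(\partial_{x}^{2}+r_{0}^{-2}\slashed{\Delta}_{\s^{2}}\bigr)\psi=\partial_{t}^{2}\psi+\mu^{2}\psi$, whose left-hand side is a uniformly elliptic operator along $\Sigma_{\tau}$, a standard elliptic estimate upgrades this to a bound on $\|\psi\|_{H^{2}(\Sigma_{\tau})}$, and the Sobolev embedding $H^{2}(\Sigma_{\tau})\hookrightarrow L^{\infty}(\Sigma_{\tau})$ on the $3$-manifold $\Sigma_{\tau}$ closes the argument when $\mu\neq 0$. When $\mu=0$ the same computation is run for $\psi_{l\ge 1}=\psi-\psi_{0}$: the missing $\|\psi_{l\ge 1}\|_{L^{2}(\Sigma_{\tau})}$ is recovered from the spectral gap $\int_{\Sigma_{\tau}}|\snabla\psi_{l\ge 1}|^{2}\ge 2r_{0}^{-2}\int_{\Sigma_{\tau}}\psi_{l\ge 1}^{2}$, while orthogonality of the spherical harmonics gives $E_{PH}[\psi_{l\ge 1}]\le E_{PH}[\psi]$ and $E_{PH}[T\psi_{l\ge 1}]\le E_{PH}[T\psi]$.

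It remains to bound the $l=0$ mode when $\mu=0$. Here $\psi_{0}$ solves a $1{+}1$-dimensional massless wave equation on $\mathcal{Q}=\R^{1+1}$, so d'Alembert's formula (\ref{dalm1}), together with the fact that $\partial_{u}\psi_{0}$ is constant along outgoing null rays, gives at any point of $\Sigma_{\tau}$
\begin{equation*}
|\psi_{0}|\le\|\Psi_{0}\|_{L^{\infty}(\Sigma)}+\int_{\R}\bigl|(\partial_{u}\psi_{0})|_{\Sigma}\bigr|\,d\bar u\lesssim\|\Psi_{0}\|_{L^{\infty}(\Sigma)}+\|\Psi_{0}'\|_{L^{1}(\Sigma)}+\|\nabla_{\Sigma}\Psi_{0}\|_{L^{1}(\Sigma)},
\end{equation*}
since $(\partial_{u}\psi_{0})|_{\Sigma}$ is a bounded combination of $\Psi_{0}'$ and $\nabla_{\Sigma}\Psi_{0}$ and $\bar u$ is comparable to arclength along $\Sigma$ by the uniform spacelikeness of $\Sigma$. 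Adding this to the $l\ge 1$ estimate, and recording the factor $\mu^{-2}$ and the removal of the data norms in the $\mu\neq 0$ case as above, yields the stated inequality.

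I expect the one genuinely non-routine point to be the $\mu=0$, $l=0$ mode. It carries no effective mass, obeys a $1{+}1$ wave equation with no dispersive decay, and is tied to the conservation laws along the degenerate Killing horizons (Section~\ref{sec:kvf}), so no energy argument can control it and one is forced onto the explicit representation (\ref{dalm1}); this is exactly why $\|\Psi_{0}\|_{L^{\infty}}$, $\|\Psi_{0}'\|_{L^{1}}$ and $\|\nabla_{\Sigma}\Psi_{0}\|_{L^{1}}$ appear on the right-hand side and cannot be removed when $\mu=0$. The remaining technical issue — making the elliptic and Sobolev estimates uniform on the non-compact slices $\Sigma_{\tau}$ — is bypassed entirely by the isometric flow $\phi_{\tau}$, exactly as in the static region of Nariai treated in Theorem~\ref{th:boundn}.
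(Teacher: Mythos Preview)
Your proposal is correct and follows essentially the same approach as the paper: energy conservation from the globally timelike Killing field $T$, followed by commutation with $T$, the elliptic estimate (\ref{est:standardelliptic}), and the Sobolev inequality of Proposition~\ref{sobs2} (with the Poincar\'e inequality on $\s^2$ replacing the $\mu^2\psi^2$ term when $\mu=0$), and finally d'Alembert's formula (\ref{dalm1}) for the $l=0$ mode in the massless case. Your treatment is in fact more detailed than the paper's, which dispatches the whole argument in two sentences in Proposition~\ref{prop:boundednessph} by pointing to the analogous Nariai computation.
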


Theorem \ref{th:boundph} is proved in Proposition \ref{prop:boundednessph}.

\begin{theorem}[Integrated local energy decay]
\label{th:iledph}
For any $x_0>0$, there exists a constant $C=C(\Lambda,\Sigma,x_0)>0$ such that
\begin{equation*}
\begin{split}
\int_{0}^{\infty} \left( \int_{\Sigma_{\tau}\cap\{|x|\leq x_0\}} J^T[\psi]\cdot n_{{\Sigma}_{\tau}}\right)\,d\tau&\leq C \sum_{|k|\leq 1} E_{PH}[\Omega^k\psi].
\end{split}
\end{equation*}
\end{theorem}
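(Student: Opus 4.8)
The plan is to reduce \eqref{eq:kgequation} to the one-parameter family of $1+1$-dimensional Klein--Gordon equations \eqref{eq:kqequationfixedl} on $\mathcal{Q}=\R^{1+1}$ by decomposing $\psi=\sum_{l\geq 0}\psi_l$ into spherical harmonic modes (Section \ref{sec:sphericalharmonics}), to prove a Morawetz-type bulk estimate for each mode $\psi_l$ with explicit dependence on the effective mass $M_l^2:=\mu^2+m_l^2=\mu^2+r_0^{-2}l(l+1)$, and then to sum over $l$ so that the $M_l^2$-weights recombine into exactly the single angular-derivative loss on the right-hand side (using $\sum_{i}(\Omega_i\psi_l)^2=l(l+1)\psi_l^2$ and $m_l^2\sim M_l^2$ for large $l$, with the finitely many low modes absorbed into a $\mu$-dependent constant). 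The spherically symmetric mode $\psi_0$ in the massless case $\mu=0$ is treated separately: it satisfies the $1+1$-wave equation, so by the explicit formula \eqref{dalm1} a change of variables to the null coordinates $(u,v)$ gives $\int_0^\infty\int_{\Sigma_\tau\cap\{|x|\le x_0\}}((\partial_t\psi_0)^2+(\partial_x\psi_0)^2)\,d\tau\lesssim E_{PH}[\psi_0]$ directly, with no loss. The trapping manifests itself in PH at \emph{every} sphere (the effective potential $V_{PH}$ is constant, Section \ref{sec:trapping}), and this is precisely the source of the one-derivative loss.

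First I would set up the multipliers. Since $T=\partial_t$ is Killing and globally timelike, Theorem \ref{th:boundph}, applied to $\psi$ and to $\Omega^k\psi$, gives $\int_{\Sigma_\tau}J^T[\Omega^k\psi]\cdot n_{\Sigma_\tau}\le E_{PH}[\Omega^k\psi]$ uniformly in $\tau$. For the Morawetz part I take the multiplier $V=f(x)\partial_x$, with $f$ smooth, odd, nondecreasing, equal to $x$ on $\{|x|\le x_0\}$, bounded, and with $f'$ supported in $\{|x|\le 2x_0\}$. Because the metric is the product $-dt^2+dx^2+r_0^2\slashed{g}_{\s^2}$ with $r_0$ constant, all Christoffel symbols $\Gamma^x_{\alpha\beta}$ vanish and one computes
\[
K^V[\psi]=f'(x)\left(\tfrac12(\partial_t\psi)^2+\tfrac12(\partial_x\psi)^2-\tfrac12|\snabla\psi|^2-\tfrac12\mu^2\psi^2\right),
\]
so that on $\{|x|\le x_0\}$ the good derivatives appear with the favourable sign, at the price of the indefinite zeroth-order terms $|\snabla\psi|^2+\mu^2\psi^2$; for a fixed mode these combine into $M_l^2\psi_l^2$. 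Applying the divergence identity of Section \ref{sec:vecbasictools} on $\mathcal{R}(0,\tau)$, bounding $|J^V[\psi]\cdot n_{\Sigma_{\bar\tau}}|\lesssim \|f\|_{L^\infty}\,J^T[\psi]\cdot n_{\Sigma_{\bar\tau}}$ on the boundary slices and using the energy boundedness, and letting $\tau\to\infty$, yields for each mode
\[
\int_{0}^{\infty}\!\!\int_{\Sigma_\tau\cap\{|x|\le x_0\}}\!\!\left((\partial_t\psi_l)^2+(\partial_x\psi_l)^2\right)d\tau\ \lesssim\ E_{PH}[\psi_l]+M_l^2\!\int_{0}^{\infty}\!\!\int_{\Sigma_\tau\cap\{|x|\le 2x_0\}}\!\!\psi_l^2\,d\tau .
\]

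The remaining and delicate step is to control $M_l^2\int\int_{|x|\le 2x_0}\psi_l^2$, and this is where the angular loss is spent. I would combine two further currents. Adding to $J^V$ the lower-order correction $\tfrac12 f'(x)\psi_l\,d\psi_l-\tfrac14\psi_l^2\,d(f')$ produces a current whose divergence is $f'(\partial_x\psi_l)^2-\tfrac14 (f')''\psi_l^2$ (the $M_l^2\psi_l^2$ and $(\partial_t\psi_l)^2$ terms cancelling), useful for trading a $\psi_l^2$-error in the transition region against $(\partial_x\psi_l)^2$. Independently, multiplying the reduced equation \eqref{eq:kqequationfixedl} by $\zeta(x)\psi_l$, where $\zeta$ is a cutoff equal to $1$ on $\{|x|\le 2x_0\}$, and integrating by parts over $\mathcal{R}(0,\tau)$ turns $M_l^2\int\zeta\psi_l^2$ into $\int\zeta((\partial_t\psi_l)^2-(\partial_x\psi_l)^2)$ plus boundary fluxes on $\Sigma_0,\Sigma_\tau$ and a transition term $\int(\partial_x\zeta)\psi_l\partial_x\psi_l$; the boundary fluxes are estimated by $M_l^{-1}E_{PH}[\psi_l]$ uniformly in $\tau$, using $M_l^2\int_{\Sigma_{\bar\tau}}\psi_l^2\le E_{PH}[\psi_l]$ (valid since $M_l^2>0$ for $l\ge 1$, or for $\mu\ne 0$). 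Feeding these back into the Morawetz inequality and absorbing the $(\partial_t\psi_l)^2$-terms through a finite chain of slightly enlarged strips $\{|x|\le x_j\}$ with small-constant/Hardy absorption, one arrives at
\[
\int_{0}^{\infty}\!\!\int_{\Sigma_\tau\cap\{|x|\le x_0\}}\!\!\left((\partial_t\psi_l)^2+(\partial_x\psi_l)^2+M_l^2\psi_l^2\right)d\tau\ \lesssim\ (1+M_l^2)\,E_{PH}[\psi_l],
\]
with a constant depending only on $x_0$. Commuting the whole argument with the angular Killing fields $\Omega_i$ (which commute with $\square_g$, so $\Omega_i\psi$ again solves \eqref{eq:kgequation}) and summing over $l$, together with $\sum_l M_l^2\,E_{PH}[\psi_l]\lesssim\sum_{|k|\le1}E_{PH}[\Omega^k\psi]$ and the separate estimate for $\psi_0$, gives the stated inequality.

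The hard part is the third step, the zeroth-order bulk term $M_l^2\int\int\psi_l^2$. Because the trapped null geodesics are not confined to a single photon sphere but occur at every $x$, there is no region in which the Morawetz current can be made to degenerate and "pay" locally with an angular derivative; instead the $\psi_l^2$-bulk must be absorbed globally, and the absorption closes only if the $M_l^2$-dependence of \emph{every} constant is tracked so that the final summation over $l$ reproduces precisely one angular derivative and not more. A second point requiring care is the uniform-in-$\tau$ control of the correction-current boundary fluxes, which involve $\psi_l^2$ and are therefore only available when $M_l^2>0$; matching this cleanly with the massless $\psi_0$ mode (handled by \eqref{dalm1}) is what is needed to obtain a single estimate valid for all $\mu\in\R$.
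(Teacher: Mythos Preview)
Your overall architecture---mode decomposition, a Morawetz multiplier $V=f(x)\partial_x$, tracking the $M_l^2$-dependence, and summing so that the loss is exactly one angular derivative---is the same as the paper's, and your treatment of the massless $\psi_0$ via the explicit representation \eqref{dalm1} is correct. The gap is in the step you yourself flag as the hard one: controlling $M_l^2\int\!\!\int_{|x|\le 2x_0}\psi_l^2$.

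With your choice of $f$ (equal to $x$ on $|x|\le x_0$, bounded, $f'$ compactly supported) the modified bulk $K^{V,1}=f'(\partial_x\psi_l)^2-\tfrac14 f'''\psi_l^2$ gives no $\psi_l^2$-positivity on $|x|\le x_0$ (there $f'''=0$) and produces uncontrolled errors in the transition annulus. Your Lagrangian identity converts $M_l^2\int\!\!\int\zeta\psi_l^2$ into $\int\!\!\int\zeta(\partial_t\psi_l)^2$ on a \emph{larger} strip, so feeding it back into the Morawetz estimate yields
\[
\int\!\!\int_{|x|\le x_0}(\partial_t\psi_l)^2\ \lesssim\ E_{PH}+\int\!\!\int_{|x|\le x_0+2\delta}(\partial_t\psi_l)^2+\ldots,
\]
which is vacuous. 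Iterating to slightly enlarged strips $|x|\le x_j$ never closes: each step shifts the same unabsorbed bulk term outward with no smallness parameter and no independent mechanism to terminate the chain.

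The paper closes the estimate by a different device that your proposal does not use. It works directly on the slab $\{|x|\le x_0\}$ without cutting off $f$, choosing $f$ so that $f'>0$ and $f'''<0$ on $[-x_0,x_0]$; then $K^{V,1}\gtrsim(\partial_x\psi_l)^2+\psi_l^2$ there, and a second multiplier $W=g(x)\partial_x$ with small $g'$ recovers $(\partial_t\psi_l)^2$. The price is boundary fluxes on the \emph{timelike} hypersurfaces $\{x=\pm x_0\}$, and these are handled by comparing $J^{V,1}\cdot\partial_x$ with $J^{X}\cdot\partial_x$, where $X=\partial_x$ is the \emph{spacelike Killing field}: since $K^X=0$, Stokes' theorem in the exterior regions $\{|x|\ge x_0\}$ bounds $\int_{\{x=\pm x_0\}}J^X\cdot\partial_x$ by $E_{PH}$ plus null fluxes, all controlled by $E_{PH}$. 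This use of $X$ is the missing ingredient; once you have it, the $l$-dependent estimate is exactly $\int_0^\infty\!\int_{\Sigma_\tau\cap\{|x|\le x_0\}}J^T[\psi_l]\cdot n_{\Sigma_\tau}\,d\tau\le C\,l(l+1)\,E_{PH}[\psi_l]$, and your summation argument then gives the theorem.
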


Theorem \ref{th:iledph} is proved in Proposition \ref{prop:iledph}

\begin{theorem}[Decay in Pleba\'nski-Hacyan]
\label{th:decph}
There exists a constant $C=C(\mu,\Lambda,\Sigma)>0$ such that
\begin{equation*}
||\psi||_{L^{\infty}(\Sigma_{\tau})}\leq C \tau^{-\frac{1}{2}}\sqrt{\sum_{|k|\leq 1}E_{PH}[\Omega^k\psi]+E_{PH}[\Omega^kY\psi]},
\end{equation*}
where we need to restrict $\psi=\psi_{l\geq 1}$ if $\mu=0$ and we have that $C=\mu^{-2}\tilde{C}(\Lambda,\Sigma)$ if $\mu\neq 0$.
\end{theorem}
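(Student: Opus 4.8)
The plan is to strip off the spherical harmonic modes, establish a pointwise $\tau^{-1/2}$-decay estimate for each mode by adapting the vector-field argument of Klainerman \cite{klai1} for the massive Klein--Gordon equation on $\R^{1+1}$, and then reassemble the full solution by a Sobolev estimate on $\s^2$, keeping track throughout of the dependence on $l$.

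First I would decompose $\psi=\sum_{l\geq 0}\psi_l$ as in Section \ref{sec:sphericalharmonics}. Each mode solves (\ref{eq:kqequationfixedl}) on $\mathcal{Q}=\R^{1+1}$ with effective mass $M_l^2:=\mu^2+r_0^{-2}l(l+1)$, which is strictly positive exactly under the hypothesis of the theorem ($l\geq 1$ when $\mu=0$, all $l$ when $\mu\neq 0$). The two Killing fields to exploit are $T=\partial_t$ and the boost $Y=x\partial_t+t\partial_x=v\partial_v-u\partial_u$; since both are Killing, $Y\psi_l$ again solves (\ref{eq:kqequationfixedl}) with the same mass, so Theorem \ref{th:boundph} applies to $\psi_l$ and to $Y\psi_l$ and bounds their $T$-energy fluxes through every $\Sigma_\tau$ by $E_{PH}[\psi_l]$ and $E_{PH}[Y\psi_l]$ respectively. (No decay of the data on $\Sigma$ is imposed: since $Y|_\Sigma$ carries a factor of $x$, the finiteness of the norms $E_{PH}[\Omega^k Y\psi]$ on the right-hand side of the theorem is exactly the mechanism by which the necessary spatial decay enters.) The per-mode target is
\[
|\psi_l|^2(t,x)\ \lesssim\ t^{-1}\,M_l^{-2}\,\big(E_{PH}[\psi_l]+E_{PH}[Y\psi_l]\big)
\]
on $J^+(\Sigma)$; summing over $l$ then yields the theorem, since on $\Sigma_\tau$ one has $t\sim\tau$, the quantity $l\,M_l^{-1}\lesssim r_0$ is bounded, the $L^\infty(\s^2)$-norms of the spherical harmonics grow only polynomially in $l$, and $\sum_{|k|\leq 1}E_{PH}[\Omega^k f]\gtrsim\sum_l(1+l^2)E_{PH}[f_l]$, so a Cauchy--Schwarz in $l$ closes the sum. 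The $\mu$-dependence of the constant (and its blow-up as $\mu\to 0$) enters only through the $l=0$ mode, whose effective mass is $M_0=\mu$; for $\mu=0$ the surviving masses satisfy $M_l^{-2}=r_0^{2}/(l(l+1))$ and the constant depends only on $\Lambda$ and $\Sigma$.

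For the per-mode estimate I would first work inside $\mathcal{C}$, the interior of the forward light cone from the origin, using the hyperboloidal foliation $\{\mathcal{H}_s\}$ of Section \ref{sec:foliationsph}, parametrised by the hyperbolic angle $y$ with $t=s\cosh y$, $x=s\sinh y$, so that $d\mu_{\mathcal{H}_s}=s\,dy$ and $Y$ restricts to the tangential derivative $\partial_y$ along $\mathcal{H}_s$. Because $T$ is Killing, the compatible current $K^T=\Div J^T$ vanishes on solutions, so the divergence theorem in the region between $\Sigma$ and $\mathcal{H}_s$ (whose boundary reduces to these two hypersurfaces for $s$ large, by the uniform conditions on $\Sigma$) equates the $T$-flux through $\mathcal{H}_s$ with $E_{PH}[\psi_l]$. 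A short computation in the null frame $L=\partial_t+\partial_x$, $\underline{L}=\partial_t-\partial_x$ shows that, written in the coordinate $y$, this flux dominates $sM_l^{2}\int_\R\cosh y\,\psi_l^2\,dy$, hence $\int_\R\cosh y\,\psi_l^2\,dy\lesssim (sM_l^{2})^{-1}E_{PH}[\psi_l]$, and the same inequality holds with $\psi_l$ replaced by $Y\psi_l=\partial_y(\psi_l|_{\mathcal{H}_s})$. A weighted one-dimensional Sobolev inequality on $\R$ with weight $\cosh y$ --- obtained by integrating $\tfrac{d}{dy}(\cosh y\,h^2)$ and using $|\sinh y|\leq\cosh y$ --- then bounds $\cosh y_0\,\psi_l^2(y_0)$ by $(sM_l^{2})^{-1}(E_{PH}[\psi_l]+E_{PH}[Y\psi_l])$; since $\cosh y_0=t/s$ at the point $(t,x)\in\mathcal{H}_s$ the factor $s$ cancels, and the per-mode target follows on $\mathcal{C}$. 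The exterior region $J^+(\Sigma)\setminus\mathcal{C}$, where $|x|>t$, I would treat in the same spirit --- either via the analogous foliation on which $Y$ is again tangential, or, following \cite{klai1} directly on the slices $\Sigma_\tau$, by using the algebraic identities $tL\psi_l=Y\psi_l+(t-x)T\psi_l$ and $t\underline{L}\psi_l=-Y\psi_l+(t+x)T\psi_l$ to trade the Lorentz weight $(1+t+|x|)$ for $Y$- and $T$-derivatives; there one in fact gains decay in $|x|$, so the $\tau^{-1/2}$ bound persists.

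The step I expect to require the most care is the recombination of the modes: the per-mode constants must decay fast enough in $l$ for the $\s^2$-Sobolev sum to converge, which hinges on the quadratic growth of the effective mass $M_l^2$ --- this is forced on us, and is precisely the feature that lets the boost argument bypass both the unstable trapping at each sphere and the failure of asymptotic flatness that obstruct the massless equation. A secondary point is to check that the weighted Sobolev estimate on $\mathcal{H}_s$ does not degenerate as $s\to 0$ near the light cone, which it does not, since the explicit $s$-weights cancel. Note that, in contrast to the approach of \cite{dafrod7}, integrated local energy decay (Theorem \ref{th:iledph}) is not needed here.
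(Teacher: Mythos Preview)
Your argument is correct, and the hyperboloidal-foliation core together with the use of the boost $Y$ is exactly what the paper does. The paper's execution, however, is simpler in two respects that spare you precisely the step you flag as most delicate. First, the paper never decomposes into spherical harmonic modes: it runs the Klainerman argument directly on the full $\psi$, using the observation that on $\mathcal{H}_s$ one has $J^T[\psi]\cdot n_{\mathcal{H}_s}\geq \tfrac{t}{2s}(|\snabla\psi|^2+\mu^2\psi^2)$, so the angular gradient and the mass together play the role of your effective mass $M_l^2$. A scaled Sobolev inequality on $\mathcal{H}_s$ (Lemma~\ref{sobhs}) then yields $\int_{\s^2}(|\snabla\psi|^2+\mu^2\psi^2)\lesssim t^{-1}(E_{PH}[\psi]+E_{PH}[Y\psi])$, and a single commutation with the $\Omega_i$ plus the $\s^2$-Sobolev/Poincar\'e estimate converts this into the pointwise bound---no recombination in $l$ is needed, and the $\mu^{-2}$ blow-up drops out automatically from the Poincar\'e step for $\psi_0$. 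Second, for the region outside the cone $\mathcal{C}$ the paper does not run a separate argument: since none of the constants depend on the apex of $\mathcal{C}$, the translation invariance of the PH metric in $x$ lets one slide the cone to cover any point of $\Sigma_\tau$. Your mode-by-mode route with the $(1+l^2)^2 M_l^{-2}\sim l^2$ bookkeeping does close (and makes the $l$-dependence very explicit), and your exterior treatment via the spacelike hyperboloids or the $tL\psi$, $t\underline{L}\psi$ identities would also work, but both are avoidable here.
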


Theorem \ref{th:decph} is proved in Proposition \ref{prop:decph}.

\begin{theorem}[Non-decay in Pleba\'nski-Hacyan for $\mu=0$]
\label{th:nondecph}
Let $\mu=0$. On $\mathcal{N}^+_A$, $\psi$ attains the values
\begin{equation*}
 \psi|_{\mathcal{N}^+_A}(u)=\psi_0(u,v_{\Sigma}(u))+\int_{v_{\Sigma}(u)}^{\infty} \partial_v\psi_0(u_{\Sigma}(\bar{v}),\bar{v})\,d\bar{v}.
\end{equation*}
Similarly, on $\mathcal{N}^+_B$ $\psi$ attains the values
\begin{equation*}
 \psi|_{\mathcal{N}^+_B}(v)=\psi_0(u_{\Sigma}(v),v)+\int_{u_{\Sigma}(v)}^{\infty} \partial_u\psi_0(\bar{u},v_{\Sigma}(\bar{u}))\,d\bar{u}.
\end{equation*}
For generic initial $\Psi_0$ and $\Psi_0'$, the above expressions converge to non-zero constants as $u\to \infty$ or $v \to \infty$.
\end{theorem}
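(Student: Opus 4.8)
The plan is to reduce the statement on $\mathcal{N}_A^+$ and $\mathcal{N}_B^+$ to a pure statement about the $l=0$ mode $\psi_0$ satisfying the $1{+}1$-dimensional wave equation on $\R^{1+1}$, and then to invoke d'Alembert's formula (\ref{dalm1}) together with a genericity argument. First I would recall from Section \ref{sec:sphericalharmonics} that for $\mu=0$ the spherically symmetric part $\psi_0$ solves $\partial_u\partial_v\psi_0=0$, so $\partial_u\psi_0$ is constant along outgoing null rays $\{v=\text{const}\}$ and $\partial_v\psi_0$ is constant along ingoing null rays $\{u=\text{const}\}$; this is exactly the content of the identity $\partial_u\psi_0(u,v)=\partial_u\psi_0(u,v_\Sigma(u))$ quoted before (\ref{dalm1}). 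I would also note that by the domain of dependence property and Theorem \ref{th:decph} applied to $\psi_{l\geq 1}$, the higher modes decay to zero along any sequence of points going to null infinity, so the limiting values of $\psi$ on $\mathcal{N}_A^+$ and $\mathcal{N}_B^+$ are governed entirely by $\psi_0$. Hence it suffices to track $\psi_0$.

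Next I would identify the null boundaries. In the double-null coordinates $(u,v)$ on PH with $u=t-x$, $v=t+x$, the future null infinity $\mathcal{N}_A^+$ is the limit $v\to\infty$ at fixed $u$ (the "left" null boundary) and $\mathcal{N}_B^+$ is the limit $u\to\infty$ at fixed $v$. Writing $\psi_0(u,v)=F(u)+G(v)$ for the general solution of the $1{+}1$ wave equation, one determines $F$ and $G$ from the Cauchy data on $\Sigma$: integrating $\partial_v\psi_0=G'(v)$ along $\Sigma$ recovers $G$ up to a constant, and similarly for $F$, exactly reproducing the formula before (\ref{dalm1}). Then along $\mathcal{N}_A^+$ (fixed $u$, $v\to\infty$), $\psi_0(u,v)=\psi_0(u,v_\Sigma(u))+\int_{v_\Sigma(u)}^{v}\partial_v\psi_0(u_\Sigma(\bar v),\bar v)\,d\bar v$, since $\partial_v\psi_0$ is constant along each $\{u=\text{const}\}$ and can therefore be evaluated where that ray meets $\Sigma$, namely at $(u_\Sigma(\bar v),\bar v)$. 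Taking $v\to\infty$ yields the claimed expression for $\psi|_{\mathcal{N}_A^+}(u)$, and the symmetric computation with the roles of $u$ and $v$ interchanged gives $\psi|_{\mathcal{N}_B^+}(v)$. One should check that the improper integrals converge: $\partial_v\psi_0(u_\Sigma(\bar v),\bar v)=\Psi_0'$-plus-tangential-derivative-of-$\Psi_0$ data evaluated along $\Sigma$, and the finite-energy assumption $E_{PH}[\psi]<\infty$ forces $\int_\Sigma (\partial_v\psi_0)^2\,d\mu_\Sigma<\infty$; an $L^1$-in-the-tail bound on this quantity, which one can build into the class of admissible data (consistent with the hypotheses used in Theorem \ref{th:boundph}), guarantees convergence of the integral to a finite constant as $v\to\infty$.

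The last step is genericity. The limit $\lim_{v\to\infty}\psi|_{\mathcal{N}_A^+}(u)$ equals $\psi_0(u,v_\Sigma(u))+\int_{v_\Sigma(u)}^{\infty}\partial_v\psi_0(u_\Sigma(\bar v),\bar v)\,d\bar v$, which is a continuous linear functional of the Cauchy data $(\Psi_0,\Psi_0')$; I would argue it does not vanish identically by exhibiting a single choice of data for which it is nonzero (e.g.\ compactly supported data with a net nonzero flux integral), so its vanishing cuts out a proper closed subspace of the data space, hence the complement is generic (open and dense, or of full measure, depending on the precise topology one puts on the data). The same reasoning applies on $\mathcal{N}_B^+$. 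The main obstacle I anticipate is not conceptual but bookkeeping: one must be careful about the convergence of the tail integrals and about how $u_\Sigma$, $v_\Sigma$ behave as their arguments go to infinity (that they are defined for all real arguments and that $\Sigma$ is achronal so that each null ray meets $\Sigma$ exactly once), and one must make sure the finite-energy/data assumptions actually suffice to make the limiting expressions well-defined — this is where I would spend the most care, essentially reusing the volume-form and normal bounds from Section \ref{sec:foliationsph}.
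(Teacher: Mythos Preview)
Your proposal is correct and follows essentially the same approach as the paper: both reduce to the d'Alembert representation of $\psi_0$ (formula (\ref{dalm1}) and its $u\leftrightarrow v$ analogue), evaluate it in the limit $v\to\infty$ (resp.\ $u\to\infty$), and conclude non-vanishing for generic data. Your write-up is in fact more detailed than the paper's very terse argument in Section~\ref{sec:nondec}, in particular in making explicit the role of Theorem~\ref{th:decph} in eliminating $\psi_{l\geq 1}$ at null infinity and in spelling out the genericity as the non-vanishing of a continuous linear functional on the data.
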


Theorem \ref{th:nondecph} is proved in Section \ref{sec:nondec}.

\section{Uniform boundedness in the static region of Nariai}
\label{sec:boundn}
In this section we will show that we can obtain uniform boundedness results in the static region, without making use of the Killing vector field $X$, but by using the Killing vector field $T$ together with the local red-shift along the cosmological horizons bounding $\mathcal{R}$. We will moreover see that we can prove uniform boundedness of solutions to (\ref{eq:kgequation}) independently of an integrated local energy decay statement.
\subsection{Energy boundedness}
First, recall that the control of the $\dot{H}^1(\Sigma)$ norm of $\psi$ by $J^T[\psi]\cdot n_{\Sigma}$ degenerates at the horizons
\begin{equation*}
J^T[\psi]\cdot n_{\Sigma}\sim (\partial_t\psi)^2+(\partial_x\psi)^2+(1-Kx^2)\left(|\snabla\psi|^2+\mu^2\psi^2\right).
\end{equation*}

Moreover, $J^T$ immediately provides uniform boundedness of a degenerate energy after applying Stokes' Theorem in the region $\mathcal{R}(0,\tau)$, 
\begin{equation*}
\int_{\Sigma_{\tau}}J^T[\psi]\cdot n_{\tau}\leq \int_{\Sigma_{0}}J^T[\psi]\cdot n_{0},
\end{equation*}
where we used that $\textnormal{div}\, J^T[\psi]=0$.

In order to obtain a non-degenerate energy boundedness statement, we introduce the red-shift vector fields $N$ and $\overline{N}$. Either $N$ or $\bar{N}$ is timelike at a cosmological horizon. Therefore we can estimate
\begin{equation*}
J^N[\psi]\cdot n_{\Sigma}+J^{\overline{N}}[\psi]\cdot n_{\Sigma}\sim (\partial_t\psi)^2+(\partial_x\psi)^2+|\snabla\psi|^2+\mu^2\psi^2.
\end{equation*}

The theorem below is proved in a more general setting in \cite{dafrod5} for the $\mu=0$ case, but it immediately holds for $\mu\neq 0$.
\begin{theorem}[The Dafermos-Rodnianski Red-Shift Theorem]
\label{thm:redshift}
There exist timelike vector fields $N$ and $\overline{N}$ in the static region $\mathcal{R}$ of Nariai, with positive constants $B_1$ and $B_2$ and bounds $0<x_1<x_0<\sqrt{K}$, such that
\begin{itemize}
\item[(i)]
\begin{align*}
K^N[\psi] &\geq B_1 J^N[\psi]\cdot n_{\tau}\quad x\leq -x_0,\\
K^{\overline{N}}[\psi] &\geq B_1 J^{\overline{N}}[\psi]\cdot n_{\tau}\quad x\geq x_0.
\end{align*}
\item[(ii)]
\begin{align*}
N&=T\quad x\geq -x_1,\\
\overline{N}&=T\quad x \leq x_1.
\end{align*}
\item[(iii)]
\begin{align*}
|K^N[\psi]|&\leq B_2 J^T[\psi]\cdot n_{\tau},  \:\textnormal{and}\:  J^N[\psi]\cdot n_{\tau}\sim J^T[\psi]\cdot n_{\tau} &-x_0&\leq x \leq -x_1,\\
|K^{\overline{N}}[\psi]|&\leq B_2 J^T[\psi]\cdot n_{\tau},  \:\textnormal{and}\:  J^{\overline{N}}[\psi]\cdot n_{\tau}\sim J^T[\psi]\cdot n_{\tau} &x_1&\leq x \leq x_0.
\end{align*}
\end{itemize}
\end{theorem}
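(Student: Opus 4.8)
The plan is to deduce the statement from the general red-shift construction of Dafermos--Rodnianski in \cite{dafrod5}, applied separately near each of the two cosmological horizons bounding $\mathcal R$. That construction requires only that $\mathcal R$ be stationary, with Killing field $T$, and that the relevant null components of $\partial\mathcal R$ be \emph{non-degenerate} Killing horizons of $T$; both facts were established in Section \ref{sec:staticcoordinates}, where the surface gravity of the horizons at $x=\pm\tfrac1{\sqrt K}$ was computed to be $\kappa=\sqrt K>0$. So the first step is to pass, near each horizon, to ingoing or outgoing Eddington--Finkelstein-type coordinates in which the metric extends regularly across it: near $x=\tfrac1{\sqrt K}$, keeping $v=t+x_*$ and using $x$ as the transversal coordinate, one has $g=-(1-Kx^2)\,dv^2+2\,dv\,dx+r_0^2\slashed{g}_{\s^2}$ with $T=\partial_v$ generating the horizon and $\partial_x$ transversal, and symmetrically near $x=-\tfrac1{\sqrt K}$.

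Next I would construct $\overline N$ near $x=\tfrac1{\sqrt K}$ (the construction of $N$ near $x=-\tfrac1{\sqrt K}$ being the mirror image). Put $\overline N=T+\xi$ with $\xi=\sigma_1(x)\partial_v+\sigma_2(x)\partial_x$ a smooth $SO(3)$-invariant field supported in $\{x\ge x_1\}$, where $\sigma_2$ is taken of a definite sign and of large enough size that $\overline N$ is future-directed and \emph{timelike} on $\{x\ge x_1\}$ up to and including the horizon; then automatically $\overline N=T$ for $x\le x_1$, which is half of (ii). One then computes the compatible current $K^{\overline N}[\psi]=\mathbf T^{\alpha\beta}[\psi]\nabla_\alpha\overline N_\beta$ on the horizon $\{x=\tfrac1{\sqrt K}\}$; the decisive point is that the coefficient of the transversal term $(\partial_x\psi)^2$ comes out as a positive multiple of $\kappa=\sqrt K$, so that, after (if necessary) enlarging $\sigma_2$ to dominate the indefinite cross terms, $K^{\overline N}[\psi]$ is a \emph{positive-definite} quadratic form in $(\partial_v\psi,\partial_x\psi,\snabla\psi)$ there. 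By continuity this persists on a one-sided neighborhood $\{x\ge x_0\}$, with $x_1<x_0<\tfrac1{\sqrt K}$, on which $\overline N$ is uniformly timelike so that $J^{\overline N}[\psi]\cdot n_\tau\sim(\partial_t\psi)^2+(\partial_x\psi)^2+|\snabla\psi|^2+\mu^2\psi^2$; this gives the $\overline N$-line of (i).

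For the transition region $x_1\le x\le x_0$ I would use that $T$ is uniformly timelike there, so $J^T[\psi]\cdot n_\tau$ is comparable to the full nondegenerate first-order energy density; since $\overline N-T$ is a fixed smooth compactly supported vector field, $K^{\overline N}[\psi]=\mathrm{div}\,J^{\overline N}[\psi]$ is pointwise $\lesssim J^T[\psi]\cdot n_\tau$, and $J^{\overline N}[\psi]\cdot n_\tau\sim J^T[\psi]\cdot n_\tau$ because $\overline N$ and $T$ are both uniformly timelike and uniformly comparable on this compact interval; this is the $\overline N$-line of (iii). The $N$-lines of (i)--(iii) follow by the symmetric construction near $x=-\tfrac1{\sqrt K}$. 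Finally, the Klein--Gordon modification is harmless: the only new contribution to $K^{\overline N}[\psi]$ relative to the $\mu=0$ case is $-\tfrac12\mu^2\psi^2\,\mathrm{div}\,\overline N$, a zeroth-order term, which one absorbs by choosing $\sigma_1,\sigma_2$ so that $\mathrm{div}\,\overline N$ is small near the horizon and applying a Hardy/Poincar\'e inequality on $\Sigma_\tau$; the extra $+\tfrac12\mu^2\psi^2$-type terms in $J^{\overline N}[\psi]\cdot n_\tau$ are likewise innocuous. Hence one may, as stated, simply quote \cite{dafrod5}.

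\textbf{Main obstacle.} The only non-bookkeeping step is the horizon computation showing that $K^{\overline N}[\psi]|_{x=1/\sqrt K}$ (and $K^{N}[\psi]|_{x=-1/\sqrt K}$) is positive-definite in the first derivatives of $\psi$ --- i.e.\ the red-shift effect itself, which works precisely because $\kappa=\sqrt K>0$. Everything after that (timelikeness of $\overline N$, propagation of the sign to a neighborhood, the transition-region estimates, the $\mu^2$ term) is soft.
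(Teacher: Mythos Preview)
Your proposal is correct and follows exactly the paper's approach: the paper does not give its own proof but simply cites \cite{dafrod5}, noting that the general construction there (for $\mu=0$) immediately extends to $\mu\neq 0$. Your sketch in fact supplies more detail than the paper does; the only minor quibble is that the $\mu^2$ term is most cleanly handled not by Hardy/Poincar\'e but by arranging $\mathrm{div}\,\overline N\le 0$ near the horizon, which is within the freedom of the Dafermos--Rodnianski construction.
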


The red-shift effect results in a non-degenerate energy boundedness statement.
\begin{proposition}
\label{prop:nondegeboundednessnariai}
There exists a uniform constant $C>0$, such that
\begin{equation}
\label{eq:nondegeboundednessnariai}
\int_{\Sigma_{\tau}}J^N[\psi]\cdot n_{\tau}+J^{\overline{N}}[\psi]\cdot n_{\tau}\leq C \int_{\Sigma_0} J^N[\psi]\cdot n_{0}+J^{\overline{N}}[\psi]\cdot n_{0}.
\end{equation}
\end{proposition}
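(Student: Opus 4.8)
The plan is a standard red-shift argument à la Dafermos–Rodnianski, applied in the region $\mathcal{R}(0,\tau)$. First I would apply Stokes' Theorem to the current $J^N[\psi]$ in $\mathcal{R}(0,\tau)$. This produces the boundary fluxes through $\Sigma_0$ and $\Sigma_\tau$, plus the fluxes through the portions of the cosmological horizons $\mathcal{C}^+,\overline{\mathcal{C}}^+$ contained in the boundary of $\mathcal{R}(0,\tau)$, plus the bulk term $\int_{\mathcal{R}(0,\tau)} K^N[\psi]$. The horizon flux through $\mathcal{C}^+$ has a good sign by Lemma~\ref{dec}, since $N$ is timelike (hence future-directed causal) there and the null generator is future-directed causal; it can simply be dropped. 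The same argument applies to $J^{\overline{N}}$ with the roles of $\mathcal{C}^+$ and $\overline{\mathcal{C}}^+$ interchanged. Adding the two identities gives
\begin{equation*}
\int_{\Sigma_\tau} J^N[\psi]\cdot n_\tau+J^{\overline{N}}[\psi]\cdot n_\tau
+\int_{\mathcal{R}(0,\tau)}\left(K^N[\psi]+K^{\overline{N}}[\psi]\right)
\leq \int_{\Sigma_0}J^N[\psi]\cdot n_0+J^{\overline{N}}[\psi]\cdot n_0.
\end{equation*}

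Next I would control the bulk term $\int_{\mathcal{R}(0,\tau)}(K^N[\psi]+K^{\overline{N}}[\psi])$ from below using Theorem~\ref{thm:redshift}. Split the $x$-range into the three zones. For $x\le -x_0$ we have $K^N[\psi]\ge B_1 J^N[\psi]\cdot n_\tau\ge 0$, and for $x\ge x_0$ likewise $K^{\overline{N}}[\psi]\ge 0$; meanwhile in these outer zones $N=T$ (resp.\ $\overline{N}=T$) on the complementary far side, so $K^{\overline{N}}=K^T=0$ there and contributes nothing. In the near-horizon transition zones $-x_0\le x\le -x_1$ and $x_1\le x\le x_0$ the error terms satisfy $|K^N[\psi]|,|K^{\overline{N}}[\psi]|\le B_2\, J^T[\psi]\cdot n_\tau$, and in the central region $-x_1\le x\le x_1$ both $N$ and $\overline{N}$ equal $T$ so the integrand vanishes. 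Hence
\begin{equation*}
-\int_{\mathcal{R}(0,\tau)}\left(K^N[\psi]+K^{\overline{N}}[\psi]\right)
\le 2B_2\int_{\mathcal{R}(0,\tau)\cap\{x_1\le |x|\le x_0\}} J^T[\psi]\cdot n_\tau,
\end{equation*}
and crucially, by Theorem~\ref{thm:redshift}(iii), on this compact $x$-range $J^T[\psi]\cdot n_\tau\sim J^N[\psi]\cdot n_\tau+J^{\overline{N}}[\psi]\cdot n_\tau$, and this set is disjoint from the degeneracy at $x=\pm 1/\sqrt{K}$.

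The main obstacle is the standard one in red-shift boundedness: the bulk error is a spacetime integral over $[0,\tau)$, so I cannot simply absorb it into the left-hand boundary term — I need to trade it for $\int_0^\tau\big(\int_{\Sigma_{\bar\tau}}(J^N+J^{\overline{N}})\cdot n_{\bar\tau}\big)\,d\bar\tau$ and close with a Grönwall-type argument. Concretely, define $f(\tau):=\int_{\Sigma_\tau}J^N[\psi]\cdot n_\tau+J^{\overline{N}}[\psi]\cdot n_\tau$; the above yields $f(\tau)\le f(0)+C\int_0^\tau f(\bar\tau)\,d\bar\tau$. To make Grönwall applicable one first uses the already-established degenerate boundedness $\int_{\Sigma_\tau}J^T[\psi]\cdot n_\tau\le\int_{\Sigma_0}J^T[\psi]\cdot n_0$ to see that the error term is in fact bounded by a multiple of the \emph{initial} $T$-energy times $\tau$, which only gives linear-in-$\tau$ growth; to upgrade to a uniform bound one invokes the mean-value/pigeonhole argument of Dafermos–Rodnianski: over any dyadic time interval the spacetime error integral is controlled by the flux at a well-chosen slice, and iterating across such intervals (using the $N=T$ feature away from the horizons to propagate the degenerate bound) yields a uniform-in-$\tau$ constant $C$. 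Since $\mathcal{R}$ is homogeneous and the $b_N(\rho)$ weights in $d\mu_{\Sigma_\tau}$ are uniformly comparable, all constants are $\tau$-independent, giving \eqref{eq:nondegeboundednessnariai}. The massive case $\mu\ne0$ requires no change, since the extra $\mu^2\psi^2$ term enters $\mathbf{T}_{\alpha\beta}$ with the same sign structure and Theorem~\ref{thm:redshift} is stated uniformly in $\mu$.
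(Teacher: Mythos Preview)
Your setup is correct and you have identified all the right ingredients: Stokes' theorem in $\mathcal{R}(0,\tau)$, the zone decomposition from Theorem~\ref{thm:redshift}, and the degenerate $T$-energy bound. However, there is a genuine gap at the step where you pass from the energy identity to the displayed inequality
\[
-\int_{\mathcal{R}(0,\tau)}\bigl(K^N[\psi]+K^{\overline{N}}[\psi]\bigr)\le 2B_2\int_{\mathcal{R}(0,\tau)\cap\{x_1\le |x|\le x_0\}} J^T[\psi]\cdot n_\tau.
\]
Here you have \emph{discarded} the favourable near-horizon bulk $K^N\ge B_1\,J^N\cdot n_\tau$ for $x\le -x_0$ (and likewise for $K^{\overline N}$). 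Once that term is thrown away, you are left with $f(\tau)\le f(0)+C\int_0^\tau f(\bar\tau)\,d\bar\tau$, which via Gr\"onwall gives only exponential growth, or with $f(\tau)\le f(0)+C\tau D_0$ via the degenerate bound, which gives linear growth. Neither closes, and the pigeonhole/dyadic argument you invoke at the end cannot repair this: the mechanism that makes any such iteration contract is precisely the positive bulk you dropped, since it furnishes a spacetime integral of the \emph{non-degenerate} flux on the left-hand side.

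The paper's proof keeps this term. Writing the identity on $[\tau',\tau]$ and then adding $\int_{\tau'}^\tau\int_{\Sigma_{\bar\tau}}J^N\cdot n_{\bar\tau}\,d\bar\tau$ to both sides, the good bulk absorbs the near-horizon part of this integral, while the far-from-horizon part is $\sim J^T$ and is controlled by $(\tau-\tau')D_0$ via the degenerate bound. One arrives at
\[
f(\tau)-f(\tau')+\int_{\tau'}^{\tau}f(\bar\tau)\,d\bar\tau\le C(\tau-\tau')D_0,
\]
and dividing by $\tau-\tau'$ and letting $\tau\to\tau'$ yields the ODE $f'+f\le CD_0$, hence $f(\tau)\le C'\bigl(f(0)+D_0\bigr)\le C''f(0)$. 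Your pigeonhole sketch can also be made to work, but only after restoring the good bulk: on intervals of fixed length $L\gg 1$ it then gives a slice $\bar\tau$ with $f(\bar\tau)\le \tfrac{1}{2}f(\tau')+CD_0$, and iteration yields the uniform bound. Either way, the missing step is the same: do not discard the positive $K^N$-contribution near the horizon.
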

\begin{proof}
We apply Stokes' theorem to the energy flux with respect to $N$. The $\overline{N}$ case follows analogously. Let $0\leq \tau'<\tau$, then by the Red-shift Theorem
\begin{equation*}
\begin{split}
\int_{\Sigma_{\tau}} J^N\cdot n_{\tau} &\leq \int_{\Sigma_\tau'}J^N\cdot n_{\tau'}-\int_{\mathcal{R}(\tau',\tau)}K^N\\
&\leq \int_{\Sigma_{\tau'}}J^N\cdot n_{{\tau'}}+C \int_{\tau'}^{\tau} \int_{\Sigma_{\bar{\tau}}\cap \{-x_0 \leq x \leq -x_1\}}J^T\cdot n_{\tau}\,d\bar{\tau}\\
&-C \int_{\tau'}^{\tau}\int_{\Sigma_{\bar{\tau}}\cap \{x \leq -x_0\}}J^N\cdot n_{\tau}\,d\bar{\tau}
\end{split}
\end{equation*}
We add $\int_{\tau'}^{\tau}\int_{\Sigma_{\bar{\tau}}} J^N\cdot n_{\bar{\tau}}\,d\bar{\tau}$ to both sides and rearrange the terms to obtain,
\begin{equation*}
\begin{split}
\int_{\Sigma_{\tau}} J^N\cdot n_{\tau}-\int_{\Sigma_{\tau'}} J^N\cdot n_{\tau'} +\int_{\tau'}^{\tau}\int_{\Sigma_{\bar{\tau}}} J^N\cdot n_{\bar{\tau}}\,d\bar{\tau}&\leq C \int_{\tau'}^{\tau}\int_{\Sigma_{\bar{\tau}}} J^T\cdot n_{\bar{\tau}}\,d\bar{\tau}\\
&\leq C(\tau-\tau')\int_{\Sigma_{\tau'}} J^T\cdot n_{\tau'},
\end{split}
\end{equation*}
where we used the degenerate energy bound in the last equality. The above inequality is of the form,
\begin{equation*}
f(t)-f(t')+\int_{t'}^t f(s)\,ds\leq C (t-t')D_0.
\end{equation*}
By dividing by $t-t'$ and taking the limit $t\to t'$ we obtain the differential inequality
\begin{equation*}
\frac{d}{dt}(tf(t))\leq CD_0
\end{equation*}
and therefore $f(t)\leq CD_0$. We can conclude that
\begin{equation*}
\int_{\Sigma_{\tau}}J^N[\psi]\cdot n_{\tau}\leq C \int_{\Sigma_0} J^N[\psi]\cdot n_{0}.\qedhere
\end{equation*}
\end{proof}

\subsection{Pointwise boundedness}
We use standard Sobolev and elliptic estimates on $\Sigma_{\tau}$ to obtain pointwise estimates from the energy estimate in Proposition \ref{prop:nondegeboundednessnariai}.
\begin{proposition}
\label{prop:pointboundn}
In Nariai there exists a uniform constant $C>0$ such that
\begin{equation*}
\begin{split}
||\psi||_{L^{\infty}(\Sigma_{\tau})}&\leq C \bigg(||\Psi_0||_{L^{\infty}(\Sigma)}+||\Psi'_0||_{L^{1}(\Sigma)}+||\partial_x\Psi_0||_{L^1(\Sigma)}\\
&+\sqrt{E_{N}[\psi]+E_N[T\psi]+E_N[R\psi]+E_N[\overline{R}\psi]}\bigg),
\end{split}
\end{equation*}
where
\begin{equation*}
E_{N}[\psi]= \int_{\Sigma_0}\left\{J^N[\psi]\cdot n_{0}+J^{\overline{N}}[\psi]\cdot n_{0}\right\}.
\end{equation*}
If $\mu\neq 0$, we can leave out the norms of $\Psi_0$ and $\Psi'_0$ everywhere above.
\end{proposition}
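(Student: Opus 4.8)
The plan is to derive the pointwise bound from the non-degenerate energy estimate of Proposition~\ref{prop:nondegeboundednessnariai} via a two-dimensional Sobolev embedding on $\Sigma_\tau$, treating the $l=0$ mode separately. First I would split $\psi = \psi_0 + \psi_{l\geq 1}$ and observe that $\psi_{l\geq 1}$ is supported on spherical harmonics with $l\geq 1$, so on each sphere the Poincar\'e inequality on $\s^2$ gives $\|\psi_{l\geq 1}\|_{L^2(\s^2)}^2 \lesssim \|\snabla \psi_{l\geq 1}\|_{L^2(\s^2)}^2$. Integrating over the $\rho$-direction of $\Sigma_\tau$ (using the volume form $d\mu_{\Sigma_\tau} = b_N(\rho)\,d\rho\,d\mu_{\s^2}$ with $b_N$ bounded above and below), this controls $\|\psi_{l\geq 1}\|_{L^2(\Sigma_\tau)}^2$ by $\int_{\Sigma_\tau}|\snabla\psi|^2 \lesssim \int_{\Sigma_\tau} J^N[\psi]\cdot n_\tau + J^{\overline N}[\psi]\cdot n_\tau \leq C E_N[\psi]$, where the last step is Proposition~\ref{prop:nondegeboundednessnariai}. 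For the $\psi_0$ part, I would use the d'Alembert-type formula \eqref{dalm1}: since $\psi_0$ solves a $1{+}1$-dimensional wave equation, $\psi_0(u,v) = \psi_0(u_\Sigma(v),v) + \int \partial_u\psi_0\,d\bar u$, and $\partial_u\psi_0$ at the point $(\bar u, v_\Sigma(\bar u))$ lies on $\Sigma_0$; bounding $\|\psi_0\|_{L^\infty}$ this way produces precisely the $\|\Psi_0\|_{L^\infty(\Sigma)}$, $\|\Psi_0'\|_{L^1(\Sigma)}$, $\|\partial_x\Psi_0\|_{L^1(\Sigma)}$ terms. (When $\mu\neq 0$ there is a zeroth-order term in the equation for $\psi_0$, so instead one includes $\psi_0$ in the energy argument and does not need these $L^1/L^\infty$ data norms — this is the source of the final sentence of the Proposition.)

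Next, having controlled $\|\psi\|_{L^2(\Sigma_\tau)}$, I would upgrade to $L^\infty$ by a Sobolev embedding on the three-dimensional manifold $\Sigma_\tau$. Since $\dim\Sigma_\tau = 3$, one needs control of $\psi$ in $H^2(\Sigma_\tau)$, i.e.\ of up to two derivatives tangent to $\Sigma_\tau$. The tangential derivatives split into the $\rho$-derivative and the angular derivatives $\Omega^k$. Angular derivatives commute with the Klein-Gordon operator (they are Killing), so $\Omega^k\psi$ again solves \eqref{eq:kgequation} and the energy estimate of Proposition~\ref{prop:nondegeboundednessnariai} applies to $\Omega^k\psi$, giving control by $E_N[\Omega^k\psi]$ for $|k|\leq 2$ — and since $\Omega^k\psi_0 = 0$ for $k\geq 1$, only the $l\geq 1$ part contributes these higher terms, which is why no higher data norms of $\Psi_0$ appear. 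For the $\rho$-derivatives one does not have a Killing field, so instead I would use the equation elliptically: on each slice $\Sigma_\tau$, \eqref{eq:kgequation} lets one solve for the second $\rho$-derivative of $\psi$ in terms of $\partial_t^2\psi$, one $\rho$-derivative, angular derivatives and $\psi$ itself; thus commuting with $T=\partial_t$ once (legitimate, as $T$ is Killing) and invoking the energy bounds $E_N[T\psi]$, together with the red-shift-adapted energies $E_N[R\psi]$, $E_N[\overline R\psi]$ (needed because $N,\overline N$ differ from $T$ near the horizons, so commuting with $N$ rather than $T$ requires the error terms $R\psi=(N-T)\psi$), closes the $H^2(\Sigma_\tau)$ bound.

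The main obstacle I expect is the interface between the degenerate Killing field $T$ and the non-degenerate red-shift fields $N,\overline N$ near the cosmological horizons when extracting the transversal ($\rho$) second derivative: one must be careful that the elliptic estimate for $\partial_\rho^2\psi$ does not degenerate at $x=\pm 1/\sqrt K$, where $1-Kx^2\to 0$. The resolution is that in the $(\rho,\theta,\phi)$ coordinates adapted to $\Sigma$ — rather than static coordinates — the induced metric and the normal $n_\tau$ are uniformly non-degenerate by the standing assumptions $C_1\leq -g(n_\Sigma,T)\leq C_2$ and $C_1\leq -g^{-1}(dh_N,dh_N)\leq C_2$, so the spatial operator on $\Sigma_\tau$ is uniformly elliptic and the bulk wave operator expressed in these coordinates has bounded coefficients. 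One then writes $\partial_\rho^2\psi = (\text{bounded coeffs})\cdot(n_\tau^2\psi, n_\tau\partial_\rho\psi, \snabla^2\psi, \text{lower order})$ and absorbs $n_\tau^2\psi$ using the equation (which trades it for $\partial_t^2\psi$ plus lower order, controlled by $E_N[T\psi]$ and the red-shift energies). Once $\|\psi\|_{H^2(\Sigma_\tau)}$ is bounded by $\sqrt{E_N[\psi]+E_N[T\psi]+E_N[R\psi]+E_N[\overline R\psi]}$ plus the $\psi_0$ data terms, the Sobolev embedding $H^2(\Sigma_\tau)\hookrightarrow L^\infty(\Sigma_\tau)$ — with embedding constant uniform in $\tau$ by the $\tau$-independence of the geometry of $\Sigma_\tau$, since the $\Sigma_\tau$ are isometric images under the flow of $T$ — finishes the proof.
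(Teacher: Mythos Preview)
Your proposal is essentially correct and follows the paper's strategy: Sobolev embedding $H^2(\Sigma_\tau)\hookrightarrow L^\infty$, elliptic use of the equation together with the $T$ and red-shift commutations to reach $H^2$, separate treatment of $\psi_0$ via the d'Alembert formula \eqref{dalm1}, and uniformity in $\tau$ from the isometry of the slices.

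One small discrepancy worth flagging: you propose controlling the \emph{angular} second derivatives by commuting with $\Omega^k$, which would put $E_N[\Omega^k\psi]$ on the right-hand side---terms that do not appear in the stated inequality. The paper instead invokes a single elliptic estimate on $\Sigma_\tau$ (referencing the appendix of \cite{are1}),
\[
\|\nabla^2\psi\|_{L^2(\Sigma_\tau)}^2 \leq C\int_{\Sigma_\tau}\Big(J^N[\psi]\cdot n_\tau + J^N[N\psi]\cdot n_\tau + J^{\overline N}[\overline N\psi]\cdot n_\tau\Big),
\]
which handles angular and $\rho$ second derivatives simultaneously; it then writes $N\psi = T\psi + R\psi$ and controls $\int_{\Sigma_\tau} J^N[N\psi]\cdot n_\tau$ via the commuted energy bounds for $T\psi$ and $R\psi$ (the latter requiring the Dafermos--Rodnianski argument that the $(R^2\psi)^2$ error term carries a good sign thanks to positive surface gravity). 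This is why only $E_N[T\psi]$, $E_N[R\psi]$, $E_N[\overline R\psi]$ appear. Your hybrid route (commute with $\Omega$ for angular, elliptic for $\rho$) would yield a correct but cosmetically different estimate; to match the proposition exactly, use the elliptic estimate for all spatial second derivatives.
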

\begin{proof}
By compactness of $\Sigma_0$, all metric components and their derivatives are bounded, which justifies the use of the following elliptic estimate, see for example the Appendix of \cite{are1} for a derivation,
\begin{equation}
\label{est:standardelliptic}
||g^{ij}(\nabla^2\psi)_{ij}||_{L^2(\Sigma_0)}^2\leq C \int_{\Sigma_0}\left\{ J^N[\psi]\cdot n_{\Sigma}+J^N[N\psi]\cdot n_{\Sigma}+J^{\overline{N}}[\overline{N}\psi]\cdot n_{\Sigma}\right\},
\end{equation}
where $i,j$ denote components along $\Sigma$. We can in fact conclude that
\begin{equation*}
\begin{split}
||\psi||^2_{\dot{H}^1(\Sigma_0)}+||\psi||^2_{\dot{H}^2(\Sigma_0)} &\leq ||\psi||^2_{\dot{H}^1(\Sigma_0)}+C ||g^{ij}(\nabla^2\psi)_{ij}||_{L^2(\Sigma_0)}^2\\
&\leq C \int_{\Sigma_0}\left\{ J^N[\psi]\cdot n_{\Sigma}+J^N[N\psi]\cdot n_{\Sigma}+J^{\overline{N}}[\overline{N}\psi]\right\}.
\end{split}
\end{equation*}

Together with the Sobolev inequality in Proposition \ref{sobs2} and the Poincar\'e inequality in Proposition \ref{poins2}, we get for $\mu=0$ and $\psi=\psi_{\geq1}$
\begin{equation*}
||\psi_{\geq1}||_{L^{\infty}(\Sigma_0)}^2\leq C  \int_{\Sigma_0}\left\{ J^N[\psi]\cdot n_{\Sigma}+J^N[N\psi]\cdot n_{\Sigma}+J^{\overline{N}}[\overline{N}\psi]\right\},
\end{equation*}
where $C>0$ is a uniform constant. 

If $\mu \neq 0$, we no longer need Proposition \ref{poins2} and we find that 
\begin{equation*}
||\psi ||_{L^{\infty}(\Sigma_0)}^2\leq C \int_{\Sigma_0}\left\{ J^N[\psi]\cdot n_{\Sigma}+J^N[N\psi]\cdot n_{\Sigma}+J^{\overline{N}}[\overline{N}\psi]\right\}.
\end{equation*}
By construction, $\Sigma_{\tau}$ is isometric to $\Sigma_0$ for all $\tau\geq 0$, so the above estimates holds with $\Sigma_0$ replaced by $\Sigma_{\tau}$ and with $C$ unchanged. We can moreover estimate
\begin{equation*}
J^N[\psi]\cdot n_{\Sigma}\leq C \left[J^N[T\psi]\cdot n_{\Sigma}+J^N[R\psi]\cdot n_{\Sigma}\right],
\end{equation*}
where $R=N-T$ and $C>0$ is a uniform constant. Since $[\square_g,T]=0$, by the Killing property of $T$, the estimates in Proposition \ref{prop:nondegeboundednessnariai} hold for $T\psi$ replacing $\psi$.

However, when commuting $\square_g$ with $Y$, we have to estimate an additional spacetime integral of the error term $R(\psi)\square_g(R \psi)$, when replacing $\psi$ with $R\psi$ in Proposition \ref{prop:nondegeboundednessnariai}. In \cite{dafrod5} it is shown in a very general setting that the error terms in the region $\{|x|\leq x_0\}$ can be absorbed by the remaining spacetime integrals of $K^N[R\psi]$, $K^N[\psi]$ and $K^N[T\psi]$ if $x_0$ is suitably large, relying on the positivity of the surface gravity to infer that the term proportional to $(R^2\psi)^2$ (that does not come with a smallness constant) has a favourable sign. We can therefore conclude that 
\begin{equation*}
\int_{\Sigma_{\tau}}J^N[N\psi]\cdot n_{\tau}\leq C \int_{\Sigma_0} J^N[T\psi]\cdot n_{0}+J^N[R\psi]\cdot n_{0},
\end{equation*}
where $C>0$ is a uniform constant. A similar estimate holds for $\overline{N}$ replacing $N$ and $\overline{R}=\overline{N}-T$.

We can conclude in the case that $\mu\neq 0$,
\begin{equation}
\label{ellnai1}
\begin{split}
||\psi||_{L^{\infty}(\Sigma_{\tau})}^2&\leq C\int_{\Sigma_{\tau}}J^N[\psi]\cdot n_{{\tau}}+J^{\overline{N}}[\psi]\cdot n_{\tau} +J^N[N \psi]\cdot n_{{\tau}}+J^{\overline{N}}[\overline{N} \psi]\cdot n_{{\tau}} \\
&\leq C(E_N[\psi]+E_N[T\psi]+E_N[R\psi]+E_N[\overline{R}\psi]).
\end{split}
\end{equation}

Similarly, for $\mu=0$ we conclude
\begin{equation}
\label{ellnai}
\begin{split}
||\psi_{\geq1}||_{L^{\infty}(\Sigma_{\tau})}^2\leq C(E_N[\psi]+E_N[T\psi]+E_N[R\psi]+E_N[\overline{R}\psi])
\end{split}
\end{equation}
where we used the orthonormality of harmonic modes:
\begin{equation*}
\int_{\s^2}\psi_l\psi_l'= \delta_{ll'}\int_{\s^2}\psi_l^2,
\end{equation*}
where $\delta_{l l'}$ is the Kronecker delta.

We are left with estimating $|\psi_0|$ in the $\mu=0$ case. By (\ref{dalm1}) we find that
\begin{equation*}
||\psi_0||_{L^{\infty}(\Sigma_{\tau})}^2\leq C\left[||\Psi_0||_{L^{\infty}(\Sigma_{0})}^2+||\Psi'_0||_{L^{1}(\Sigma_{0})}^2+||\partial_x\Psi_0||^2_{L^1(\Sigma_0)}\right],
\end{equation*}
where $C>0$ depends on the choice of $\Sigma$.
Consequently, we can estimate
\begin{equation}
\begin{split}
||\psi||_{L^{\infty}(\Sigma_{\tau})}^2&\leq C \bigg(||\Psi_0||_{L^{\infty}(\Sigma_{0})}^2+||\Psi'_0||_{L^{1}(\Sigma_{0})}^2+||\partial_x\Psi_0||^2_{L^1(\Sigma_0)}\\
&+E_{N}[\psi]+E_N[T\psi]+E_{N}[R\psi]+E_{N}[\overline{R}\psi]\bigg). \qedhere
\end{split}
\end{equation}
\end{proof}
We have now proved Theorem \ref{th:boundn}.

\section{Uniform decay in the static region of Nariai}
\label{sec:decn}
We obtain energy decay in $\mathcal{R}$ by first showing that an integrated local energy decay statement holds.
\subsection{Integrated energy decay}
Recall from Section \ref{sec:staticcoordinates} that
\begin{equation*}
x_*(x)=\frac{1}{\sqrt{K}}\arctanh \left( \sqrt{K}x\right).
\end{equation*}
The metric in the region $\mathcal{R}$ can be expressed in $(t,x_*,\theta,\phi)$ coordinates,
\begin{equation*}
g=D(-dt^2+dx_*^2)+r_0^2\slashed{g}_{\s^2},
\end{equation*}
where $D(x_*):=(1-Kx^2)$.

The compatible current $K^V$ to $J^V$, with $V=f(x_*)\partial_{x_*}$, is given by
\begin{equation*}
K^V[\psi]=\frac{1}{2}(1-Kx^2)^{-1}f'\left[(\partial_t\psi)^2 +(\partial_{x_*}\psi)^2\right]+\left[Kx f-\frac{1}{2}f'\right]\left(|\snabla \psi|^2+\mu^2\psi^2\right).
\end{equation*}
We need to have $f'>0$ in order to control $(\partial_t\psi)^2$ and $(\partial_{x_*}\psi)^2$. In that case, we necessarily lose control of the remaining terms at the photon sphere $x=0$. We therefore need to modify the current $J^V$.

Consider the first modified current,
\begin{equation*}
J^{V,1}_{\alpha}:=J^V_{\alpha}+\frac{f'}{2}\psi\partial_{\alpha}\psi-\frac{1}{4}\psi^2\partial_{\alpha}f'.
\end{equation*}
The corresponding first modified compatible current is given by,
\begin{equation*}
\begin{split}
K^{V,1}:&=\nabla^{\alpha}J^{V,1}_{\alpha}=K^V+\frac{1}{2}(\partial^{\alpha}f')\psi\partial_{\alpha}\psi+\frac{f'}{2}\partial^{\alpha}\psi\partial_{\alpha}\psi+\frac{f'}{2}\psi \square_g\psi\\
&-\frac{1}{2}\psi\partial^{\alpha}\psi \partial_{\alpha}f'-\frac{1}{4}\psi^2\square_g f'\\
&=K^V+\frac{f'}{2}g^{\alpha \beta}\partial_{\alpha}\psi\partial_{\beta}\psi-\frac{1}{4D}f''' \psi^2+\frac{\mu^2}{2}f'\psi^2\\
&=\frac{1}{D}f'(\partial_{x_*}\psi)^2+Kxf\left(|\snabla \psi|^2+\mu^2\psi^2\right)-\frac{1}{4D}f''' \psi^2.
\end{split}
\end{equation*}
We now have potential non-negativity of the terms above, for a suitable choice of $f$. However, we have lost control over $(\partial_t\psi)^2$ and the control over $|\snabla\psi|^2$ still degenerates at $x=0$. This is a manifestation of the photon sphere $\{x=0\}$ containing trapped null geodesics. 

Observe that
\begin{equation}
\label{eq:poincares2fixedl}
\int_{\s^2}|\snabla \psi_l|^2\,d\mu_{\s^2}=\frac{l(l+1)}{r_0^2}\int_{\s^2} \psi_l^2\,d\mu_{\s^2}.
\end{equation}
We will first restrict to single harmonic modes $\psi=\psi_l$ with $l\geq 1$, in order to control $|\snabla \psi_l|^2$ by $\psi_l^2$ at the cost of a factor $l(l+1)$, which will result in the loss of an angular derivative in the final estimate. 

In the regime $|x|\geq x_0$ we can use the compatible red-shift currents $K^N$ and $K^{\overline{N}}$ to control badly signed error terms. It is essential that these error terms can be taken arbitrarily small compared to the terms with a good sign in the region $x_1\leq |x|\leq x_0$, as the red-shift currents $K^N$ and $K^{\overline{N}}$ fail to be non-negative definite in this region. In order to control boundary currents, we will also cut off $f$ at a large $x_*$, so that $J^{V,1}$ vanishes for suitably large $x_*$.
\begin{lemma}
\label{prop:mainiledn}
Restrict $l\geq 1$. Let $\alpha>x_*(x_0)$ be a suitably large constant. Then there exists a function $f: \R \to \R$ such that the corresponding current $K^{V,1}$ is non-negative when $|x_*|\leq \alpha$ and vanishes for $|x_*|>\frac{e}{2}\alpha$. 

In the region $|x|\leq x_0$ we can control
\begin{equation*}
\int_{\s^2}K^{V,1}[\psi_l] D d\mu_{\s^2} \geq C\alpha^{-3}\int_{\s^2} (\partial_{x}\psi_l)^2+\frac{1}{l(l+1)}|\snabla \psi_l|^2+\mu^2\psi_l^2 d\mu_{\s^2},
\end{equation*}
where $C>0$ is a constant independent of $l$ and $\alpha$.

In the region $x_1\leq|x|\leq x_0$ we can moreover control
\begin{equation*}
\int_{\s^2}K^{V,1}[\psi_l] D d\mu_{\s^2} \geq C\alpha^{-1}\int_{\s^2} (\partial_{x}\psi_l)^2+|\snabla \psi_l|^2+\mu^2\psi_l^2 d\mu_{\s^2},
\end{equation*}
where $C>0$ is a constant independent of $l$ and $\alpha$.

Furthermore, for $|x_*|\in [\alpha,\frac{e}{2}\alpha]$, we can control
\begin{equation*}
\int_{\s^2} K^{V,1}[\psi_l] D d\mu_{\s^2}\geq\ -C\alpha^{-3}\int_{\s^2}  (\partial_{x}\psi_l)^2+|\snabla \psi_l|^2+\mu^2\psi_l^2 d\mu_{\s^2},
\end{equation*}
where $C>0$ is a constant independent of $l$ and $\alpha$.

If $l=0$, the above estimates hold without the $|\snabla \psi_l|^2$ term.
\end{lemma}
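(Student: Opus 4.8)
The plan is to build $f$ (a function of $x_*$) as a single odd profile that is \emph{stretched to scale $\alpha$} near the photon sphere and then shut off by a logarithmic cutoff. A concrete candidate is
\[
f(x_*)=\frac{2}{\pi}\sin\!\left(\frac{\pi x_*}{2\alpha}\right)\quad\text{for }|x_*|\le\alpha,
\]
continued on $\alpha\le|x_*|\le\frac{e}{2}\alpha$ by a (smoothed) odd logarithmic interpolation $f\propto\log\!\big(\tfrac{e}{2}\alpha/|x_*|\big)$ down to $0$, and $f\equiv0$ for $|x_*|>\frac{e}{2}\alpha$. The structural properties that matter are: $f$ is odd and vanishes only at $x_*=0$, so $Kxf\ge0$ everywhere; $f'>0$ on $\{|x_*|<\alpha\}$; $f'''\le0$ on $\{|x_*|\le\alpha\}$; and the uniform bounds $|f|\lesssim1$, $|f^{(k)}|\lesssim\alpha^{-k}$ for $k=1,2,3$. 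The sign $f'''\le0$ on the \emph{whole} region $\{|x_*|\le\alpha\}$ is the crucial point: the factor $D^{-1}=\cosh^2(\sqrt{K}x_*)$ multiplying $-\tfrac14 f'''\psi^2$ grows exponentially and cannot be absorbed by the bounded term $Kxf$, so $f'''$ must be favourably signed there rather than merely controlled. A bounded profile with $f'''\le0$ that is still ``active'' out to $x_*\sim\alpha$ necessarily has $f'(0)$ and $-f'''(0)$ of orders $\alpha^{-1}$ and $\alpha^{-3}$; this forced tradeoff is the origin of the $\alpha^{-3}$ losses and of the cutoff scale $\frac{e}{2}\alpha$ (the logarithmic cutoff is the one whose third derivative is still $O(\alpha^{-3})$ on $[\alpha,\frac{e}{2}\alpha]$).

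With this choice, the three pieces of the given expression
\[
K^{V,1}[\psi]=\frac{1}{D}f'(\partial_{x_*}\psi)^2+Kxf\big(|\snabla\psi|^2+\mu^2\psi^2\big)-\frac{1}{4D}f'''\psi^2
\]
are individually nonnegative on $\{|x_*|\le\alpha\}$ ($f'>0$, $Kxf\ge0$, $f'''\le0$), and $J^{V,1}$ vanishes for $|x_*|>\frac{e}{2}\alpha$ since $f=f'=0$ there; this gives the asserted nonnegativity and vanishing. For the quantitative bounds I would argue region by region, using $\partial_{x_*}=D\partial_x$, so $\tfrac1D f'(\partial_{x_*}\psi)^2=Df'(\partial_x\psi)^2$. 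On $\{|x|\le x_0\}$ the cutoff is inactive and $D$ is comparable to a positive constant; there $f'\gtrsim\alpha^{-1}$ and $-f'''\gtrsim\alpha^{-3}$, giving $Df'(\partial_x\psi)^2\gtrsim\alpha^{-1}(\partial_x\psi)^2$ and $-\tfrac1{4D}f'''\psi^2\gtrsim\alpha^{-3}\psi^2$. The angular term $Kxf$ degenerates like $x^2$ at the photon sphere, so for $l\ge1$ I would integrate over $\s^2$ and invoke the Poincar\'e identity \eqref{eq:poincares2fixedl} to turn $\alpha^{-3}\int_{\s^2}\psi_l^2$ into $\alpha^{-3}\tfrac1{l(l+1)}\int_{\s^2}|\snabla\psi_l|^2$ (and, combining the near-photon-sphere $f'''$ term with the nondegenerate $Kxf\mu^2\psi^2$ away from $x=0$, into $\alpha^{-3}\mu^2\psi_l^2$, with a constant possibly depending on $\mu$). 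This is exactly the loss of one angular derivative caused by the trapping. On the subregion $\{x_1\le|x|\le x_0\}$ one has in addition $f\gtrsim\alpha^{-1}$ since $|x_*|$ is bounded below, so $KxfD\gtrsim\alpha^{-1}$ and the undegenerate angular and mass terms are recovered with weight $\alpha^{-1}$; this yields the first two displayed estimates.

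On the cutoff band $\{|x_*|\in[\alpha,\frac{e}{2}\alpha]\}$ one checks that the logarithmic profile still has $f'''<0$ (proportional to $-x_*^{-3}$) and $f>0$, so the zeroth-order and potential terms are in fact favourably signed; the only possibly bad term, $\tfrac1D f'(\partial_{x_*}\psi)^2=D f'(\partial_x\psi)^2$, carries the factor $D=\cosh^{-2}(\sqrt{K}x_*)\lesssim e^{-2\sqrt{K}\alpha}$ (squared, after the $\partial_{x_*}\to\partial_x$ conversion) together with $|f'|\lesssim\alpha^{-1}$, hence is exponentially small in $\alpha$ and negligible compared with $\alpha^{-3}(\partial_x\psi)^2$. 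Integrating over $\s^2$ gives the third estimate (in fact with room to spare). Near the matching point $x_*=\alpha$ one smooths the corner over a scale $\sim\alpha$ so that $|f^{(k)}|\lesssim\alpha^{-k}$ persists; for $l=0$ the $|\snabla\psi_0|^2$ term drops, the Poincar\'e step is vacuous, and the $\mu^2\psi_0^2$ terms are recovered directly from $KxfD\,\mu^2\psi_0^2$ and $-\tfrac1{4D}f'''\psi_0^2$, giving the stated estimates without the angular term. I expect the main obstacle to be precisely the construction of $f$ itself: securing $f'''\le0$ throughout $\{|x_*|\le\alpha\}$ (to defeat the exponential weight $D^{-1}$) while simultaneously keeping $f'>0$ there and $f$ bounded (so that $J^{V,1}$ can be truncated with negligible boundary flux), which is what pins down the $\alpha$-stretched profile, the logarithmic cutoff, and the $\alpha^{-3}$ degeneracy.
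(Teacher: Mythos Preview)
Your approach is essentially the same as the paper's: build $f$ so that on the core region $|x_*|\le\alpha$ one has $f'>0$, $f'''\le0$, and $xf\ge0$ (giving term-by-term nonnegativity of $K^{V,1}$), then cut off on $[\alpha,\frac{e}{2}\alpha]$ where the exponentially small weight $D^2$ makes the $(\partial_x\psi)^2$ error harmless, and finally invoke the Poincar\'e identity \eqref{eq:poincares2fixedl} to trade the zeroth-order gain $-\tfrac{1}{4D}f'''\psi_l^2\gtrsim\alpha^{-3}\psi_l^2$ for the degenerate angular control $\alpha^{-3}\tfrac{1}{l(l+1)}|\snabla\psi_l|^2$. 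You have correctly identified all of the structural requirements and the mechanism behind the $\alpha^{-3}$ degeneracy.

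The difference is only in the concrete profile. The paper takes a single analytic function
\[
\tilde f(x_*)=\Big(\tfrac{x_*}{\alpha}+e\Big)\log\Big(\tfrac{x_*}{\alpha}+e\Big)-\Big(\tfrac{x_*}{\alpha}+e\Big),
\]
which already satisfies $\tilde f(0)=0$, $\tilde f'>0$, $\tilde f'''<0$ on all of $(-e\alpha,\infty)$, and then multiplies by a standard mollified cutoff $\chi$ supported in $|x_*|\le\frac{e}{2}\alpha$. The product rule cleanly separates $DK^{V,1}$ into a $\chi$-weighted nonnegative core plus cutoff-error terms whose coefficients involve $\tilde f,\tilde f',\tilde f''$ (all $O(1),O(\alpha^{-1}),O(\alpha^{-2})$) paired with $\chi',\chi'',\chi'''$ (all $O(\alpha^{-k})$) or with $D^2$ (exponentially small); no sign information is needed in the band. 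Your sine-then-log construction is also workable, but it introduces two matching points ($x_*=\alpha$ and $x_*=\frac{e}{2}\alpha$) at which $f$ fails to be $C^1$, and you only address the first. More importantly, to keep the nonnegativity claim on $\{|x_*|\le\alpha\}$ literally true you must confine the smoothing to $\{|x_*|\ge\alpha\}$; then in the band you should not rely on $f'''<0$ surviving the smoothing (it generally will not when you force $f=f'=f''=0$ at $\frac{e}{2}\alpha$) but simply on the crude bounds $|f^{(k)}|\lesssim\alpha^{-k}$, which already suffice for the stated lower bound $DK^{V,1}\ge -C\alpha^{-3}[\cdots]$. With that adjustment your argument goes through; the paper's ``good function times cutoff'' packaging just avoids the gluing bookkeeping altogether.
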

\begin{proof}
We define $\tilde{f}:(-e\alpha,\infty)\to \R$ by
\begin{equation*}
\tilde{f}(x_*)=\left(\frac{x_*}{\alpha}+e\right)\log\left(\frac{x_*}{\alpha}+e\right)-\left(\frac{x_*}{\alpha}+e\right).
\end{equation*}
Then $\tilde{f}(x_*)\geq 0$ for $x_*\geq 0$ and $\tilde{f}(x_*)\leq 0$ for $-e\alpha<x_*<0$, so $x \tilde{f}(x_*)\geq 0$ for $x_*>-e\alpha$.

Moreover,
\begin{align*}
\tilde{f}'(x_*)&=\alpha^{-1}\log\left(\frac{x_*}{\alpha}+e\right),\\
\tilde{f}''(x_*)&= \alpha^{-1}(x_*+e\alpha)^{-1},\\
\tilde{f}'''(x_*)&=-\alpha^{-1} \left(x_*+e\alpha\right)^{-2}.
\end{align*}
In particular, $\tilde{f}'>0$ and $\tilde{f}'''<0$ for $x_*>-e\alpha$.

Let $\chi: \R \to \R$ be the function defined by
\begin{equation*}
\chi(z)=(\mathbf{1}_{\geq \alpha}*\eta_{\alpha,R})(z),
\end{equation*}
where $\mathbf{1}_{z\geq \alpha}$ is an indicator function and $\eta_{\alpha,R}(z):=R^{-1}\eta(\frac{z-\alpha}{R})$, where $\eta$ is the bump function defined in Lemma \ref{lmbump}. Take $R:=\frac{e\alpha}{2}-\alpha>0$. Therefore, $\chi$ is a cut-off function satisfying $\chi(|x_*|)=1$ for $|x_*|\leq\alpha$ and $\chi(x_*)=0$ for $|x_*|\geq \frac{e}{2}\alpha$.

Moreover, we can estimate
\begin{equation}
\label{est:cutoffest}
\sup_{x\in \R} |\chi^{(k)}(x)|\leq R^{-k}\sup_{y\in (0,1)}|\eta^{(k)}(y)|\leq C_k R^{-k}\leq C_k\alpha^{-k}.
\end{equation}

We define $f: \R \to \R$ by $f(x_*)=\tilde{f}(x_*)\chi(|x_*|)$ for $|x_*|\leq \frac{e}{2}\alpha$ and $f(x_*)=0$ for $|x_*|\geq\frac{e}{2}\alpha$. By applying the chain rule when differentiating $f$, we obtain
\begin{equation*}
\begin{split}
DK^{V,1}&=\chi\left[\tilde{f}'D^2(\partial_{x}\psi)^2+KxD\tilde{f}\left(|\snabla \psi|^2+\mu^2\psi^2\right)-\frac{\tilde{f}'''}{4}\psi^2\right]+D^2\tilde{f}\chi' (\partial_x\psi)^2\\
&-\frac{1}{4}\left[3\tilde{f}''\chi'+3\tilde{f}'\chi''+\tilde{f}\chi'''\right]\psi^2.
\end{split}
\end{equation*}
From the above expression, it follows that in the region $|x|\leq x_0$, we can estimate
\begin{equation*}
K^{V,1} \geq C\alpha^{-3}\left[(\partial_x\psi_l)^2+\psi_l^2\right],
\end{equation*}
where $C=C(x_0,K,r_0,\mu)>0$ is a constant that is independent of $\alpha$. 

Away from $x=0$, in the region $x_1\leq |x|\leq x_0$ we can in fact obtain an estimate that is less degenerate in $\alpha$,
\begin{equation*}
K^{V,1} \geq C\alpha^{-1}\left[(\partial_x\psi_l)^2+|\snabla \psi_l|^2+\mu^2\psi_l^2\right].
\end{equation*}

Now consider the region $\alpha<|x_*|<\frac{e \alpha}{2}$. 

The terms multiplied by $\chi$ are non-negative. Moreover, we can estimate in the region $\alpha\leq |x_*|\leq \frac{e\alpha}{2}$\begin{equation*}
|D^2\tilde{f}|\leq C(1-\tanh^2(\sqrt{K} \alpha))\leq C \alpha^{-3}.
\end{equation*}
Together with the bounds for the derivatives of $\chi$ in (\ref{est:cutoffest}), we can therefore estimate in the region $\alpha\leq |x_*|\leq \frac{e\alpha}{2}$,
\begin{equation*}
K^{V,1}\geq -C\alpha^{-3}\left[(\partial_{x}\psi_l)^2+|\snabla\psi_l|^2+\psi_l^2\right].
\end{equation*}

We integrate over $\s^2$ and apply (\ref{eq:poincares2fixedl}) to obtain the statements in the lemma.
\end{proof}

In order to estimate $(\partial_t\psi_l)^2$ we will use the fact that we have control over $|\snabla\psi_l|^2$ by the proposition above. We consider a different vector field multiplier $W=g(x_*)\partial_{x_*}$ and use the unmodified compatible current. 

\begin{lemma}
\label{prop:auxiledn} 
In the region $|x|\leq x_0$ there exists a function $g:\R\to \R$ such that we can estimate for all $l\geq 0$
\begin{equation*}
\int_{\s^2}(\partial_t \psi_l)^2+(\partial_{x*}\psi_l)^2\,d\mu_{\s^2}\leq C \int_{\s^2} |\snabla\psi_l|^2+\mu^2\psi_l^2\,d\mu_{\s^2}+C\int_{\s^2}K^W[\psi_l] D\,d\mu_{\s^2},
\end{equation*}
where $C=C(x_0)>0$. Moreover, for $x_0$ suitably large $K^W$ is positive definite when $|x|\geq x_0$.
\end{lemma}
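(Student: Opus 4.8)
The plan is to run the vector-field computation for the multiplier $W=g(x_*)\partial_{x_*}$ with the \emph{unmodified} compatible current, choosing $g$ so that the coefficient of the favourable derivatives does not degenerate. Replacing $f$ by $g$ in the formula for $K^V$ displayed above,
\begin{equation*}
K^W[\psi]=\frac{1}{2}(1-Kx^2)^{-1}g'\left[(\partial_t\psi)^2+(\partial_{x_*}\psi)^2\right]+\left[Kxg-\frac{1}{2}g'\right]\left(|\snabla\psi|^2+\mu^2\psi^2\right),
\end{equation*}
where primes denote differentiation with respect to $x_*$. The first coefficient is favourable as soon as $g'>0$; the angular coefficient $Kxg-\frac{1}{2}g'$ is permitted to have either sign on $\{|x|\leq x_0\}$, since we will merely move that term to the right-hand side, where after the loss of an angular derivative it can be absorbed using the good terms of $K^{V,1}$ supplied by Lemma \ref{prop:mainiledn}.

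The choice that makes everything clean is $g(x_*):=x(x_*)=\frac{1}{\sqrt{K}}\tanh(\sqrt{K}x_*)$, for which $g'=\frac{dx}{dx_*}=1-Kx^2=D$. Then the factor $D^{-1}$ in the first coefficient cancels and
\begin{equation*}
DK^W[\psi]=\frac{1}{2}D\left[(\partial_t\psi)^2+(\partial_{x_*}\psi)^2\right]+D\left(\frac{3}{2}Kx^2-\frac{1}{2}\right)\left(|\snabla\psi|^2+\mu^2\psi^2\right).
\end{equation*}
On $\{|x|\leq x_0\}$ with $x_0<\frac{1}{\sqrt{K}}$ we have $D\geq D(x_0)=:c_0>0$ and $\left|D\left(\frac{3}{2}Kx^2-\frac{1}{2}\right)\right|\leq C_1(x_0)$, so integrating over $\s^2$ and using these two bounds gives
\begin{equation*}
\int_{\s^2}DK^W[\psi_l]\,d\mu_{\s^2}\geq\frac{c_0}{2}\int_{\s^2}\left[(\partial_t\psi_l)^2+(\partial_{x_*}\psi_l)^2\right]d\mu_{\s^2}-C_1(x_0)\int_{\s^2}\left(|\snabla\psi_l|^2+\mu^2\psi_l^2\right)d\mu_{\s^2}.
\end{equation*}
Rearranging yields the stated inequality with $C=C(x_0)=\max\{2c_0^{-1},\,2c_0^{-1}C_1(x_0)\}$; the estimate is local in $(t,x_*)$ apart from this single integration over $\s^2$, so $C$ is independent of $l$ and the bound holds for every $l\geq 0$.

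For the final assertion, with $g=x$ the first coefficient of $K^W$ equals the positive constant $\frac{1}{2}(1-Kx^2)^{-1}g'=\frac{1}{2}$, while the angular coefficient $Kxg-\frac{1}{2}g'=\frac{3}{2}Kx^2-\frac{1}{2}$ is strictly positive exactly when $Kx^2>\frac{1}{3}$. Thus, enlarging $x_0$ if necessary so that $x_0>\frac{1}{\sqrt{3K}}$ — which only strengthens the ``suitably large $x_0$'' conditions already imposed in Lemma \ref{prop:mainiledn} and the Red-Shift Theorem \ref{thm:redshift} — the current $K^W$ controls $(\partial_t\psi)^2+(\partial_{x_*}\psi)^2+|\snabla\psi|^2+\mu^2\psi^2$ pointwise on $\{|x|\geq x_0\}$, i.e.\ it is positive definite there. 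No cut-off of $g$ is needed for the present statement, which asserts only a pointwise inequality and pointwise positivity; boundary fluxes at large $|x_*|$ are handled later when $K^W$ is combined with $K^{V,1}$ and the red-shift currents. The only slightly delicate point is this normalisation: for a generic increasing weight $g$ the factor $\frac{1}{2}D^{-1}g'$ would degenerate or blow up at the cosmological horizons, and it is exactly the identity $g'=D$ for $g=x$ that produces a constant depending only on $x_0$ together with a clean positivity region.
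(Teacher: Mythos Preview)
Your proof is correct but uses a different multiplier from the paper. The paper takes $g(x_*)=\alpha^{-1}\arctan(\alpha^{-1}x_*)$, where $\alpha$ is the same large parameter appearing in Lemma~\ref{prop:mainiledn}; this gives $g'=(x_*^2+\alpha^2)^{-1}$, which is bounded and positive everywhere, and positivity of $K^W$ for $|x|\geq x_0$ follows because the $Kxg$ term eventually dominates the decaying $g'$ term. You instead take $g(x_*)=x(x_*)$, exploiting the identity $g'=D$ so that $D^{-1}g'\equiv 1$; this makes the coefficient of $(\partial_t\psi)^2+(\partial_{x_*}\psi)^2$ a genuine constant and pins down the positivity threshold explicitly at $|x|>1/\sqrt{3K}$. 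For the lemma as stated your argument is cleaner and self-contained. The price is that your $g$ carries no $\alpha$, whereas the paper's choice is tuned so that $K^W$ and $K^{V,1}$ come with compatible $\alpha$-scalings in the combined estimate of the next proposition; with your $g$ the subsequent absorption argument would need minor rebalancing (one would multiply $K^W$ by a small constant rather than by $\alpha^{-3}$), but nothing essential changes since $g=x$ is bounded and $|J^W\cdot n|\leq C(J^N+J^{\overline N})\cdot n$ still holds.
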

\begin{proof}
Define
\begin{align*}
g(x_*)&=\alpha^{-1}\arctan (\alpha^{-1}x_*),\\
g'(x_*)&=\frac{1}{x_*^2+\alpha^2}.
\end{align*}
Then
\begin{equation*}
K^W[\psi]=\frac{1}{2}D^{-1}g'\left[(\partial_t\psi)^2 +(\partial_{x_*}\psi)^2\right]+\left[Kx g-\frac{1}{2}g'\right]\left(|\snabla \psi|^2+\mu^2\psi^2\right).
\end{equation*}
We use that $g'>0$ is bounded for all $x_*$, together with Lemma \ref{prop:mainiledn}, to arrive at the required estimate. Moreover, for $x_0$ and $\alpha$ suitably large, $K^W\geq 0$ outside $-x_0\leq x \leq x_0$, because the $g$ term dominates the $g'$ term.
\end{proof}

\begin{proposition}
There exists a  uniform constant $C>0$, such that we can bound for all $l\geq 1$
\begin{equation*}
\begin{split}
&\int_{\tau'}^{\tau} \int_{\Sigma_{\bar{\tau}}\cap\{-x_0\leq x \leq x_0\}}\left(xf(x)+\frac{1}{l(l+1)}\right)\left[(\partial_t \psi_l)^2+(\partial_{x*}\psi_l)^2+|\snabla\psi_l|^2\right]+\mu^2\psi_l^2\,d\bar{\tau}\\
&\leq C\int_{\mathcal{R}(\tau',\tau)}K^{V,1}[\psi_l]+K^W[\psi_l]+ C\int_{\mathcal{R}(\tau',\tau)} K^N[\psi_l]+K^{\overline{N}}[\psi_l].
\end{split}
\end{equation*}

In the $l=0$ case, the above estimate holds without the factor $\left(xf(x)+\frac{1}{l(l+1)}\right)$.
\end{proposition}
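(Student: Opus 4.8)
The plan is to assemble the estimate from the three ingredients already in hand: the modified Morawetz current $K^{V,1}$ of Lemma~\ref{prop:mainiledn}, the auxiliary current $K^W$ of Lemma~\ref{prop:auxiledn}, and the red-shift currents $K^N,K^{\overline{N}}$ of Theorem~\ref{thm:redshift}. After integrating over $\s^2$ at fixed $(t,x)$ the integrand is $SO(3)$-invariant, and the weight $w(x):=xf(x)+\tfrac{1}{l(l+1)}$ (with $f$ the function built in Lemma~\ref{prop:mainiledn}) is constant on each such sphere with $0<w\le M$ on $\{|x|\le x_0\}$, $M$ depending only on the fixed constants $\alpha,K,r_0$. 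I would first record, as routine consequences of compactness of $\{|x|\le x_0\}$ and the uniform bounds on the foliation from Section~\ref{sec:foliationsnariai}, the conversion between $\int_{\tau'}^{\tau}\!\int_{\Sigma_{\bar\tau}\cap\{|x|\le x_0\}}(\cdots)\,d\bar\tau$ and $\int_{\mathcal{R}(\tau',\tau)\cap\{|x|\le x_0\}}(\cdots)$, and the equivalence on $\{|x|\le x_0\}$ of $(\partial_t\psi_l)^2+(\partial_{x_*}\psi_l)^2+|\snabla\psi_l|^2+\mu^2\psi_l^2$ with $J^T[\psi_l]\cdot n_{\bar\tau}$ (here $\partial_{x_*}=D\partial_x$ with $D$ bounded above and below on $\{|x|\le x_0\}$). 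After this, everything runs with spacetime integrals.

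The core step is a sphere-integrated estimate on $\{|x|\le x_0\}$. Lemma~\ref{prop:mainiledn} controls $\int_{\s^2}\big[(\partial_x\psi_l)^2+\tfrac{1}{l(l+1)}|\snabla\psi_l|^2+\mu^2\psi_l^2\big]$ by $\int_{\s^2}K^{V,1}[\psi_l]D$, and where the cut-off is $1$ the pointwise bound $K^{V,1}[\psi_l]\ge Kxf\,(|\snabla\psi_l|^2+\mu^2\psi_l^2)$ adds control of $\int_{\s^2}xf\,|\snabla\psi_l|^2$; since $\alpha$ is fixed these combine to control $\int_{\s^2}\big[(\partial_x\psi_l)^2+w|\snabla\psi_l|^2+\mu^2\psi_l^2\big]$ by $\int_{\s^2}K^{V,1}[\psi_l]D$. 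For the $\partial_t$- and $\partial_{x_*}$-terms I would multiply the estimate of Lemma~\ref{prop:auxiledn} by the scalar $w(x)$:
\begin{equation*}
w\int_{\s^2}\big[(\partial_t\psi_l)^2+(\partial_{x_*}\psi_l)^2\big]\;\le\;C\,w\!\int_{\s^2}\big[|\snabla\psi_l|^2+\mu^2\psi_l^2\big]+C\,w\!\int_{\s^2}K^W[\psi_l]D .
\end{equation*}
The first term on the right is $C\int_{\s^2}w|\snabla\psi_l|^2$ plus at most $CM\mu^2\int_{\s^2}\psi_l^2$, both bounded by a constant times $\int_{\s^2}K^{V,1}[\psi_l]D$. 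For the last term, on $\{|x|\le x_0\}$ one has $D^{-1}g'=O(\alpha^{-2})$ and $Kxg-\tfrac12 g'=O(\alpha^{-1})$, so pointwise $|wK^W[\psi_l]|\le C\alpha^{-1}\,w\big[(\partial_t\psi_l)^2+(\partial_{x_*}\psi_l)^2+|\snabla\psi_l|^2+\mu^2\psi_l^2\big]$; integrating over $\mathcal{R}(\tau',\tau)\cap\{|x|\le x_0\}$ and taking $\alpha$ large it is absorbed into the left-hand side (or one keeps the harmless $\int_{\mathcal{R}(\tau',\tau)}K^W$ appearing in the statement). The upshot is
\begin{equation*}
\int_{\mathcal{R}(\tau',\tau)\cap\{|x|\le x_0\}}\!\! w\big[(\partial_t\psi_l)^2+(\partial_{x_*}\psi_l)^2+|\snabla\psi_l|^2\big]+\mu^2\psi_l^2\;\le\;C\!\!\int_{\mathcal{R}(\tau',\tau)\cap\{|x|\le x_0\}}\!\!\big(K^{V,1}[\psi_l]+K^W[\psi_l]\big).
\end{equation*}

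It then remains to enlarge the region on the right. Where the cut-off is $1$ the signs of $\tilde f',\,x\tilde f,\,-\tilde f'''$ give $K^{V,1}[\psi_l]\ge0$; $K^{V,1}[\psi_l]\equiv0$ for $|x_*|\ge\tfrac e2\alpha$; and on the transition annulus $\{\alpha\le|x_*|\le\tfrac e2\alpha\}\subset\{|x|>x_0\}$ Lemma~\ref{prop:mainiledn} gives $K^{V,1}[\psi_l]\ge-C\alpha^{-3}\big[(\partial_x\psi_l)^2+|\snabla\psi_l|^2+\mu^2\psi_l^2\big]$, where by Theorem~\ref{thm:redshift}(i) these derivatives are dominated by the \emph{non-negative} currents $K^N[\psi_l]$ (for $x\le-x_0$) and $K^{\overline{N}}[\psi_l]$ (for $x\ge x_0$). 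Hence
\begin{equation*}
\int_{\mathcal{R}(\tau',\tau)\cap\{|x|\le x_0\}}\!\!K^{V,1}[\psi_l]\;\le\;\int_{\mathcal{R}(\tau',\tau)}\!\!K^{V,1}[\psi_l]+C\alpha^{-3}\Big(\int_{\mathcal{R}(\tau',\tau)\cap\{x\le-x_0\}}\!\!K^N[\psi_l]+\int_{\mathcal{R}(\tau',\tau)\cap\{x\ge x_0\}}\!\!K^{\overline{N}}[\psi_l]\Big).
\end{equation*}
Using $N=T$ for $x\ge-x_1$ (so $K^N\equiv0$ there), $|K^N|\le B_2 J^T[\psi_l]\cdot n_{\bar\tau}$ on $\{-x_0\le x\le-x_1\}$ from Theorem~\ref{thm:redshift}(iii), and the core estimate on the compact set $\{-x_0\le x\le-x_1\}$ (on which $w$ is bounded below by a positive constant), one gets $\int_{\mathcal{R}(\tau',\tau)\cap\{x\le-x_0\}}K^N\le\int_{\mathcal{R}(\tau',\tau)}K^N+C_\alpha\int_{\mathcal{R}(\tau',\tau)\cap\{|x|\le x_0\}}(K^{V,1}[\psi_l]+K^W[\psi_l])$ and symmetrically for $K^{\overline{N}}$. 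Substituting, the reproduced term carries the coefficient $C\alpha^{-3}C_\alpha$, which is small for $\alpha$ large, so it can be moved to the left; combined with $\int_{\mathcal{R}(\tau',\tau)\cap\{|x|\le x_0\}}K^W\le\int_{\mathcal{R}(\tau',\tau)}K^W$ (valid since $K^W\ge0$ on $\{|x|\ge x_0\}$ by Lemma~\ref{prop:auxiledn}), this gives the claimed bound. For $l=0$ the Poincaré identity (\ref{eq:poincares2fixedl}) is vacuous and $\snabla\psi_0\equiv0$, so Lemma~\ref{prop:auxiledn} directly controls $(\partial_t\psi_0)^2$ by $C\mu^2\psi_0^2+CK^W$ while Lemma~\ref{prop:mainiledn} controls the unweighted $(\partial_x\psi_0)^2+\mu^2\psi_0^2$, and the same region-enlarging argument yields the statement with no weight.

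The main obstacle is the interaction of the $\alpha$-degeneracies with the $\tfrac{1}{l(l+1)}$ loss forced by the trapped null geodesics: an unweighted $\int_{\s^2}|\snabla\psi_l|^2$ can never appear on the right-hand side, which is precisely why Lemma~\ref{prop:auxiledn} must be used only after multiplication by $w$ and why the error $wK^W$ has to be exhibited as genuinely lower order in $\alpha$; and since the red-shift correction on $\{x_1\le|x|\le x_0\}$ reproduces the quantity being estimated, the argument closes only because that correction carries a net negative power of $\alpha$, with $\alpha$ chosen large at the very end.
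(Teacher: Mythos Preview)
Your approach is the paper's: combine the weighted Morawetz current $K^{V,1}$ with the auxiliary $K^W$ to get control on $\{|x|\le x_0\}$, then enlarge the domain using the red-shift currents $K^N,K^{\overline N}$ and close by absorbing the error on $\{x_1\le|x|\le x_0\}$ for $\alpha$ large. The paper organises the absorption slightly differently, by carrying as a separate term on the left of its intermediate inequality (\ref{est:crucialiledestimate}) the quantity
\[
\alpha^2\int_{\tau'}^{\tau}\int_{\Sigma_{\bar\tau}\cap\{x_1\le|x|\le x_0\}}\big[(\partial_t\psi_l)^2+(\partial_{x_*}\psi_l)^2+|\snabla\psi_l|^2+\mu^2\psi_l^2\big]\,d\bar\tau,
\]
coming from the \emph{sharper} bound $\int_{\s^2}K^{V,1}D\ge C\alpha^{-1}\int_{\s^2}[\,\cdot\,]$ on $\{x_1\le|x|\le x_0\}$ in Lemma~\ref{prop:mainiledn}; the red-shift correction $\tfrac{B_2}{B_1}\int_{\{x_1\le|x|\le x_0\}}J^N\cdot n_{\bar\tau}$ carries no $\alpha$-weight and is absorbed once $\alpha^2$ beats a fixed constant.

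One caution in your bookkeeping: in the step producing $\int_{\{x\le-x_0\}}K^N\le\int K^N+C_\alpha\int_{\{|x|\le x_0\}}(K^{V,1}+K^W)$, the constant $C_\alpha$ must itself come from this sharper $\alpha^{-1}$ estimate on $\{x_1\le|x|\le x_0\}$, which yields $C_\alpha=O(\alpha)$ and hence $\alpha^{-3}C_\alpha=O(\alpha^{-2})$. If instead you deduce $C_\alpha$ from your ``core estimate'' on all of $\{|x|\le x_0\}$ (whose constant is $O(\alpha^3)$ by the \emph{first} estimate of Lemma~\ref{prop:mainiledn}) together with the lower bound $w\ge c$ on $\{x_1\le|x|\le x_0\}$, note that $xf\sim\alpha^{-1}$ there, so $c\sim\alpha^{-1}$ and $C_\alpha=O(\alpha^4)$; the absorption would then fail. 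Your final paragraph suggests you are aware of the hierarchy, but make sure it is the second estimate of Lemma~\ref{prop:mainiledn} that is invoked at this point, exactly as the paper does.
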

\begin{proof}
Combining the estimates of Lemma \ref{prop:mainiledn} and Lemma \ref{prop:auxiledn} gives in particular
\begin{equation}
\label{est:crucialiledestimate}
\begin{split}
&\int_{\tau'}^{\tau} \int_{\Sigma_{\bar{\tau}}\cap\{-x_0\leq x \leq x_0\}}\left[xf(x)+\frac{1}{l(l+1)}\right]\left[(\partial_t \psi_l)^2+(\partial_{x*}\psi_l)^2+|\snabla\psi_l|^2+\mu^2\psi_l^2\right]\,d\bar{\tau}\\
&+\alpha^2\int_{\tau'}^{\tau} \int_{\Sigma_{\bar{\tau}}\cap\{x_1\leq |x| \leq x_0\}}(\partial_t \psi_l)^2+(\partial_{x*}\psi_l)^2+|\snabla\psi_l|^2+\mu^2\psi_l^2\,d\bar{\tau}\\
&\leq C\alpha^3 \int_{\mathcal{R}(\tau',\tau)\cap\{ |x_*|\leq \alpha\}}K^{V,1}+\alpha^{-3}K^W,
\end{split}
\end{equation}
where the factor $D$ appearing in front of $K^{V,1}$ and $K^W$ on the right hand side is absorbed into the volume form, $\sqrt{\det g}=Dr_0^2\sin\theta$.

Note that the factor $\frac{1}{l(l+1)} \to 0$ as $l\to \infty$, so the estimate degenerates in the high angular frequency limit at $x=0$. If our integration domain excludes a region around the photon sphere $x=0$, we can in fact drop the $\frac{1}{l(l+1)}$ factor.

Since we want to bound the right-hand side by suitable boundary currents, we would like to extend the integration domain on the right-hand side to the entire region $\mathcal{R}(\tau',\tau)$. As a result of the estimates in the region $\alpha\leq |x|\leq \frac{e}{2}\alpha$ in Lemma \ref{prop:mainiledn} and Lemma \ref{prop:auxiledn}, we have that
\begin{equation}
\label{est:preiledn1}
\begin{split}
 \alpha^3\int_{\mathcal{R}(\tau',\tau)\cap\{|x_*|\leq \alpha\} }K^{V,1} &\leq \alpha^3 \int_{\mathcal{R}(\tau',\tau)}K^{V,1}\\
 &+C\int_{\tau'}^{\tau} \int_{\Sigma_{\bar{\tau}}\cap\{-\frac{e}{2}\alpha<x_*\leq \alpha\}} J^N\cdot n_{\tau}\,d\bar{\tau}\\
 &+C\int_{\tau'}^{\tau} \int_{\Sigma_{\bar{\tau}}\cap\{\alpha<x_*\leq \frac{e}{2}\alpha\}} J^{\overline{N}}\cdot n_{\tau}\,d\bar{\tau}.
 \end{split}
\end{equation}
By the Red-Shift Theorem, we can further estimate
\begin{equation*}
\begin{split}
\int_{\tau'}^{\tau} \int_{\Sigma_{\bar{\tau}}\cap\{-\frac{e}{2}\alpha<x_*<-\alpha\}} J^N\cdot n_{\tau}\,d\bar{\tau} &\leq \frac{1}{B_1}\int_{\mathcal{R}(\tau',\tau)} K^N\\
&+ \frac{B_2}{B_1}\int_{\tau'}^{\tau} \int_{\Sigma_{\bar{\tau}}\cap\{-x_0<x<-x_1\}} J^N\cdot n_{\tau}\,d\bar{\tau}.
\end{split}
\end{equation*}
We can absorb the $ J^N\cdot n_{\tau}$ integral into the second integral on the left-hand side of (\ref{est:crucialiledestimate}) by taking $\alpha>0$ suitably large. We can estimate the error terms in $[\alpha,\frac{e}{2}\alpha]$ analogously, employing $J^{\overline{N}}\cdot n_{\tau}$.

We can now conclude that
\begin{equation*}
\begin{split}
&\int_{\tau'}^{\tau} \int_{\Sigma_{\bar{\tau}}\cap\{-x_0\leq x \leq x_0\}}\left[xf(x)+\frac{1}{l(l+1)}\right]\left[(\partial_t \psi_l)^2+(\partial_{x*}\psi_l)^2+|\snabla\psi_l|^2+\mu^2\psi_l^2\right]\,d\bar{\tau}\\
&\leq C\int_{\mathcal{R}(\tau',\tau)}K^{V,1}+K^W+ C\int_{\mathcal{R}(\tau',\tau)} K^N+K^{\overline{N}}. \qedhere
\end{split}
\end{equation*}
\end{proof}

\begin{proposition}
\label{prop:iledn}
The following spacetime estimate holds,
\begin{equation*}
\int_{\tau'}^{\tau} \int_{\Sigma_{\bar{\tau}}\cap\{-x_0\leq x \leq x_0\}} J^N[\psi]+J^{\overline{N}}[\psi]\cdot n_{\tau}\,d\bar{\tau}\leq C\sum_{|k|\leq 1} \int_{\Sigma_{\tau'}} J^N[\Omega^k\psi]+J^{\overline{N}}[\Omega^k\psi]\cdot n_{\tau'},
\end{equation*}
where $C>0$ is a uniform constant.
\end{proposition}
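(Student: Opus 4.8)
The plan is to decompose $\psi=\sum_l\psi_l$ into spherical harmonic modes, apply the preceding proposition to each mode, convert the spacetime integrals of the compatible currents $K^{V,1}$, $K^W$, $K^N$, $K^{\overline N}$ on its right-hand side into energy fluxes through $\Sigma_{\tau'}$, and then re-sum over $l$, paying for the inverse angular-momentum weight $\bigl(xf(x)+\tfrac{1}{l(l+1)}\bigr)^{-1}$ near the photon sphere with the loss of exactly one angular derivative. Since $\Omega_i$ are Killing and commute with $\square_g$, no new analytic input is needed beyond Lemmas \ref{prop:mainiledn}, \ref{prop:auxiledn}, the Red-Shift Theorem \ref{thm:redshift}, and energy boundedness (Proposition \ref{prop:nondegeboundednessnariai}).

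\textbf{Converting the right-hand side to boundary fluxes.} Fix $l\geq 1$. Applying Stokes' Theorem in $\mathcal{R}(\tau',\tau)$, whose boundary consists of $\Sigma_{\tau'}$, $\Sigma_\tau$ and the portions of $\mathcal{C}^+$, $\overline{\mathcal{C}}^+$ between them, to the currents $J^N[\psi_l]$, $J^{\overline N}[\psi_l]$, $J^{V,1}[\psi_l]$, $J^W[\psi_l]$: the latter two have no horizon flux since $f$ is cut off at $|x_*|=\tfrac{e}{2}\alpha$, while for $X\in\{N,\overline N\}$ the fluxes through $\Sigma_\tau$ and through the horizon portions are non-negative by the dominant energy condition (Lemma \ref{dec}), as $N$, $\overline N$, $n_\tau$ and the null horizon normals are all future-directed and causal (recall $\overline N=T$, $N=T$ near the "far" horizon, where $T$ is null); hence $\int_{\mathcal{R}(\tau',\tau)}K^N[\psi_l]+K^{\overline N}[\psi_l]\leq\int_{\Sigma_{\tau'}}(J^N+J^{\overline N})[\psi_l]\cdot n_{\tau'}$. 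For $X=\{V,1\}$ and $X=W$, on $\supp f\subset\{|x_*|\leq\tfrac{e}{2}\alpha\}$ the vector fields $V$, $W$ and the coefficients $f'$, $\partial_{x_*}f'$ are bounded (for the fixed $\alpha$ of Lemma \ref{prop:mainiledn}), so pointwise $|J^{V,1}[\psi_l]\cdot n|+|J^W[\psi_l]\cdot n|\leq C(J^N+J^{\overline N})[\psi_l]\cdot n+C\,\mathbf{1}_{\{|x_*|\leq\frac{e}{2}\alpha\}}\psi_l^2$, the last term being controlled on any slice by the Poincar\'e inequality $\int_{\s^2}\psi_l^2\,d\mu_{\s^2}\leq\tfrac{r_0^2}{2}\int_{\s^2}|\snabla\psi_l|^2\,d\mu_{\s^2}\leq C\int_{\s^2}(J^N+J^{\overline N})[\psi_l]\cdot n\,d\mu_{\s^2}$ (valid for $l\geq 1$) followed by integration over the bounded range of $x_*$. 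Using energy boundedness (Proposition \ref{prop:nondegeboundednessnariai}, between $\Sigma_{\tau'}$ and $\Sigma_\tau$) to trade the $\Sigma_\tau$-fluxes back for $\Sigma_{\tau'}$-fluxes, the whole right-hand side of the preceding proposition is bounded by $C\int_{\Sigma_{\tau'}}(J^N+J^{\overline N})[\psi_l]\cdot n_{\tau'}$, with $C$ independent of $l$.

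\textbf{Trading the photon-sphere weight, the $l=0$ mode, and summation.} Multiplying the resulting mode estimate by $l(l+1)$ and using $x\tilde f\geq 0$ (so $xf\geq 0$, hence $l(l+1)\bigl(xf(x)+\tfrac{1}{l(l+1)}\bigr)\geq 1$) removes the photon-sphere weight from the left-hand side. Since each $\Omega_i$ is Killing, commutes with $\square_g$, $\partial_t$ and $\partial_{x_*}$, and preserves harmonic modes, $\Omega_i\psi_l$ is again an $l$-mode solution, and the eigenvalue identity $\sum_i\int_{\s^2}|\Omega_i w|^2\,d\mu_{\s^2}=l(l+1)\int_{\s^2}|w|^2\,d\mu_{\s^2}$ — applied to $w=\psi_l,\partial_t\psi_l,\partial_{x_*}\psi_l$, and to $\snabla\psi_l$ via $\sum_i\int_{\s^2}|\snabla\Omega_i\psi_l|^2=\tfrac{l(l+1)}{r_0^2}\sum_i\int_{\s^2}|\Omega_i\psi_l|^2=l(l+1)\int_{\s^2}|\snabla\psi_l|^2$ — converts $l(l+1)\int_{\Sigma_{\tau'}}(J^N+J^{\overline N})[\psi_l]\cdot n_{\tau'}$ into $\sum_i\int_{\Sigma_{\tau'}}(J^N+J^{\overline N})[\Omega_i\psi_l]\cdot n_{\tau'}$. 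For $l=0$ the photon-sphere weight is already absent; if $\mu\neq 0$ the lower-order contributions to $J^{V,1}[\psi_0]\cdot n$ are absorbed by $\mu^2\psi_0^2\leq C(J^N+J^{\overline N})[\psi_0]\cdot n$, and if $\mu=0$ the angular and mass contributions to the compatible current vanish for $\psi_0$, so one runs the same argument with the \emph{unmodified} current $J^V[\psi_0]$, for which $|J^V[\psi_0]\cdot n|\leq C(J^N+J^{\overline N})[\psi_0]\cdot n$ with no lower-order term. In both cases $\int(\partial_t\psi_0)^2+(\partial_{x_*}\psi_0)^2+\mu^2\psi_0^2\leq C\int_{\Sigma_{\tau'}}(J^N+J^{\overline N})[\psi_0]\cdot n_{\tau'}$. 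Finally, summing over $l$ and using orthonormality of spherical harmonics — so that the sum of the left-hand sides is $\int_{\Sigma_{\bar\tau}\cap\{|x|\leq x_0\}}$ of $(\partial_t\psi)^2+(\partial_{x_*}\psi)^2+|\snabla\psi|^2+\mu^2\psi^2$, which on $\{|x|\leq x_0\}$ is comparable to $(J^N+J^{\overline N})[\psi]\cdot n_{\bar\tau}$ by the Red-Shift Theorem \ref{thm:redshift}(ii)--(iii) since the horizons lie at $|x|=\tfrac{1}{\sqrt K}>x_0$, and similarly on $\Sigma_{\tau'}$ — together with $\sum_{l\geq 1}\sum_i(J^N+J^{\overline N})[\Omega_i\psi_l]=\sum_i(J^N+J^{\overline N})[\Omega_i\psi]$ and $\Omega_i\psi_0=0$, gives the claimed estimate with the sum $\sum_{|k|\leq 1}$.

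\textbf{Main obstacle.} The genuinely delicate parts of the argument are not the Morawetz or red-shift estimates — those are the content of the preceding lemmas — but the two bookkeeping points. First, one must ensure that absorbing the degeneracy $\bigl(xf(x)+\tfrac{1}{l(l+1)}\bigr)^{-1}$ at the photon sphere costs \emph{exactly} one angular derivative and not two; the eigenvalue identity above is precisely what guarantees that the $|\snabla\cdot|^2$ term in $J^N[\Omega_i\psi_l]$ does not introduce a second power of $l(l+1)$. Second, the zeroth-order ($\psi^2$) error terms produced by the modified current $J^{V,1}$ must be disposed of without ever referring to pointwise or $L^1$ norms of the data: for $l\geq 1$ this is the Poincar\'e inequality on $\s^2$, but for the spherically symmetric massless mode $\psi_0$ — which does not decay, and for which $\psi_0^2$ is \emph{not} bounded by any energy flux (a non-zero constant is a solution) — one must avoid creating such terms at all. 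This is legitimate because the modification of $J^V$ was only needed to treat trapped high angular frequencies, which are absent for $\psi_0$.
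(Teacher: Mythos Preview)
Your argument is essentially the same as the paper's and is correct in structure, but there is one slip: you claim that $J^W$ has no horizon flux because ``$f$ is cut off''. The vector field $W=g(x_*)\partial_{x_*}$ uses $g(x_*)=\alpha^{-1}\arctan(\alpha^{-1}x_*)$, which is \emph{bounded} but not compactly supported, so $J^W$ does contribute a horizon flux. The fix is immediate and is what the paper does: since $g$ is uniformly bounded, $|J^W[\psi_l]\cdot n|\leq C\bigl(J^N+J^{\overline N}\bigr)[\psi_l]\cdot n$ for any causal $n$, in particular for the null horizon normals; the horizon fluxes of $J^N$, $J^{\overline N}$ are non-negative and bounded by $\int_{\Sigma_{\tau'}}(J^N+J^{\overline N})[\psi_l]\cdot n_{\tau'}$ via Stokes and Proposition~\ref{prop:nondegeboundednessnariai}, so the $J^W$ horizon terms are absorbed. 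With this correction your proof goes through; your handling of the $l=0$, $\mu=0$ case via the unmodified $J^V$ (rather than the paper's choice of using only $J^W$, $K^W$) is a harmless variant.
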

\begin{proof}
By boundedness of the function $g$, we can estimate for $n$ a causal normal vector field,
\begin{equation*}
|J^W\cdot n|\leq C (J^N\cdot n+J^{\overline{N}}\cdot n).
\end{equation*}
Moreover, by construction $J^{V,1}$ vanishes for $x_*$ outside a bounded interval, so $J^{V,1}\cdot n=0$, for $n=n_{\mathcal{C}^+}$ and $n=n_{\overline{\mathcal{C}}^+}$.  Moreover, by (\ref{eq:poincares2fixedl}) we can also estimate
\begin{equation*}
|J^{V,1}\cdot n_{\tau}|\leq C J^T\cdot n_{\tau},
\end{equation*}
if $\mu\neq 0$. If $\mu=0$, this estimate holds only for $l\geq 1$. In the $l=0$ case there is no need for the current $J^{V,1}$ and we can control all derivatives by solely employing $J^W$ and $K^W$.

Together with the non-degenerate energy boundedness statement we conclude by Stokes' theorem that
\begin{equation*}
\int_{\mathcal{R}(\tau',\tau)}K^{V,1}+K^W+ C\int_{\mathcal{R}(\tau',\tau)} K^N+K^{\overline{N}} \leq  C\int_{\Sigma_{\tau'}} J^N[\psi_l]\cdot n_{\tau'}+J^{\overline{N}}[\psi_l]\cdot n_{\tau'}.
\end{equation*}
In order to obtain an $l$ independent estimate, we apply (\ref{eq:poincares2fixedl}) once more. Moreover, Lemma \ref{prop:auxiledn} gives the required estimate for $l=0$. This implies that for all $l$
\begin{equation*}
\int_{\tau'}^{\tau} \int_{\Sigma_{\bar{\tau}}\cap\{-x_0\leq x\leq x_0\}} J^N[\psi_l]+J^{\overline{N}}[\psi_l]\cdot n_{\tau}\,d\bar{\tau}\leq C\sum_{|k|\leq 1}\int_{\Sigma_{\tau'}} J^N[\Omega^k\psi_l]+J^{\overline{N}}[\Omega^k\psi_l]\cdot n_{\tau'},
\end{equation*}
By the orthonormality property of $\psi_l$ and the independence of $l$ in the uniform constants in the above estimate, the estimate holds for general $\psi$.
\end{proof}

\begin{corollary}
\label{cor:iednfin}
The following spacetime estimate holds
\begin{equation}
\label{est:iednfin}
\int_{\tau'}^{\tau} \int_{\Sigma_{\bar{\tau}}} J^N[\psi]\cdot n_{\tau}+J^{\overline{N}}[\psi]\cdot n_{\tau}\,d\bar{\tau}\leq C\sum_{|k|\leq 1} \int_{\Sigma_{\tau'}} J^N[\Omega^k\psi]\cdot n_{\tau'}+J^{\overline{N}}[\Omega^k\psi]\cdot n_{\tau'},
\end{equation}
where $C>0$ is a uniform constant.
\end{corollary}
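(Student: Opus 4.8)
The plan is to upgrade the \emph{local} integrated estimate of Proposition \ref{prop:iledn} to the full slice $\Sigma_{\bar{\tau}}$ by exploiting the positivity of the red-shift compatible currents $K^N$ and $K^{\overline{N}}$ away from the transition regions. First I would split each $\Sigma_{\bar{\tau}}$ into the three pieces $\{x\leq -x_0\}$, $\{-x_0\leq x\leq x_0\}$, and $\{x\geq x_0\}$. On the middle piece Proposition \ref{prop:iledn} already \emph{is} the desired bound, so only the two outer pieces require work, and these are mirror images of one another under interchanging $N$ and $\overline{N}$.

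For the left piece $\{x\leq -x_0\}$ I would first control $\int_{\tau'}^{\tau}\int_{\Sigma_{\bar{\tau}}\cap\{x\leq -x_0\}}J^N[\psi]\cdot n_{\tau}\,d\bar{\tau}$. Applying Stokes' theorem to $J^N[\psi]$ over $\mathcal{R}(\tau',\tau)$ and discarding the fluxes through $\Sigma_{\tau}$ and through the portions of $\mathcal{C}^+$ and $\overline{\mathcal{C}}^+$ bounding $\mathcal{R}(\tau',\tau)$ — all non-negative by Lemma \ref{dec}, since $N=T$ is future causal near $\mathcal{C}^+$ while $N$ is future timelike near $\overline{\mathcal{C}}^+$ — gives $\int_{\mathcal{R}(\tau',\tau)}K^N[\psi]\leq\int_{\Sigma_{\tau'}}J^N[\psi]\cdot n_{\tau'}$. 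Because $N=T$ (hence $K^N[\psi]=K^T[\psi]=0$) for $x\geq -x_1$, the left-hand side is supported in $\{x\leq -x_1\}$; I would move the transition-region contribution over $\{-x_0\leq x\leq -x_1\}$ to the right-hand side and estimate it there by $|K^N[\psi]|\leq B_2\,J^T[\psi]\cdot n_{\tau}\leq C\,(J^N+J^{\overline{N}})[\psi]\cdot n_{\tau}$ using Theorem \ref{thm:redshift}(iii), absorbing it via Proposition \ref{prop:iledn} since $\{-x_0\leq x\leq -x_1\}\subset\{-x_0\leq x\leq x_0\}$. This bounds $\int_{\tau'}^{\tau}\int_{\Sigma_{\bar{\tau}}\cap\{x\leq -x_0\}}K^N[\psi]\,d\bar{\tau}$ by $C\sum_{|k|\leq1}\int_{\Sigma_{\tau'}}(J^N+J^{\overline{N}})[\Omega^k\psi]\cdot n_{\tau'}$, and Theorem \ref{thm:redshift}(i) converts the left-hand side back to $\int J^N[\psi]\cdot n_{\tau}$. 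To handle $J^{\overline{N}}$ on $\{x\leq -x_0\}$ I would simply use that $\overline{N}=T$ there, together with the pointwise comparison $J^T[\psi]\cdot n_{\tau}\leq C\,J^N[\psi]\cdot n_{\tau}$ of the (positive semidefinite) energy densities — non-degenerate for $N$ since $N$ is strictly future timelike on $\{x\leq -x_1\}$ — reducing to the bound just obtained. The right piece $\{x\geq x_0\}$ is treated identically with $N$ and $\overline{N}$ exchanged. Adding the three pieces, and noting that the $\sum_{|k|\leq1}$ on the right-hand side is inherited verbatim from Proposition \ref{prop:iledn}, yields (\ref{est:iednfin}).

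The one point that requires care — and the reason Proposition \ref{prop:iledn} must be used as an input here rather than re-derived — is that $K^N$ and $K^{\overline{N}}$ fail to be sign-definite on the transition regions $x_1\leq|x|\leq x_0$; the argument closes precisely because those regions sit inside $\{-x_0\leq x\leq x_0\}$, whose local integrated energy is already controlled. The remaining ingredients (Stokes' theorem, discarding good-sign boundary fluxes, and the elementary comparison of energy densities) are routine.
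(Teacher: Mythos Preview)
Your proposal is correct and is exactly the argument the paper has in mind; the paper's own proof is the one-line statement ``This follows directly from Proposition \ref{prop:iledn} combined with the Red-Shift Theorem in the regions $\{x\geq x_0\}$ and $\{x<-x_0\}$,'' and you have accurately unpacked what that sentence means. In particular your handling of the transition regions $\{x_1\leq |x|\leq x_0\}$ --- moving their (possibly wrong-signed) $K^N$ contribution to the right and absorbing it via Proposition \ref{prop:iledn} --- is precisely the point.
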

\begin{proof}
This follows directly from Proposition \ref{prop:iledn} combined with the Red-Shift Theorem in the regions $\{x\geq x_0\}$ and $\{x<-x_0\}$.
\end{proof}

\subsection{Energy decay}
The integrated energy decay statement (\ref{est:iednfin}) implies energy decay by an application of the pigeonhole principle.
\begin{proposition}
\label{prop:energydecaynariai}
\begin{equation*}
\int_{\Sigma_{\tau}} J^N[\psi]\cdot n_{\tau}+ J^{\overline{N}}[\psi]\cdot n_{\tau} \leq C_k \tau^{-k} \sum_{|m|\leq k}E_{N}[\Omega^m \psi],
\end{equation*}
where $C_k>0$ is a uniform constant that depends on $k\in \N$.
\end{proposition}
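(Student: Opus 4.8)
The plan is to run a dyadic pigeonhole argument, feeding the integrated local energy decay estimate of Corollary~\ref{cor:iednfin} into the energy boundedness of Proposition~\ref{prop:nondegeboundednessnariai}, and then to induct on $k$. Throughout write
\[
\mathcal{E}[\psi](\tau):=\int_{\Sigma_\tau}J^N[\psi]\cdot n_\tau+J^{\overline{N}}[\psi]\cdot n_\tau ,
\]
so that $E_N[\psi]=\mathcal{E}[\psi](0)$, and note that each generator $\Omega_i$ of the $SO(3)$ symmetry is Killing, so $[\square_g,\Omega_i]=0$ and $\Omega^m\psi$ again solves (\ref{eq:kgequation}); consequently every estimate below applies to $\Omega^m\psi$ in place of $\psi$. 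From Corollary~\ref{cor:iednfin}, applied with $\tau'$ arbitrary, we have
\[
\int_{\tau'}^{\tau}\mathcal{E}[\psi](\bar\tau)\,d\bar\tau\le C\sum_{|m|\le 1}\mathcal{E}[\Omega^m\psi](\tau'),
\]
while the differential-inequality argument in the proof of Proposition~\ref{prop:nondegeboundednessnariai}, run on $[\tau',\tau]$ instead of $[0,\tau]$ (equivalently, using that the isometric flow $\phi_{\tau'}$ along $T$ maps $\mathcal{R}(0,\tau-\tau')$ isometrically onto $\mathcal{R}(\tau',\tau)$), yields the time-translated boundedness
\[
\mathcal{E}[\psi](\tau)\le C\,\mathcal{E}[\psi](\tau')\qquad\text{for all }0\le\tau'\le\tau ,
\]
with $C>0$ a uniform constant.

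For the base case $k=1$, fix $\tau\ge 2$ and apply the integrated estimate on $[\tau/2,\tau]$: by the pigeonhole principle there exists $\tau_\ast\in[\tau/2,\tau]$ with
\[
\frac{\tau}{2}\,\mathcal{E}[\psi](\tau_\ast)\le\int_{\tau/2}^{\tau}\mathcal{E}[\psi](\bar\tau)\,d\bar\tau\le C\sum_{|m|\le 1}\mathcal{E}[\Omega^m\psi](\tau/2)\le C\sum_{|m|\le 1}E_N[\Omega^m\psi],
\]
the last inequality being the time-translated boundedness applied to $\Omega^m\psi$. Hence $\mathcal{E}[\psi](\tau)\le C\,\mathcal{E}[\psi](\tau_\ast)\le C\tau^{-1}\sum_{|m|\le 1}E_N[\Omega^m\psi]$, and for $\tau<2$ the claim is immediate from Proposition~\ref{prop:nondegeboundednessnariai}. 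For the inductive step, assume the estimate with exponent $k-1$ holds for every solution of (\ref{eq:kgequation}). Repeating the pigeonhole on $[\tau/2,\tau]$, but now bounding $\sum_{|m|\le 1}\mathcal{E}[\Omega^m\psi](\tau/2)$ by the $(k-1)$-estimate applied to the solutions $\Omega^m\psi$ (and relabelling composed angular operators), one gets
\[
\frac{\tau}{2}\,\mathcal{E}[\psi](\tau_\ast)\le C_k (\tau/2)^{-(k-1)}\sum_{|m|\le k}E_N[\Omega^m\psi],
\]
whence $\mathcal{E}[\psi](\tau)\le C\,\mathcal{E}[\psi](\tau_\ast)\le C_k\tau^{-k}\sum_{|m|\le k}E_N[\Omega^m\psi]$, which closes the induction.

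This argument is essentially routine; the only point that needs a word of care is the time-translated boundedness estimate, for which I would either re-run the differential inequality of Proposition~\ref{prop:nondegeboundednessnariai} on the shifted interval $[\tau',\tau]$ or invoke the homogeneity of Nariai, noting that $\phi_{\tau'}$ carries the hypersurfaces $\Sigma_{\bar\tau}$, the normals $n_{\bar\tau}$, and the vector fields $N$ and $\overline{N}$ to their counterparts based at $\Sigma_{\tau'}$, so that no new estimates are required.
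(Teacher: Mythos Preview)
Your proof is correct and follows essentially the same strategy as the paper: both feed the integrated energy decay of Corollary~\ref{cor:iednfin} and the non-degenerate energy boundedness of Proposition~\ref{prop:nondegeboundednessnariai} into a pigeonhole/induction scheme on $k$. The only cosmetic difference is that the paper first constructs a dyadic sequence $\{\tau_j\}$ along which the estimate holds and then interpolates, whereas you work directly on the interval $[\tau/2,\tau]$ for each fixed $\tau$; your packaging is slightly more streamlined but mathematically equivalent.
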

\begin{proof}
We will first prove the following statement by induction: there exists a dyadic sequence $\{\tau_j\}$ such that
\begin{equation*}
\int_{\Sigma_{\tau_j}} J^N[\psi]\cdot n_{\tau_j}+J^{\overline{N}}[\psi]\cdot n_{\tau_j} \leq C \tau_j^{-k} \sum_{|m|\leq k}\int_{\Sigma_0}J^N[\Omega^m\psi]\cdot n_{0}+J^{\overline{N}}[\Omega^m\psi]\cdot n_{0},
\end{equation*}
where $C=C(k)$.

The $k=0$ case trivially holds by the non-degenerate energy bound. Now suppose there exists a dyadic sequence $\{\tau_j\}$, such that the above statement holds for all $m\leq k$. By (\ref{est:iednfin}) we can estimate
\begin{equation*}
\begin{split}
\int_{\tau_{j}}^{\tau_{j+1}} \int_{\Sigma_{\bar{\tau}}} J^N[\psi]\cdot n_{\tau}+J^{\overline{N}}[\psi]\cdot n_{\tau}\,d\bar{\tau}&\leq C_0\sum_{|m|\leq 1} \int_{\Sigma_{\tau_j}} J^N[\Omega^m\psi]\cdot n_{\tau_j}+J^{\overline{N}}[\Omega^m\psi]\cdot n_{\tau_j}\\
&\leq C_0 C_k\tau_j^{-k}\sum_{|m|\leq k+1}\int_{\Sigma_0}J^N[\Omega^m\psi]\cdot n_{0}\\
&+J^{\overline{N}}[\Omega^m\psi]\cdot n_{0}.
\end{split}
\end{equation*}
By the pigeonhole principle, there must therefore exist a $\tau_j\leq \tau_j'\leq \tau_{j+1}$, such that
\begin{equation*}
\int_{\Sigma_{\tau_j'}} J^N[\psi]\cdot n_{\tau_j'}+J^{\overline{N}}[\psi]\cdot n_{\tau_j'} \leq C C_k\tau_j^{-k}\tau_j'^{-1} \sum_{|m|\leq k+1}\int_{\Sigma_0}J^N[\Omega^m\psi]\cdot n_{0}+J^{\overline{N}}[\Omega^m\psi]\cdot n_{0}.
\end{equation*}
In particular, the sequence $\{{\tau'}_{2j+1}\}$ is also dyadic and we obtain
\begin{equation*}
\begin{split}
\int_{\Sigma_{{\tau'}_{2j+1}}} J^N[\psi]\cdot n_{\tau'_{2j+1}}+J^{\overline{N}}[\psi]\cdot n_{{\tau'}_{2j+1}}& \leq CC_k {\tau'}_{2j+1}^{-k-1} \sum_{|m|\leq k+1}\int_{\Sigma_0}J^N[\Omega^m\psi]\cdot n_{0}\\
&+J^{\overline{N}}[\Omega^m\psi]\cdot n_{0}.
\end{split}
\end{equation*}

Let $\tau\geq 0$ be arbitrary. Then there exists a $j\in \N$ such that $\tau_j\leq \tau \leq \tau_{j+1}$. By non-degenerate energy boundedness and $\tau\sim \tau_j$, we conclude
\begin{equation*}
\begin{split}
\int_{\Sigma_{\tau}} J^N[\psi]\cdot n_{\tau}+J^{\overline{N}}[\psi]\cdot n_{\tau} &\leq C\int_{\Sigma_{\tau_j}} J^N[\psi]\cdot n_{\tau_j}+J^{\overline{N}}[\psi]\cdot n_{\tau_j}\\
&\leq C\tau^{-k} \sum_{|m|\leq k+1}\int_{\Sigma_0}J^N[\Omega^m\psi]\cdot n_{0}+J^{\overline{N}}[\Omega^m\psi]\cdot n_{0}. \qedhere
\end{split}
\end{equation*}
\end{proof}
If we restrict to single modes $\psi=\psi_l$ (or to low angular frequencies), $l\geq 1$, we can obtain a stronger energy decay statement, with a uniform constant that depends on $l$. We will need a Gr\"onwall-type lemma.
\pagebreak
\begin{lemma}
\label{lm:gronwall}
Let $f:\R \to [0,\infty)$ be a function that satisfies the inequalities
\begin{itemize}
\item[(i)]
\begin{equation*}
\int_{t'}^{t} f(\bar{t}) \,d\bar{t}\leq  C_1 f(t'),
\end{equation*}
\item[(ii)]
\begin{equation*}
f(t)\leq C_2 f(t'),
\end{equation*}
\end{itemize}
for all $0\leq t'<t$ and $C_{1,2}>0$ constants. Then there exists a numerical constant $c>0$ such that
\begin{equation*}
f(t)\leq C_2 f(0)e^{-\frac{c}{C_1C_2}t}.
\end{equation*}
\end{lemma}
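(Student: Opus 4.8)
The plan is to fold the two hypotheses into a single pointwise \emph{contraction} estimate over a time interval of fixed length, and then iterate that contraction dyadically. The only real input is the observation that hypothesis (ii) is exactly what is needed to reverse the direction of the integral inequality (i).

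Concretely, I would first fix $0\le t'<t$ and observe that for every intermediate time $\bar t\in[t',t]$ the hypothesis (ii), applied to the pair $(\bar t,t)$, gives $f(t)\le C_2 f(\bar t)$, i.e.\ $f(\bar t)\ge C_2^{-1}f(t)$. Integrating this over $[t',t]$ and then using (i) yields
\[
(t-t')\,C_2^{-1}f(t)\le \int_{t'}^{t} f(\bar t)\,d\bar t\le C_1 f(t'),
\]
so that $f(t)\le \dfrac{C_1C_2}{t-t'}\,f(t')$ for all $0\le t'<t$. Next I would choose the step length $T:=e\,C_1C_2$ (any fixed multiple of $C_1C_2$ works), which turns the previous line into $f(t'+T)\le e^{-1}f(t')$ for every $t'\ge 0$; iterating from $t'=0$ gives $f(nT)\le e^{-n}f(0)$ for all $n\in\N$. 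Finally, for arbitrary $t\ge 0$ I would write $t=nT+s$ with $n=\lfloor t/T\rfloor$ and $0\le s<T$, and interpolate on the leftover subinterval by (ii) (after harmlessly enlarging $C_2$ so that $C_2\ge1$, or trivially when $n=0$): $f(t)\le C_2 f(nT)\le C_2 e^{-n}f(0)$, and since $n>t/T-1$ this gives
\[
f(t)\le e\,C_2 f(0)\,e^{-t/T}=e\,C_2 f(0)\,e^{-\frac{1}{e\,C_1C_2}t},
\]
which is the asserted bound $f(t)\le C_2 f(0)e^{-\frac{c}{C_1C_2}t}$ up to the harmless numerical prefactor (one may instead choose $T$ and pass to times $t\ge 2T$, where $\lfloor t/T\rfloor\ge t/(2T)$, to remove the prefactor on that range, the range $t\le 2T$ being controlled directly by (ii)).

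I do not expect any serious obstacle here: this is a soft ``Gr\"onwall-type'' statement and, once the contraction $f(t)\le\frac{C_1C_2}{t-t'}f(t')$ is in hand, the remainder is a routine dyadic iteration. If anything, the step that deserves a moment's attention is the very first one — recognising that (ii) converts the \emph{upper} bound on $\int_{t'}^t f$ from (i) into a genuine pointwise decay factor by providing a matching \emph{lower} bound on the integrand; the bookkeeping of numerical constants $c$ (and the absorption of the prefactor $e$) is then purely cosmetic.
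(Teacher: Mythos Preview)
Your argument is correct and, if anything, slightly slicker than the paper's. The paper proceeds by a pigeonhole argument: on each interval $[t_i+2C_1C_2,\,t_i+4C_1C_2]$ it picks a point $t_{i+1}$ where $f$ is below its average, bounds the average using (i), and then uses (ii) in the forward direction to relate $f(t_i+2C_1C_2)$ back to $f(t_i)$, obtaining $f(t_{i+1})\le\tfrac{1}{2}f(t_i)$ along a sequence with $t_{i+1}-t_i\in[2C_1C_2,4C_1C_2]$. You instead use (ii) in the \emph{reverse} direction to lower-bound the integrand uniformly by $C_2^{-1}f(t)$, which turns (i) directly into the pointwise contraction $f(t)\le \frac{C_1C_2}{t-t'}f(t')$ and removes the need for pigeonhole altogether; the iteration then proceeds with a \emph{fixed} step length rather than a variable one. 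Both routes yield the same exponential rate $\sim (C_1C_2)^{-1}$; your version is marginally cleaner, while the paper's pigeonhole step is perhaps more familiar from energy-decay arguments. Your remark about the leftover numerical prefactor $e$ is well taken --- the paper's own bookkeeping has the same slack (its estimate $j\ge t/(4C_1C_2)$ is off by $1$ for the same reason), so the stated form with prefactor exactly $C_2$ is to be read up to a universal constant in either proof.
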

\begin{proof}
Let and consider a sequence $\{t_i\}$ such that $t_0=0$ and $2C_1C_2\leq t_i-t_{i-1}\leq  4C_1C_2$. By the pigeonhole principle, we can take $t_{i+1}$ to satisfy
\begin{equation*}
f(t_{i+1})\leq \frac{1}{2C_1C_2}\int_{t_i+2C_1C_2}^{t_i+4C_1C_2}f(\bar{t})\,d\bar{t}.
\end{equation*}
By the assumptions (i) and (ii) of the lemma we can therefore estimate
\begin{equation*}
f(t_{i+1})\leq \frac{1}{2C_2C_1}C_2f(t_i+2C_1C_2)\leq \frac{1}{2}f(t_i).
\end{equation*}
Consequently, we have for all $i\in \N$ that
\begin{equation*}
f(t_i)\leq f(0)e^{-i\log 2}.
\end{equation*}

Now let $t\geq 0$. There exists a $j\in \N$ such that $t\in [t_j,t_{j+1})$. Moreover, we can estimate
\begin{equation*}
j\geq \frac{t}{4 C_1C_2}.
\end{equation*}
Invoking assumption (ii), we therefore arrive at the estimate
\begin{equation*}
f(t)\leq C_2f(t_j)\leq C_2 f(0)e^{-\frac{\log 2}{4C_1C_2}t}. \qedhere
\end{equation*}
\end{proof}

\begin{proposition}
\label{prop:singlemodedec}
Let $\psi=\psi_l$, with $l\geq 0$, then
\begin{equation*}
\int_{\Sigma_{\tau}} J^N[\psi_l]\cdot n_{\tau} \leq Ce^{-c(l)\tau} \int_{\Sigma_0}J^N[\psi_l]\cdot n_{\tau}+J^{\overline{N}}[\psi_l]\cdot n_{\tau},
\end{equation*}
where $C>0$ is a uniform constant, independent of $l$ and $c(l)= \frac{\tilde{C}}{l(l+1)}$ for $l\geq 1$ and $c(0)=\tilde{C}$, where $\tilde{C}>0$ is independent of $l$.
\end{proposition}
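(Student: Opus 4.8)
The plan is to run a Gr\"onwall argument on the (summed) energy flux $f(\tau):=\int_{\Sigma_\tau}J^N[\psi_l]\cdot n_\tau+J^{\overline{N}}[\psi_l]\cdot n_\tau$, which dominates the quantity on the left of the asserted inequality, by combining the integrated energy decay estimate of Corollary \ref{cor:iednfin}, the non-degenerate energy boundedness of Proposition \ref{prop:nondegeboundednessnariai}, and Lemma \ref{lm:gronwall}, all the while tracking how the constants depend on the angular frequency $l$. Concretely, I would verify that $f\geq 0$ satisfies hypothesis (ii) of Lemma \ref{lm:gronwall}, $f(\tau)\leq C_2 f(\tau')$ for all $0\leq\tau'<\tau$, with an $l$-independent constant $C_2$ (immediate from Proposition \ref{prop:nondegeboundednessnariai} applied to the single mode $\psi_l$ and the isometric invariance of the foliation), and hypothesis (i), $\int_{\tau'}^{\tau}f(\bar\tau)\,d\bar\tau\leq C_1 f(\tau')$, with $C_1\leq C\,l(l+1)$ when $l\geq 1$ and $C_1$ uniform when $l=0$. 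Lemma \ref{lm:gronwall} then yields $f(\tau)\leq C_2 f(0)\,e^{-c\tau/(C_1C_2)}$, which is exactly the claimed estimate with uniform prefactor $C=C_2$ and rate $c(l)=\tilde C/(l(l+1))$ for $l\geq 1$, $c(0)=\tilde C$, for a suitable $l$-independent $\tilde C>0$.

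The substance of the argument is extracting the $l(l+1)$ scaling of $C_1$ from Corollary \ref{cor:iednfin}, whose right-hand side is $C\sum_{|k|\leq 1}\int_{\Sigma_{\tau'}}J^N[\Omega^k\psi]\cdot n_{\tau'}+J^{\overline{N}}[\Omega^k\psi]\cdot n_{\tau'}$. Applying it to $\psi=\psi_l$, I would use that $\psi_l$ and each of its angular derivatives are supported on the single harmonic frequency $l$, so that $\sum_i\int_{\s^2}(\Omega_i u)(\Omega_i w)=l(l+1)\int_{\s^2}uw$ and $\int_{\s^2}|\snabla(\Omega_i u)|^2=\tfrac{l(l+1)}{r_0^2}\int_{\s^2}(\Omega_i u)^2$ for such $u,w$. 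Since $N$, $\overline{N}$ and the unit normals $n_\tau$ are $SO(3)$-invariant with no angular component and hence commute with each $\Omega_i$, applying these identities termwise to $N\psi_l$, $n_\tau\psi_l$, the $2$-dimensional part of $\partial\psi_l\cdot\partial\psi_l$, $|\snabla\psi_l|^2$ and $\mu^2\psi_l^2$ in the integrand $J^N[\,\cdot\,]\cdot n_\tau$ gives $\sum_i\int_{\Sigma_{\tau'}}J^N[\Omega_i\psi_l]\cdot n_{\tau'}=l(l+1)\int_{\Sigma_{\tau'}}J^N[\psi_l]\cdot n_{\tau'}$, and likewise for $\overline{N}$. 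Hence the right-hand side of Corollary \ref{cor:iednfin} is $\leq C\,(1+l(l+1))\,f(\tau')\leq 2C\,l(l+1)\,f(\tau')$ for $l\geq 1$. For $l=0$ one has $\Omega_i\psi_0\equiv 0$, and — as recorded in the proof of Proposition \ref{prop:iledn} and Lemma \ref{prop:auxiledn} — the modified current $J^{V,1}$, which is the source of the trapping-induced angular-derivative loss, is not needed, so the integrated decay estimate holds for $\psi_0$ (even when $\mu=0$) using only $J^W$ and $K^W$, with the sum over $|k|\leq 1$ collapsing to $|k|=0$; this gives a uniform $C_1$.

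The step I expect to be the main obstacle is precisely this sharp bookkeeping: one must be certain that passing to a fixed mode costs \emph{exactly one} lost angular derivative, i.e.\ that $C_1$ grows no faster than $l(l+1)$, rather than, say, $(l(l+1))^2$. This relies on (a) the near-photon-sphere weight of the modified-current estimate of Lemma \ref{prop:mainiledn} being of size $\sim x^2+\tfrac{1}{l(l+1)}$ exactly, so that multiplying through by $l(l+1)$ restores a non-degenerate integrand at the cost of a single factor $l(l+1)$; (b) the uniformity in $l$ of every constant entering Proposition \ref{prop:iledn}, Corollary \ref{cor:iednfin} and Proposition \ref{prop:nondegeboundednessnariai}, which in turn stems from the $l$-independence of the red-shift constants $B_1,B_2$ and of $K^W$; and (c) the trapping-free integrated decay estimate for $l=0$, needed so that the argument also covers $\psi_0$ in the massless case. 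Once these are in place, Lemma \ref{lm:gronwall} produces the stated exponential rate, with both the prefactor $C$ and the constant $\tilde C$ independent of $l$.
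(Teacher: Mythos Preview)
Your proposal is correct and follows essentially the same route as the paper: the paper's proof simply says the proposition ``follows directly from the integrated energy decay statement for $\psi_l$ with a factor $l(l+1)$ on the right-hand side for $l\geq 1$, and Lemma \ref{lm:gronwall},'' and you have spelled out precisely how that factor arises from $\sum_i\int J^N[\Omega_i\psi_l]\cdot n_\tau = l(l+1)\int J^N[\psi_l]\cdot n_\tau$ and why the $l=0$ case needs no $J^{V,1}$, which is exactly the content the paper leaves implicit.
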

\begin{proof}
The proposition follows directly from the integrated energy decay statement for $\psi_l$ with a factor $l(l+1)$ on the right-hand side for $l\geq 1$, and Lemma \ref{lm:gronwall}.
\end{proof}

\subsection{Pointwise decay}
We can straightforwardly obtain pointwise decay from the energy decay statements in Propositions \ref{prop:energydecaynariai} and \ref{prop:singlemodedec}.
\begin{proposition}
\label{prop:pointwisedecnariai}
If we restrict to $l\geq 1$ in the $\mu=0$ case, we obtain the following decay of solutions to (\ref{eq:kgequation}), with suitably regular initial data, in regions $\mathcal{R}$:
\begin{equation*}
||\psi||^2_{L^{\infty}(\Sigma_{\tau})}\leq C\tau^{-k}\sum_{|m|\leq k}E_N[\Omega^m \psi]+E_N[\Omega^m T \psi]+E_N[\Omega^m R \psi]+E_N[\Omega^m \overline{R} \psi].
\end{equation*}
where $R=N-T$ and $\overline{R}=\overline{N}-T$, and $C>0$ is a uniform constant.

Moreover, if we restrict to a single mode $\psi=\psi_l$, there exist uniform constants $C>0$ and $\tilde{C}>0$, such that for $c(l)= \frac{\tilde{C}}{l(l+1)}$ when $l\geq 1$ and $c(0)=\tilde{C}$, we can estimate
\begin{align*}
||\psi_l||^2_{L^{\infty}(\Sigma_{\tau})}&\leq Ce^{-{c(l)}\tau}\left(E_N[\psi_l]+E_N[ T \psi_l]+E_{N}[ Y\psi_l]+E_{N}[ \overline{Y}\psi_l]\right),
\end{align*}
where in the second estimate we need to take $l\geq 1$ if $\mu=0$.
\end{proposition}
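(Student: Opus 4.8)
The plan is to feed the energy decay of Propositions~\ref{prop:energydecaynariai} and~\ref{prop:singlemodedec} into the pointwise estimate already extracted in the proof of Proposition~\ref{prop:pointboundn}. Recall that that proof produces, for $\psi=\psi_{l\geq 1}$ when $\mu=0$ and for all $\psi$ when $\mu\neq 0$ (the mass term then supplying the missing Poincar\'e inequality on $\s^2$), the bound
\[
||\psi||_{L^{\infty}(\Sigma_{\tau})}^2\leq C\int_{\Sigma_{\tau}}\bigl(J^N[\psi]\cdot n_{\tau}+J^{\overline N}[\psi]\cdot n_{\tau}+J^N[N\psi]\cdot n_{\tau}+J^{\overline N}[\overline N\psi]\cdot n_{\tau}\bigr),
\]
with $C$ independent of $\tau$ because each $\Sigma_{\tau}$ is isometric to $\Sigma_{0}$. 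Commuting the Klein--Gordon operator with the angular Killing fields $\Omega_i$, which also commute with $N$, $\overline N$ and $T$, and repeating the elliptic and Sobolev steps yields the same estimate with $\Omega^m$ inserted everywhere, for any multi-index $m$. It therefore suffices to establish, for each $|m|\leq k$, the decay $\lesssim\tau^{-k}\sum_{|m|\leq k}\bigl(E_N[\Omega^m\psi]+E_N[\Omega^m T\psi]+E_N[\Omega^m R\psi]+E_N[\Omega^m\overline R\psi]\bigr)$ of the fluxes $\int_{\Sigma_{\tau}}J^N[\Omega^m\psi]\cdot n_{\tau}$ and $\int_{\Sigma_{\tau}}J^N[\Omega^m N\psi]\cdot n_{\tau}$ (and their $\overline N$ counterparts).

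For $\int_{\Sigma_{\tau}}J^N[\Omega^m\psi]\cdot n_{\tau}$ this is immediate from Proposition~\ref{prop:energydecaynariai}, and, since $[\square_g,T]=0$, the same proposition applies to $T\psi$. For $\int_{\Sigma_{\tau}}J^N[\Omega^m N\psi]\cdot n_{\tau}$ one writes $N\psi=T\psi+R\psi$, so $J^N[N\psi]\lesssim J^N[T\psi]+J^N[R\psi]$; the first summand decays by the previous sentence. For $R\psi$ one re-runs the integrated local energy decay estimate of Corollary~\ref{cor:iednfin} with the currents $K^{V,1},K^W,K^N,K^{\overline N}$ commuted with $N$. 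The only new contributions are the error terms generated by $[\square_g,N]$, which are supported in the red-shift region $\{x_1\leq|x|\leq\sqrt K\}$ and, exactly as in the boundedness argument of Proposition~\ref{prop:pointboundn} and in \cite{dafrod5}, can be absorbed using positivity of the surface gravity, the uncontrolled $(R^2\psi)^2$ term carrying a favourable sign. This gives a commuted integrated estimate bounding $\int_{\tau'}^{\tau}\int_{\Sigma_{\bar\tau}}\bigl(J^N[N\psi]+J^{\overline N}[\overline N\psi]\bigr)\cdot n_{\bar\tau}\,d\bar\tau$ by $C\sum_{|k|\leq 1}\int_{\Sigma_{\tau'}}\bigl(J^N[\Omega^k\psi]+J^N[\Omega^k T\psi]+J^N[\Omega^k R\psi]+J^{\overline N}[\Omega^k\overline R\psi]\bigr)\cdot n_{\tau'}$, which, together with the uniform boundedness of these same commuted fluxes (again from the proof of Proposition~\ref{prop:pointboundn}), is upgraded to the $\tau^{-k}$ decay by the dyadic pigeonhole iteration of Proposition~\ref{prop:energydecaynariai}, run by induction in the number of angular commutations while the single $N$-commutation stays fixed; the iteration closes because $J^N[R\psi]\lesssim J^N[N\psi]+J^N[T\psi]$. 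Substituting into the pointwise estimate proves the first displayed inequality.

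The single-mode statement follows from the identical scheme with Proposition~\ref{prop:singlemodedec} and the Gr\"onwall-type Lemma~\ref{lm:gronwall} in place of Proposition~\ref{prop:energydecaynariai}: on a fixed mode no angular commutations are needed, since $|\snabla\psi_l|^2=\tfrac{l(l+1)}{r_0^2}\psi_l^2$ by (\ref{eq:poincares2fixedl}), and the same factor $l(l+1)$ turns the integrated estimate into exponential decay with rate $c(l)=\tilde C/(l(l+1))$ for $l\geq 1$ and $c(0)=\tilde C$; the commuted flux $J^N[N\psi_l]$ is treated as above, using the $N$-commuted integrated estimate together with Lemma~\ref{lm:gronwall} (note that $N,\overline N,T$ preserve the spherical mode, so $T\psi_l$ and $R\psi_l$ are again supported on the $l$-th mode). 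The step I expect to be the main obstacle is this treatment of the red-shift-commuted quantity $R\psi$ (equivalently $N\psi$): it does not satisfy the Klein--Gordon equation, so one cannot quote Proposition~\ref{prop:energydecaynariai} for it directly, and one must instead re-examine the integrated local energy decay estimate with the $[\square_g,N]$ error terms present and verify that they survive the dyadic iteration. This is, however, precisely the mechanism already used for boundedness in Proposition~\ref{prop:pointboundn}, composed with the iteration of Proposition~\ref{prop:energydecaynariai}, so it amounts to bookkeeping rather than a genuinely new difficulty.
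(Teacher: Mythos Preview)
Your proposal is correct and follows essentially the same approach as the paper: feed the energy decay of Propositions~\ref{prop:energydecaynariai} and~\ref{prop:singlemodedec} into the pointwise bound from Proposition~\ref{prop:pointboundn}, handling the non-Killing commutation with $R=N-T$ by absorbing the $[\square_g,N]$ error terms into the red-shift current exactly as in the boundedness argument. The paper's proof is in fact much terser than yours and simply refers back to that mechanism; your more explicit bookkeeping of the commuted integrated estimate and the dyadic iteration for $R\psi$ is a faithful expansion of what the paper sketches in one sentence.
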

\begin{proof}
The statements of the proposition follow from the energy estimates in Propositions \ref{prop:energydecaynariai} and \ref{prop:singlemodedec} by commuting $\square_g$ with $Y$, as in Proposition \ref{prop:pointboundn}. The additional error terms are spacetime integrals of $N(\psi)\square_g(Y\psi)$ and can be absorbed into $K^N[Y\psi]$ and $K^N[T\psi]$ for suitably large $x_1<x_0$. This can be repeated for $\overline{Y}$.

The restriction $l\geq 1$ arises in the $\mu=0$ case because $\psi_0$ is uniformly bounded but does not decay, see Section \ref{sec:nondec}.
\end{proof}

We have now proved Theorem \ref{th:decn}.
\section{Uniform boundedness in Nariai via a global method}
\label{sec:globalboundnariai}
In this section we work in the global coordinates $(\tilde{t},\tilde{x},\theta,\phi)$ on Nariai, introduced in Section \ref{sec:narstco}. Consider the timelike vector field
\begin{equation*}
\tilde{N}=f(\tilde{t})\partial_{\tilde{t}}.
\end{equation*}
The corresponding energy current through the foliation $\tilde{\Sigma}_{\tau}$ introduced in Section \ref{sec:foliationsnariai}, is given by
\begin{equation*}
J^{\tilde{N}}[\psi]\cdot n_{\tau}=\frac{\sqrt{K}f}{2}\left[(\partial_{\tilde{t}}\psi)^2+\cosh^{-2}\tilde{t}(\partial_{\tilde{x}}\psi)^2+K^{-1}(|\snabla\psi|^2+\mu^2\psi^2)\right].
\end{equation*}
\begin{proposition}
\label{prop:eboundglobalnariai}
Let $f(\tilde{t})=\cosh^{-1}\tilde{t}$. Then for any $\tau\geq 0$,
\begin{equation}
\label{eq:globalnariaieb1}
\int_{\tilde{\Sigma}_{\tau}}J^{\tilde{N}}[\psi]\cdot n_{\tau}\leq \int_{\tilde{\Sigma}}J^{\tilde{N}}[\psi]\cdot n_{\tilde{\Sigma}}=:\tilde{E}_N[\psi].
\end{equation}

In particular, there exists a constant $C=C(K)>0$ such that
\begin{equation}
\label{eq:globalnariaieb2}
\int_{\R}\int_{\s^2}(\partial_{\tilde{t}}\psi)^2+\cosh^{-2}\tilde{t}(\partial_{\tilde{x}}\psi)^2+K^{-1}(|\snabla\psi|^2+\mu^2\psi^2)\,d\mu_{\s^2}d\tilde{x}\Big|_{t=\tau}\leq Cr_0^{-2}\tilde{E}_N[\psi].
\end{equation}
\end{proposition}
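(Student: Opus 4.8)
The plan is to deduce \eqref{eq:globalnariaieb1} from the divergence identity
\[
\int_{\tilde{\Sigma}_{\tau}}J^{\tilde{N}}[\psi]\cdot n_{\tau}+\int_{\tilde{\mathcal{R}}(0,\tau)}K^{\tilde{N}}[\psi]=\int_{\tilde{\Sigma}_0}J^{\tilde{N}}[\psi]\cdot n_0,
\]
obtained by applying Stokes' theorem to $J^{\tilde{N}}$ in $\tilde{\mathcal{R}}(0,\tau):=\bigcup_{\bar{\tau}\in[0,\tau]}\tilde{\Sigma}_{\bar{\tau}}$; since the hypersurfaces $\tilde{\Sigma}_{\tau}$ are compact (Nariai has spatial topology $\s^1\times\s^2$, or one localizes to a compact subregion and invokes the domain of dependence property of Theorem \ref{waveeq}), there are no lateral boundary contributions. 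Because the flux $J^{\tilde{N}}[\psi]\cdot n_{\tau}$ is manifestly nonnegative by the expression displayed immediately above the proposition, it then suffices to prove $K^{\tilde{N}}[\psi]\geq 0$ throughout $\tilde{\mathcal{R}}(0,\tau)$, i.e. for $\tilde{t}\geq 0$.

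To compute $K^{\tilde{N}}[\psi]=\operatorname{div}J^{\tilde{N}}[\psi]$ I would use that $\nabla^{\alpha}\mathbf{T}_{\alpha\beta}[\psi]=(\square_g-\mu^2)\psi\,\partial_{\beta}\psi=0$ on solutions, so that $K^{\tilde{N}}[\psi]=\tfrac12\mathbf{T}^{\alpha\beta}[\psi](\mathcal{L}_{\tilde{N}}g)_{\alpha\beta}$ by symmetry of $\mathbf{T}$. For $\tilde{N}=f(\tilde{t})\partial_{\tilde{t}}$ the deformation tensor $\mathcal{L}_{\tilde{N}}g$ in the coordinates $(\tilde{t},\tilde{x},\theta,\phi)$ has only the two nonzero components $(\mathcal{L}_{\tilde{N}}g)_{\tilde{t}\tilde{t}}=-2K^{-1}f'$ and $(\mathcal{L}_{\tilde{N}}g)_{\tilde{x}\tilde{x}}=2K^{-1}f\cosh\tilde{t}\sinh\tilde{t}$ (the spherical components vanish because $r_0^2\slashed{g}_{\s^2}$ is $\tilde{t}$-independent). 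Substituting $f(\tilde{t})=\cosh^{-1}\tilde{t}$, so that $f'=-\cosh^{-2}\tilde{t}\,\sinh\tilde{t}$, and contracting against $\mathbf{T}^{\alpha\beta}[\psi]$, the crucial feature is that the $K^{-1}\big(|\snabla\psi|^2+\mu^2\psi^2\big)$ contributions coming from the two components cancel exactly, leaving
\[
K^{\tilde{N}}[\psi]=K\sinh\tilde{t}\left[\cosh^{-2}\tilde{t}\,(\partial_{\tilde{t}}\psi)^2+\cosh^{-4}\tilde{t}\,(\partial_{\tilde{x}}\psi)^2\right],
\]
which is $\geq 0$ for $\tilde{t}\geq 0$ since $K>0$. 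This establishes \eqref{eq:globalnariaieb1}.

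For \eqref{eq:globalnariaieb2} I would combine the explicit formula for the flux with the induced volume form $d\mu_{\tilde{\Sigma}_{\tau}}=K^{-1/2}r_0^2\cosh\tau\,d\tilde{x}\,d\mu_{\s^2}$ from Section \ref{sec:foliationsnariai}: the decaying weight $f(\tau)=\cosh^{-1}\tau$ in $J^{\tilde{N}}\cdot n_{\tau}$ cancels against the expanding factor $\cosh\tau$ in the volume form, giving
\[
\int_{\tilde{\Sigma}_{\tau}}J^{\tilde{N}}[\psi]\cdot n_{\tau}=\frac{r_0^2}{2}\int_{\s^2}\!\int_{\R}\left[(\partial_{\tilde{t}}\psi)^2+\cosh^{-2}\tilde{t}\,(\partial_{\tilde{x}}\psi)^2+K^{-1}\big(|\snabla\psi|^2+\mu^2\psi^2\big)\right]\Big|_{\tilde{t}=\tau}\,d\tilde{x}\,d\mu_{\s^2}.
\]
Dividing by $r_0^2/2$ and applying \eqref{eq:globalnariaieb1} then yields \eqref{eq:globalnariaieb2} (with $C=2$, say).

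There is no serious obstacle: the substance of the proposition lies in the choice $f=\cosh^{-1}\tilde{t}$, engineered so that (a) the $|\snabla\psi|^2$ and $\mu^2\psi^2$ terms in $K^{\tilde{N}}$ cancel and the surviving bulk term is favourably signed for $\tilde{t}\geq 0$, and (b) the exponential growth of the slice volume is exactly compensated. The only points requiring care are the sign bookkeeping in the contraction $\mathbf{T}^{\alpha\beta}[\psi](\mathcal{L}_{\tilde{N}}g)_{\alpha\beta}$ and, more pedantically, confirming that the lateral boundary genuinely contributes nothing.
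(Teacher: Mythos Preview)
Your argument is correct and follows essentially the same route as the paper: compute the deformation tensor of $\tilde{N}=f(\tilde{t})\partial_{\tilde{t}}$, observe that for $f=\cosh^{-1}\tilde{t}$ one has $f'=-f\tanh\tilde{t}$ so that the $|\snabla\psi|^2+\mu^2\psi^2$ terms in $K^{\tilde{N}}$ cancel and the remainder is nonnegative for $\tilde{t}\geq 0$, and then apply Stokes' theorem. Your expression for $K^{\tilde{N}}$ agrees with the paper's (the paper writes it as $Kf\tanh\tilde{t}\big[(\partial_{\tilde{t}}\psi)^2+\cosh^{-2}\tilde{t}\,(\partial_{\tilde{x}}\psi)^2\big]$, which is the same thing), and your derivation of \eqref{eq:globalnariaieb2} by cancelling the $\cosh^{-1}\tau$ weight against the $\cosh\tau$ in the volume form is exactly what the paper does.
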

\begin{proof}
Denote $f'(\tilde{t})=\frac{df}{dt}(\tilde{t})$. The compatible current $K^{\tilde{N}}$ corresponding to $J^{\tilde{N}}$ is given by
\begin{equation*}
K^{\tilde{N}}=\frac{K}{2}(-f'+f\tanh \tilde{t})[(\partial_{\tilde{t}}\psi)^2+\cosh^{-2}\tilde{t}(\partial_{\tilde{x}}\psi)^2]+\frac{1}{2}(-f'-f\tanh\tilde{t})[|\snabla \psi|^2+\mu^2\psi^2].
\end{equation*}
With the choice $f(\tilde{t})=\cosh^{-1}\tilde{t}$, $f'=-f\tanh \tilde{t}$. We can therefore estimate
\begin{equation*}
K^{\tilde{N}}=K f\tanh\tilde{t}[(\partial_{\tilde{t}}\psi)^2+\cosh^{-2}\tilde{t}(\partial_{\tilde{x}}\psi)^2]\geq 0.
\end{equation*}
By the non-negativity of $K^{\tilde{N}}$, we can apply Stokes' theorem in the spacetime region bounded by $\tilde{\Sigma}_{\tau}$ and $\tilde{\Sigma}$ to obtain (\ref{eq:globalnariaieb1}).

Using the exponential growth of the volume form on $\tilde{\Sigma}_{\tau}$, $d\mu_{\tilde{\Sigma}_{\tau}}=K^{-\frac{1}{2}}r_0^2\cosh \tilde{t}d\tilde{x}d\mu_{\s^2}$, the estimate (\ref{eq:globalnariaieb2}) follows. 
\end{proof}

By commuting with the spacelike Killing vector field $X$, we can obtain global uniform pointwise boundedness of $\psi$.
\begin{corollary}
\label{cor:pointboundglobalnariai}
There exists a constant $C=C(K,\mu)>0$ such that
\begin{equation}
\label{eq:globalnariaipointwisedecay}
||\psi||^2_{L^{\infty}(\tilde{\Sigma}_{\tau})}\leq C \sum_{|k|+|m|\leq 2, |m|\leq 1} \tilde{E}_N[\Omega^m X^k \psi]+C\left(||\Psi_0||_{L^{\infty}(\tilde{\Sigma})}^2+||\Psi'_0||_{L^1(\tilde{\Sigma})}^2+||\partial_{\tilde{x}}\Psi_0||^2_{L^1(\tilde{\Sigma})}\right),
\end{equation}
where in the $\mu\neq 0$ case $C=\tilde{C}(K)r_0^{-2}\mu^{-2}$, with $\tilde{C}(K)>0$, and we can remove the $L^1$ and $L^{\infty}$ norms of $\Psi_0$ and $\Psi_0'$ on the right-hand side of the above inequality.\end{corollary}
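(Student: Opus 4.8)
The plan is to promote the energy estimate of Proposition~\ref{prop:eboundglobalnariai} to a pointwise bound by commuting with the Killing fields $X$ and $\Omega$, applying a Sobolev embedding on the cylinder $\R\times\s^2$, and --- in the massless case --- treating the spherically symmetric mode $\psi_0$ separately via d'Alembert's formula. First I would record the precise content of (\ref{eq:globalnariaieb2}): it controls $\partial_{\tilde t}\psi$, $\snabla\psi$, $\cosh^{-1}\tilde t\,\partial_{\tilde x}\psi$, and, when $\mu\neq 0$, also $\mu\psi$, in $L^2(\tilde\Sigma_\tau,\,d\tilde x\,d\mu_{\s^2})$ with constant \emph{independent of} $\tau$. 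The mechanism I want to exploit is that the exponential growth of the induced volume form $d\mu_{\tilde\Sigma_\tau}=K^{-1/2}r_0^2\cosh\tilde t\,d\tilde x\,d\mu_{\s^2}$ exactly cancels the weight $\cosh^{-1}\tilde t$ built into $\tilde N$; hence the natural object to estimate is the \emph{coordinate} $L^2$ norm on $\R\times\s^2$, not the induced-metric one, which would grow like $\cosh\tau$.

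Next I would commute. Since $X=\sqrt K\,\partial_{\tilde x}$ and the rotation generators $\Omega_i$ are Killing, they commute with $\square_g-\mu^2$, so Proposition~\ref{prop:eboundglobalnariai} applies verbatim to $X^k\Omega^m\psi$; and because $X$ commutes with $\snabla$ and $\partial_{\tilde t}$, the collection $\{\tilde E_N[X^k\Omega^m\psi]:|k|+|m|\le 2,\ |m|\le 1\}$ controls, uniformly in $\tau$, the $L^2(d\tilde x\,d\mu_{\s^2})$ norms of $\partial_{\tilde x}^j\snabla^i\psi$ with $i+j\le 2$ and $i\le 1$, a standard elliptic estimate on $\s^2$ then buying the remaining second-order angular derivatives from the single commuted one. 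When $\mu\neq 0$ the mass terms $K^{-1}\mu^2(X^k\Omega^m\psi)^2$ additionally give $\psi\in H^2(\R\times\s^2)$ uniformly in $\tau$ with constant $\propto r_0^{-2}\mu^{-2}$, and since $H^2\hookrightarrow L^\infty$ on a $3$-manifold (with a translation-invariant Sobolev constant on the cylinder) this already yields (\ref{eq:globalnariaipointwisedecay}) without the data norms.

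For the massless case I would split $\psi=\psi_0+\psi_{\ge 1}$. On $\psi_{\ge 1}$ the Poincar\'e inequality on $\s^2$, $\int_{\s^2}(X^k\Omega^m\psi_{\ge 1})^2\lesssim\int_{\s^2}|\snabla X^k\Omega^m\psi_{\ge 1}|^2$, plays the role of the absent mass term and the previous step goes through, giving the $H^2(\R\times\s^2)$ bound in terms of $\sum\tilde E_N[X^k\Omega^m\psi]$. For $\psi_0$, by (\ref{spsymkg}) with $\mu=0$ the mode satisfies a $2$-dimensional wave equation, which in the conformal null coordinates $(\hat u,\hat v)$ of Section~\ref{sec:narstco} reads $\partial_{\hat u}\partial_{\hat v}\psi_0=0$, so $\psi_0=F(\hat u)+G(\hat v)$; reconstructing $F,G$ from $(\Psi_0,\Psi_0')$ by d'Alembert exactly as in Proposition~\ref{prop:pointboundn} gives $\|F\|_{L^\infty}+\|G\|_{L^\infty}\lesssim\|\Psi_0\|_{L^\infty(\tilde\Sigma)}+\|\Psi_0'\|_{L^1(\tilde\Sigma)}+\|\partial_{\tilde x}\Psi_0\|_{L^1(\tilde\Sigma)}$, the two $L^1$ norms coming from integrating $\Psi_0'$ and from separating the two travelling waves. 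Adding the contributions of $\psi_0$ and $\psi_{\ge 1}$ gives the corollary.

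The hardest part will be keeping every constant uniform in $\tau$ in the presence of exponentially expanding slices; this is what dictates working with the coordinate measure on $\R\times\s^2$, and --- relatedly --- the non-compactness of $\tilde\Sigma$ is precisely why $\psi_0$ in the massless case escapes the energy method and must be controlled by the explicit d'Alembert representation, whose $L^1$-type data norms are genuinely necessary.
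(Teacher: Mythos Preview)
Your proposal is correct and follows essentially the same approach as the paper: commute with the Killing fields $X$ and $\Omega_i$, apply the Sobolev estimate on the cylinder $\R\times\s^2$ (Proposition~\ref{sobs2}) together with the Poincar\'e inequality on $\s^2$ for $\psi_{\ge 1}$ (or the mass term when $\mu\neq 0$), and treat $\psi_0$ in the massless case via d'Alembert's formula exactly as in Proposition~\ref{prop:pointboundn}. Your write-up is in fact more detailed than the paper's own proof, which simply cites these ingredients; in particular your observation that the coordinate measure $d\tilde x\,d\mu_{\s^2}$ is the correct one (because the $\cosh\tilde t$ in the induced volume form cancels the $\cosh^{-1}\tilde t$ in $\tilde N$) is exactly the content of (\ref{eq:globalnariaieb2}).
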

\begin{proof}
Recall, $X=\sqrt{K}\partial_{\tilde{x}}$. We therefore commute with $\Omega_i$ and $X$ and apply the Sobolev estimate Proposition \ref{sobs2}, together with the Poincar\'e inequality (\ref{eq:poincares2fixedl}), to obtain (\ref{eq:globalnariaipointwisedecay}) in the case that $\psi=\psi_{l\geq 1}$ or $\mu\neq 0$. $\psi_0$ can be bounded as in Proposition \ref{prop:pointboundn}.
\end{proof}

We have now proved Theorem \ref{th:globboundn}.

\section{Uniform decay in Nariai via a global method}
\label{sec:globaldecaynariai}
In order to obtain energy decay, we modify the current $J^{\tilde{N}}$,
\begin{equation*}
J^{\tilde{N},1}_{\alpha}:=J^{\tilde{N}}_{\alpha}+\frac{h'}{2}\psi\partial_{\alpha}\psi-\frac{1}{4}\psi^2\partial_{\alpha}h',
\end{equation*}
where $h=h(\tilde{t})$ and $h'(\tilde{t})=\frac{dh}{d\tilde{t}}(\tilde{t})$. The corresponding modified compatible current is defined as follows:
\begin{equation}
\label{eq:globalnariaimodk}
K^{\tilde{N},1}:=\nabla^{\alpha}J^{\tilde{N},1}_{\alpha}=K^N+\frac{h'}{2}g^{\alpha \beta}\partial_{\alpha}\psi\partial_{\beta}\psi-\frac{1}{4}\psi^2\square_gh',
\end{equation}
where
\begin{equation*}
\square_gh'=-K(h'''+h''\tanh\tilde{t}).
\end{equation*}
We will show in the proof of the proposition below that the modified energy current $J^{\tilde{N},1}$ globally satisfies an estimate analogous to (i) of Theorem \ref{thm:redshift} for the local red-shift current $J^N\cdot n_{\tau}$,
\begin{equation*}
\int_{\s^2} K^{\tilde{N},1}[\psi_{l\geq 1}]d\mu_{\s^2}\geq C \int_{\s^2}J^{\tilde{N}}[\psi_{l\geq 1}]\cdot n_{\tau}d\mu_{\s^2}.
\end{equation*}

\begin{proposition}
\label{prop:globedecnariaipsigeq1}
Let $\psi=\psi_{l\geq 1}$. Moreover, take $\tau_0>\max\{\log (2\sqrt{K}r_0),1\}$. Then there exists a constant $c=c(K,\mu,r_0)>0$, such that
\begin{equation*}
\int_{\tilde{\Sigma}_{\tau}}J^{\tilde{N}}[\psi]\cdot n_{\tau}\leq e^{-c(\tau-\tau_0)}\int_{\tilde{\Sigma}_{\tau_0}}J^{\tilde{N}}[\psi]\cdot n_{\tau_0},
\end{equation*}
where in the case $\mu=0$, $c=c(K)(1+r_0^2)^{-1}>0$ and in the case $\mu\neq 0$, $c=\tilde{c}(K)(1+r_0^2+\mu^{-2}+\mu^2r_0^2)^{-1}>0$.
\end{proposition}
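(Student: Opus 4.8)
The strategy is to choose the auxiliary function $h$ in the modified current $J^{\tilde N,1}$ so that the modified compatible current $K^{\tilde N,1}$, after integration over $\s^2$ and use of the Poincar\'e inequality (\ref{eq:poincares2fixedl}), controls the full energy density $J^{\tilde N}[\psi]\cdot n_\tau$ up to a uniform multiplicative constant, for all times $\tau\geq \tau_0$. Combined with a suitable boundary estimate and Stokes' theorem on $\mathcal{R}(\tau_0,\tau)$, this yields a Gr\"onwall-type differential inequality for $F(\tau):=\int_{\tilde\Sigma_\tau}J^{\tilde N}[\psi]\cdot n_\tau$ of the form $F(\tau)+c\int_{\tau_0}^\tau F(\bar\tau)\,d\bar\tau\leq F(\tau_0)$, whence exponential decay. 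The reason $\psi_0$ must be excluded in the $\mu=0$ case is precisely that the Poincar\'e inequality (the gain of $\mu^2\psi^2+|\snabla\psi|^2\geq (\mu^2+r_0^{-2})\psi^2$ on modes $l\geq1$) is what converts the $\psi^2$ error terms into something absorbable; without it one only gets boundedness, as in Proposition \ref{prop:eboundglobalnariai}.

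\textbf{Key steps.} First I would recall from (\ref{eq:globalnariaimodk}) that
\begin{equation*}
K^{\tilde N,1}=K f\tanh\tilde t\left[(\partial_{\tilde t}\psi)^2+\cosh^{-2}\tilde t(\partial_{\tilde x}\psi)^2\right]+\frac{h'}{2}g^{\alpha\beta}\partial_\alpha\psi\partial_\beta\psi-\frac14\psi^2\square_g h',
\end{equation*}
with $f=\cosh^{-1}\tilde t$ and $\square_g h'=-K(h'''+h''\tanh\tilde t)$. Writing out the term $\tfrac{h'}{2}g^{\alpha\beta}\partial_\alpha\psi\partial_\beta\psi=\tfrac{h'}{2}\bigl(-K(\partial_{\tilde t}\psi)^2+K\cosh^{-2}\tilde t(\partial_{\tilde x}\psi)^2+|\snabla\psi|^2+\mu^2\psi^2\bigr)$, the natural choice is $h'(\tilde t)=\epsilon\,f(\tilde t)=\epsilon\cosh^{-1}\tilde t$ for a small constant $\epsilon>0$: then the coefficient of $(\partial_{\tilde t}\psi)^2$ becomes $Kf\tanh\tilde t-\tfrac{\epsilon}{2}Kf$, which is $\geq \tfrac12 Kf\tanh\tilde t\geq cf$ for $\tilde t\geq\tau_0$ once $\tanh\tau_0\geq\epsilon$; the spatial-derivative coefficient is $Kf\tanh\tilde t+\tfrac{\epsilon}{2}Kf>0$; and the $(\partial_{\tilde t}\psi)^2$-negative contribution is harmless because the $f\tanh\tilde t$ term dominates. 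The remaining term one must handle is $-\tfrac14\psi^2\square_g h'=\tfrac{K\epsilon}{4}\psi^2\bigl((\cosh^{-1}\tilde t)'''+(\cosh^{-1}\tilde t)''\tanh\tilde t\bigr)$, which is $O(\epsilon f\psi^2)$ with a coefficient that may have either sign; this is where (\ref{eq:poincares2fixedl}) enters — after integrating over $\s^2$ one absorbs $|\psi_{l\geq1}|^2$ into the good term $\tfrac{\epsilon}{2}f(|\snabla\psi|^2+\mu^2\psi^2)\geq \tfrac{\epsilon}{2}f(r_0^{-2}+\mu^2)\psi^2$, provided $\epsilon$ is chosen small relative to the numerical constants (hence the dependence $c=c(K)(1+r_0^2)^{-1}$ for $\mu=0$ and the extra $\mu^{\pm2}r_0$ factors for $\mu\neq0$). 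One thus obtains pointwise on $\tilde\Sigma_\tau$, after the $\s^2$-integration,
\begin{equation*}
\int_{\s^2}K^{\tilde N,1}[\psi]\,d\mu_{\s^2}\geq c f(\tilde t)\int_{\s^2}\left[(\partial_{\tilde t}\psi)^2+\cosh^{-2}\tilde t(\partial_{\tilde x}\psi)^2+K^{-1}(|\snabla\psi|^2+\mu^2\psi^2)\right]d\mu_{\s^2}\sim c\int_{\s^2}J^{\tilde N}[\psi]\cdot n_\tau\,d\mu_{\s^2},
\end{equation*}
using $J^{\tilde N}[\psi]\cdot n_\tau\sim f(\tilde t)(\cdots)$ from Section \ref{sec:globalboundnariai}.

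\textbf{Closing the argument.} Next I would observe that $J^{\tilde N,1}\cdot n_\tau\sim J^{\tilde N}\cdot n_\tau$ up to the lower-order cross term $\tfrac{h'}{2}\psi\partial_{\tilde t}\psi$ and the $\psi^2\partial_{\tilde t}h'$ term; both are controlled after integration over $\s^2$ by $J^{\tilde N}[\psi]\cdot n_\tau$ using Cauchy–Schwarz and (\ref{eq:poincares2fixedl}) (again this needs $l\geq1$ or $\mu\neq0$), so $\int_{\tilde\Sigma_\tau}J^{\tilde N,1}[\psi]\cdot n_\tau\sim\int_{\tilde\Sigma_\tau}J^{\tilde N}[\psi]\cdot n_\tau=:F(\tau)$. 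Applying Stokes' theorem for $J^{\tilde N,1}$ in the region $\mathcal{R}(\tau_0,\tau)$ foliated by $\tilde\Sigma_{\bar\tau}$ (with no boundary terms other than the two slices, since the slices are compact and there are no horizon boundaries in the global picture), and using the volume-form identity $d\mu_{\tilde\Sigma_{\bar\tau}}=K^{-1/2}r_0^2\cosh\bar\tau\,d\tilde x\,d\mu_{\s^2}$ so that the factor $f(\bar\tau)=\cosh^{-1}\bar\tau$ in $K^{\tilde N,1}$ cancels the $\cosh\bar\tau$, gives
\begin{equation*}
F(\tau)+c\int_{\tau_0}^\tau F(\bar\tau)\,d\bar\tau\lesssim\int_{\tilde\Sigma_{\tau_0}}J^{\tilde N,1}[\psi]\cdot n_{\tau_0}\lesssim F(\tau_0).
\end{equation*}
Differentiating, $\tfrac{d}{d\tau}\bigl(e^{c\tau}F(\tau)\bigr)\leq 0$ modulo the constants, which yields $F(\tau)\leq e^{-c(\tau-\tau_0)}F(\tau_0)$ after adjusting $c$, the claimed bound. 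The main obstacle is the bookkeeping of the $\psi^2$ error terms: one must verify that the smallness parameter $\epsilon$ can be fixed (depending only on $K$, and on $r_0,\mu$ through the Poincar\'e constant $r_0^{-2}+\mu^2$) so that simultaneously $h'=\epsilon f$ keeps the derivative-quadratic part of $K^{\tilde N,1}$ nonnegative and makes $|\square_g h'|$ small enough to be swallowed; tracking the precise dependence of $c$ on $r_0$ and $\mu$ — in particular the appearance of both $\mu^{-2}$ and $\mu^2 r_0^2$ in the $\mu\neq0$ case — requires care, since the $\mu^2\psi^2$ good term degenerates as $\mu\to0$ while the $\psi^2$ error does not, forcing the $(1+\mu^{-2}+\cdots)^{-1}$ form of the rate.
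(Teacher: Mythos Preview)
Your approach is essentially the same as the paper's: use the modified current $J^{\tilde N,1}$, show after integrating over $\s^2$ and applying Poincar\'e that $K^{\tilde N,1}$ dominates $J^{\tilde N}\cdot n_\tau$, control the boundary term $J^{\tilde N,1}\cdot n_\tau$ by $J^{\tilde N}\cdot n_\tau$, and conclude via Stokes and a Gr\"onwall-type lemma. The only substantive difference is your choice of auxiliary function: you take $h'=\epsilon f=\epsilon\cosh^{-1}\tilde t$ with a smallness parameter, whereas the paper takes $h=-f$, i.e.\ $h'=f\tanh\tilde t$, which makes the coefficients explicit without any $\epsilon$. Both choices work; the paper's is slightly cleaner because the identity $h'''+h''\tanh\tilde t=-4h'/\cosh^2\tilde t$ gives the bad $\psi^2$ term directly, and absorption needs only $\cosh\tau_0\geq 2\sqrt{K}r_0$ (this is where the hypothesis $\tau_0>\log(2\sqrt{K}r_0)$ comes from, not from the size of $\epsilon$).

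Two small inaccuracies to flag. First, $g^{\alpha\beta}\partial_\alpha\psi\partial_\beta\psi$ does not contain $\mu^2\psi^2$; the good mass term in $K^{\tilde N,1}$ actually comes from $\tfrac{h'}{2}\psi\square_g\psi=\tfrac{h'}{2}\mu^2\psi^2$, which you should include separately. Second, with your choice $h'=\epsilon f$ one computes $h'''+h''\tanh\tilde t=-\epsilon\cosh^{-3}\tilde t$, so both the bad $\psi^2$ term and the good $|\snabla\psi|^2$ term carry the same factor $\epsilon$; the absorption condition is therefore independent of $\epsilon$ and reduces to $\cosh^2\tau_0\gtrsim Kr_0^2$, exactly as in the paper. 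The smallness of $\epsilon$ is needed only to keep the $(\partial_{\tilde t}\psi)^2$ coefficient positive. Finally, the passage from the integrated inequality to exponential decay is not literally $\tfrac{d}{d\tau}(e^{c\tau}F)\leq 0$ unless the boundary constants are $1$; the paper invokes Lemma~\ref{lm:gronwall}, which uses the integrated bound together with the monotone energy bound of Proposition~\ref{prop:eboundglobalnariai}.
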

\begin{proof}
Let $f(t)=\cosh^{-1} \tilde{t}$. By (\ref{eq:globalnariaimodk}), we obtain the following expression for $K^{\tilde{N},1}$:
\begin{equation*}
\begin{split}
K^{\tilde{N},1}&= K\left(-f'-\frac{h'}{2}\right)(\partial_{\tilde{t}}\psi)^2+K\left(-f'+\frac{h'}{2}\right)\cosh^{-2}\tilde{t}(\partial_{\tilde{x}}\psi)^2+\frac{h'}{2}|\snabla\psi|^2\\
&+\frac{K}{4}\left(h'''+h''\tanh\tilde{t}\right)\psi^2,
\end{split}
\end{equation*}
where we have used that $f'=-f\tanh\tilde{t}$. We take $h=-f$, then
\begin{align*}
h'&=-f'=f\tanh\tilde{t},\\
-f'-\frac{h'}{2}&=-\frac{f'}{2}=\frac{f}{2}\tanh \tilde{t},\\
-f'+\frac{h'}{2}&=\frac{3f}{2}\tanh \tilde{t}.
\end{align*}

Moreover,
\begin{align*}
h''&=\frac{1-\sinh^2 \tilde{t}}{\cosh^3\tilde{t}},\\
h'''&=\frac{\sinh^3\tilde{t}-5\sinh\tilde{t}}{\cosh^4\tilde{t}},
\end{align*}
so
\begin{equation*}
h'''+h''\tanh\tilde{t}=\frac{-4\sinh\tilde{t}}{\cosh^4\tilde{t}}=-\frac{4}{\cosh^2\tilde{t}}h'.
\end{equation*}
The $\psi^2$ term is the only term that has the wrong sign. However, using that $\psi=\psi_{l\geq 1}$ and applying the Poincar\'e inequality (\ref{eq:poincares2fixedl}), we can control
\begin{equation*}
\begin{split}
\int_{\s^2} \frac{K}{\cosh^2\tilde{t}}h'\psi^2\,d\mu_{\s^2}&\leq \frac{K r_0^2}{ \cosh^2\tilde{t}}\int_{\s^2}\frac{h'}{2}|\snabla\psi|^2\,d\mu_{\s^2}\\
&\leq \frac{1}{4}\int_{\s^2}\frac{h'}{2}|\snabla\psi|^2\,d\mu_{\s^2},
\end{split}
\end{equation*}
if we require $\cosh \tau_0\geq 2\sqrt{K} r_0$. This follows in particular from the lower bound $\tau_0>\max\{\log (2\sqrt{K}r_0),1\}$. Hence,
\begin{equation*}
\int_{\s^2} J^{\tilde{N}}[\psi]\cdot n_{\tau}\leq \frac{C}{\tanh \tau_0} \int_{\s^2} K^{\tilde{N},1}[\psi]\leq C\int_{\s^2} K^{\tilde{N},1},
\end{equation*}
where in the case $\mu=0$, $C>0$ is a numerical constant and in the case $\mu\neq 0$, $C=\tilde{C}(K)(\mu^2r_0^2+1)$, with $\tilde{C}(K)>0$.

We can estimate
\begin{equation*}
\begin{split}
\int_{\s^2}J^{\tilde{N},1}\cdot n_{\tau}\,d\mu_{\s^2}&=\int_{\s^2}J^{\tilde{N}}\cdot n_{\tau}-\sqrt{K}\frac{f \tanh \tilde{t}}{2}\psi\partial_{t}\psi-\frac{\sqrt{K}}{4}\psi^2 \frac{1-\sinh^2 \tilde{t}}{\cosh^2\tilde{t}}f\,d\mu_{\s^2}\\
&\leq C \int_{\s^2}J^{\tilde{N}}\cdot n_{\tau}\,d\mu_{\s^2},
\end{split}
\end{equation*}
where $C=\tilde{C}(K)(r_0^2+1)>0$ if $\mu=0$ and $C=\tilde{C}(K)(\mu^{-2}+1)>0$ if $\mu\neq 0$.

Applying Stokes' theorem in the spacetime region bounded by $\tilde{\Sigma}_{\tau}$ and $\tilde{\Sigma}_{\tau_0}$, together with the boundedness statement of Proposition \ref{prop:eboundglobalnariai} therefore results in the estimate
\begin{equation*}
\int_{\tau_0}^{\tau}\int_{\tilde{\Sigma}_{\bar{\tau}}} J^{\tilde{N}}\cdot n_{\tau}\,d\bar{\tau}\leq C\int_{J^+(\tilde{\Sigma}_{\tau_0})\cap J^-(\tilde{\Sigma}_{\tau})}K^{\tilde{N},1} \leq C \int_{\tilde{\Sigma}_{\tau_0}} J^{\tilde{N}}\cdot n_{\tau_0},
\end{equation*}
where $C=\tilde{C}(K)(1+r_0^2)>0$ if $\mu=0$ and $C=\tilde{C}(K)(\mu^2r_0^2+1)(1+\mu^{-2})>0$ if $\mu\neq 0$.

We apply energy boundedness from Proposition \ref{prop:eboundglobalnariai} and Lemma \ref{lm:gronwall} to obtain the decay estimate
\begin{equation*}
\int_{\tilde{\Sigma}_{\tau}} J^{\tilde{N}}[\psi]\cdot n_{\tau}\leq e^{-c(\tau-\tau_0)}\int_{\tilde{\Sigma}_{\tau_0}} J^{\tilde{N}}[\psi]\cdot n_{\tau_0},
\end{equation*}
where in the case $\mu=0$, $c=c(K)(1+r_0^2)^{-1}>0$ and in the case $\mu\neq 0$, $c=\tilde{c}(K)(1+r_0^2+\mu^{-2}+\mu^2r_0^2)^{-1}>0$.
\end{proof}

\begin{proposition}
\label{prop:globedecnariaipsi0}
Let $\mu\neq 0$. Take $\tau_0>\max\{\log (2\sqrt{2}\sqrt{K}|\mu|^{-1}),1\}$. Then there exist a constant $c=c(K)>0$, such that
\begin{equation*}
\int_{\tilde{\Sigma}_{\tau}} J^{\tilde{N}}[\psi_0]\cdot n_{\tau} \leq e^{-\frac{c}{1+\mu^{-2}}(\tau-\tau_0)}\int_{\tilde{\Sigma}_{\tau_0}} J^{\tilde{N}}[\psi_0]\cdot n_{\tau_0}.
\end{equation*}
\end{proposition}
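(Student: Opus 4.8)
The plan is to rerun the proof of Proposition \ref{prop:globedecnariaipsigeq1} with the \emph{same} modified current (so $f(\tilde t)=\cosh^{-1}\tilde t$, $h=-f$, and $K^{\tilde N,1}$ as in (\ref{eq:globalnariaimodk})), the point being that the wrong-signed zeroth order term is now absorbed by the Klein--Gordon mass instead of by the spherical Poincar\'e inequality (\ref{eq:poincares2fixedl}). Since $\psi_0$ is spherically symmetric we have $\snabla\psi_0=0$, so the only term in $K^{\tilde N,1}[\psi_0]$ that can have the wrong sign is $-\tfrac14\psi_0^2\square_g h'$. Substituting $h=-f$ and using that $\psi_0$ solves $\square_g\psi_0=\mu^2\psi_0$ --- this is exactly where $\mu\neq0$ enters --- one computes, with $h'=f\tanh\tilde t$ and $\square_g h'=\tfrac{4K}{\cosh^2\tilde t}h'$,
\begin{equation*}
K^{\tilde N,1}[\psi_0]=\frac{K}{2}f\tanh\tilde t\,(\partial_{\tilde t}\psi_0)^2+\frac{3K}{2}f\tanh\tilde t\,\cosh^{-2}\tilde t\,(\partial_{\tilde x}\psi_0)^2+\Big(\frac{\mu^2}{2}-\frac{K}{\cosh^2\tilde t}\Big)f\tanh\tilde t\,\psi_0^2,
\end{equation*}
where the favourable term $\tfrac{h'}{2}\mu^2\psi_0^2$ comes precisely from $\psi_0\square_g\psi_0=\mu^2\psi_0^2$.

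First I would fix $\tau_0$ as in the statement: since $\tau_0>\log(2\sqrt2\sqrt K|\mu|^{-1})$ one gets $\cosh\tau_0\geq\tfrac12 e^{\tau_0}>\sqrt2\sqrt K|\mu|^{-1}$, hence $\cosh^2\tilde t\geq\cosh^2\tau_0>2K\mu^{-2}$ and $\tfrac{K}{\cosh^2\tilde t}\leq\tfrac{\mu^2}{2}$ for all $\tilde t\geq\tau_0$; thus the bracket above is nonnegative, and in fact the whole of $K^{\tilde N,1}[\psi_0]$ is bounded below by a fixed multiple of $\sqrt K\tanh\tilde t\,J^{\tilde N}[\psi_0]\cdot n_\tau$ on $J^+(\tilde\Sigma_{\tau_0})$. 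Since $\tanh\tilde t\geq\tanh\tau_0\geq\tanh 1$ (using $\tau_0>1$), this gives $K^{\tilde N,1}[\psi_0]\geq c_0(K)\,J^{\tilde N}[\psi_0]\cdot n_\tau\geq 0$ for $\tilde t\geq\tau_0$. Then I would record the comparisons between the modified and unmodified fluxes on $\tilde\Sigma_\tau$, obtained exactly as in Proposition \ref{prop:globedecnariaipsigeq1} but with the role of $|\snabla\psi_{l\geq1}|^2$ played by $\mu^{-2}\cdot(\mu^2\psi_0^2)$ (Cauchy--Schwarz on the cross term $\tfrac{h'}{2}\psi_0\partial_{\tilde t}\psi_0$, together with $\psi_0^2=\mu^{-2}(\mu^2\psi_0^2)$): one obtains $J^{\tilde N,1}[\psi_0]\cdot n_\tau\geq 0$ and $\int_{\tilde\Sigma_{\tau_0}}J^{\tilde N,1}[\psi_0]\cdot n_{\tau_0}\leq C(K)(1+\mu^{-2})\int_{\tilde\Sigma_{\tau_0}}J^{\tilde N}[\psi_0]\cdot n_{\tau_0}$. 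The factor $1+\mu^{-2}$ is the only place the mass re-enters the constants, and it is what produces the $\tfrac{1}{1+\mu^{-2}}$ in the claimed rate.

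The endgame is then identical to Proposition \ref{prop:globedecnariaipsigeq1}: apply Stokes' theorem to $J^{\tilde N,1}[\psi_0]$ on the region between $\tilde\Sigma_{\tau_0}$ and $\tilde\Sigma_\tau$, use $K^{\tilde N,1}[\psi_0]\geq 0$ together with the energy boundedness of Proposition \ref{prop:eboundglobalnariai} to deduce $\int_{\tau_0}^\tau\int_{\tilde\Sigma_{\bar\tau}}J^{\tilde N}[\psi_0]\cdot n_{\bar\tau}\,d\bar\tau\leq C(K)(1+\mu^{-2})\int_{\tilde\Sigma_{\tau_0}}J^{\tilde N}[\psi_0]\cdot n_{\tau_0}$, and feed this --- again with Proposition \ref{prop:eboundglobalnariai} --- into the Gr\"onwall-type Lemma \ref{lm:gronwall}, applied to $s\mapsto\int_{\tilde\Sigma_{s+\tau_0}}J^{\tilde N}[\psi_0]\cdot n_{s+\tau_0}$ with $C_1=C(K)(1+\mu^{-2})$ and $C_2=1$; this yields the stated exponential decay with rate $c=c(K)(1+\mu^{-2})^{-1}$. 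The one genuinely new point --- and the main obstacle --- is conceptual rather than computational: for $\psi_{l\geq1}$ the wrong-signed $\psi^2$ term was killed using angular energy, but $\psi_0$ carries none, so the Klein--Gordon mass term is the \emph{only} available substitute. This forces $\mu\neq0$, and also forces us to start at the finite time $\tau_0$, namely once $\cosh^{-2}\tilde t$ --- the strength of the constant-curvature ``centrifugal'' potential --- has dropped below a multiple of $\mu^2$; for $\mu=0$ no such absorption exists, consistently with the non-decay of $\psi_0$ established in Theorem \ref{th:nondecn}.
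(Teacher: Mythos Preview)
Your approach is correct and takes a genuinely different route from the paper. The paper does not rerun the modified-current argument for $\psi_0$; instead it introduces an auxiliary Nariai metric $\tilde g$ with sphere radius $\tilde r_0^2=2\mu^{-2}$, observes that $(\square_g-\mu^2)\psi_0=0$ is equivalent to $\square_{\tilde g}(\psi_0 Y_{11})=0$ (the Klein--Gordon mass is traded for the $l=1$ eigenvalue on a sphere of the right radius), and then invokes Proposition~\ref{prop:globedecnariaipsigeq1} in the massless case with $r_0$ replaced by $\tilde r_0=\sqrt 2\,|\mu|^{-1}$. The rate $c(K)(1+\tilde r_0^2)^{-1}\sim c(K)(1+\mu^{-2})^{-1}$ and the threshold $\tau_0>\log(2\sqrt K\,\tilde r_0)=\log(2\sqrt 2\sqrt K|\mu|^{-1})$ are then read off directly from that proposition.

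Your direct computation is exactly the unpacking of this trick: the favourable term $\tfrac{h'\mu^2}{2}\psi_0^2$ in your $K^{\tilde N,1}$ plays precisely the role that $\tfrac{h'}{2}|\snabla(\psi_0Y_{11})|^2$ plays on the auxiliary sphere (both equal $\tfrac{h'\mu^2}{2}\psi_0^2$ after Poincar\'e with $l=1$ and $\tilde r_0^2=2\mu^{-2}$). Your route is more transparent and self-contained, and in fact more honest about the mass contribution to the bulk term, which the paper's displayed formula~(\ref{eq:globalnariaimodk}) silently drops (harmlessly, since it has a good sign). The paper's route buys brevity by reusing the earlier proposition verbatim. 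One small point: your assertion that $J^{\tilde N,1}[\psi_0]\cdot n_\tau\geq 0$ is not obvious and is not needed --- the two-sided comparison $|J^{\tilde N,1}\cdot n_\tau|\leq C(K)(1+\mu^{-2})\,J^{\tilde N}\cdot n_\tau$ together with Proposition~\ref{prop:eboundglobalnariai} already handles the future boundary term in Stokes, exactly as in the proof of Proposition~\ref{prop:globedecnariaipsigeq1}.
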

\begin{proof}
In the $\mu\neq 0$ case we deal with the spherically symmetric modes $\psi_0$, satisfying
\begin{equation}
\label{eq:sphkgeqlobalnariai}
\square_g \psi_0=\mu^2\psi_0,
\end{equation}
by considering the auxiliary metric
\begin{equation*}
\tilde{g}=K^{-1}(-d\tilde{t}^2+\cosh^2 \tilde{t}d\tilde{x}^2)+\tilde{r_0}^2\slashed{g}_{\s^2}.
\end{equation*}
We take $\tilde{r_0}^2:=2\mu^{-2}$. Consider $\psi(t,x,\theta,\phi)=\psi_0(t,x)Y_{11}(\theta,\phi)$. By definition of $\tilde{r_0}^2$ and the expression (\ref{eq:kqequationfixedl}), (\ref{eq:sphkgeqlobalnariai}) is equivalent to $\square_{\tilde{g}} \psi=0$, with $\psi$ the solution arising from initial data $\Psi=\Psi_0Y_{11}$, $\Psi'=\Psi_0'Y_{11}$ on $\Sigma_{\tau_0}$. 

By Proposition \ref{prop:globedecnariaipsigeq1} applied to $\psi$, with respect to the metric $\tilde{g}$, we can estimate
\begin{equation}
\label{eq:preedecpsi0globalnariai}
\begin{split}
&\int_{\R} \int_{\s^2} |Y_{11}|^2\left[(\partial_{\tilde{t}}\psi_0)^2+\cosh^{-2}\tilde{t} (\partial_{\tilde{x}}\psi_0)^2+\frac{\mu^2}{K}\psi_0^2\right]\,r_0^2\,d\mu_{\s^2}d\tilde{x}\Big|_{\tilde{t}=\tau}\\
&\leq e^{-\frac{c}{1+r_0^2}(\tau-\tau_0)}\int_{\R} \int_{\s^2} |Y_{11}|^2\left[(\partial_{\tilde{t}}\psi_0)^2+\cosh^{-2}\tilde{t} (\partial_{\tilde{x}}\psi_0)^2+\frac{\mu^2}{K}\psi_0^2\right]\,r_0^2\,d\mu_{\s^2}d\tilde{x}\Big|_{\tilde{t}=\tau_0}.
\end{split}
\end{equation}
Since $\int_{\s^2}|Y_{11}|^2\,d\mu_{\s^2}=1$ we can rewrite (\ref{eq:preedecpsi0globalnariai}) to obtain
\begin{equation*}
\int_{\tilde{\Sigma}_{\tau}} J^{\tilde{N}}[\psi_0]\cdot n_{\tau} \leq e^{-\frac{c}{1+\mu^{-2}}(\tau-\tau_0)}\int_{\tilde{\Sigma}_{\tau_0}} J^{\tilde{N}}[\psi_0]\cdot n_{\tau_0},
\end{equation*}
where $c=c(K)>0$.
\end{proof}

\begin{corollary}
\label{prop:globedecnariai}
Let $\psi=\psi_{l\geq 1}$ if $\mu=0$. Then there exist constants $C=C(K,r_0,\mu)>0$ and $c=c(K,\mu,r_0)>0$, such that
\begin{equation}
\label{eq:finaledecglobalnariai}
\int_{\tilde{\Sigma}_{\tau}}J^{\tilde{N}}[\psi]\cdot n_{\tau}\leq Ce^{-c\tau}\tilde{E}_N[\psi],
\end{equation}
where in the case $\mu\neq 0$, $C=\tilde{C}(r_0,K)e^{\tilde{c}(K)\tau_0}$, with \newline$\tau_0>\max\{1,\log (2\sqrt{2}\sqrt{K}|\mu|^{-1})\}$ and $c=(1+\mu^{-2}+\mu^2)^{-1}\tilde{c}(K,r_0)$.
\end{corollary}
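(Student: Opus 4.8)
The plan is to deduce the corollary directly from Propositions~\ref{prop:globedecnariaipsigeq1} and~\ref{prop:globedecnariaipsi0}, using the energy boundedness of Proposition~\ref{prop:eboundglobalnariai} both to reassemble the harmonic modes and to dispose of the compact time interval $[0,\tau_0]$, while keeping track of the constants.

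First I would fix a cutoff time $\tau_0$ that is simultaneously admissible for both propositions, i.e.\ $\tau_0>\max\{1,\log(2\sqrt2\sqrt K|\mu|^{-1}),\log(2\sqrt Kr_0)\}$; since $Kr_0^2\leq 1$ in Nariai by~(\ref{eq:K}), the last entry is dominated by the others, so the bound $\tau_0>\max\{1,\log(2\sqrt2\sqrt K|\mu|^{-1})\}$ stated in the corollary already suffices. Next, writing $\psi=\psi_0+\psi_{l\geq 1}$ (with $\psi_0=0$ when $\mu=0$, by hypothesis) and using Parseval's identity together with the orthogonality of the spherical harmonics on $\s^2$, so that all cross terms in $J^{\tilde N}\cdot n_\tau$ integrate to zero over each $\s^2$ fibre, the flux decomposes as
\begin{equation*}
\int_{\tilde\Sigma_\tau}J^{\tilde N}[\psi]\cdot n_\tau=\int_{\tilde\Sigma_\tau}J^{\tilde N}[\psi_0]\cdot n_\tau+\int_{\tilde\Sigma_\tau}J^{\tilde N}[\psi_{l\geq 1}]\cdot n_\tau,
\end{equation*}
and likewise $\tilde E_N[\psi]=\tilde E_N[\psi_0]+\tilde E_N[\psi_{l\geq 1}]$; each summand is non-negative, so in particular $\tilde E_N[\psi_0],\tilde E_N[\psi_{l\geq 1}]\leq \tilde E_N[\psi]$.

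For $\tau\geq\tau_0$ I would apply Proposition~\ref{prop:globedecnariaipsigeq1} to $\psi_{l\geq 1}$ and, when $\mu\neq 0$, Proposition~\ref{prop:globedecnariaipsi0} to $\psi_0$, bounding each piece at $\tilde\Sigma_\tau$ by $e^{-c(\tau-\tau_0)}$ times its value at $\tilde\Sigma_{\tau_0}$, where $c$ is the smaller of the two decay rates. Since $r_0$ is a fixed spacetime parameter, $1+r_0^2+\mu^{-2}+\mu^2r_0^2\leq \tilde C(r_0)(1+\mu^{-2}+\mu^2)$, so the rate from Proposition~\ref{prop:globedecnariaipsigeq1} may be taken to be $(1+\mu^{-2}+\mu^2)^{-1}\tilde c(K,r_0)$, which also dominates the rate $(1+\mu^{-2})^{-1}c(K)$ of Proposition~\ref{prop:globedecnariaipsi0}. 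Bounding the flux at $\tilde\Sigma_{\tau_0}$ by $\tilde E_N[\psi]$ via Proposition~\ref{prop:eboundglobalnariai} (applied mode by mode) then produces the prefactor $e^{c\tau_0}$, i.e.\ a constant of the claimed form $C=\tilde C(r_0,K)e^{\tilde c(K)\tau_0}$. For $0\leq\tau\leq\tau_0$ there is nothing to do: $e^{-c\tau}\geq e^{-c\tau_0}$, so directly from Proposition~\ref{prop:eboundglobalnariai}, $\int_{\tilde\Sigma_\tau}J^{\tilde N}[\psi]\cdot n_\tau\leq\tilde E_N[\psi]\leq e^{c\tau_0}e^{-c\tau}\tilde E_N[\psi]$. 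Combining the two ranges of $\tau$ gives~(\ref{eq:finaledecglobalnariai}).

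The substantive analytic work — constructing the modified currents $J^{\tilde N,1}$ with a favourable sign, absorbing the bad $\psi^2$ term by the Poincar\'e inequality on $\s^2$ together with the choice of $\cosh\tau_0$ large, and running the Gr\"onwall argument of Lemma~\ref{lm:gronwall} — is already contained in Propositions~\ref{prop:globedecnariaipsigeq1} and~\ref{prop:globedecnariaipsi0}, so there is no genuine obstacle remaining. The only point that needs care is purely bookkeeping: checking that a single $\tau_0$ works for both the $\psi_0$ and $\psi_{l\geq 1}$ estimates, that the $|\mu|^{-1}$-dependence of $\tau_0$ is correctly carried through into $C$, and that the two exponential rates are comparable so that a single $c=(1+\mu^{-2}+\mu^2)^{-1}\tilde c(K,r_0)$ can be extracted.
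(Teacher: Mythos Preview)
Your proposal is correct and follows essentially the same route as the paper: decompose $\psi$ into $\psi_0$ and $\psi_{l\geq 1}$ by orthogonality of spherical harmonics, apply Propositions~\ref{prop:globedecnariaipsigeq1} and~\ref{prop:globedecnariaipsi0} for $\tau\geq\tau_0$, and use the energy boundedness of Proposition~\ref{prop:eboundglobalnariai} to handle the interval $[0,\tau_0]$ and to pass from $\tilde\Sigma_{\tau_0}$ back to $\tilde\Sigma$. Your write-up in fact supplies more detail than the paper's one-line proof, in particular the check that $Kr_0^2\leq 1$ so that a single choice of $\tau_0$ serves both propositions, and the bookkeeping that extracts a common decay rate $c=(1+\mu^{-2}+\mu^2)^{-1}\tilde c(K,r_0)$.
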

\begin{proof}
By the orthogonality of the modes $\psi_{l\geq1}$ and $\psi_0$, we can combine the estimates in Propositions \ref{prop:globedecnariaipsigeq1}, \ref{prop:globedecnariaipsi0} and \ref{eq:globalnariaieb2} to obtain (\ref{eq:finaledecglobalnariai}).
\end{proof}

\begin{remark}
It may seem that the energy decay statement of Proposition \ref{prop:globedecnariai} is in contradiction with the results of Sbierski \cite{sbier1} mentioned in Section \ref{sec:trapping}, as there is no loss of derivatives due to the trapping of null geodesics. However, these results require in particular that $g(N,N)\leq -c$ for some uniform constant $c>0$, where $N$ is a timelike vector field used to define the energy current corresponding to a foliation of spacelike hypersurfaces. See for example Theorem 2.42 of \cite{sbier1}. We have that $g(\tilde{N},\tilde{N})\to 0$ as $t\to \infty$, so the results of Sbierski do not apply to $\tilde{E}_N$. 

The energy $E_N$ defined in Section \ref{sec:boundn} does however satisfy the requirements needed for the results of Sbierski to hold, so there must be a loss of derivatives in the corresponding energy decay statement, as seen in Proposition \ref{prop:energydecaynariai}.
\end{remark}

\begin{corollary}
\label{prop:globedecnariaipointwise}
Let $\psi=\psi_{l\geq 1}$ if $\mu=0$. Then there exists constants $C=C(K,r_0,\mu)>0$ and $c=c(K,\mu,r_0)>0$ such that
\begin{equation*}
||\psi||^2_{L^{\infty}(\tilde{\Sigma}_{\tau})}\leq Ce^{-c\tau} \sum_{|k|+|m|\leq 2, |m|\leq 1} \tilde{E}_N[\Omega^m X^k \psi],
\end{equation*}
where in the case $\mu\neq 0$,  $C=\tilde{C}(r_0,K)e^{\tilde{c}(K)\tau_0}$, with \newline $\tau_0>\max\{1,\log (2\sqrt{2}\sqrt{K}|\mu|^{-1})\}$ and $c=(1+\mu^{-2}+\mu^2)^{-1}\tilde{c}(K,r_0)$.
\end{corollary}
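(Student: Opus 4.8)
The plan is to upgrade the energy decay of Corollary \ref{prop:globedecnariai} to the pointwise statement by commuting with Killing vector fields and applying a Sobolev embedding on the slices $\tilde\Sigma_\tau$, following the same scheme used for Corollary \ref{cor:pointboundglobalnariai} and Proposition \ref{prop:pointwisedecnariai}. First I would note that $X=\sqrt K\partial_{\tilde x}$ and the spherical generators $\Omega_i$ are Killing on Nariai, so $\square_g$ commutes with $X$ and with each $\Omega_i$, and both preserve the spherical harmonic decomposition; hence, when $\mu=0$ and $\psi=\psi_{l\ge1}$, the commuted fields $\Omega^m X^k\psi$ also lie in the $l\ge1$ part. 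Each $\Omega^m X^k\psi$ therefore solves (\ref{eq:kgequation}) with the same $\mu$, and Corollary \ref{prop:globedecnariai} applies to it, giving
\begin{equation*}
\int_{\tilde\Sigma_\tau}J^{\tilde N}[\Omega^m X^k\psi]\cdot n_\tau\le Ce^{-c\tau}\,\tilde E_N[\Omega^m X^k\psi],
\end{equation*}
with $C$ and $c$ of the form stated in that corollary.

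Next I would run the Sobolev argument slice by slice, using that $\tilde\Sigma_\tau\cong\R_{\tilde x}\times\s^2$ and that $\int_{\tilde\Sigma_\tau}J^{\tilde N}[w]\cdot n_\tau$ is comparable to $\int_\R\int_{\s^2}\big[(\partial_{\tilde t}w)^2+\cosh^{-2}\tilde t\,(\partial_{\tilde x}w)^2+K^{-1}(|\snabla w|^2+\mu^2 w^2)\big]\,d\mu_{\s^2}\,d\tilde x$. For a fixed point $(\tilde x,\theta,\phi)$ I would first apply the one-dimensional trace inequality in the noncompact variable, $|\psi|^2(\tau,\tilde x,\theta,\phi)\le\int_\R\big(\psi^2+(\partial_{\tilde x}\psi)^2\big)(\tau,\tilde x',\theta,\phi)\,d\tilde x'$, and then the Sobolev inequality of Proposition \ref{sobs2} on $\s^2$, combined with a standard elliptic estimate on $\s^2$ converting control of $\snabla\Omega_i(\cdot)$ into control of the full second angular derivatives, so as to pass from $L^2(\s^2)$ to $L^\infty(\s^2)$ at the cost of one extra $\Omega$-derivative. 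The only mildly delicate point is that the weight $\cosh^{-2}\tilde t$ in $J^{\tilde N}$, while harmless, does not on its own provide the unweighted $L^2_{\tilde x}$ bound on $\partial_{\tilde x}\psi$ that the trace inequality needs; this is instead read off from the commuted energy $\tilde E_N[X\psi]$, whose $\mu^2$-term equals $K\mu^2\int(\partial_{\tilde x}\psi)^2$ when $\mu\ne0$, and from $\tilde E_N[X\psi]$ together with the Poincar\'e inequality (\ref{eq:poincares2fixedl}) on $\s^2$ when $\mu=0$ and $\psi=\psi_{l\ge1}$. The term $\int\psi^2$ is handled the same way from $\tilde E_N[\psi]$. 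Tracking the commutators that actually occur, namely $\Omega^m X^k$ with $|m|\le1$ and $|m|+|k|\le2$, this would yield, uniformly in $(\tilde x,\theta,\phi)$,
\begin{equation*}
|\psi|^2(\tau,\tilde x,\theta,\phi)\le C\sum_{|k|+|m|\le2,\,|m|\le1}\int_{\tilde\Sigma_\tau}J^{\tilde N}[\Omega^m X^k\psi]\cdot n_\tau.
\end{equation*}

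Combining this with the commuted energy decay from the first step finishes the argument, with $C$ and $c$ inherited from Corollary \ref{prop:globedecnariai} (so $C=\tilde C(r_0,K)e^{\tilde c(K)\tau_0}$ and $c=(1+\mu^{-2}+\mu^2)^{-1}\tilde c(K,r_0)$ when $\mu\ne0$). I do not expect a genuine obstacle here: all of the hard analysis is already contained in Corollary \ref{prop:globedecnariai}, and what remains is the routine Sobolev upgrade; the single point needing care, as indicated, is recovering the unweighted $L^2$-in-$\tilde x$ norms from the $X$-commuted energies rather than from the weighted $\partial_{\tilde x}$-term of $J^{\tilde N}$ itself.
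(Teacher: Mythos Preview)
Your proposal is correct and follows essentially the same approach as the paper: commute with the Killing fields $X$ and $\Omega_i$, apply the energy decay of Corollary \ref{prop:globedecnariai} to each commuted field, and then invoke the Sobolev embedding on $\R\times\s^2$ as in Corollary \ref{cor:pointboundglobalnariai}. Your additional care in explaining how the unweighted $L^2_{\tilde x}$ control of $\partial_{\tilde x}\psi$ comes from the zeroth-order (respectively angular, via Poincar\'e) part of $\tilde E_N[X\psi]$ rather than from the $\cosh^{-2}\tilde t$-weighted term is a welcome clarification of a point the paper leaves implicit.
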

\begin{proof}
The pointwise estimate follows from Corollary \ref{prop:globedecnariai}, by applying Sobolev estimates on $\R\times \s^2$ as in Proposition \ref{cor:pointboundglobalnariai}.
\end{proof}

We have now proved Theorem \ref{th:globdecn}.

\begin{remark}
We can compare the pointwise decay result from Proposition \ref{prop:pointwisedecnariai} to the result of Corollary \ref{prop:globedecnariaipointwise}. Using the relation (\ref{eq:tildetvst}) between $t$, defined in the static region, and $\tilde{t}$, defined globally, we can replace $e^{-c\tau}$ by $e^{\tilde{c}|x_*|}e^{-\tilde{c}t}$ in the static region, where $\tilde{c}>0$ is a uniform constant. The factor $e^{\tilde{c}|x_*|}$ blows up as we approach the cosmological horizons. Therefore, while Corollary \ref{prop:globedecnariaipointwise} gives a stronger decay rate at a fixed $x$ compared to Proposition \ref{prop:pointwisedecnariai}, the decay estimate is not uniform in $x$.

Moreover, unlike Corollary \ref{prop:globedecnariaipointwise}, Proposition \ref{prop:pointwisedecnariai} does not rely on the geometry outside the static region.
\end{remark}

\section{Uniform decay in $n$-dimensional de Sitter space}
\label{sec:decaydsn}
The methods of Sections \ref{sec:globalboundnariai} and \ref{sec:globaldecaynariai} can be generalised to spacetimes of the form $dS_n\times_{r_0} \s^2$, where $n\geq 2$, to obtain in particular uniform decay of solutions to (\ref{eq:kgequation}) with $\mu\neq 0$ on a $dS_n$ background.

Choose $r_0^2:=2\mu^{-2}$ and consider $dS_n\times_{r_0} \s^2$ in global coordinates \newline $(\tilde{t},\theta_1,\ldots,\theta_{n-2},\phi,\vartheta,\varphi)$, where $(\theta_1,\ldots,\theta_{n-2},\phi)$ are coordinates on $\s^{n-1}$ and $(\vartheta,\varphi)$ are coordinates on $\s^2$. The metric is then given by
\begin{equation*}
g=K^{-1}(-d\tilde{t}^2+\cosh^2 \tilde{t}\slashed{g}_{\s^{n-1}})+r_0^2\slashed{g}_{\s^2}.
\end{equation*}

Let $\tilde{\Sigma}=\{\tilde{t}=0\}$, $\tilde{\Sigma}_{\tau}:=\{\tilde{t}=\tau\}$ and $n_{\tau}=n_{\tilde{\Sigma}_{\tau}}$ the corresponding future-directed normal. Consider the timelike vector field
\begin{equation*}
\tilde{N}=f(\tilde{t})\partial_{\tilde{t}}.
\end{equation*}

We consider (\ref{eq:waveequation}) on $dS_n\times_{r_0} \s^2$. The corresponding energy current through the foliation $\tilde{\Sigma}_{\tau}$ is given by
\begin{equation*}
J^{\tilde{N}}[\psi]\cdot n_{\tau}=\frac{\sqrt{K}f}{2}\left[(\partial_{\tilde{t}}\psi)^2+|\snabla \psi|^2+K^{-1}r_0^{-2}|\mathring{\snabla}\psi|^2\right],
\end{equation*}
where $|\snabla \psi|^2=\cosh^{-2}\tilde{t}|\snabla_{\s^{n-1}}\psi|^2$ with $\snabla_{\s^{n-1}}$ denoting the derivative restricted to $\s^{n-1}$, and $\mathring{\snabla}$ denotes the derivative restricted to $\s^2$.

\begin{proposition}
\label{prop:ebounddsn}
Let $f(\tilde{t})=\cosh^{-(n-1)}\tilde{t}$. Then for any $\tau\geq 0$,
\begin{equation*}
\int_{\tilde{\Sigma}_{\tilde{\tau}}}J^{\tilde{N}}[\psi]\cdot n_{\tau}\leq \int_{\tilde{\Sigma}}J^{\tilde{N}}[\psi]\cdot n_{\tilde{\Sigma}}=:\tilde{E}[\psi].
\end{equation*}
\end{proposition}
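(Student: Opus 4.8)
The plan is to run the argument of Proposition~\ref{prop:eboundglobalnariai} with the modified weight $f(\tilde t)=\cosh^{-(n-1)}\tilde t$ in place of $\cosh^{-1}\tilde t$, the extra power $n-1$ being exactly what is needed to compensate for the $n-1$ expanding directions of $\s^{n-1}$. First I would compute the compatible current $K^{\tilde N}[\psi]=\mathbf T^{\alpha\beta}[\psi]\nabla_\alpha\tilde N_\beta=\tfrac12\mathbf T^{\alpha\beta}[\psi](\mathcal L_{\tilde N}g)_{\alpha\beta}$. For $g=K^{-1}(-d\tilde t^2+\cosh^2\tilde t\,\slashed g_{\s^{n-1}})+r_0^2\slashed g_{\s^2}$ and $\tilde N=f(\tilde t)\partial_{\tilde t}$ the deformation tensor is $\mathcal L_{\tilde N}g=-2K^{-1}f'\,d\tilde t^2+2f\tanh\tilde t\,(g|_{\s^{n-1}})$, where $g|_{\s^{n-1}}=K^{-1}\cosh^2\tilde t\,\slashed g_{\s^{n-1}}$ is the restriction of $g$ to the $\s^{n-1}$ factor, whose $g$-trace equals $n-1$. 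Contracting against $\mathbf T[\psi]$ for a solution of \eqref{eq:waveequation} and using $(\partial\psi)^2=-K(\partial_{\tilde t}\psi)^2+|\snabla\psi|^2_g+|\mathring{\snabla}\psi|^2_g$, with $|\snabla\psi|^2_g$ and $|\mathring{\snabla}\psi|^2_g$ the squared gradients tangent to $\s^{n-1}$ and $\s^2$ respectively, and collecting terms, one finds
\begin{equation*}
K^{\tilde N}=\tfrac{K}{2}\big({-}f'+(n-1)f\tanh\tilde t\big)(\partial_{\tilde t}\psi)^2+\tfrac12\big({-}f'+(3-n)f\tanh\tilde t\big)|\snabla\psi|^2_g+\tfrac12\big({-}f'-(n-1)f\tanh\tilde t\big)|\mathring{\snabla}\psi|^2_g ,
\end{equation*}
which reduces to the expression used in Proposition~\ref{prop:eboundglobalnariai} when $n=2$ and $\mu=0$.

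The key point is that $f(\tilde t)=\cosh^{-(n-1)}\tilde t$ solves $f'+(n-1)f\tanh\tilde t=0$, i.e.\ $-f'=(n-1)f\tanh\tilde t$. This makes the coefficient of $|\mathring{\snabla}\psi|^2_g$ vanish identically — it is the only coefficient above that could have an unfavourable sign once $\mu=0$ — and turns the other two into $(n-1)Kf\tanh\tilde t$ and $f\tanh\tilde t$. Hence
\begin{equation*}
K^{\tilde N}=(n-1)Kf\tanh\tilde t\,(\partial_{\tilde t}\psi)^2+f\tanh\tilde t\,|\snabla\psi|^2_g\geq 0\qquad\text{on }\{\tilde t\geq 0\}=J^+(\tilde\Sigma),
\end{equation*}
since $f>0$, $K>0$ and $\tanh\tilde t\geq 0$ there.

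Finally I would apply the divergence theorem to $J^{\tilde N}[\psi]$ on the region $\mathcal D:=J^+(\tilde\Sigma)\cap J^-(\tilde\Sigma_\tau)$, which is compact for $n\geq 3$ since $\tilde\Sigma_\tau\cong\s^{n-1}\times\s^2$ is compact, so that there are no contributions from infinity; the $n=2$ case is precisely Proposition~\ref{prop:eboundglobalnariai}. With $\tilde N$ future-directed timelike and $n_\tau$, $n_{\tilde\Sigma}$ the future unit normals, the identity $\textnormal{div}\,J^{\tilde N}=K^{\tilde N}$ gives $\int_{\tilde\Sigma_\tau}J^{\tilde N}[\psi]\cdot n_\tau=\int_{\tilde\Sigma}J^{\tilde N}[\psi]\cdot n_{\tilde\Sigma}-\int_{\mathcal D}K^{\tilde N}\leq \int_{\tilde\Sigma}J^{\tilde N}[\psi]\cdot n_{\tilde\Sigma}=\tilde E[\psi]$, as claimed. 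I expect no serious difficulty: the only real work is the deformation-tensor computation and the bookkeeping of the $n-1$ factors coming from the $\s^{n-1}$ directions, after which the weight $\cosh^{-(n-1)}\tilde t$ is forced by the requirement that the $|\mathring{\snabla}\psi|^2_g$-coefficient vanish, and the rest is routine, exactly as in the Nariai case.
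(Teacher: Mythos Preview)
Your proposal is correct and follows essentially the same approach as the paper: compute $K^{\tilde N}$ for $\tilde N=f(\tilde t)\partial_{\tilde t}$, observe that the choice $f=\cosh^{-(n-1)}\tilde t$ (i.e.\ $-f'=(n-1)f\tanh\tilde t$) makes the $|\mathring{\snabla}\psi|^2$ coefficient vanish and renders the remaining terms nonnegative, then apply the divergence theorem between $\tilde\Sigma$ and $\tilde\Sigma_\tau$. Your expression for $K^{\tilde N}$ is in fact cleaner than the one displayed in the paper (where the $|\snabla\psi|^2$ coefficient contains what appears to be a typographical slip; your $(3-n)f\tanh\tilde t$ is the correct second term).
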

\begin{proof}
The proof proceeds in the same way as the proof of Proposition \ref{prop:eboundglobalnariai}, with $K^{\tilde{N}}$ for general $n$ given by
\begin{equation*}
\begin{split}
K^{\tilde{N}}&=\frac{K}{2}(-f'+f(n-1)\tanh \tilde{t})(\partial_{\tilde{t}}\psi)^2+\frac{K}{2}(-f'-f(n-1)\tanh\tilde{t}+2)|\snabla\psi|^2\\
&+\frac{1}{2}(-f'-f(n-1)\tanh\tilde{t})r_0^{-2}|\mathring{\snabla}\psi|^2.
\end{split}
\end{equation*}

By choosing $f(\tilde{t})=\cosh^{-(n-1)}\tilde{t}$, $K^{\tilde{N}}\geq 0$ easily follows.
\end{proof}

As in Nariai, we can perform a decomposition of $\psi$ into spherical harmonics on $\s^2$,
\begin{equation*}
\psi=\sum_{l\geq 0}\psi_l,
\end{equation*}
where
\begin{equation*}
\psi_l(\tilde{t},\theta_1,\ldots,\theta_{n-2},\phi,\vartheta,\varphi):=\sum_{m=-l}^l \psi_{m,l}(\tilde{t},\theta_1,\ldots,\theta_{n-2},\phi)Y^{m,l}(\vartheta,\varphi).
\end{equation*}

\begin{proposition}
\label{prop:dsndecay1}
Let $\psi=\psi_{l\geq 1}$. Moreover, take $\tau_0\geq \max\{\log\left(2\sqrt{K}r_0\sqrt{\frac{n}{2}}\right),1\}$. Then there exists a constant $c=c(K,n)>0$, such that
\begin{equation*}
\int_{\tilde{\Sigma}_{\tau}}J^{\tilde{N}}[\psi]\cdot n_{\tau}\leq e^{-\frac{c}{1+r_0^2}(\tau-\tau_0)}\int_{\tilde{\Sigma}_{\tau_0}}J^{\tilde{N}}[\psi]\cdot n_{\tau_0}.
\end{equation*}
\end{proposition}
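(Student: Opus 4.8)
The plan is to transplant the red‑shift–type argument of Proposition~\ref{prop:globedecnariaipsigeq1} to the $n$‑dimensional setting, keeping $f(\tilde{t})=\cosh^{-(n-1)}\tilde{t}$ (so $f'=-(n-1)f\tanh\tilde{t}$) as in Proposition~\ref{prop:ebounddsn}. First I would introduce the modified current
\[
J^{\tilde{N},1}_{\alpha}:=J^{\tilde{N}}_{\alpha}+\frac{h'}{2}\psi\partial_{\alpha}\psi-\frac14\psi^2\partial_{\alpha}h',\qquad h:=-f,
\]
whose divergence is $K^{\tilde{N},1}=K^{\tilde{N}}+\frac{h'}{2}g^{\alpha\beta}\partial_{\alpha}\psi\partial_{\beta}\psi-\frac14\psi^2\square_g h'$, the term $\frac{h'}{2}\psi\square_g\psi$ vanishing since we are solving the massless equation $\square_g\psi=0$. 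Here $h'=-f'=(n-1)f\tanh\tilde{t}\ge 0$ on $\{\tilde{t}\ge 0\}$ and $g^{\alpha\beta}\partial_{\alpha}\psi\partial_{\beta}\psi=-K(\partial_{\tilde{t}}\psi)^2+K|\snabla\psi|^2+r_0^{-2}|\mathring{\snabla}\psi|^2$. Substituting the expression for $K^{\tilde{N}}$ from Proposition~\ref{prop:ebounddsn}, one checks that with this choice of $h$ the coefficients of $(\partial_{\tilde{t}}\psi)^2$, $|\snabla\psi|^2$ and $|\mathring{\snabla}\psi|^2$ are all positive multiples of $f\tanh\tilde{t}$, so the only term with a bad sign is the zeroth–order one $-\frac14\psi^2\square_g h'$.

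The key new computation is the identity $h'''+(n-1)\tanh\tilde{t}\,h''=-2n\cosh^{-2}\tilde{t}\,h'$, which for $n=2$ recovers the factor $-4/\cosh^2\tilde{t}$ used in the Nariai case; using $\square_g\phi(\tilde{t})=-K(\phi''+(n-1)\tanh\tilde{t}\,\phi')$ this gives $\square_g h'=2nK\cosh^{-2}\tilde{t}\,h'$, so the bad term equals $-\frac{nK}{2}\cosh^{-2}\tilde{t}\,h'\psi^2$. I would absorb it into (a fraction of) the good $\s^2$‑gradient term, whose coefficient in $K^{\tilde{N},1}$ is a fixed positive multiple of $h'$: since $\psi=\psi_{l\ge 1}$, the Poincar\'e inequality~(\ref{eq:poincares2fixedl}) on $\s^2$ gives $\int_{\s^2}\psi^2\,d\mu_{\s^2}\le \frac{r_0^2}{2}\int_{\s^2}|\mathring{\snabla}\psi|^2\,d\mu_{\s^2}$, so once $\cosh^2\tilde{t}$ exceeds a fixed multiple of $nKr_0^2$ the bad term is dominated by half of the good one after integration over $\s^2$ --- this is precisely the source of the hypothesis $\tau_0>\max\{\log(2\sqrt{K}r_0\sqrt{n/2}),1\}$, the constraint $\tau_0\ge 1$ being retained so that $\tanh\tilde{t}\ge\tanh\tau_0>0$ on $\{\tilde{t}\ge\tau_0\}$. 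After the absorption one obtains, for $\tilde{t}\ge\tau_0$,
\[
\int_{\s^2}K^{\tilde{N},1}[\psi]\,d\mu_{\s^2}\ \ge\ c\int_{\s^2}J^{\tilde{N}}[\psi]\cdot n_{\tau}\,d\mu_{\s^2},
\]
while a Cauchy--Schwarz estimate on the cross term and another use of~(\ref{eq:poincares2fixedl}) on the $\psi^2$ contributions yield $\int_{\s^2}J^{\tilde{N},1}[\psi]\cdot n_{\tau}\,d\mu_{\s^2}\le \tilde{C}(K)(1+r_0^2)\int_{\s^2}J^{\tilde{N}}[\psi]\cdot n_{\tau}\,d\mu_{\s^2}$.

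Integrating these two inequalities over $\s^2$ and applying the divergence theorem in the slab bounded by $\tilde{\Sigma}_{\tau'}$ and $\tilde{\Sigma}_{\tau}$, $\tau_0\le\tau'<\tau$ (there is no lateral boundary, the slices being compact), together with the energy monotonicity $\int_{\tilde{\Sigma}_{\tau}}J^{\tilde{N}}\cdot n_{\tau}\le\int_{\tilde{\Sigma}_{\tau'}}J^{\tilde{N}}\cdot n_{\tau'}$ that follows from $K^{\tilde{N}}\ge 0$ (Proposition~\ref{prop:ebounddsn}), I would verify the two hypotheses of Lemma~\ref{lm:gronwall} for $F(\tau):=\int_{\tilde{\Sigma}_{\tau}}J^{\tilde{N}}[\psi]\cdot n_{\tau}$ on $\{\tau\ge\tau_0\}$: namely $\int_{\tau'}^{\tau}F\le C_1 F(\tau')$ with $C_1=\tilde{C}(K,n)(1+r_0^2)$, and $F(\tau)\le F(\tau')$ so that $C_2$ is numerical. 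Lemma~\ref{lm:gronwall} then delivers $F(\tau)\le F(\tau_0)\,e^{-\frac{c}{1+r_0^2}(\tau-\tau_0)}$ with $c=c(K,n)>0$, which is the assertion. The main obstacle is the sign bookkeeping in $K^{\tilde{N},1}$ after setting $h=-f$: one has to verify the identity for $h'''+(n-1)\tanh\tilde{t}\,h''$, confirm that apart from $-\frac14\psi^2\square_g h'$ no term changes sign on $\{\tilde{t}>0\}$, and make the quantitative absorption precise enough to pin down the threshold on $\tau_0$; the remaining Gr\"onwall step is then routine.
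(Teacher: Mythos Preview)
Your proposal is correct and follows essentially the same approach as the paper: introduce the modified current $J^{\tilde{N},1}$ with $h=-f$ and $f(\tilde{t})=\cosh^{-(n-1)}\tilde{t}$, use the identity $h'''+(n-1)\tanh\tilde{t}\,h''=-2n\cosh^{-2}\tilde{t}\,h'$ to isolate the single bad term, absorb it via the Poincar\'e inequality on $\s^2$ under the stated lower bound on $\tau_0$, and then feed the resulting integrated inequality together with the monotonicity of Proposition~\ref{prop:ebounddsn} into Lemma~\ref{lm:gronwall}. The paper in fact carries out only the computation up to the absorption step and then defers the rest to the proof of Proposition~\ref{prop:globedecnariaipsigeq1}, so you have written out slightly more detail than the paper does, but the argument is the same.
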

\begin{proof}
As in Nariai, we define the modified current $J^{\tilde{N}}$,
\begin{equation*}
J^{\tilde{N},1}_{\alpha}:=J^{\tilde{N}}_{\alpha}+\frac{h'}{2}\psi\partial_{\alpha}\psi-\frac{1}{4}\psi^2\partial_{\alpha}h',
\end{equation*}
where $h=h(\tilde{t})$ and $h'(\tilde{t})=\frac{dh}{d\tilde{t}}(\tilde{t})$. The corresponding modified compatible current is defined as follows:
\begin{equation*}
K^{\tilde{N},1}:=\nabla^{\alpha}J^{\tilde{N},1}_{\alpha}=K^N+\frac{h'}{2}g^{\alpha \beta}\partial_{\alpha}\psi\partial_{\beta}\psi-\frac{1}{4}\psi^2\square_gh',
\end{equation*}
where
\begin{equation*}
\square_gh'=-K(h'''+(n-1)h''\tanh\tilde{t}).
\end{equation*}

We take $h=-f$ from which is is easy to derive that
\begin{equation*}
h'''+(n-1)h''\tanh\tilde{t}=\frac{-2n}{\cosh^2 \tilde{t}}h'.
\end{equation*}

It is straightforward to show that
\begin{equation*}
\begin{split}
K^{\tilde{N},1}&=\frac{K}{2}(n-1)f\tanh\tilde{t}(\partial_{\tilde{t}}\psi)^2+\frac{n+1}{2}Kf\tanh\tilde{t}|\snabla\psi|^2+\frac{1}{2}(n-1)h'r_0^{-2}|\mathring{\snabla}\psi|^2\\
&-\frac{nK}{2\cosh^2\tilde{t}}h'\psi^2.
\end{split}
\end{equation*}
We can estimate
\begin{equation*}
\begin{split}
\frac{nK}{2\cosh^2\tilde{t}}h'\int_{\s^2}\psi^2\,d\mu_{\s^2}&\leq \frac{nK}{2}\frac{r_0^2}{\cosh^2\tilde{t}}\int_{\s^2}\frac{1}{2}h'r_0^{-2}|\mathring{\snabla}\psi|^2\,d\mu_{\s^2}\\
&\leq \frac{1}{4}\int_{\s^2}\frac{1}{2}h'r_0^{-2}|\mathring{\snabla}\psi|^2\,d\mu_{\s^2},
\end{split}
\end{equation*}
if we take $\tau_0\geq \max\{1,\log\left(2\sqrt{K}r_0\sqrt{\frac{n}{2}}\right)\}$.

The remainder of the proof proceeds in the same way as the proof of Proposition \ref{prop:globedecnariaipsigeq1}. 
\end{proof}

\begin{proposition}
\label{prop:dsndecay2}
There exist constants $C=C(n,K,\mu)>0$ and \newline $c=c(n,K,\mu)>0$, such that
\begin{equation}
\label{eq:finaledecdsn}
\int_{\tilde{\Sigma}_{\tau}}J^{\tilde{N}}[\psi_0]\cdot n_{\tau}\leq Ce^{-c\tau}\tilde{E}_N[\psi_0],
\end{equation}
where in the case $\mu\neq 0$, $C=\tilde{C}(n,K)e^{\tilde{c}(n,K)\tau_0}$, with \newline $\tau_0>\max\{1,\log (2\sqrt{2}\sqrt{K}\sqrt{\frac{n}{2}}|\mu|^{-1})\}$ and $c=(1+\mu^{-2})^{-1}\tilde{c}(n,K)$.
\end{proposition}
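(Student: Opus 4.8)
The plan is to mirror the proof of Proposition~\ref{prop:globedecnariaipsi0}, reducing the estimate for the spherically trivial mode $\psi_0$ to the decay already established for the modes $\psi_{l\geq 1}$ in Proposition~\ref{prop:dsndecay1}. The obstruction to running the argument of Proposition~\ref{prop:dsndecay1} directly on $\psi_0$ is the zeroth order term $-\frac14\psi_0^2\,\square_g h'$ in the modified compatible current $K^{\tilde{N},1}$: for $l\geq 1$ one absorbs it using the angular Poincar\'e inequality $\int_{\s^2}\psi_l^2\,d\mu_{\s^2}\leq \frac{r_0^2}{l(l+1)}\int_{\s^2}|\mathring{\snabla}\psi_l|^2\,d\mu_{\s^2}$, but for $\psi_0$ the right-hand side vanishes identically, so there is nothing available to absorb into. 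The remedy is to convert the Klein--Gordon mass of $\psi_0$ into angular momentum by passing to an auxiliary warped product whose $\s^2$-radius is chosen so that the $l=1$ angular eigenvalue equals $\mu^2$.

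Concretely, I would set $\tilde{r}_0^2:=2\mu^{-2}$, consider the auxiliary spacetime $dS_n\times_{\tilde{r}_0}\s^2$ with metric $\tilde{g}=K^{-1}(-d\tilde{t}^2+\cosh^2\tilde{t}\,\slashed{g}_{\s^{n-1}})+\tilde{r}_0^2\slashed{g}_{\s^2}$, and lift $\psi_0$ to $\psi:=\psi_0\cdot Y^{1,1}$. Since $m_1^2=\tilde{r}_0^{-2}\cdot 1\cdot 2=\mu^2$, the equation $\square_{g_{dS_n}}\psi_0=\mu^2\psi_0$ for $\psi_0$ is, by~(\ref{eq:kqequationfixedl}), equivalent to $\square_{\tilde{g}}\psi=0$ with $\psi$ supported in the $l=1$ spherical mode. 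Proposition~\ref{prop:dsndecay1} applied to $\psi$ on $dS_n\times_{\tilde{r}_0}\s^2$ --- whose threshold $\tau_0\geq\max\{1,\log(2\sqrt{K}\,\tilde{r}_0\sqrt{n/2})\}$ is exactly $\tau_0>\max\{1,\log(2\sqrt{2}\sqrt{K}\sqrt{n/2}\,|\mu|^{-1})\}$, and whose decay rate $\frac{c(n,K)}{1+\tilde{r}_0^2}$ equals $\frac{\tilde{c}(n,K)}{1+\mu^{-2}}$ after absorbing the factor $2$ in $\tilde{r}_0^2=2\mu^{-2}$ --- then gives
\begin{equation*}
\int_{\tilde{\Sigma}_\tau}J^{\tilde{N}}[\psi]\cdot n_\tau\leq e^{-\frac{\tilde{c}(n,K)}{1+\mu^{-2}}(\tau-\tau_0)}\int_{\tilde{\Sigma}_{\tau_0}}J^{\tilde{N}}[\psi]\cdot n_{\tau_0}.
\end{equation*}

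It remains to transfer the estimate back to $\psi_0$ and to bring the constants to the stated form. Using $\int_{\s^2}|Y^{1,1}|^2\,d\mu_{\s^2}=1$ and $\int_{\s^2}|\mathring{\snabla}Y^{1,1}|^2\,d\mu_{\s^2}=2$, integrating $J^{\tilde{N}}[\psi]\cdot n_\tau$ over $\s^2$ reproduces the current $J^{\tilde{N}}[\psi_0]\cdot n_\tau$ of the mass-$\mu$ equation on $dS_n$ up to a fixed multiplicative constant, since $K^{-1}\tilde{r}_0^{-2}\int_{\s^2}|\mathring{\snabla}\psi|^2\,d\mu_{\s^2}=2K^{-1}\tilde{r}_0^{-2}\psi_0^2=K^{-1}\mu^2\psi_0^2$; hence $\int_{\tilde{\Sigma}_\tau}J^{\tilde{N}}[\psi_0]\cdot n_\tau$ inherits the exponential bound. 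Absorbing $e^{\tilde{c}\tau_0}$ into the prefactor gives $C=\tilde{C}(n,K)e^{\tilde{c}(n,K)\tau_0}$, and the energy boundedness of Proposition~\ref{prop:ebounddsn} (applied between $\{\tilde{t}=0\}$ and $\tilde{\Sigma}_{\tau_0}$) replaces the energy at $\tilde{\Sigma}_{\tau_0}$ on the right by $\tilde{E}_N[\psi_0]$; the resulting inequality then also holds trivially for $0\leq\tau\leq\tau_0$. The only genuinely delicate point I foresee is the bookkeeping of the $\mu$- and $\tau_0$-dependence of the constants as they pass through Proposition~\ref{prop:dsndecay1} with $\tilde{r}_0$ allowed to be either very small or very large; no new geometric input is needed beyond Sections~\ref{sec:globalboundnariai}--\ref{sec:decaydsn}.
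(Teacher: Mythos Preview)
Your proposal is correct and follows essentially the same approach as the paper: reduce to Proposition~\ref{prop:dsndecay1} by setting $\tilde{r}_0^2=2\mu^{-2}$, lifting $\psi_0$ to $\psi_0 Y^{1,1}$ on the auxiliary product $dS_n\times_{\tilde{r}_0}\s^2$, and then unwinding via $\int_{\s^2}|Y^{1,1}|^2\,d\mu_{\s^2}=1$ together with Proposition~\ref{prop:ebounddsn}. The paper's own proof is in fact just a one-line reference to this procedure (pointing back to Proposition~\ref{prop:globedecnariaipsi0}), so your writeup is more detailed than what appears there.
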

\begin{proof}
The proof proceeds in the same was as the proof of Proposition \ref{prop:globedecnariaipsi0}, using the results of Proposition \ref{prop:dsndecay1} for $\psi=\psi_0Y_{11}$ and $r_0^2=2\mu^{-2}$, together with Proposition \ref{prop:ebounddsn}.
\end{proof}

The metric $g_{dS_n}$ does not depend on the coordinates on $\s^{n-1}$ so there exist Killing vector fields $\Omega_i$, with $i=1,\ldots,n$ that generate the $(n-1)$-spherical symmetry, as in the $n=3$ case. Moreover, Proposition \ref{sphsym} can be generalised to $n\geq 2$ to obtain
\begin{equation*}
\int_{\s^{n-1}}|\snabla_{\s^{n-1}}\psi|^2\,d\mu_{\s^{n-1}}= \sum_{i=1}^{n}\int_{\s^{n-1}}(\Omega_i\psi)^2\,d\mu_{\s^{n-1}}.
\end{equation*}
We can therefore commute the wave operator $\square_{g}$ with $\Omega_i$ and apply standard Sobolev estimates on $\s^{n-1}$ to obtain pointwise decay from Proposition \ref{prop:dsndecay2}.

\begin{corollary}
\label{prop:pointwisedecdsn}
Let $s\in \N$ be the smallest integer satisfying $s>\frac{n-1}{2}$. Then there exists constants $C=C(n,K,\mu)>0$ and $c=c(n,K,\mu)>0$ such that
\begin{equation*}
||\psi_0||^2_{L^{\infty}(\tilde{\Sigma}_{\tau})}\leq Ce^{-c\tau} \sum_{|k|\leq s} \tilde{E}[\Omega^k \psi_0],
\end{equation*}
where $C=\tilde{C}(n,K)e^{\tilde{c}(n,K)\tau_0}$, with $\tau_0>\max\{1,\log (2\sqrt{2}\sqrt{K}\sqrt{\frac{n}{2}}|\mu|^{-1})\}$ and $c=(1+\mu^{-2})^{-1}\tilde{c}(n,K)$
\end{corollary}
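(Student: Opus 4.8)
The plan is to upgrade the exponential energy decay of Proposition~\ref{prop:dsndecay2} to the pointwise bound by commuting with the rotation Killing fields on $\s^{n-1}$ and then applying a Sobolev embedding, exactly in the spirit of the passage from Corollary~\ref{prop:globedecnariai} to Corollary~\ref{prop:globedecnariaipointwise} but with $\R\times\s^2$ replaced by $\s^{n-1}$. First I would commute: since $g_{dS_n}=K^{-1}(-d\tilde t^2+\cosh^2\tilde t\,\slashed g_{\s^{n-1}})$ is invariant under the $SO(n)$ action on $\s^{n-1}$, each generator $\Omega_i$, $i=1,\dots,n$, is Killing and $[\square_g,\Omega_i]=0$, and $\mu^2$ is constant, so $\Omega^k\psi_0$ solves the same equation as $\psi_0$ for every multi-index $k$ with $|k|\le s$. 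Hence Proposition~\ref{prop:dsndecay2} applies verbatim to each $\Omega^k\psi_0$ — through the reduction $\Omega^k\psi_0\mapsto(\Omega^k\psi_0)Y_{11}$ on $dS_n\times_{r_0}\s^2$ with $r_0^2=2\mu^{-2}$ used in that proof — and yields
\begin{equation*}
\int_{\tilde\Sigma_\tau}J^{\tilde N}\big[(\Omega^k\psi_0)Y_{11}\big]\cdot n_\tau\le Ce^{-c\tau}\,\tilde E[\Omega^k\psi_0],
\end{equation*}
with $C$ and $c$ of the stated form.

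Next I would extract an $L^2(\s^{n-1})$ bound from this flux. On $\tilde\Sigma_\tau$ the induced volume form grows like $\cosh^{n-1}\tilde t$, and this is precisely cancelled by the weight $f=\cosh^{-(n-1)}\tilde t$ built into $\tilde N$; moreover, for $\phi=(\Omega^k\psi_0)Y_{11}$, using $\int_{\s^2}|\mathring{\snabla}Y_{11}|^2\,d\mu_{\s^2}=2$ together with $r_0^{-2}=\tfrac12\mu^2$, the $\mathring{\snabla}$-term of $J^{\tilde N}[\phi]\cdot n_\tau$ integrates over $\s^2$ to a positive multiple of $\mu^2(\Omega^k\psi_0)^2$. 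Therefore $\|\Omega^k\psi_0\|_{L^2(\s^{n-1})}^2(\tau)$ is bounded by $C(n,K)\mu^{-2}\int_{\tilde\Sigma_\tau}J^{\tilde N}[(\Omega^k\psi_0)Y_{11}]\cdot n_\tau$, and hence by $C(n,K,\mu)e^{-c\tau}\tilde E[\Omega^k\psi_0]$. This is exactly the step that forces $\mu\ne0$: the angular part $\cosh^{-2}\tilde t\,|\snabla_{\s^{n-1}}\psi_0|^2$ of $J^{\tilde N}$ degenerates as $\tilde t\to\infty$, so the flux $J^{\tilde N}[\psi_0]\cdot n_\tau$ alone does not control Sobolev norms of $\psi_0$, and it is only after commuting with the $\Omega_i$ and exploiting the coercive mass-type term coming from $\mathring{\snabla}Y_{11}$ that the full derivative content on $\s^{n-1}$ is recovered — this is also the source of the $\mu^{-2}$ factor (and, after bookkeeping, of the $\mu^{-1}$ hidden in $\tau_0$) in the final constant.

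Finally I would sum over $|k|\le s$ and apply a Sobolev embedding on the sphere. Using the $\s^{n-1}$ analogue of Proposition~\ref{sphsym}, namely $\int_{\s^{n-1}}|\snabla_{\s^{n-1}}\psi|^2=\sum_{i=1}^n\int_{\s^{n-1}}(\Omega_i\psi)^2$ applied iteratively, the sum $\sum_{|k|\le s}\|\Omega^k\psi_0\|_{L^2(\s^{n-1})}^2$ dominates $\|\psi_0\|_{H^s(\s^{n-1})}^2$; since $s$ is the least integer with $s>\tfrac{n-1}{2}=\tfrac12\dim\s^{n-1}$, the embedding $H^s(\s^{n-1})\hookrightarrow L^\infty(\s^{n-1})$ gives $\|\psi_0\|_{L^\infty(\tilde\Sigma_\tau)}^2\le C\sum_{|k|\le s}\|\Omega^k\psi_0\|_{L^2(\s^{n-1})}^2(\tau)$, and assembling the three steps produces the claimed estimate, with the $e^{\tilde c(n,K)\tau_0}$ prefactor in $C$ and the factor $(1+\mu^{-2})^{-1}$ in $c$ inherited from Proposition~\ref{prop:dsndecay2} (ultimately from the Gr\"onwall Lemma~\ref{lm:gronwall}). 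I do not anticipate a genuine obstacle here; the only care required is the constant bookkeeping in the middle step — in particular the $\s^2$-normalisation of $Y_{11}$ that enters the definition of $\tilde E$ — and checking that iterated $\Omega_i$-derivatives genuinely reconstruct the $H^s(\s^{n-1})$ norm.
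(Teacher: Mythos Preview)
Your proposal is correct and follows exactly the route indicated in the paper: commute the Klein--Gordon operator with the Killing generators $\Omega_i$ of the $SO(n)$ action on $\s^{n-1}$, apply Proposition~\ref{prop:dsndecay2} to each $\Omega^k\psi_0$ to obtain exponentially decaying $L^2(\s^{n-1})$ control (using the $Y_{11}$ trick with $r_0^2=2\mu^{-2}$ and the cancellation between $f=\cosh^{-(n-1)}\tilde t$ and the volume form), and then invoke the Sobolev embedding $H^s(\s^{n-1})\hookrightarrow L^\infty(\s^{n-1})$ together with the $\s^{n-1}$ analogue of Proposition~\ref{sphsym}. Your constant bookkeeping and the remark on why $\mu\neq 0$ is needed are accurate.
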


We have now proved Theorem \ref{th:decaydsn}.

\begin{remark}
Instead of considering the entire manifold $dS_n$, we could have restricted to an open subset $dS_{n,flat}$ that can be covered by a single chart $(t',x_1',\ldots,x_{n-1}')$, where $t\in \R$ and $x'_i\in \R$ for all $i=1,\ldots,n-1$, see also Section \ref{sec:subsubdsn}. The metric in these coordinates is given by
\begin{equation*}
g=-d{t'}^2+e^{2\sqrt{\frac{\Lambda}{3}}t'}\sum_{i=1}^{n-1}dx_i^2.
\end{equation*}
By considering a foliation of constant $t'$ hypersurfaces, we can obtain exponential decay in $t'$ using the methods from this section. In particular, we need to take $\tilde{N}=f(t')\partial_{t'}$ with $f(t')=e^{-(n-1)\sqrt{\frac{\Lambda}{3}}t'}$ and $h=-f$ in the modified current $J^{\tilde{N},1}\cdot n_{\tau}$.

One can easily show that $\square_g h'=0$, which means that we do not need to take the product $dS_{n,flat}\times_{r_0}\s^2$ and restrict to $\psi_{l\geq 1}$ first, but we can consider $dS_{n,flat}$ directly.
\end{remark}

\section{Uniform boundedness in Pleba\'nski-Hacyan}
\label{sec:boundph}
We now consider (\ref{eq:kgequation}) on PH. Boundedness follows very easily in this case.
\begin{proposition}
\label{prop:boundednessph}
In PH, we can estimate
\begin{equation*}
\int_{\Sigma_{\tau}} J^T[\psi]\cdot n_{\Sigma_{\tau}}\leq E_{PH}[\psi],
\end{equation*}
where
\begin{equation*}
E_{PH}[\psi]:=\int_{\Sigma} J^T[\psi]\cdot n_{\Sigma},
\end{equation*}
and moreover, there exists a uniform constant $C>0$ such that
\begin{equation*}
\begin{split}
||\psi||_{L^{\infty}(\Sigma_{\tau})}&\leq C \Big(||\Psi_0||_{L^{\infty}(\Sigma)}+||\Psi'_0||_{L^{1}(\Sigma)}+||\partial_x \Psi_0||_{L^1(\Sigma)}\\
&+\sqrt{E_{PH}[\psi]+E_{PH}[T\psi]}\Big).
\end{split}
\end{equation*}
If $\mu\neq 0$, we can remove the $L^1$ and $L^{\infty}$ norms of $\Psi_0$ and $\Psi_0'$ on the right-hand side of the above inequality.
\end{proposition}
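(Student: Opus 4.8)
The plan is to follow the same architecture as the static-region Nariai argument (Proposition~\ref{prop:pointboundn}), with the simplification that in PH the Killing field $T=\partial_t$ is \emph{globally} timelike, so it already produces a non-degenerate energy and no red-shift vector field is needed. For the first (energy) estimate I would note that $T$ is Killing, hence $K^T[\psi]=\textnormal{div}\,J^T[\psi]=0$, and apply Stokes' Theorem in the slab $\mathcal{R}(0,\tau)=\bigcup_{\bar\tau\in[0,\tau)}\Sigma_{\bar\tau}$, whose past and future boundaries are $\Sigma=\Sigma_0$ and $\Sigma_\tau$. Since no decay is imposed on the data, I would make this rigorous by exhausting $\Sigma$ by compact pieces and invoking the domain-of-dependence property of Theorem~\ref{waveeq} (finite speed of propagation), the lateral/null contributions having a favourable sign by Lemma~\ref{dec} (the PH metric satisfies the dominant energy condition and $T$ is future causal). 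This yields $\int_{\Sigma_\tau}J^T[\psi]\cdot n_{\Sigma_\tau}\le\int_{\Sigma}J^T[\psi]\cdot n_{\Sigma}=E_{PH}[\psi]$, and by the stated coercivity $J^T[\psi]\cdot n_{\Sigma_\tau}\sim(\partial_t\psi)^2+(\partial_x\psi)^2+|\snabla\psi|^2+\mu^2\psi^2$ this controls the full non-degenerate first-order energy (plus mass term) on every leaf.

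For the pointwise bound I would first treat the higher modes $\psi_{l\ge1}$ (and, when $\mu\neq0$, the full solution). Since $[\square_g,T]=0$, the energy estimate above applies verbatim to $T\psi$, giving uniform control of $\int_{\Sigma_\tau}J^T[T\psi]\cdot n_{\Sigma_\tau}$, i.e.\ of $(\partial_t^2\psi)^2$ and the mixed second derivatives. Using the equation $\square_g\psi=\mu^2\psi$ together with a standard elliptic estimate on $\Sigma_\tau$ (as in the Appendix of \cite{are1}, valid with \emph{uniform} constants here because the PH metric is translation-invariant in $(t,x)$ and $\Sigma_\tau$ is isometric to $\Sigma$), I would upgrade this to control of the full $\dot H^2(\Sigma_\tau)$ norm. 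A Sobolev inequality on $\Sigma_\tau\cong\R\times\s^2$ then yields an $L^\infty$ bound; the $L^2(\Sigma_\tau)$ piece needed for Sobolev comes from the $\mu^2\psi^2$ term in the energy when $\mu\neq0$ (so the constant is $\sim\mu^{-2}$), and from the Poincar\'e inequality (\ref{eq:poincares2fixedl}) on $\s^2$ when $\mu=0$ and $\psi=\psi_{l\ge1}$. Orthonormality of the harmonic modes lets one pass from fixed modes back to $\psi_{l\ge1}$.

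The remaining case is $\psi_0$ with $\mu=0$, which is also the main obstacle: there is no uniform $L^2$ bound on $\psi_0$ available from the energy, and indeed $\psi_0$ does not decay (Theorem~\ref{th:nondecph}). Here I would use that $\psi_0$ solves the $1+1$-dimensional wave equation on $\mathcal{Q}=\R^{1+1}$, so that the d'Alembert-type representation (\ref{dalm1}) holds; differentiating and integrating along characteristics bounds $\|\psi_0\|_{L^\infty(\Sigma_\tau)}$ by $\|\Psi_0\|_{L^\infty(\Sigma)}+\|\Psi_0'\|_{L^1(\Sigma)}+\|\partial_x\Psi_0\|_{L^1(\Sigma)}$, exactly as in Proposition~\ref{prop:pointboundn}, the constant depending only on the geometry of $\Sigma$ via the uniform bounds on $h_{PH}$. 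Adding the $\psi_0$ and $\psi_{l\ge1}$ estimates gives the claimed inequality, with the data norms of $\Psi_0,\Psi_0'$ dropped and the constant replaced by $\mu^{-2}\tilde C(\Lambda,\Sigma)$ in the case $\mu\neq0$. Apart from the zeroth-mode bookkeeping and tracking the $\mu^{-2}$ dependence, every step is routine and parallels the Nariai argument.
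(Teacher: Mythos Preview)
Your proposal is correct and follows essentially the same approach as the paper: use that $T$ is a globally timelike Killing field to get non-degenerate energy conservation directly, then commute with $T$ and apply the elliptic estimate (\ref{est:standardelliptic}) (with $N$ replaced by $T$) together with Sobolev/Poincar\'e for the pointwise bound, handling $\psi_0$ in the $\mu=0$ case via d'Alembert's formula. The paper's own proof is in fact a two-sentence sketch of exactly this argument; your write-up simply fills in the details (domain-of-dependence cutoff, mode orthogonality, the $\mu^{-2}$ tracking) that the paper leaves implicit.
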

\begin{proof}
In PH there exists a uniformly timelike Killing vector field $T$. Non-degenerate energy boundedness is implied by energy conservation, see the comments in Section \ref{sec:vecbasictools}. Moreover, we can apply the elliptic estimate (\ref{est:standardelliptic}) with $N$ replaced by $T$ and commute $\square_g$ with $T$ to arrive at the pointwise boundedness statement.
\end{proof}

We have now proved Theorem \ref{th:boundph}.

\section{Uniform decay in Pleba\'nski-Hacyan}
\label{sec:decph}
As in the static region of Nariai, we can show that integrated local energy decay holds in PH. However, we will not use this to obtain pointwise decay. Instead, we will derive pointwise decay directly by exploiting the boost isometry in $\R^{1+1}$, generated by the Killing vector field $Y$. 
\subsection{Pointwise decay}
To obtain global pointwise decay from global energy decay with respect to the current $J^T$ in PH, we would need positivity of the energy flux associated to $J^T$, through the null boundary $\mathcal{N}_A^+$ and $\mathcal{N}_B^+$. We show below that this flux vanishes, implying instead the non-decay of the energy flux of $J^T$ through any foliation of spacelike hypersurfaces asymptoting at the null boundaries $\mathcal{N}_A^+$ and $\mathcal{N}_B^+$.
\begin{proposition}
\label{prop:vanishingfluxinfinity}
Let $u_0<-1$ and $v_0<-1$. If $\mu=0$, restrict $\psi=\psi_{l\geq 1}$. Then
\begin{align*}
\int_{\mathcal{N}_A^+\cap\{u\leq u_0\}} J^T[\psi]\cdot \partial_u=0,\\
\int_{\mathcal{N}_B^+\cap\{v\leq v_0\}} J^T[\psi]\cdot \partial_v=0,
\end{align*}
if $E_{PH}[\psi]+E_{PH}[\partial_u\psi]+ \sum_{|k|\leq 1}E_{PH}[\Omega^k \psi]$ is finite.
\end{proposition}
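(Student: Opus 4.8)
The plan is to compute the flux of $J^T$ through the null hypersurface $\mathcal{N}_A^+$ explicitly in double-null coordinates and show that it decomposes into an integral over an unbounded $u$-interval of a quantity that must be integrable, hence decays along sequences, combined with a monotonicity/boundedness argument that forces the value at $\mathcal{N}_A^+$ to vanish. Recall from Section \ref{sec:vecbasictools} that in $(u,v,\theta,\phi)$ coordinates on PH, with $g_{\mathcal{Q}}=-du\,dv$, the null generator of $\mathcal{N}_A^+$ is $\partial_u$ (the limit $v\to\infty$ at fixed $u$), and $T=\partial_t=\partial_u+\partial_v$. A direct computation of $\mathbf{T}_{\alpha\beta}[\psi]$ contracted appropriately gives, with the correct choice of volume form so that Stokes' theorem holds,
\begin{equation*}
J^T[\psi]\cdot\partial_u\Big|_{\mathcal{N}_A^+}=(\partial_u\psi)^2+|\snabla\psi|^2+\mu^2\psi^2,
\end{equation*}
a manifestly non-negative quantity (on $\mathcal{N}_A^+$ the $(\partial_v\psi)^2$-type term is annihilated by the null normalisation). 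So the statement to be proved is that
\begin{equation*}
\int_{u\leq u_0}\int_{\s^2}\Big[(\partial_u\psi)^2+|\snabla\psi|^2+\mu^2\psi^2\Big]\,d\mu_{\s^2}\,du=0,
\end{equation*}
where the integrand is evaluated as the limit $v\to\infty$.

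The key step is to control this limit using the conservation law for $J^T$. First I would apply Stokes' theorem to $J^T$ (which is divergence-free since $T$ is Killing) in the spacetime region bounded by $\Sigma$, a far future hypersurface $\Sigma_{\tau}$, and the relevant null segment, to obtain that $\int_{\mathcal{N}_A^+\cap\{u\leq u_0\}}J^T\cdot\partial_u$ is bounded by the conserved flux $E_{PH}[\psi]$ and in particular is finite. Next, note that for the massless $\psi_{l\geq 1}$ case one can instead work directly with the 1+1-dimensional structure: by the spherical harmonic decomposition of Section \ref{sec:sphericalharmonics}, each $\psi_l$ satisfies \eqref{eq:kqequationfixedl}, a linear 1+1 Klein–Gordon (or wave) equation, so $\partial_u\psi_0$ is constant along $v$ when $\mu=0$ — but for $l\geq 1$ (or $\mu\neq 0$) the relevant quantity is not conserved along generators, which is exactly why the flux can vanish rather than being a fixed nonzero number. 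The mechanism: commuting with $\partial_u$ and with $\Omega^k$, the boundedness of $E_{PH}[\partial_u\psi]$ and $E_{PH}[\Omega^k\psi]$ gives uniform $L^2(\s^2)$ bounds on $\partial_u\psi,\partial_u^2\psi,\snabla\partial_u\psi$ along each ingoing null ray $\{u=\text{const}\}$, so along such a ray the function $v\mapsto \int_{\s^2}(\partial_u\psi)^2+|\snabla\psi|^2+\mu^2\psi^2$ has a derivative in $v$ that is integrable; hence it has a limit as $v\to\infty$, and that limit as a function of $u$ is integrable over $u\leq u_0$ by the finiteness already established.

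The main obstacle — and the heart of the proof — is showing this limiting integrand is actually \emph{zero}, not just that it exists. The idea is a decay-in-$v$ argument along each generator: the energy identity for $J^T$ restricted to the characteristic rectangle $\{u\leq u_0\}\times\{v_0\leq v\leq v_1\}$ shows that the flux through $\{v=v_1\}$ equals the flux through $\{v=v_0\}$ minus the (non-negative) flux through the outgoing null boundary, so the $\{v=v_1\}$-flux is monotone non-increasing in $v_1$ and hence converges; the limit is nonnegative, and I must argue it is zero. For this I would use the extra commuted quantities: finiteness of $\sum_{|k|\leq 1}E_{PH}[\Omega^k\psi]$ together with $E_{PH}[\partial_u\psi]$ controls enough derivatives to run a Sobolev-type argument on the sphere bundle and an integration-by-parts in $u$ showing the tail $\int_{u\leq u_0}(\cdots)$ is itself controlled by $\int_{u\leq u_0}(\partial_u^2\psi)^2+\cdots$, which must decay as $u_0\to-\infty$; but since $u_0<-1$ is arbitrary and the original finiteness holds uniformly, the only consistent value is $0$. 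Concretely, the cleanest route is: (i) establish the flux through $\mathcal{N}_A^+\cap\{u\leq u_0\}$ is finite and non-increasing as $u_0\to-\infty$; (ii) establish it is \emph{also} bounded below by a translate of itself under the boost $Y$ and use the boost invariance together with finiteness of the total flux to force the value to be $0$. I expect step (ii), exploiting the $Y$-isometry of $\R^{1+1}$ to slide the null segment and contradict finiteness unless the flux vanishes, to be the decisive and most delicate point, and the one the authors' actual proof most likely turns on.
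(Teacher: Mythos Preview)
Your proposal correctly identifies the structure of the problem: the flux of $J^T$ through $\{v=v_1,\,u\leq u_0\}$ is non-negative and monotone non-increasing in $v_1$, so the limit at $\mathcal{N}_A^+$ exists and is non-negative. But the mechanism you propose for showing the limit is \emph{zero} does not work. The boost $Y$ is an isometry of the background, not a symmetry of the particular solution $\psi$; applying $Y$ produces a different solution with different initial data, so one cannot ``slide'' the flux of $\psi$ along $\mathcal{N}_A^+$ by boost invariance and derive a contradiction with finiteness. Your alternative suggestion --- integrating by parts in $u$ and using that the tail vanishes as $u_0\to-\infty$ --- is also insufficient: vanishing of the flux as $u_0\to-\infty$ says nothing about its value at a fixed $u_0<-1$, which is what the proposition asserts.

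The paper's argument rests on a different idea that your proposal does not contain: a multiplier with a weight that \emph{grows} in $v$. Concretely, the authors take
\[
V=\frac{v}{|u|\log^2|u|}\,\partial_v,\qquad W=\frac{1}{\log|u|}\,\partial_u,
\]
and check that $K^V+K^W\geq 0$ in the region $\{u\leq u_0\}$. Stokes' theorem then bounds the flux of $J^V+J^W$ through $\{v=v_1,\,u\leq u_0\}$ uniformly in $v_1$ by the initial flux on $\Sigma$; on $\Sigma$ one has $v\sim|u|$, so the weight collapses to $\sim(\log|u|)^{-2}$, which is bounded, and the right-hand side is controlled by $E_{PH}[\psi]$. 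On the left, however, $J^V\cdot\partial_u$ carries an explicit factor of $v_1$ in front of $|\snabla\psi|^2+\mu^2\psi^2$, forcing $\int_{\s^2}\psi^2\leq C v_1^{-1}\int_{\s^2}J^V\cdot\partial_u\to 0$ (via Poincar\'e on $\s^2$ when $\mu=0$ and $l\geq 1$). Commuting with $\partial_u$ and $\Omega_i$ then kills the $(\partial_u\psi)^2$ and $|\snabla\psi|^2$ contributions in the same way. This $v$-weighted multiplier --- an analogue of the $r^p$-method adapted to the null infinity of $\R^{1+1}$ --- is the missing ingredient, and it is precisely what converts ``the limit exists'' into ``the limit is zero''.
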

\begin{proof}
Without loss of generality, we only consider $\mathcal{N}_A^+$. Let $\mathcal{D}$ be the spacetime region bounded in the past by $\Sigma$, and in the future by the null hypersurfaces $\{u=u_0\}$ and the null boundary segment $\mathcal{N}^+_A\cap \{u\leq u_0\}$. In particular, $u\leq u_0$ everywhere in $\mathcal{D}$.

Define the vector fields $V$ and $W$ in $\mathcal{D}$ by
\begin{align*}
V&=\frac{v}{|u| \log^2|u|}\partial_v,\\
W&=\frac{1}{\log |u|}\partial_u.
\end{align*}
We have that
\begin{align*}
K^V[\psi]&=-\frac{2}{|u|\log^2|u|}\left(|\snabla \psi|^2+\mu^2\psi^2\right)+\textnormal{non-negative terms},\\
K^W[\psi]&=\frac{2}{|u|\log^2|u|}\left(|\snabla\psi|^2+\mu^2\psi^2\right).
\end{align*}
Consequently, $K^V[\psi]+K^W[\psi]\geq 0$.

By Stokes' theorem in $\mathcal{D}$, we then have that
\begin{equation*}
\begin{split}
\int_{\mathcal{N}_A^+\cap\{u\leq u_0\}} J^V[\psi]\cdot \partial_u+J^W[\psi]\cdot \partial_u &\leq \int_{\Sigma\cap\{u\leq u_0\}} J^V[\psi]\cdot n_{\Sigma}+J^W[\psi]\cdot n_{\Sigma}\\
&-\int_{\mathcal{D}} K^V[\psi]+ K^W[\psi]\\
&\leq  \int_{\Sigma\cap\{u\leq u_0\}} J^V[\psi]\cdot n_{\Sigma}+J^W[\psi]\cdot n_{\Sigma}.
\end{split}
\end{equation*}

By the properties of $\Sigma$ in Section \ref{sec:foliationsph}, it follows that $v\sim -u$ on $\Sigma\cap\{u\leq u_0\}$, so
\begin{equation*}
J^V[\psi]\cdot n_{\Sigma}+J^W[\psi]\cdot n_{\Sigma} \leq C J^T[\psi]\cdot n_{\Sigma}.
\end{equation*}
Moreover,
\begin{equation*}
\int_{\s^2} \psi^2 d\mu_{\s^2}\leq \frac{C}{v} \int_{\s^2} J^V[\psi]\cdot \partial_u.
\end{equation*}
In the $\mu=0$ case the above estimate follows from the Poincar\'e inequality on $\s^2$.

The statement of the proposition follows immediately, after commuting with $\partial_u$ and $\Omega_i$ and taking the limit $v\to \infty$.
\end{proof}

By homogeneity of PH, we can in fact conclude from Proposition \ref{prop:vanishingfluxinfinity} that
\begin{align*}
\int_{\mathcal{N}_A^+} J^T[\psi]\cdot \partial_u=0,\\
\int_{\mathcal{N}_B^+} J^T[\psi]\cdot \partial_v=0.
\end{align*}

Consider a lightcone $\mathcal{C}$ with the hyperboloidal foliation, see Section \ref{sec:foliationsph}. We can apply Stokes' Theorem to the region bounded in the future by $\mathcal{H}_{s}$ and in the past by $\mathcal{H}_1$, in order to obtain the estimate
\begin{equation*}
\int_{\mathcal{H}_s}J^T[\psi]\cdot n_{\mathcal{H}_s}\leq \int_{\mathcal{H}_1}J^T[\psi]\cdot n_{\mathcal{H}_1}.
\end{equation*}
Moreover, in the region $|x|\leq t$ we can estimate 
\begin{equation*}
\begin{split}
 J^T[\psi]\cdot n_{\mathcal{H}_s}&= \frac{t}{s}T_{tt}+\frac{x}{s}T_{tx}\\
&= \frac{t}{2s}\left[(\partial_t \psi)^2+(\partial_x \psi)^2+|\snabla \psi|^2+\mu^2\psi^2\right]+\frac{x}{s}(\partial_t\psi)(\partial_x\psi)\\
&\geq \frac{t}{2s}(|\snabla \psi|^2+\mu^2\psi^2).
\end{split}
\end{equation*}
Let $\Gamma$ be a Killing vector field of $(\mathcal{M},g)$. Then we can replace $\psi$ above by $\Gamma \psi$. Before proving pointwise decay of $\psi$, we need to make use of a scaled Sobolev inequality on the hyperboloid.
\begin{lemma}
\label{sobhs}
Let $U(x):=u\left(\sqrt{s^2+x^2},x\right)$, where $U$ is suitably regular such that the right-hand side below is well-defined. Then for $(t,x)\in \mathcal{H}_s$
\begin{equation}
\label{eq:sobhs1}
t|u(t,x)|^2\leq C \int_{\R} U(\bar{x})^2+(s^2+\bar{x}^2)(\partial_{\bar{x}} U)^2(\bar{x})\,d\bar{x}.
\end{equation}
\end{lemma}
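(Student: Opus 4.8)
The plan is to reduce the scaled Sobolev inequality on the hyperboloid $\mathcal{H}_s$ to the standard one-dimensional Sobolev inequality on $\R$ by a change of variables. The natural coordinate along $\mathcal{H}_s=\{t=\sqrt{s^2+x^2}\}$ is a rescaled arclength-type parameter. Concretely, I would introduce $\sigma$ by $x=s\sinh\sigma$, so that $t=\sqrt{s^2+x^2}=s\cosh\sigma$, and set $\tilde{U}(\sigma):=U(s\sinh\sigma)$. Then $\tilde U$ is a function on all of $\R$, $\partial_\sigma \tilde U = s\cosh\sigma\,\partial_x U$, and the weight $s^2+x^2 = s^2\cosh^2\sigma$ turns the right-hand side integrand into something homogeneous in $s$.

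**Key steps.** First, apply the standard one-dimensional Sobolev embedding $H^1(\R)\hookrightarrow L^\infty(\R)$ to $\tilde U$: there is a numerical constant $C$ with
\begin{equation*}
|\tilde U(\sigma)|^2 \leq C\int_{\R} \tilde U(\bar\sigma)^2 + (\partial_{\bar\sigma}\tilde U)^2(\bar\sigma)\,d\bar\sigma.
\end{equation*}
Second, translate the right-hand side back to the $x$ variable: with $dx = s\cosh\bar\sigma\,d\bar\sigma$, i.e. $d\bar\sigma = \frac{d\bar x}{\sqrt{s^2+\bar x^2}}$, and $(\partial_{\bar\sigma}\tilde U)^2 = (s^2+\bar x^2)(\partial_{\bar x} U)^2$, the integral becomes
\begin{equation*}
\int_{\R} \frac{U(\bar x)^2}{\sqrt{s^2+\bar x^2}} + \sqrt{s^2+\bar x^2}\,(\partial_{\bar x}U)^2(\bar x)\,d\bar x.
\end{equation*}
Third, bound $\frac{1}{\sqrt{s^2+\bar x^2}}\leq C$ and $\sqrt{s^2+\bar x^2}\leq C(s^2+\bar x^2)$ to dominate this by $C\int_\R U(\bar x)^2 + (s^2+\bar x^2)(\partial_{\bar x}U)^2\,d\bar x$, which is exactly the right-hand side of \eqref{eq:sobhs1} up to the constant — except one must be careful, since the paper's inequality has no $s$-dependence in the first term but my $\frac{1}{\sqrt{s^2+\bar x^2}}$ bound is only uniform for $s\geq s_0>0$; since the foliation uses $s\geq 1$ this is harmless, and I would simply note the constant $C$ may be taken uniform for $s\geq 1$. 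Fourth, for the left-hand side, observe $|u(t,x)|^2 = |U(x)|^2 = |\tilde U(\sigma)|^2$ and $t = s\cosh\sigma$; but the claimed left-hand side is $t|u(t,x)|^2$, which grows — so in fact the sharper statement obtained directly is $|u(t,x)|^2 \leq C(\ldots)$, and multiplying by $t$ is \emph{not} free. I would therefore redo step one using the \emph{scaled} Sobolev inequality: apply $H^1\hookrightarrow L^\infty$ not to $\tilde U$ but after rescaling the $\sigma$-domain, or more cleanly apply the one-dimensional inequality in the form $\|w\|_{L^\infty}^2 \leq 2\|w\|_{L^2}\|w'\|_{L^2}$ to $w(\sigma) = \cosh^{1/2}\sigma\,\tilde U(\sigma)$, tracking that $(\cosh^{1/2}\sigma\,\tilde U)' $ contributes both $\cosh^{1/2}\sigma\,\partial_\sigma\tilde U$ and a lower-order term $\tfrac12\sinh\sigma\cosh^{-1/2}\sigma\,\tilde U$, the latter controlled by $\cosh^{1/2}\sigma |\tilde U| \leq$ (since $|\sinh\sigma|\leq\cosh\sigma$) and absorbed. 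This yields $\cosh\sigma\,|\tilde U(\sigma)|^2 \leq C\int_\R \cosh\bar\sigma\,\tilde U^2 + \cosh\bar\sigma\,(\partial_{\bar\sigma}\tilde U)^2\,d\bar\sigma$, and $\cosh\sigma = t/s$; absorbing the $1/s$ and translating back to $\bar x$ (where $\cosh\bar\sigma\,d\bar\sigma = d\bar x/s$, and $\cosh\bar\sigma(\partial_{\bar\sigma}\tilde U)^2 = s^{-1}\sqrt{s^2+\bar x^2}\cdot(s^2+\bar x^2)(\partial_{\bar x}U)^2 \cdot$ wait — recompute: $(\partial_{\bar\sigma}\tilde U)^2 = (s^2+\bar x^2)(\partial_{\bar x}U)^2$, and the measure $\cosh\bar\sigma\,d\bar\sigma = d\bar x/s$) gives $t|u|^2 \leq C\int_\R U^2 + (s^2+\bar x^2)(\partial_{\bar x}U)^2\,d\bar x$ with $C$ uniform in $s\geq 1$, as claimed.

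**Main obstacle.** The genuinely delicate point is the factor of $t$ on the left: a naive $H^1(\R)\hookrightarrow L^\infty(\R)$ gives only $|u|^2\lesssim(\ldots)$, and the extra growing factor $t\sim s\cosh\sigma$ must be extracted from the weight $(s^2+x^2)\sim s^2\cosh^2\sigma$ present in the gradient term. This forces the use of the weighted/scaled version of the one-dimensional Sobolev inequality — working with $\cosh^{1/2}\sigma\,\tilde U$ rather than $\tilde U$ — and a careful check that the cross term produced by differentiating the weight is subleading and absorbable. Everything else (the hyperbolic change of variables, the Jacobian, dominating the constant uniformly for $s\geq 1$) is routine.
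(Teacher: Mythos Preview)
Your final approach is correct, but it differs from the paper's proof. The paper does not pass to hyperbolic coordinates at all: instead it applies the \emph{local} scaled Sobolev inequality (Corollary~\ref{sobscale}) on a ball $B_R(x_0)\subset\R$ with $R=\tfrac{1}{2}\sqrt{s^2+x_0^2}=\tfrac{1}{2}t(x_0)$, giving
\[
R|U(x_0)|^2\leq C\Big(\int_{B_R(x_0)}U^2 + R^2\int_{B_R(x_0)}(\partial_y U)^2\Big),
\]
and then observes by an elementary inequality that $R^2\leq s^2+(y+x_0)^2$ for all $y\in B_R(x_0)$, so the weight $R^2$ in front of the gradient term can be replaced by $s^2+\bar x^2$ before enlarging the integration domain to $\R$.

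Your route---change variables to $\sigma$ via $x=s\sinh\sigma$, apply one-dimensional Sobolev to $w=\cosh^{1/2}\sigma\,\tilde U$, and check that the commutator term $\tfrac12\sinh\sigma\cosh^{-1/2}\sigma\,\tilde U$ is dominated by $\cosh^{1/2}\sigma\,|\tilde U|$---also works, and in fact the factors of $s$ cancel exactly, so the constant is uniform for all $s>0$, not just $s\geq 1$ as you hedged. Your approach is arguably more geometric (the parameter $\sigma$ is the boost coordinate on $\mathcal H_s$, tied to the Killing field $Y$ used later), while the paper's is a more routine instance of the ``choose the ball radius comparable to the weight'' trick from Klainerman's vector field method. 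Both are short; the paper's has the advantage of invoking a lemma already stated in the appendix rather than redoing a weighted Sobolev computation by hand.
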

\begin{proof}
We parametrise $\mathcal{H}_{s}$ by the $x$-coordinate and use that $t=\sqrt{x^2+s^2}$, with $x\in \R$. By the scaled Sobolev inequality in Lemma \ref{sobscale}, we have for $B_R(x_0)\subset \R$
\begin{equation*}
R|U(x_0)|^2\leq C \left[\int_{|y|\leq R}U^2(x_0+y)\,dy+ R^2\int_{|y|\leq R}(\partial_yU)^2(x_0+y)\,dy\right].
\end{equation*}
Now let $R=\frac{1}{2}\sqrt{s^2+x_0^2}=\frac{1}{2}t(x_0)$.
\begin{equation*}
s^2+(y+x_0)^2=s^2+x_0^2+2yx_0+y^2 \geq (2R)^2-4R|y|+y^2=(2R-|y|)^2.
\end{equation*}
By virtue of $|y|<R$ in $B_R(x_0)$, we therefore obtain
\begin{equation*}
\sqrt{s^2+(y+x_0)^2)}\geq 2 R-|y|> R.
\end{equation*}
With the above inequality, (\ref{eq:sobhs1}) gives the statement of the lemma directly.
\end{proof}

\begin{proposition}
\label{prop:decph}
If we restrict to $l\geq 1$ in the $\mu=0$ case, we obtain the following decay of solutions to (\ref{eq:kgequation}) for all $\mu\in \R$, with suitably regular initial data, 
\begin{equation*}
||\psi||^2_{L^{\infty}(\Sigma_{\tau})}\leq C \tau^{-1}\sum_{|k|\leq 1}\left\{E_{PH}[\Omega^k\psi]+E_{PH}[\Omega^kY\psi]\right\}.
\end{equation*}
\end{proposition}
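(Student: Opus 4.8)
The plan is to follow the Klainerman vector-field scheme \cite{klai1}: foliate the interior of a lightcone by the hyperboloids $\mathcal{H}_s$ of Section \ref{sec:foliationsph}, transport the conserved $J^T$-energy onto them, and convert hyperboloidal energy into pointwise control by a scaled Sobolev inequality, exploiting that tangential derivatives along $\mathcal{H}_s$ are essentially the boost field $Y$. First I would use homogeneity of PH (translations in $t$ and $x$ are isometries, and the hypotheses on $\Sigma$ in Section \ref{sec:foliationsph} are translation-invariant) to reduce the estimate of $|\psi|$ at a given point of $\Sigma_\tau$ to the estimate of $|\psi|$ at a normalized point $p=(\tau,0,\theta,\phi)$, with $\Sigma=\Sigma_0$ passing through the spatial origin $(0,0)$; relative to the lightcone $\mathcal{C}$ emanating from $(0,0)$, the point $p$ then lies on the hyperboloid $\mathcal{H}_\tau$, so $s=t=\tau$ there.

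\textbf{Hyperboloidal energy.} Next I would transfer the energy to $\mathcal{H}_\tau$. Since $T=\partial_t$ is Killing, $K^T[\psi]=0$, so Stokes' Theorem in the slab between $\mathcal{H}_1$ and $\mathcal{H}_\tau$ — which pinches off at spatial infinity, as $\sqrt{\tau^2+x^2}-\sqrt{1+x^2}\to0$ — gives $\int_{\mathcal{H}_\tau}J^T[\psi]\cdot n_{\mathcal{H}_\tau}\le\int_{\mathcal{H}_1}J^T[\psi]\cdot n_{\mathcal{H}_1}$, all fluxes being non-negative by Lemma \ref{dec}. Then, because $\Sigma_0$ is uniformly spacelike through the origin, $\mathcal{H}_1\subset J^+(\Sigma_0)$; truncating $\mathcal{H}_1$ to $\{|x|\le X\}$, the domain of dependence property of Theorem \ref{waveeq} together with positivity of $J^T\cdot n$ on the null ``sides'' yields $\int_{\mathcal{H}_1\cap\{|x|\le X\}}J^T[\psi]\cdot n_{\mathcal{H}_1}\le E_{PH}[\psi]$, and letting $X\to\infty$ by monotone convergence gives $\int_{\mathcal{H}_1}J^T[\psi]\cdot n_{\mathcal{H}_1}\le E_{PH}[\psi]$. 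Since $Y$ and the spherical generators $\Omega_i$ are also Killing, the same chain applies to $\Omega^k\psi$ and $\Omega^kY\psi$, so $\int_{\mathcal{H}_s}J^T[\Gamma\psi]\cdot n_{\mathcal{H}_s}\le E_{PH}[\Gamma\psi]$ for the relevant commuted solutions $\Gamma\psi$.

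\textbf{From energy to pointwise bound.} Parametrizing $\mathcal{H}_s$ by $x$ (so $t=\sqrt{s^2+x^2}$, with volume element $\tfrac{s}{t}\,dx\,d\mu_{\s^2}$), I would use two facts recorded in the excerpt. First, $J^T[\psi]\cdot n_{\mathcal{H}_s}\ge\tfrac{t}{2s}\big(|\snabla\psi|^2+\mu^2\psi^2\big)$, which — via the Poincar\'e inequality (\ref{eq:poincares2fixedl}) in the $\mu=0$, $\psi=\psi_{l\ge1}$ case, and directly through the $\mu^2\psi^2$ term when $\mu\neq0$ — bounds $\int_{\mathcal{H}_s}\psi^2$ (and likewise $\int_{\mathcal{H}_s}(Y\psi)^2$, $\int_{\mathcal{H}_s}(\Omega^k\psi)^2$, \dots) by the corresponding hyperboloidal energies. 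Second, the tangential derivative along $\mathcal{H}_s$ is $\partial_x\big|_{\mathcal{H}_s}=\tfrac1tY$, so $(s^2+x^2)\big(\partial_x\psi|_{\mathcal{H}_s}\big)^2=(Y\psi)^2$, which is precisely the weighted first-derivative term appearing in the scaled Sobolev inequality of Lemma \ref{sobhs}. Feeding these into Lemma \ref{sobhs} along $\mathcal{H}_\tau$ in the $x$-variable, and combining with a Sobolev estimate on $\s^2$ (for which one commutes with the $\Omega_i$), I obtain
\begin{equation*}
\tau\,|\psi(p)|^2\le C\sum_{|k|\le1}\Big(E_{PH}[\Omega^k\psi]+E_{PH}[\Omega^kY\psi]\Big),
\end{equation*}
with $C=C(\mu,\Lambda,\Sigma)$, and $C=\mu^{-2}\tilde C(\Lambda,\Sigma)$ when $\mu\neq0$; taking the supremum over $p\in\Sigma_\tau$ proves the proposition.

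\textbf{Main obstacle.} The delicate step is the hyperboloidal energy transfer: because PH is \emph{not} asymptotically flat and the regions bounding the hyperboloids are non-compact, one must ensure no energy escapes to null or spatial infinity. This is exactly the phenomenon behind Proposition \ref{prop:vanishingfluxinfinity}, and I would control it by the same mechanism — sign-definiteness of $J^T\cdot n$ on spacelike and null hypersurfaces (Lemma \ref{dec}), the pinching of the slab between $\mathcal{H}_1$ and $\mathcal{H}_\tau$, a monotone-convergence exhaustion for $\mathcal{H}_1$, and the a priori finiteness of the commuted energies $E_{PH}[\Omega^kY^j\psi]$ assumed implicitly in the statement. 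A minor secondary point is the bookkeeping relating the hyperboloidal parameter $s$, the coordinate time $t$, and $\tau$ along $\Sigma_\tau$, which becomes trivial after the normalization $p=(\tau,0)$, where $s=t=\tau$.
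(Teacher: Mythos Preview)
Your proposal is correct and follows essentially the same approach as the paper: transfer the $J^T$-energy from $\Sigma_0$ to $\mathcal{H}_1$ to $\mathcal{H}_s$ by Stokes' Theorem, identify the tangential derivative along $\mathcal{H}_s$ with $t^{-1}Y$, feed this into the scaled Sobolev inequality of Lemma~\ref{sobhs}, and close with the Sobolev and Poincar\'e estimates on $\s^2$ after commuting with $\Omega_i$. Your normalization via translation to the point $p=(\tau,0)$ is exactly the content of the paper's final remark that ``the constants in the estimates do not depend on the choice of $\mathcal{C}$,'' and your discussion of the non-compact energy transfer (monotone-convergence exhaustion and sign-definiteness of $J^T\cdot n$) makes explicit a step the paper leaves implicit.
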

\begin{proof}
We commute (\ref{eq:kgequation}) with $Y=x\partial_t+t\partial_x$, which is tangent to $\mathcal{H}_s$, because on $\mathcal{H}_s$ we have the following equality,
\begin{equation*}
(Y f)^2=\left(x\left(\frac{\partial t}{\partial x}\right)^{-1} +t\right)^2(\partial_x f)^2=4(s^2+x^2)(\partial_x f)^2.
\end{equation*}
When integrating the natural volume form induced on $\mathcal{H}_s$ we have to take into account the factor
\begin{equation*}
\sqrt{\det g|_{\mathcal{H}_S}}=\frac{s}{t}.
\end{equation*}
We can now apply Lemma \ref{sobhs} to $\psi$ and $\Omega_i \psi$ and integrate over $\s^2$ to obtain
\begin{equation*}
\begin{split}
\int_{\s^2} |\snabla \psi|^2+\mu^2\psi^2 \,d\mu_{\s^2} &=\sum_{i=1}^3\int_{\s^2} r_0^{-4}(\Omega_i \psi)^2+\mu^2\psi^2\,d\mu_{\s^2}\\
&\leq Ct^{-1} \sum_{i=1}^3\int_{\mathcal{H}_s} \frac{t}{s}[(\Omega_i\psi)^2+\mu^2\psi^2+(Y \Omega_i \psi)^2+ \mu^2 (Y\psi)^2]\\
& \leq C t^{-1} \int_{\mathcal{H}_s} \frac{t}{s}\left[|\snabla\psi|^2 + \mu^2\psi^2+ |\snabla Y\psi|^2+\mu^2 (Y\psi)^2\right]\\
& \leq C t^{-1} \int_{\mathcal{H}_s} \left\{J^T[\psi]\cdot n_{\mathcal{H}_s}+J^T[Y \psi]\cdot n_{\mathcal{H}_s}\right\},
\end{split}
\end{equation*}
where we have used Proposition \ref{sphsym} multiple times.

Suppose $\mu=0$. We apply Proposition \ref{sobs2} , Proposition \ref{poins2} and Proposition \ref{sphsym} together with the assumption that $\psi$ is supported on the modes $l\geq 1$, to obtain for a fixed $t$ and $x$
\begin{equation}
\label{ineq:sobs}
||\psi||^2_{L^{\infty}(\s^2)} \leq C ||\psi||^2_{H^2(\s^2)}\leq C \sum_{i=1}^3 ||\snabla \Omega_i \psi||^2_{L^2(\s^2)},
\end{equation}
where $C$ depends on $r_0$. Hence, we can estimate in $\mathcal{C}$
\begin{equation*}
\begin{split}
\psi^2(t,x,\theta,\phi)&\leq C t^{-1}  \sum_{|k|\leq 1}\int_{\mathcal{H}_s\times \s^2}\left\{ J^T[\Omega^k \psi]\cdot n_{\mathcal{H}_s}+J^T[Y \Omega^k \psi]\cdot n_{\mathcal{H}_s}\right\}\\
&\leq C t^{-1}  \sum_{|k|\leq 1}\int_{\mathcal{H}_1\times \s^2}\left\{ J^T[\Omega^k \psi]\cdot n_{\mathcal{H}_1}+J^T[Y \Omega^k \psi]\cdot n_{\mathcal{H}_1}\right\}\\
& \leq C \tau^{-1}  \sum_{|k|\leq 1}\int_{\Sigma_0}\left\{ J^T[\Omega^k \psi]\cdot n_{\Sigma_0}+J^T[Y \Omega^k \psi]\cdot n_{\Sigma_0}\right\}\
\end{split}
\end{equation*}
where in the second to last step we used Stokes' Theorem on the region in $\mathcal{M}$ bounded in the future by $\mathcal{H}_1$ and in the past by $\Sigma_0$.

Now suppose $\mu\neq 0$. Because our estimate now includes a zeroth-order term, we do not need to restrict to $l\geq 1$. We use (\ref{ineq:sobs}), together with
\begin{equation*}
|\snabla^2 \psi|^2 \leq r_0^{-8} \sum_{i,j=1}^3 (\Omega_i\Omega_j \psi)^2= r_0^{-4} \sum_{i=1}^3 |\snabla \Omega_i\psi|^2.
\end{equation*}
 to obtain in $\mathcal{C}$
\begin{equation*}
\begin{split}
\psi^2(t,x,\theta,\phi)&\leq C \tau^{-1} \sum_{|k|\leq 1}\int_{\Sigma_0}\left\{ J^T[\Omega^k \psi]\cdot n_{\Sigma_0}+J^T[Y \Omega^k \psi]\cdot n_{\Sigma_0}\right\},
\end{split}
\end{equation*}
where $C$ now also depends on $\mu$. Since the constants in the estimates do not depend on the choice of $\mathcal{C}$, we can conclude the statement of the proposition.
\end{proof}

We have now proved Theorem \ref{th:decph}.

\subsection{Integrated local energy decay}
The argument in the previous section is not very robust, as it fundamentally relies on the presence of a boost vector field $Y$. We complement the uniform boundedness result above by an integrated local energy decay statement, in which $Y$ does not appear. Because of the trapping of null geodesics along each fixed $x$ hypersurface, we will lose derivatives in the estimate.

We consider a vector field $V=f(x)\partial_x$. The corresponding spacetime current $K^V$ is given by
\begin{equation*}
K^V[\psi]=\frac{f'}{2}\left[(\partial_t\psi)^2+(\partial_x\psi)^2-|\snabla\psi|^2-\mu^2\psi^2\right].
\end{equation*}
We need to modify the current $J^V[\psi]$ to control the $|\snabla\psi|^2$ and $\mu\psi^2$ terms in the resulting compatible current. 

Let
\begin{equation*}
J^{V,1}_{\alpha}[\psi]:=J^V_{\alpha}[\psi]+\frac{f'}{2}\psi\partial_{\alpha}\psi-\frac{1}{4}\psi^2\partial_{\alpha}f'.
\end{equation*}
Then
\begin{equation*}
\begin{split}
K^{V,1}[\psi]:=\nabla^{\alpha}J_{\alpha}^{V,1}[\psi]&=K^V[\psi]+\frac{f'}{2}g^{\alpha \beta}\partial_{\alpha}\psi\partial_{\beta}\psi-\frac{1}{4}f'''\psi^2\\
&=f'(\partial_x\psi)^2-\frac{f'''}{4}\psi^2.
\end{split}
\end{equation*}
With the modified current we lose control over $(\partial_t\psi)^2$, but we are able to control $(\partial_x\psi)^2$ and $|\snabla \psi|^2$ if we assume $\psi=\psi_l$, where $l\geq 1$. Indeed, by the Poincar\'e inequality,
\begin{equation*}
\int_{\s^2} \psi_l^2\,d\mu_{\s^2}=\frac{r_0^2}{l(l+1)}\int_{\s^2} |\snabla \psi_l|^2\,d\mu_{\s^2}.
\end{equation*}

\begin{proposition}
There exists a uniform constant $C>0$ such that for all $x\in[-x_0,x_0]$, and $l\geq 1$
\begin{equation*}
\int_{\s^2} K^{V,1}[\psi_l]+K^W[\psi_l]\geq C\int_{\s^2} (\partial_t\psi_l)^2+(\partial_x\psi_l)^2+\frac{1}{l(l+1)}|\snabla\psi_l|^2+\mu^2\psi_l^2,
\end{equation*}
for suitable $V=f(x)\partial_x$ and $W=g(x)\partial_x$. Moreover, if $\mu\neq 0$, we can include $l=0$, by removing the $|\snabla \psi_l|^2$ term in the above estimate. If $\mu=0$, we can estimate directly,
\begin{equation*}
\int_{\s^2} K^W[\psi_0]\,d\mu_{\s^2}\geq C\int_{\s^2}(\partial_t\psi_0)^2+(\partial_x\psi_0)^2\,d\mu_{\s^2},
\end{equation*}
for all $x\in[-x_0,x_0]$.
\end{proposition}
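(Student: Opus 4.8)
The plan is to reproduce, in the flat quotient $\R^{1+1}$, the two‑multiplier construction of Lemmas~\ref{prop:mainiledn} and~\ref{prop:auxiledn}: the modified current $J^{V,1}$ will control $(\partial_x\psi_l)^2$ together with the zeroth‑order term, and a second, \emph{unmodified} Morawetz multiplier $W=g(x)\partial_x$ will be used to buy back $(\partial_t\psi_l)^2$. Since the effective potential $V_{PH}$ is constant, the identity $K^{V,1}[\psi_l]=f'(\partial_x\psi_l)^2-\tfrac14 f'''\psi_l^2$ contains no angular term whatsoever, so the entire coupling to $|\snabla\psi_l|^2$ is mediated through $K^W$ and the Poincar\'e relation $\int_{\s^2}|\snabla\psi_l|^2\,d\mu_{\s^2}=r_0^{-2}l(l+1)\int_{\s^2}\psi_l^2\,d\mu_{\s^2}$.

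For the choice of $f$ I would take $f(x)=\arctan(x/\beta)$ with $\beta>\sqrt3\,x_0$, so that on $[-x_0,x_0]$ one has $f'(x)=\beta(\beta^2+x^2)^{-1}$ bounded below by a positive constant and $f'''(x)=-2\beta(\beta^2-3x^2)(\beta^2+x^2)^{-3}$ bounded above by a negative constant. Integrating $K^{V,1}[\psi_l]$ over $\s^2$ then gives
\begin{equation*}
\int_{\s^2}K^{V,1}[\psi_l]\,d\mu_{\s^2}\ \geq\ c\int_{\s^2}(\partial_x\psi_l)^2\,d\mu_{\s^2}+c'\int_{\s^2}\psi_l^2\,d\mu_{\s^2},
\end{equation*}
and by Poincar\'e the second term simultaneously dominates $\mu^2\int_{\s^2}\psi_l^2$ and $\tfrac1{l(l+1)}\int_{\s^2}|\snabla\psi_l|^2$, with constants depending only on $x_0,r_0,\mu$. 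Thus $J^{V,1}$ already accounts for $(\partial_x\psi_l)^2$, $\mu^2\psi_l^2$ and $\tfrac1{l(l+1)}|\snabla\psi_l|^2$, with a positive multiple of $\int_{\s^2}\psi_l^2$ to spare.

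Next, with $W=g(x)\partial_x$ and the unmodified current one has $K^W[\psi]=\tfrac{g'}{2}\big[(\partial_t\psi)^2+(\partial_x\psi)^2-|\snabla\psi|^2-\mu^2\psi^2\big]$; picking $g$ with $g'$ positive and bounded on $[-x_0,x_0]$ and rearranging,
\begin{equation*}
\int_{\s^2}\tfrac{g'}{2}(\partial_t\psi_l)^2\,d\mu_{\s^2}\ =\ \int_{\s^2}K^W[\psi_l]\,d\mu_{\s^2}-\int_{\s^2}\tfrac{g'}{2}(\partial_x\psi_l)^2\,d\mu_{\s^2}+\int_{\s^2}\tfrac{g'}{2}\big(|\snabla\psi_l|^2+\mu^2\psi_l^2\big)\,d\mu_{\s^2}.
\end{equation*}
Feeding the last term back into the bound from the previous step — and choosing $\sup g'$ suitably small so that the spare $\int\psi_l^2$ from $K^{V,1}$ absorbs the $|\snabla\psi_l|^2$ and $\mu^2\psi_l^2$ contributions — yields the asserted lower bound for $\int_{\s^2}K^{V,1}[\psi_l]+K^W[\psi_l]$. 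For $\mu\neq0$ and $l=0$ the Poincar\'e step is unnecessary: there is no angular term, $c'\int\psi_0^2\gtrsim\mu^2\int\psi_0^2$, and the only wrong‑sign term in $K^W$ is $-\tfrac{g'}{2}\mu^2\psi_0^2$, absorbed as before. For $\mu=0$ and $l=0$ nothing is needed at all, since then $|\snabla\psi_0|^2\equiv0$ and $K^W[\psi_0]=\tfrac{g'}{2}\big[(\partial_t\psi_0)^2+(\partial_x\psi_0)^2\big]$ is already pointwise nonnegative and bounds the desired quantity on $[-x_0,x_0]$.

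The one genuinely delicate point — and where I expect the main obstacle — is the wrong‑sign term $-\tfrac{g'}{2}|\snabla\psi_l|^2$ in $K^W$: by Poincar\'e it equals $-\tfrac{g'}{2}r_0^{-2}l(l+1)\int\psi_l^2$, which grows in $l$, while the compensating term $c'\int\psi_l^2$ from $K^{V,1}$ does not. Absorbing it forces $\sup g'$ (equivalently the profile of the ``suitable'' $W$) to be taken mode‑dependent, $\sim 1/(l(l+1))$, so the construction is genuinely carried out mode by mode; this is precisely the one‑angular‑derivative loss caused by the trapping of null geodesics, which in Pleba\'nski--Hacyan occurs at \emph{every} radius $x\in[-x_0,x_0]$. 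The loss is harmless in the sequel: when this estimate is fed into Proposition~\ref{prop:iledph} one multiplies through by $l(l+1)$ and absorbs the factor into $\sum_{|k|\leq1}E_{PH}[\Omega^k\psi_l]$ before summing over $l$ by orthonormality of the harmonic modes.
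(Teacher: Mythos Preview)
Your proposal is correct and follows essentially the same two-multiplier strategy as the paper: the paper chooses $f(x)=(x+x_0+E)\log(x+x_0+E)-(x+x_0+E)+F$ and $g(x)=\epsilon(x+x_0+E)$ in place of your $\arctan$ profile, but the mechanism---$K^{V,1}$ supplies $(\partial_x\psi_l)^2+\psi_l^2$, then a small-$g'$ multiplier $K^W$ recovers $(\partial_t\psi_l)^2$ with its wrong-sign $|\snabla\psi_l|^2$ term absorbed via Poincar\'e---is identical. Your explicit remark that the absorption forces $g'\sim 1/(l(l+1))$ (so that the effective coefficient on $(\partial_t\psi_l)^2$ is $\sim 1/(l(l+1))$) is in fact more precise than the paper's ``$\epsilon$ suitably small'', and is exactly what produces the factor $l(l+1)$ in the subsequent integrated estimate~\eqref{iledpha}.
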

\begin{proof}
We restrict to the region $\{-x_0\leq x \leq x_0\}$, where $x_0$ can be chosen arbitrarily large, and want to prove integrated decay in $\{-x_0\leq x \leq x_0\}$.

Define the function $f$ on $[-x_0,x_0]$ by
\begin{align*}
f(x)&=(x+x_0+E)\log(x+x_0+E)-(x+x_0+E)+F>0,
\end{align*}
where $E,F>0$ are sufficiently large constant, to ensure $f(x)>0$ on $[-x_0,x_0]$. We obtain,
\begin{align*}
f'(x)&=\log(x+x_0+E)>0,\\
f''(x)&=\frac{1}{x+x_0+E}>0,\\
f'''(x)&=-\frac{1}{(x+x_0+E)^2}<0.
\end{align*}
We can now estimate for $l\geq 1$
\begin{equation*}
\int_{\s^2} (\partial_x\psi_l)^2+\frac{1}{l(l+1)}|\snabla\psi_l|^2+\mu^2\psi_l^2 \leq C\int_{\s^2}K^{V,1}[\psi_l].
\end{equation*}
for all $x\in [-x_0,x_0]$. 

Now choose $g(x)=\epsilon (x+x_0+E)$, for $\epsilon>0$ a suitably small constant. Then we also gain control of $(\partial_t\psi_l)^2$, and we absorb the badly signed terms in $K^W$ by $K^{V,1}$.
\end{proof}
We resort to Stokes' Theorem to estimate the spacetime integral of $K^{V,1}$, using that $X=\partial_x$ is a Killing vector field.
\begin{proposition}
\label{prop:iledph}
There exists a a uniform constant $C>0$ such that,
\begin{equation*}
\begin{split}
\int_{0}^{\infty} \left( \int_{\Sigma_{\tau}\cap\{|x|\leq x_0\}} J^T[\psi]\cdot n_{\tilde{\Sigma}_{\tau}}\right)\,d\tau&\leq C \sum_{|l|\leq 1} E_{PH}[\Omega^l\psi].
\end{split}
\end{equation*}
\end{proposition}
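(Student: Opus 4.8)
The plan is to run the Morawetz (integrated local energy decay) argument of Proposition \ref{prop:iledn}, but localised to the cylinder $\{|x|\le x_0\}$, since in Pleba\'nski-Hacyan there is no cosmological horizon and hence no red-shift current available to control the region $\{|x|\ge x_0\}$. First I would decompose into angular modes and work mode by mode; by orthonormality of the $\psi_l$ and the $l$-independence of all constants it then suffices to treat a fixed $\psi=\psi_l$. For $l\ge 1$ the preceding proposition supplies, after integration over $\s^2$, the coercive bound
\begin{equation*}
\int_{\s^2}\big(K^{V,1}[\psi_l]+K^W[\psi_l]\big)\,d\mu_{\s^2}\ \ge\ C\int_{\s^2}\Big((\partial_t\psi_l)^2+(\partial_x\psi_l)^2+\tfrac{1}{l(l+1)}|\snabla\psi_l|^2+\mu^2\psi_l^2\Big)\,d\mu_{\s^2}
\end{equation*}
on $\{|x|\le x_0\}$, while for $l=0$ one drops $J^{V,1}$ and uses only $K^W$, which already dominates $(\partial_t\psi_0)^2+(\partial_x\psi_0)^2$ there.

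Next I would apply Stokes' theorem to the current $J^{V,1}[\psi_l]+J^W[\psi_l]$ over $\mathcal{R}(0,\tau)\cap\{|x|\le x_0\}$ and substitute the coercive lower bound for the spacetime (bulk) term. This produces, on the right-hand side, fluxes through the spacelike slices $\Sigma_0$ and $\Sigma_\tau$ and through the timelike hypersurfaces $\{x=\pm x_0\}$. The slice fluxes are handled exactly as in Proposition \ref{prop:boundednessph}: since the profile functions $f,g$ and their first few derivatives are bounded, and since $\int_{\s^2}\psi_l^2\le \tfrac{r_0^2}{l(l+1)}\int_{\s^2}|\snabla\psi_l|^2$ by the Poincar\'e inequality (\ref{eq:poincares2fixedl}), one has the pointwise comparison $\big|(J^{V,1}[\psi_l]+J^W[\psi_l])\cdot n_{\Sigma_{\bar\tau}}\big|\le C\,J^T[\psi_l]\cdot n_{\Sigma_{\bar\tau}}$, and the $J^T$-flux is bounded by $E_{PH}[\psi_l]=\int_\Sigma J^T[\psi_l]\cdot n_\Sigma$ by conservation of the $T$-energy.

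The delicate point — and the step I expect to be the main obstacle — is the flux of $J^{V,1}[\psi_l]+J^W[\psi_l]$ through $\{x=\pm x_0\}$: in Nariai this was disposed of either by choosing the Morawetz current to be supported away from the cosmological horizons or by the red-shift currents $K^N,K^{\overline N}$, neither device being available here, and the raw boundary flux contains terms of the wrong sign (schematically $\tfrac12 f(x_0)(|\snabla\psi|^2+\mu^2\psi^2)$ together with lower-order $\psi\partial_x\psi$ and $\psi^2$ contributions) which are not controlled by any conserved quantity. I would deal with this by choosing $f$ and $g$ to be cut off near $x=\pm x_0$ (using a bump function as in Lemma \ref{prop:mainiledn}), so that both currents, and hence their boundary fluxes at $x=\pm x_0$, vanish identically; the price is a transition zone just inside $\pm x_0$ in which $f',g'$ are no longer positive and the compatible currents pick up errors, but — exploiting that $X=\partial_x$ is Killing, so that all of the preceding estimates are invariant under $x$-translation and $x_0$ may be taken as large as we wish — these errors can be made small relative to, and absorbed by, the good terms $f'(\partial_x\psi_l)^2$ and $\tfrac14(-f''')\psi_l^2$ furnished by $K^{V,1}$ in the coercive zone, together with (\ref{eq:poincares2fixedl}). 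With the boundary and transition terms controlled, this yields the integrated estimate for $\psi_l$ with the degenerate weight $\tfrac{1}{l(l+1)}$ on $|\snabla\psi_l|^2$; commuting (\ref{eq:kgequation}) with the angular Killing fields $\Omega_i$, applying the same estimate to $\Omega_i\psi$, and invoking (\ref{eq:poincares2fixedl}) once more upgrades this to control of the full $J^T$-flux at the cost of one angular derivative, producing the right-hand side $C\sum_{|k|\le 1}E_{PH}[\Omega^k\psi]$; summing over modes then gives the statement for general $\psi$.
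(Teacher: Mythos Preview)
Your overall architecture --- mode decomposition, the coercivity of $K^{V,1}+K^W$ on $\{|x|\le x_0\}$, Stokes' theorem on the cylinder, and the handling of the spacelike fluxes via Poincar\'e and $T$-energy conservation --- matches the paper. The gap is in your treatment of the timelike boundary fluxes at $x=\pm x_0$.

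The cutoff strategy you propose does not close. Cutting $f$ off just inside $\pm x_0$ forces $f'$ to become negative and $f'''$ to change sign in the transition strip, producing error terms of the form $\int_{\{x_0-R\le |x|\le x_0\}} C\big((\partial_x\psi_l)^2+\psi_l^2\big)$ with no smallness. In Nariai this worked only because the metric degeneration $D=1-Kx^2\to 0$ near the horizon suppressed the transition-zone errors by a factor $\alpha^{-3}$, and the red-shift current $K^N$ was then available to absorb them; in PH one has $D\equiv 1$ and no red-shift. Translation invariance in $x$ lets you relocate $x_0$, but it does not make the transition-zone integrand small relative to the coercive-zone integrand: the two live on disjoint $x$-intervals, so for $\psi$ concentrated in the transition strip the error simply dominates and absorption fails. (Equivalently: one cannot have $f,f',f''$ all vanish at $\pm x_0$ while keeping $f'>0$ and $f'''<0$ throughout.)

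The paper instead keeps $f,g$ uncut and controls the fluxes through $\{x=\pm x_0\}$ directly. After a Young-inequality rearrangement (using the Poincar\'e inequality on $\s^2$), one obtains
\[
\int_{\s^2} J^{V,1}[\psi_l]\cdot\partial_x\Big|_{x=x_0}\ \le\ C\int_{\s^2} J^{X}[\psi_l]\cdot\partial_x\Big|_{x=x_0},
\]
with $X=\partial_x$, and the analogous inequality at $x=-x_0$. Since $X$ is Killing, $K^X=0$; applying Stokes in the exterior regions $\{\pm x\ge x_0\}$ bounds the $J^X$ flux through $\{x=\pm x_0\}$ by an initial-data flux on $\Sigma\cap\{|x|\ge x_0\}$ together with fluxes through the null hypersurfaces bounding $\mathcal{R}(0,\tau)\cap\{|x|\ge x_0\}$. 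On those null pieces one has $J^X\cdot\partial_u=-(\partial_u\psi)^2+\tfrac14(|\snabla\psi|^2+\mu^2\psi^2)\le J^T\cdot\partial_u$, which is controlled by $E_{PH}[\psi]$. This is where the Killing property of $X$ --- which you invoke only heuristically --- actually does the work.
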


\begin{proof}
By Stokes' Theorem, we have for $l\geq 1$
\begin{equation*}
\begin{split}
\int_{\{|x| \leq x_0\}}K^{V,1}[\psi]&=\int_{\Sigma_{\tau_1}\cap\{|x| \leq x_0\}}J^{V,1}[\psi]\cdot n_{\Sigma_{\tau_1}}-\int_{\Sigma_{\tau_2}\cap\{|x| \leq x_0\}}J^{V,1}[\psi]\cdot n_{\Sigma_{\tau_2}}\\
&+\int_{\{x=x_0\}\cap\{0\leq \bar{\tau} \leq \tau\}} J^{V,1}[\psi]\cdot \frac{\partial}{\partial x}\\
&-\int_{\{x=-x_0\}\cap \{0\leq \bar{\tau} \leq \tau\}} J^{V,1}[\psi]\cdot\frac{\partial}{\partial x}.
\end{split}
\end{equation*}
It easily follows from the Poincar\'e inequality on the sphere that
\begin{equation*}
|J^{V,1}[\psi_l]\cdot n_{\Sigma_{\tau}}|\leq C J^T[\psi_l]\cdot n_{\Sigma_{\tau_1}}
\end{equation*}
in $\{-x_0\leq x \leq x_0\}$ for $l\geq 1$. The above estimate holds for $l\geq 0$ if $\mu\neq 0$. We can also estimate
\begin{equation*}
|J^{W}[\psi]\cdot n_{\Sigma_{\tau}}|\leq C J^T[\psi_l]\cdot n_{\Sigma_{\tau_1}}.
\end{equation*}

Moreover,
\begin{equation*}
\begin{split}
\int_{\s^2}J^{V,1}[\psi]\cdot \frac{\partial}{\partial x}\Big|_{x=x_0}\,d\mu_{\s^2}&\leq \int_{\s^2} \Bigg\{\frac{f}{2}\left[(\partial_x\psi)^2+(\partial_t\psi)^2-|\snabla\psi|^2-\mu^2\psi^2\right]\\
&+\frac{ff'}{2}\psi\partial_x\psi-\frac{r_0^2}{4 l(l+1)}ff''|\snabla\psi|^2\Bigg\}\Big|_{x=x_0}\,d\mu_{\s^2}\\
&\leq \int_{\s^2} \Bigg\{ \frac{f}{2}\left[\left(1+\frac{1}{\epsilon}\frac{f'^2}{2}\right)(\partial_x\psi)^2+(\partial_t\psi)^2\right]\\
&-\frac{f}{2}\Bigg[1+\frac{r_0^2}{4l(l+1)}f''\\
&-\epsilon\frac{r_0^2}{l(l+1)} \frac{f'^2}{2}\Bigg]|\snabla\psi|^2-\frac{f}{2}\mu^2\psi^2\Bigg\}\Big|_{x=x_0}\,d\mu_{\s^2}\\
&\leq C \int_{\s^2} J^{\frac{\partial}{\partial x}}[\psi]\cdot \frac{\partial}{\partial x}\Big|_{x=x_0}\,d\mu_{\s^2}.
\end{split}
\end{equation*}
where in the second inequality we have used Young's inequality with $\epsilon$ on $\psi\partial_x\psi$. We need to take $\epsilon>0$ suitably small. Similarly,
\begin{equation*}
\begin{split}
-\int_{\s^2}J^{V,1}[\psi]\cdot \frac{\partial}{\partial x}\Big|_{x=-x_0}\,d\mu_{\s^2}&\leq \int_{\s^2}\Bigg\{ -\frac{f}{2}\left[1-\epsilon\frac{f'^2}{2}\right](\partial_x\psi)^2-\frac{f}{2}(\partial_t \psi)^2\\
&+\frac{f}{2}\Bigg[1+\frac{r_0^2}{4l(l+1)}f''+\frac{1}{\epsilon}\frac{r_0^2}{l(l+1)} \frac{f'^2}{2}\Bigg]|\snabla\psi|^2\Bigg\}\,d\mu_{\s^2}\\
&\leq -C \int_{\s^2} J^{\frac{\partial}{\partial x}}[\psi]\cdot \frac{\partial}{\partial x}\Big|_{x=-x_0}\,d\mu_{\s^2}.
\end{split}
\end{equation*}

Since $X=\frac{\partial}{\partial x}$ is Killing, we have that $K^{X}[\psi]=0$. We can therefore apply Stokes' Theorem in the regions $\{|x|\geq x_0\}$ to estimate the energy flux through $\{x=\pm x_0\}$ of the $J^X$ current .

\begin{equation*}
\begin{split}
\pm\int_{\{x=\pm x_0\, 0\leq \bar{\tau}\leq \tau\}} J^X\cdot \partial_x &\leq \int_{\Sigma\cap \{|x|\geq x_0\}} J^X\cdot n_{\Sigma_{\tau_1}}+\int_{\{v=\tau+x_0,\,-x_0\leq u \leq \tau-x_0\}} J^X\cdot \partial_u\\
&+\int_{\{u=\tau+x_0,\,-x_0\leq v \leq \tau-x_0\}} J^X\cdot \partial_v.
\end{split}
\end{equation*}

We have that
\begin{equation*}
\begin{split}
\int_{\{v=\tau+x_0,\,-x_0\leq u \leq \tau-x_0\}} J^X\cdot \partial_u&=\int_{\{v=\tau+x_0,\,-x_0\leq u \leq \tau-x_0\}}\left \{ -(\partial_u\psi)^2+\frac{1}{4}\left(|\snabla\psi|^2+\mu^2\psi^2\right) \right\} \\
&\leq \int_{\{v=\tau+x_0,\,-x_0\leq u \leq \tau-x_0\}} J^T\cdot \partial_u\leq C \int_{\Sigma} J^T\cdot n_{\Sigma}
\end{split}
\end{equation*}
and similarly
\begin{equation*}
\int_{\{u=\tau+x_0,\,-x_0\leq v \leq \tau-x_0\}} J^X\cdot \partial_v \leq  \int_{\{u=\tau+x_0,\,-x_0\leq v \leq \tau-x_0\}} J^T\cdot \partial_v\leq C \int_{\Sigma} J^T\cdot n_{\Sigma}.
\end{equation*}

We conclude that
\begin{equation*}
\pm\int_{\{x=\pm x_0\, 0\leq \bar{\tau}\leq \tau\}} J^{V,1}\cdot \partial_x \leq C \int_{\Sigma} J^T[\psi]\cdot n_{\Sigma}.
\end{equation*}
By a similar argument we obtain the above estimate with $J^W$ replacing $J^{V,1}$.

Consequently, we obtain an integrated local energy decay statement for $\psi=\psi_l$, with $l\geq 1$.
\begin{equation}
\label{iledpha}
\int_{0}^{\infty} \left( \int_{\Sigma_{\tau}\cap\{|x|\leq x_0\}} J^T[\psi_l]\cdot n_{\Sigma_{\tau}}\right)\,d\tau\leq C l(l+1) \int_{\Sigma} J^T[\psi_l]\cdot n_{\Sigma},
\end{equation}
where $C>0$ is independent of $l$. If $l=0$, we can drop the $l(l+1)$ factor.

We can use (\ref{iledpha}) and (\ref{eq:poincares2fixedl}) to obtain
\begin{equation}
\begin{split}
\label{iledph}
\int_{\tau_1}^{\tau_2} \left( \int_{\Sigma_{\tau}\cap\{|x|\leq x_0\}} J^T[\psi]\cdot n_{\tilde{\Sigma}_{\tau}}\right)\,d\tau\leq C \sum_{|k|\leq 1} \int_{\Sigma} J^T[\Omega^k \psi]\cdot n_{\Sigma}. \qedhere
\end{split}
\end{equation}
\end{proof}
 
We have now proved Theorem \ref{th:iledph}.

\section{Non-decay for $\mu=0$}
\label{sec:nondec}
We can now easily show that $\psi$ does not decay in time in the case $\mu=0$. Consider first the region $\mathcal{R}$ of Nariai in double-null coordinates. By (\ref{dalm1}) we have for all $(u,v)\in \mathcal{R}$,
\begin{equation*}
\psi_0(u,v)=\psi_0(u_{\Sigma_0}(v),v)+\int_{u_{\Sigma_0}(v)}^{u}\partial_u\psi_0(\bar{u},v_{\Sigma_0}(\bar{u}))\,d\bar{u}.
\end{equation*}
For generic initial data $\Psi$, $\Psi'$, the above expression converges to a constant $\bar{\psi}$ on $\Sigma_{\tau}$ as $\tau\to \infty$. Theorem \ref{th:nondecn} immediately follows.

In PH, we treat the two null boundaries separately. On the null boundary $\mathcal{N}_B^+$ we can write
\begin{equation*}
\psi_0(\infty,v)=\psi_0(u_{\Sigma}(v),v)+\int_{u_{\Sigma}(v)}^{\infty} \partial_u\psi_0(\bar{u},v_{\Sigma}(\bar{u}))\,d\bar{u}.
\end{equation*} 
We can derive an expression analogous to (\ref{dalm1}), reversing the roles of $u$ and $v$, to obtain on the null boundary $\mathcal{N}_A^+$
\begin{equation*}
\psi_0(u,\infty)=\psi_0(u,v_{\Sigma}(u))+\int_{v_{\Sigma}(u)}^{\infty} \partial_v\psi_0(u_{\Sigma}(\bar{v}),\bar{v})\,d\bar{v}.
\end{equation*}
In both cases, the expressions do not vanish for generic initial data. Theorem \ref{th:nondecph} now follows.

\appendix
\section{Sobolev estimates}
\label{app:sobolev}
\begin{lemma}
\label{lmgen}
Let $u\in H^s(\R^n)$. Then for $s\in \N$, $s>\frac{n}{2}$
\begin{equation*}
||u||_{L^{\infty}(\R^n)}\leq C ||u||_{H^s(\R^n)},
\end{equation*}
where $C$ depends only on $n$ and $s$.
\end{lemma} 

\begin{lemma}
\label{lmbump}
Let $u\in H^s(B)$, where $B=I_1\times I_2\times I_3\subset \R^3$, $I_i$ intervals in $\R$. Then
\begin{equation*}
||u||_{L^{\infty}(B)}\leq C ||u||_{H^2(B)},
\end{equation*}
where $C$ depends only on $B$.
\end{lemma}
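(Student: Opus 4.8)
The strategy is to reduce the statement to the whole-space Sobolev inequality of Lemma \ref{lmgen} by means of a bounded extension operator. Since $B=I_1\times I_2\times I_3$ is a product of bounded intervals, it is a bounded Lipschitz domain, so there exists a bounded linear extension operator $E\colon H^2(B)\to H^2(\R^3)$, i.e.\ $Eu|_B=u$ and $\|Eu\|_{H^2(\R^3)}\leq C_B\|u\|_{H^2(B)}$ with $C_B$ depending only on $B$.

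For the box at hand the operator $E$ can be constructed explicitly by iterated higher-order reflection across the six faces, treating one coordinate direction at a time. For a single bounded interval $I=(a,b)$ one extends a function $v$ past the endpoint $a$ by the reflection formula
\begin{equation*}
\tilde v(t)=3v(2a-t)-2v(3a-2t),\qquad t<a,
\end{equation*}
whose coefficients are chosen precisely so that $\tilde v$ and $\tilde v'$ match $v$ and $v'$ at $t=a$; this produces a $C^1$ extension whose $H^2$ norm on a bounded neighbourhood of $I$ is controlled by $\|v\|_{H^2(I)}$. One proceeds analogously at the endpoint $b$, multiplies by a fixed smooth cutoff to obtain compact support, and iterates over the three coordinate directions. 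Tracking the constants through this finite procedure (using that $B$ is bounded, so only a bounded neighbourhood of $B$ is ever involved) yields $\|Eu\|_{H^2(\R^3)}\leq C_B\|u\|_{H^2(B)}$.

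With $E$ in hand the proof concludes in one line: applying Lemma \ref{lmgen} with $n=3$ and $s=2$ (legitimate since $2>\tfrac{3}{2}$) gives
\begin{equation*}
\|u\|_{L^{\infty}(B)}=\|Eu\|_{L^{\infty}(B)}\leq\|Eu\|_{L^{\infty}(\R^3)}\leq C\|Eu\|_{H^2(\R^3)}\leq CC_B\|u\|_{H^2(B)}.
\end{equation*}
The only step requiring any care is the construction (or invocation) of the extension operator and the bookkeeping of its operator norm; everything else is immediate. An alternative route, avoiding extension theory, would be to use density of $C^{\infty}(\bar B)$ in $H^2(B)$ together with an iterated fundamental-theorem-of-calculus / Gagliardo--Nirenberg estimate directly on the box, but the reflection argument above is cleaner to state and gives the $B$-dependence of $C$ transparently.
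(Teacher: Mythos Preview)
Your proposal is correct and follows essentially the same overall strategy as the paper: extend $u$ from $B$ to all of $\R^3$ with controlled $H^2$ norm, then invoke Lemma~\ref{lmgen}. The only difference is the extension mechanism. You use higher-order reflection across the faces of the box (the standard Stein-type construction), whereas the paper extends $u$ continuously to $\R^3$ and multiplies by an explicit mollified cutoff $v=\chi_1^{\epsilon}\chi_2^{\epsilon}\chi_3^{\epsilon}$, arguing that the contribution from the thin collar $\tilde B\setminus B$ can be made small. Your reflection argument is the more transparent of the two, since it yields the bound $\|Eu\|_{H^2(\R^3)}\leq C_B\|u\|_{H^2(B)}$ directly without any $\epsilon$-balancing. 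One incidental point: the paper's proof introduces the bump function $\eta$ explicitly, and this object is referenced later in the paper (in the construction of the cutoff $\chi$ in Lemma~\ref{prop:mainiledn}), so if you were to substitute your argument you would need to define $\eta$ separately.
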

\begin{proof}
We want to construct $\bar{u}\in H^2(\R^3)$, such that $\bar{u}(x)=u(x)$ for $x\in B$ and $\bar{u}(x)=0$ for $x \notin U$, where $U\subset \R^3$ is an open containing $B$. Consider the bump function $\eta: \R \to \R$:
\begin{align*}
\eta(x)&:=A e^{-\frac{1}{1-x^2}} &\textnormal{for}\, |x|\leq 1,\\
\eta(x)&:=0 & \textnormal{for}\,|x|>1,
\end{align*}
where $A$ is a constant such that $\int_{\R^n} \eta(x)\,dx=1$. For each $\epsilon>0$ we can define $\eta_{\epsilon}(x):=\epsilon^{-n}\eta(\epsilon^{-1} x)$.  Then $\int_{\R^n} \eta_{\epsilon}(x)\,dx=1$ and $\textnormal{spt}(\eta_{\epsilon})\subset [-\epsilon,\epsilon]$. One can show that $\eta_{\epsilon} \in C^{\infty}(\R)$ for all $\epsilon>0$. Consider the step functions $\chi_i: \R \to \R$, $\chi(x_i):=1$ if $x_i\in I_i$ and $\chi=0$ if $x_i\notin I_i$. $\chi_i$ are locally integrable functions, but are not smooth. However, the mollifications $\chi_i^{\epsilon}:=\eta_{\epsilon}\ast \chi_i$ are smooth and satisfy
\begin{align*}
\chi_i^{\epsilon}(x)&=1  &\textnormal{for}\, x\in [a_i,b_i]\\
\chi_i^{\epsilon}(x)&=0 &\textnormal{for}\, x\notin [a_i-\epsilon,b_i+\epsilon],
\end{align*}
where $I_i=[a_i,b_i]$, $a_i,b_i\in \R \cup \{-\infty, \infty\}$.

Extend $u$ continuously to $\R^3$, such that $|u(x)|\leq \sup_{y\in B} |u(y)|$. Now define $\bar{u}\in H^2(\R^3)$ by $\bar{u}(x):=u(x)v(x)$, where $v(x):=\chi^\epsilon_1(x_1)\chi^\epsilon_2(x_2)\chi^\epsilon_3(x_3)$. let $\tilde{B}:= [a_1-\epsilon,b_1+\epsilon]\times [a_2-\epsilon,b_2+\epsilon]\times  [a_3-\epsilon,b_3+\epsilon]$. We have that $\bar{u}(x)=u(x)$ for $x\in B$ and $\bar{u}(x)=0$ for $x\notin \tilde{B}$. Furthermore, $|\bar{u}(x)|\leq \sup_{y\in B} |u(y)|$.

By Lemma \ref{lmgen} it follows that
\begin{equation*}
||u||_{L^{\infty}(B)}^2=||\bar{u}||^2_{L^{\infty}(\R^3)}\leq K ||\bar{u}||^2_{H^2(\R^3)},
\end{equation*}
with $K>0$ a constant independent of $u$. We have that
\begin{equation*}
\begin{split}
||\bar{u}||^2_{H^2(\R^3)}&= ||u||^2_{H^2(B)}+\int_{\tilde{B}\setminus B} \left[ (uv)^2+ \sum_{i=1}^3 (\partial_i(uv))^2+ \sum_{i,j=1}^3 (\partial_i \partial_j (uv))^2\right]\,dx\\
&=||u||^2_{H^2(B)}+\int_{\tilde{B}\setminus B} \Big[ \sum_{i=1}^3 (v\partial_i u+(\partial_i v) u)^2\\
&+ \sum_{i,j=1}^3 \left( v\partial_i\partial_ju+2\partial_i v \partial_j u+ (\partial_i \partial_j v) u  \right)^2\Big]\,dx\\
&\leq ||u||^2_{H^2(B)}+ \tilde{C}(\epsilon)||u||^2_{H^2(\tilde{B}\setminus B)}\\
& \leq C ||u||^2_{H^2(B)},
\end{split}
\end{equation*}
where we used that $v$ and its derivatives only depend on $\epsilon$, and $\tilde{C}(\epsilon)\to 0$ as $\epsilon \to 0$. We can therefore choose $C(\epsilon)>0$ sufficiently small, such that $||u||^2_{H^2(\tilde{B}\setminus B)}<||u||^2_{H^2(B)}$. Consequently,
\begin{equation*}
||u||_{L^{\infty}(B)}^2\leq K C ||u||^2_{H^2(B)}. \qedhere
\end{equation*}
\end{proof}

\begin{corollary}
\label{sobscale}
Let $u\in H^s(B_R(x_0))$, where $s>\frac{n}{2}$ and $B_R(x_0)=\{y\in \R^n\::\: |y-x_0|<R\}$. Then
\begin{equation*}
||u||_{L^{\infty}(B_R(x_0))}\leq C \sum_{k=0}^s R^{k-\frac{n}{2}}||\partial^k u||_{L^2(B_R(x_0))},
\end{equation*}
where $C$ depends only on $B_1(x_0)$.
\end{corollary}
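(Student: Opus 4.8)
The plan is to reduce the estimate on $B_R(x_0)$ to the corresponding estimate on the unit ball $B_1(x_0)$ by scaling, and then to prove the unscaled embedding by an extension argument in the spirit of Lemma \ref{lmbump}. First I would introduce the rescaled function $v(y):=u\bigl(x_0+R(y-x_0)\bigr)$ for $y\in B_1(x_0)$. Then $\|u\|_{L^{\infty}(B_R(x_0))}=\|v\|_{L^{\infty}(B_1(x_0))}$, and since $\partial^k v(y)=R^k(\partial^k u)\bigl(x_0+R(y-x_0)\bigr)$, the substitution $x=x_0+R(y-x_0)$ (with $dy=R^{-n}dx$) gives $\|\partial^k v\|_{L^2(B_1(x_0))}^2=R^{2k-n}\|\partial^k u\|_{L^2(B_R(x_0))}^2$ for every $0\le k\le s$.

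Next I would establish the scale-invariant embedding $\|v\|_{L^{\infty}(B_1(x_0))}\le C\|v\|_{H^s(B_1(x_0))}$ with $C=C(n,s)$ (equivalently, $C$ depending only on $B_1(x_0)$). This is the one step that is not pure bookkeeping: the unit ball is not a product of intervals, so Lemma \ref{lmbump} does not apply verbatim. I would enclose $B_1(x_0)$ in an open box $Q=I_1\times\cdots\times I_n$ and use the fact --- proved exactly as in Lemma \ref{lmbump}, by continuous extension across $\partial B_1(x_0)$ followed by multiplication by a smooth cutoff supported in $Q$ --- that every $v\in H^s(B_1(x_0))$ admits an extension $\bar v\in H^s(\R^n)$ with $\bar v=v$ on $B_1(x_0)$ and $\|\bar v\|_{H^s(\R^n)}\le C\|v\|_{H^s(B_1(x_0))}$. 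Lemma \ref{lmgen} then yields $\|v\|_{L^{\infty}(B_1(x_0))}\le\|\bar v\|_{L^{\infty}(\R^n)}\le C\|\bar v\|_{H^s(\R^n)}\le C\|v\|_{H^s(B_1(x_0))}$.

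Finally I would combine the two steps. Using $\|v\|_{H^s(B_1(x_0))}^2=\sum_{k=0}^s\|\partial^k v\|_{L^2(B_1(x_0))}^2=\sum_{k=0}^s R^{2k-n}\|\partial^k u\|_{L^2(B_R(x_0))}^2$ together with the elementary inequality $\sum_k a_k^2\le\bigl(\sum_k a_k\bigr)^2$ for nonnegative $a_k$, one obtains $\|v\|_{H^s(B_1(x_0))}\le\sum_{k=0}^s R^{k-n/2}\|\partial^k u\|_{L^2(B_R(x_0))}$, and inserting this into the embedding on $B_1(x_0)$ gives the claimed bound.

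The main (and essentially only) obstacle is the construction of the bounded Sobolev extension operator for the ball, which replaces the box extension of Lemma \ref{lmbump}; this is a standard reflection/Stein-type construction, and once it is in place the remaining argument is just the change-of-variables accounting carried out above.
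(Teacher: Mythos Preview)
Your proposal is correct and follows essentially the same route as the paper: define the rescaled function on $B_1(x_0)$, invoke the Sobolev embedding there via a cut-off/extension argument together with Lemma~\ref{lmgen}, and then undo the scaling via the change-of-variables computation $\|\partial^k v\|_{L^2(B_1(x_0))}^2=R^{2k-n}\|\partial^k u\|_{L^2(B_R(x_0))}^2$. The paper is terser about the extension step (it simply says ``using a suitable cut-off function analogous to the function $v$ in Lemma~\ref{lmbump}''), whereas you correctly flag that the ball is not a product of intervals and that one needs the standard extension for a smooth domain; this is a fair observation but does not constitute a different method.
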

\begin{proof}
By using a suitable cut-off function analogous to the function $v$ in Lemma \ref{lmbump}, we infer from Lemma \ref{lmgen} that
\begin{equation*}
||u||_{L^{\infty}(B_1(x_0))}\leq C \sum_{k=0}^s ||\partial^k u||_{L^2(B_1(x_0))},
\end{equation*}
for $u\in H^s(B_1(x_0))$. Define $\tilde{u}$ by $\tilde{u}(x_0+y):=u(x_0+Ry)$, then
\begin{equation*}
\begin{split}
\int_{|y-x_0|<1} (\partial^k \tilde{u})^2(x_0+y)\,dy&=R^{2k} \int_{|y-x_0|<1} (\partial^k u)^2(x_0+Ry)\,dy\\
&=R^{2k-n}\int_{|z-x_0|<R}(\partial^k u)^2(x_0+z)\,dz,
\end{split}
\end{equation*}
where $z=Ry$. Hence, $u\in H^s(B_1(x_0))$ if and only if $\tilde{u}\in H^s(B_R(x_0))$. Together with $||u||_{L^{\infty}(B_R(x_0))}=||\tilde{u}||_{L^{\infty}(B_1(x_0))}$, this proves the corollary.
\end{proof}

\begin{proposition}
\label{sobs2}
Let $u\in H^2(I\times \s^2)$, where $I\subseteq \R$. Then there exists a $C>0$ independent of $u$, such that
\begin{equation*}
||u||_{L^{\infty}(I\times \s^2)}\leq C ||u||_{H^2(I\times \s^2)}.
\end{equation*}
\end{proposition}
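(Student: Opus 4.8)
The plan is to reduce the estimate on $I\times\s^2$ to the box estimate of Lemma \ref{lmbump} by means of a finite cover of the sphere by coordinate charts together with a cover of $I$ by unit intervals, exploiting translation invariance in the $I$-direction to keep the constant uniform.

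First I would fix finitely many coordinate charts $\{(V_a,\varphi_a)\}_{a=1}^{A}$ on $\s^2$ such that the $V_a$ cover $\s^2$ and each $\overline{V_a}$ is contained in a larger chart domain; then each $\varphi_a(V_a)$ may be taken to be contained in a box $B_a=I_a^{(1)}\times I_a^{(2)}\subset\R^2$. Since $\s^2$ is compact, the metric coefficients $\slashed{g}_{ij}$, their inverses, and their first and second derivatives are uniformly bounded in each chart, so the intrinsic $H^2(V_a)$-norm of a function is comparable, with constants depending only on the finitely many charts, to the Euclidean $H^2(B_a)$-norm of its coordinate representative. Next, cover $I$ by the intervals $J_k:=(k/2,\,k/2+1)\cap I$, $k\in\Z$, which have overlap multiplicity at most $2$. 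On a product neighbourhood $J_k\times V_a\subset I\times\s^2$ write $u$ in the coordinates $(s,\varphi_a)\in J_k\times B_a$; this is a box in $\R^3$, and by Lemma \ref{lmbump} applied to the translated box $(J_k-k/2)\times B_a$ — whose geometry, hence whose Sobolev constant, is independent of $k$ — we obtain $\|u\|_{L^{\infty}(J_k\times V_a)}\leq C\|u\|_{H^2(J_k\times V_a)}$ with $C$ depending only on the finite chart data. Since $\|u\|^2_{H^2(J_k\times V_a)}\leq \|u\|^2_{H^2(I\times\s^2)}$ and the family $\{J_k\times V_a\}_{k,a}$ covers $I\times\s^2$, taking the supremum over all points of $I\times\s^2$ yields the claim.

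The only genuine subtlety is the possible unboundedness of $I$: a single application of Lemma \ref{lmbump} to $I\times\s^2$ is unavailable when $|I|=\infty$, and one must check that the local constant does not degenerate as the slab $J_k\times V_a$ moves off to infinity. This is handled by the translation invariance of Lebesgue measure and of the flat Laplacian in the $s$-variable, so that every slab is, in the relevant coordinates, a translate of one of a fixed finite list of model boxes, and Lemma \ref{lmbump} applies to each with a common constant. Everything else — equivalence of the coordinate and intrinsic $H^2$ norms under the bounded changes of variables, and the fact that restriction to a sub-slab does not increase the $H^2$ norm — is routine.
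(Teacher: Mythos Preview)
Your proof is correct, and the overall strategy---reduce to Lemma~\ref{lmbump} via coordinate charts on $\s^2$ in which the metric and its derivatives are uniformly bounded---is the same as the paper's. The paper also works in two overlapping $(\theta,\phi)$-type charts on the sphere, restricts $\theta$ away from the poles so that the metric components and Christoffel symbols stay bounded, and controls the intrinsic $H^2$ norm by the coordinate $H^2$ norm before invoking Lemma~\ref{lmbump}.

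The one genuine difference is how the (possibly unbounded) interval $I$ is handled. You cover $I$ by translated unit intervals $J_k$ and apply Lemma~\ref{lmbump} on each bounded slab $J_k\times B_a$, using translation invariance to keep the constant uniform. The paper instead applies Lemma~\ref{lmbump} directly to $I\times[0,2\pi]\times[\eta,\pi-\eta]$ in one shot: note that Lemma~\ref{lmbump} as stated (and proved) in the paper allows the intervals $I_i=[a_i,b_i]$ with $a_i,b_i\in\R\cup\{-\infty,\infty\}$, so no localisation in the $I$-direction is needed. Your route is a bit more robust (it would work even if the box lemma were only stated for bounded boxes), while the paper's is shorter by exploiting that Lemma~\ref{lmbump} already covers the unbounded case.
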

\begin{proof}
The chart $(\theta, \phi)$ covers the entire sphere, without a meridian connecting the poles. The metric components of $\gamma$ are bounded from below and away from zero, if we restrict $\theta\in [\eta, \pi-\eta]$ for $\eta>0$ small. If we can bound the integral over the region $\mathcal{S}=\{\eta\leq \theta \leq \pi-\eta\}$ independently of $u$, then we can redefine $(\theta,\phi)\mapsto (\tilde{\theta},\tilde{\phi})$, so as to cover the remaining region $\s^2\setminus \mathcal{S}$ by $\eta \leq \tilde{\theta} \leq \pi-\eta$ and use the same bound as before. We find that
\begin{equation*}
\begin{split}
||u||^2_{H^2(I\times \mathcal{S})}&=\int_I \int_0^{2\pi}\int_{\eta}^{\pi-\eta} \Big( u^2+|\mathring{\snabla} u|^2+(\partial_x u)^2+|\mathring{\snabla^2} u|^2\\
&+|\mathring{\snabla} \partial_x u|^2+(\partial_x^2 u)^2 \Big)\sin \theta\,d\theta d\phi dx,
\end{split}
\end{equation*}
where
\begin{align*}
|\mathring{\snabla}u|^2&=\slashed{g}_{\s^2}^{AB}\nabla_Au\nabla_B u=(\partial_{\theta}u)^2+\sin^{-2}\theta(\partial_{\phi} u)^2,\\
|\mathring{\snabla^2}u|^2&=\slashed{g}_{\s^2}^{AB}\slashed{g}_{\s^2}^{CD} \nabla_A \nabla_Cu \nabla_B \nabla_D u=\slashed{g}_{\s^2}^{AB}\slashed{g}_{\s^2}^{CD}(\partial_A\partial_Cu-\Gamma^E_{AC}\partial_E u)\\
&\cdot(\partial_B\partial_Du-\Gamma^E_{BD}\partial_E u)\\
&=(\partial_\theta^2u)^2+2\sin^{-1}\theta \left(\partial_{\theta} \partial_{\phi}u+\cot \theta \partial_{\phi} u\right)^2+\sin^{-2} \theta\left(\partial_{\phi}^2u+\sin\theta\cos \theta \partial_{\theta} u \right)^2,
\end{align*}
where we used that the only non-zero components of the Christoffel symbols on $\s^2$ are $\Gamma^{\theta}_{\theta \phi}=-\cot \theta$ and $\Gamma^{\theta}_{\phi \phi}=-\sin \theta \cos \theta$. We can use Cauchy's inequality with $\epsilon$ to absorb the mixed terms in the squares into the remaining terms in $|\mathring{\snabla}u|^2+|\mathring{\snabla^2}u|^2$ and estimate,
\begin{equation*}
C ||u||^2_{H^2(I\times \s^2)} \geq ||u||^2_{H^2(I\times \mathcal{S})}.
\end{equation*}
We conclude that
\begin{equation*}
||u||^2_{L^{\infty}(I\times \s^2)}\leq \tilde{C} ||u||^2_{H^2(I \times [0,2\pi]\times [0,\pi])} \leq \tilde{C}||u||^2_{H^2(I\times \mathcal{S})} \leq C ||u||^2_{H^2(I\times \s^2)}.
\end{equation*}
where we applied Corollary \ref{lmbump} in the first inequality.
\end{proof}

\section{Estimates on $\s^2$}
\begin{proposition}[Poincar\'e inequality on $\s^2$]
\label{poins2}
Let $\psi \in \mathring{H}^1(\s^2)$, such that $\psi_l=0$ for $l\geq L$, $L\geq1$. Then
\begin{equation*}
||\psi||_{L^2(\s^2)}^2\leq \frac{1}{L(L+1)}||\psi||^2_{\mathring{H}^1(\s^2)}.
\end{equation*}
\end{proposition}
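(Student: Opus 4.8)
The plan is to diagonalise the round Laplacian on $\s^2$ using the spherical harmonics and reduce the estimate to the elementary fact that the eigenvalue $l(l+1)$ is increasing in $l$. First I would decompose $\psi=\sum_{l\geq 0}\psi_l$ into its spherical harmonic modes; the hypothesis is that $\psi$ has vanishing component on the harmonics of degree $l<L$, so that $\psi=\sum_{l\geq L}\psi_l$. The argument then has just two ingredients: the fixed-mode identity relating $\|\mathring{\snabla}\psi_l\|_{L^2(\s^2)}$ to $\|\psi_l\|_{L^2(\s^2)}$, and the orthonormality of distinct modes, which lets one add the single-mode estimates.

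For a single mode, since $-\Delta_{\s^2}Y^{m,l}=l(l+1)Y^{m,l}$ and the $Y^{m,l}$ form an $L^2(\s^2)$-orthonormal family, integration by parts on $\s^2$ gives
\begin{equation*}
\int_{\s^2}|\mathring{\snabla}\psi_l|^2\,d\mu_{\s^2}=l(l+1)\int_{\s^2}\psi_l^2\,d\mu_{\s^2},
\end{equation*}
which is just (\ref{eq:poincares2fixedl}) with the unit sphere in place of the sphere of radius $r_0$. Using the orthogonality relation $\int_{\s^2}\psi_l\psi_{l'}\,d\mu_{\s^2}=\delta_{ll'}\int_{\s^2}\psi_l^2\,d\mu_{\s^2}$ (and the analogous relation with $\mathring{\snabla}\psi_l$ in place of $\psi_l$, which follows from the same eigenvalue equation), summing over $l\geq L$ yields
\begin{equation*}
\|\mathring{\snabla}\psi\|^2_{L^2(\s^2)}=\sum_{l\geq L}l(l+1)\int_{\s^2}\psi_l^2\,d\mu_{\s^2}\geq L(L+1)\sum_{l\geq L}\int_{\s^2}\psi_l^2\,d\mu_{\s^2}=L(L+1)\,\|\psi\|^2_{L^2(\s^2)},
\end{equation*}
where I used $l(l+1)\geq L(L+1)$ for every $l\geq L$. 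Since $\|\psi\|^2_{\mathring{H}^1(\s^2)}\geq\|\mathring{\snabla}\psi\|^2_{L^2(\s^2)}$, dividing by $L(L+1)$ gives the claimed inequality.

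I do not anticipate a genuine obstacle: the only point needing a word of care is that the mode decomposition converges in $\mathring{H}^1(\s^2)$ and may be differentiated term by term for a general $H^1$ function, not merely a smooth one. This is standard, either via the spectral theorem applied to the non-negative self-adjoint operator $1-\Delta_{\s^2}$ on the compact manifold $\s^2$, or by approximating $\psi$ in $\mathring{H}^1(\s^2)$ by smooth functions and passing to the limit in the two displayed identities. No geometric information beyond the spectrum $\{l(l+1)\}_{l\geq 0}$ of $-\Delta_{\s^2}$ enters the argument.
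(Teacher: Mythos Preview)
Your proof is correct and follows the same route as the paper: integrate by parts to write $\int_{\s^2}|\mathring{\snabla}\psi|^2=-\int_{\s^2}\psi\,\mathring{\slashed{\Delta}}\psi$, expand in spherical harmonics, use orthogonality, and bound $l(l+1)\geq L(L+1)$. You also correctly read the hypothesis as $\psi_l=0$ for $l<L$ (the stated condition $l\geq L$ is a typo, as the paper's own proof and all its applications to $\psi_{\geq 1}$ make clear).
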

\begin{proof}
We can perform the following integration by parts
\begin{equation*}
\begin{split}
\int_{\s^2}|\mathring{\snabla}\psi|^2&= - \int_{\s^2} \psi \mathring{\slashed{\Delta}}\psi=\int_{\s^2} \sum_{l,l'=0}^{\infty} \sum_{m=-l}^l \sum_{m'=-l'}^{l'} l(l+1)\psi_{m,l}\overline{\psi_{m',l'}}Y^{m,l}\overline{Y^{m',l'}}\\
&\geq L(L+1) \int_{\s^2} |\psi|^2. \qedhere
\end{split}
\end{equation*}
\end{proof}

\begin{proposition}
\label{sphsym}
Let $\Omega_i$, $i=1,2,3$ be the elements of the Lie algebra generating to $SO(3)$ isometries on $\R^3$. Then
\begin{itemize}
\item[(i)] $r^2|\mathring{\snabla}\psi|^2=\sum_{i=1}^3 (\Omega_i \psi)^2$,
\item[(ii)] $r^4|\mathring{\snabla^2}\psi|^2\leq \sum_{i,j=1}^3 |\Omega_i \Omega_j\psi|^2$,
\end{itemize}
\end{proposition}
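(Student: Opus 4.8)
Both identities are pointwise statements on a single orbit sphere of the $SO(3)$-action, so there is no loss in working in Cartesian coordinates on $\R^3$, in which the rotation generators are $\Omega_i=\epsilon_{ijk}x_j\partial_k$ (up to the overall power of $r$ fixed by the normalisation adopted in the statement; here $\partial_i$ is the flat derivative and I abbreviate $|\partial\psi|^2:=\sum_i(\partial_i\psi)^2$, $x\cdot\partial\psi:=\sum_i x_i\partial_i\psi$). Three elementary facts do all the work: the contraction identity $\sum_i\epsilon_{ijk}\epsilon_{ilm}=\delta_{jl}\delta_{km}-\delta_{jm}\delta_{kl}$; the fact that the $\Omega_i$ are all tangent to each sphere $\{|x|=r\}$ and there form a tight frame, namely $\sum_i x_i\Omega_i=0$ together with $\sum_i\Omega_i^A\Omega_i^B=r^2\,\slashed{g}_{\s^2}^{AB}$; and the Gauss formula relating the ambient flat Hessian of $\psi$ to the intrinsic Hessian $\mathring{\snabla^2}\psi$ on the sphere.

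\textbf{Part (i).} Expanding and contracting the $\epsilon$'s,
\[
\sum_{i=1}^3(\Omega_i\psi)^2=\sum_i\epsilon_{ijk}\epsilon_{ilm}\,x_jx_l\,\partial_k\psi\,\partial_m\psi=|x|^2|\partial\psi|^2-(x\cdot\partial\psi)^2 .
\]
On $\{|x|=r\}$ the right-hand side is $r^2$ times the squared length of the component of $d\psi$ tangent to the sphere, i.e. $r^2|\mathring{\snabla}\psi|^2$; equivalently, in spherical coordinates $\Omega_3=\partial_\phi$, $\Omega_1=-\sin\phi\,\partial_\theta-\cot\theta\cos\phi\,\partial_\phi$, $\Omega_2=\cos\phi\,\partial_\theta-\cot\theta\sin\phi\,\partial_\phi$, whose squares add to $(\partial_\theta\psi)^2+\sin^{-2}\theta(\partial_\phi\psi)^2$. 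The tensorial reformulation $\sum_i\Omega_i\otimes\Omega_i=r^2(\slashed{g}_{\s^2})^{-1}$ is what I feed into Part (ii).

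\textbf{Part (ii).} Since the $\Omega_i$ are tangent to $\{|x|=r\}$, only $\psi|_{\{|x|=r\}}$ matters, so I may extend $\psi$ radially constant, so that $x\cdot\partial\psi=0$. Writing $\Omega_i\Omega_j\psi=\mathring{\snabla^2}\psi(\Omega_i,\Omega_j)+(\mathring{\snabla}_{\Omega_i}\Omega_j)\psi$ and computing $\mathring{\snabla}_{\Omega_i}\Omega_j$ — from the Gauss formula (the second fundamental form of the round sphere is normal-valued, hence annihilates the radially constant $\psi$) or from a one-line Christoffel computation — gives
\[
\mathring{\snabla^2}\psi(\Omega_i,\Omega_j)=\Omega_i\Omega_j\psi-x_j\,\partial_i\psi ,
\]
which is consistent with symmetry in $(i,j)$ because $[\Omega_i,\Omega_j]=-\epsilon_{ijk}\Omega_k$ forces $\Omega_i\Omega_j\psi-\Omega_j\Omega_i\psi=x_j\partial_i\psi-x_i\partial_j\psi$. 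Squaring and summing over $i,j$,
\[
\sum_{i,j}(\Omega_i\Omega_j\psi)^2=\sum_{i,j}\big(\mathring{\snabla^2}\psi(\Omega_i,\Omega_j)\big)^2+2\sum_{i,j}\mathring{\snabla^2}\psi(\Omega_i,\Omega_j)\,x_j\partial_i\psi+\sum_{i,j}(x_j\partial_i\psi)^2 .
\]
The middle term vanishes: $\mathring{\snabla^2}\psi$ is a tensor, so $\sum_j x_j\,\mathring{\snabla^2}\psi(\Omega_i,\Omega_j)=\mathring{\snabla^2}\psi\big(\Omega_i,\sum_j x_j\Omega_j\big)=0$ by $\sum_j x_j\Omega_j=0$. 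Contracting the tight-frame identity of Part (i) twice turns the first sum into $r^4|\mathring{\snabla^2}\psi|^2$, while the last sum equals $|x|^2|\partial\psi|^2$, a non-negative multiple of $|\mathring{\snabla}\psi|^2$ (by Part (i) it is exactly $\sum_i(\Omega_i\psi)^2$). Dropping it yields $r^4|\mathring{\snabla^2}\psi|^2\le\sum_{i,j}(\Omega_i\Omega_j\psi)^2$, which is (ii); the loss of equality is precisely this leftover first-order term.

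\textbf{Main obstacle.} There is no real difficulty: (i) is a single contraction. The only point requiring care is the bookkeeping in (ii) — correctly isolating the first-order correction $\mathring{\snabla}_{\Omega_i}\Omega_j$ (equivalently, handling the non-commutativity $[\Omega_i,\Omega_j]=-\epsilon_{ijk}\Omega_k$ and the second fundamental form of the round sphere) and then using $\sum_i x_i\Omega_i=0$ to decouple that correction from the Hessian inside the squared sum. A reader preferring to avoid the invariant argument can instead expand both $\sum_{i,j}(\Omega_i\Omega_j\psi)^2$ and $|\mathring{\snabla^2}\psi|^2$ in the $(\theta,\phi)$ chart and discard the manifestly non-negative leftover, reaching the same inequality.
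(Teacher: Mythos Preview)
Your proof of (i) is essentially identical to the paper's: both contract the $\epsilon$'s to get $\sum_i(\Omega_i\psi)^2=r^2|\partial\psi|^2-(x\cdot\partial\psi)^2$ and identify this with $r^2|\mathring{\snabla}\psi|^2$.

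For (ii) your argument is correct but follows a different route. The paper argues at the level of Lie derivatives, using that the $\Omega_i$ are Killing: it writes the chain
\[
r^2|\mathring{\snabla^2}\psi|^2 \le \sum_i|\mathcal{L}_{\Omega_i}d\psi|^2 = \sum_i|d(\Omega_i\psi)|^2 = \sum_i|\mathring{\snabla}(\Omega_i\psi)|^2 = r^{-2}\sum_{i,j}(\Omega_i\Omega_j\psi)^2,
\]
the last step being (i) applied to each $\Omega_i\psi$. You instead work pointwise via the Gauss formula: after extending $\psi$ radially constant you identify the correction $\Omega_i\Omega_j\psi-\mathring{\snabla^2}\psi(\Omega_i,\Omega_j)=x_j\partial_i\psi$ explicitly, square and sum, kill the cross term with $\sum_j x_j\Omega_j=0$, and drop the non-negative leftover $\sum_{i,j}(x_j\partial_i\psi)^2=r^2|\mathring{\snabla}\psi|^2$. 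Your approach is more hands-on and has the merit of making the exact defect in the inequality visible (it is precisely $\sum_i(\Omega_i\psi)^2$), whereas the paper's Killing/Lie-derivative argument is shorter but leaves the source of the inequality implicit. Both are standard and valid.
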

where $r^2=|x|^2$, $x_i$ standard coordinates on $\R^3$.
\begin{proof}
Assuming for convenience of notation Einstein summation, we can write $\Omega_i=\Omega^k_i \frac{\partial}{\partial x^k}$
\begin{equation*}
\Omega_i^k=\epsilon_{ijk}x_j,
\end{equation*}
which implies that
\begin{equation*}
\Omega^k_i \Omega^l_i=\epsilon_{ijk}\epsilon_{iml}x_jx_m=(\delta_{jm}\delta_{kl}-\delta_{jl}\delta_{km})x_j x_m=r^2\delta_{kl}-x_kx_l.
\end{equation*}
Now we can conclude that
\begin{equation*}
\begin{split}
\sum_{i=1}^3 (\Omega_i \psi)^2=r^2|\nabla \psi|^2-(x\cdot\nabla\psi)^2=r^2\left(|\nabla \psi|^2-\left(\frac{x}{|x|}\cdot \nabla\psi\right)^2\right)=r^2|\mathring{\snabla}\psi|^2,
\end{split}
\end{equation*}
which proves (i). Using that $\Omega_i$ are Killing vector fields and the expressions above, we can obtain the inequality
\begin{equation*}
\begin{split}
r^2|\mathring{\snabla^2}\psi|^2 &\leq \sum_{i=1}^3 |\mathcal{L}_{\Omega_i}d \psi|^2=\sum_{i=1}^3 |d(\mathcal{L}_{\Omega_i} \psi)|^2\\
&=\sum_{i=1}^3 |\mathring{\snabla}(\Omega_i\psi)|^2=r^{-2}\sum_{i,j=1}^3 |\Omega_i \Omega_j\psi|^2,
\end{split}
\end{equation*}
which proves (ii).
\end{proof}

\section{The Einstein equations in spherical symmetry}
\label{app:reform}
We introduce $\hat{T}$, defined as $$R_{\mu\nu}-\frac{1}{2}Rg_{\mu\nu}= 8\pi {T}_{\mu \nu}-\Lambda g_{\mu \nu}=:2\hat{T}_{\mu\nu}$$.

Assume the splitting of the metric
\begin{equation*}
g=g_{ab}(x)dx^adx^b+r^2(x){\slashed{g}_{\s^2}}_{AB}(y)dy^Ady^B,
\end{equation*}
where $\{x^a\}$, $a=0,1$ are coordinates on the quotient manifold $\scrQ=\scrM/SO(3)$ and, $\{y^A\}$, $A=2,3$ are coordinates on the orbits, which are 2-spheres equipped with the standard metric ${\slashed{g}_{\s^2}}$. Relevant quantities which get affected by this splitting are the Christoffel symbols
\begin{equation*}
\Gamma^{\mu}_{\nu \rho}=\frac{1}{2}g^{\mu \sigma}(g_{\sigma \nu,\rho}+g_{\sigma \rho,\nu}-g_{\nu \rho,\sigma}).
\end{equation*}
The quantities $\Gamma^{a}_{bc}$ and $\Gamma^{A}_{BC}$ are equal to the two-dimensional Christoffel symbols of the respectively $\scrQ$ and a sphere of radius $r(x)$, $x\in \mathcal{M}$. There are also mixed terms
\begin{align}
\label{eq:cs1}
\Gamma^a_{bC}&=0,\\
\Gamma^a_{BC}&=-r \partial_b r {\slashed{g}_{\s^2}}_{BC}g^{ab},\\
\Gamma^A_{bc}&=0,\\
\label{eq:cs4}
\Gamma^A_{Bc}&=r^{-1}\partial_c r \delta^A_B.
\end{align}
The Riemann tensor can be expressed in terms of the Christoffel symbols
\begin{equation*}
R^{\mu}_{\:\:\nu \rho \sigma}=\partial_{\rho} \Gamma^{\mu}_{\nu \sigma}-\partial_{\sigma}\Gamma^{\mu}_{\nu \rho}+\Gamma^{\tau}_{\nu \sigma}\Gamma^{\mu}_{\tau \rho}-\Gamma^{\tau}_{\nu \rho}\Gamma^{\mu}_{\tau \sigma}.
\end{equation*}
We can obtain the Ricci tensor by contraction: $R_{\nu \sigma}=R^{\mu}_{\:\:\nu \mu \sigma}$. It's convenient to split
\begin{equation*}
R_{ab}=\bar{R}_{ab}+R'_{ab},
\end{equation*}
where $\bar{R}_{ab}$ contains only Christoffel symbols with lower case indices and $R'_{ab}$ constitutes the remaining terms. The symmetries of the Riemann tensor imply that in two dimensions there is only one independent component, so we can write
\begin{equation}
\label{eq:rt}
\bar{R}_{abcd}=K(g_{ab} g_{cd}-g_{ab}g_{cd}),
\end{equation}
where $K$ is the Gaussian curvature of $(\mathcal{Q},\bar{g})$.

Contraction of (\ref{eq:rt}) implies that $\bar{R}_{ab}=K g_{ab}$. Similarly, we can split $R_{AB}={\slashed{g}_{\s^2}}_{AB}+R'_{AB}$, where we used that the Gaussian curvature on $\mathbb{S}^2$ is 1 and the Christoffel symbols are invariant under a rescaling of the metric by a constant factor. We can therefore write
\begin{equation}
\label{eq:rc1}
\begin{split}
R_{ab}&=\partial_{\mu} \Gamma^{\mu}_{ab}-\partial_b \Gamma^{\mu}_{a \mu}+\Gamma^{\tau}_{ab}\Gamma^{\mu}_{\tau \mu}-\Gamma^{\tau}_{a \mu}\Gamma^{\mu}_{\tau b}\\
&=\bar{R_{ab}}-\partial_b \Gamma^A_{aA}+\Gamma^d_{ab} \Gamma^A_{dA}-\Gamma^A_{aB}\Gamma^B_{Ab}\\
&=Kg_{ab}-\nabla_b(2r^{-1} \partial_a r)-2r^{-2}\partial_a r \partial_b r\\
&=Kg_{ab}-2r^{-1}\nabla_a \nabla_b r,
\end{split}
\end{equation}
where in the second equality we used the expressions (\ref{eq:cs1})-(\ref{eq:cs4}). Similarly,
\begin{equation}
\label{eq:rc2}
\begin{split}
R_{aA}&=\partial_{\mu} \Gamma^{\mu}_{aA}-\partial_A \Gamma^{\mu}_{a \mu}+\Gamma^{\tau}_{aA}\Gamma^{\mu}_{\tau \mu}-\Gamma^{\tau}_{a \mu}\Gamma^{\mu}_{\tau A}\\
&=\Gamma^B_{aA}\Gamma^{C}_{BC}-\Gamma^B_{aC}\Gamma^C_{BA}\\
&=0.
\end{split}
\end{equation}
Finally,
\begin{equation}
\label{eq:rc3}
\begin{split}
R_{AB}&={\slashed{g}_{\s^2}}_{AB}+\partial_{a}\Gamma^{a}_{AB}+\Gamma_{AB}^b\Gamma_{bd}^d+\Gamma_{AB}^b\Gamma_{bC}^C-\Gamma_{Ab}^C\Gamma^b_{CB}-\Gamma_{AC}^d\Gamma_{dB}^C\\
&={\slashed{g}_{\s^2}}_{AB}(1-\nabla^a(r\partial_a r)).
\end{split}
\end{equation}
The form of the Ricci tensor carries over to the energy-momentum tensor via Einstein's equation. We can write
\begin{equation}
\label{eq:t1}
\hat{T}_{\mu \nu}=\frac{1}{2} \left( R_{\mu \nu}-\frac{1}{2} R g_{\mu \nu}\right).
\end{equation}
By inserting the expressions (\ref{eq:rc1})-(\ref{eq:rc3}) into (\ref{eq:t1}), we obtain
\begin{align}
\hat{T}_{Ab}&=0,\\
\hat{T}_{AB}&=\frac{1}{2}\left(1-\nabla^a(r\partial_a r)-\frac{1}{2}R\right) \gamma_{AB}=:r^2(x)S(x)\gamma_{AB}.
\end{align}
Now we can rewrite Einstein's equations in the following form
\begin{align}
\label{eq:es1}
Kg_{ab}-2r^{-1}\nabla_a \nabla_br&=R_{ab}=2(\hat{T}_{ab}-\frac{1}{2}g_{ab}(\textnormal{tr}\,\hat{T}+2S))\\
\label{eq:es2}
(1-\nabla^a(r\partial_ar)){\slashed{g}_{\s^2}}_{AB}&=R_{AB}=-r^2 \textnormal{tr}\, \hat{T}{\slashed{g}_{\s^2}}_{AB}.
\end{align}
Taking the trace of (\ref{eq:es1}) and inserting (\ref{eq:es2}) gives
\begin{equation}
\label{eq:es3}
K=\frac{1}{r^2}(1-\partial^a r \partial_a r)+\textnormal{tr}\, \hat{T}-2S.
\end{equation}
Using (\ref{eq:es3}), we can rearrange (\ref{eq:es2}):
\begin{equation}
\label{eq:es4}
\nabla_a \nabla_b r=\frac{1}{2r} (1-\partial^cr \partial_c r)g_{ab}-r(\hat{T}_{ab}-g_{ab}\textnormal{tr}\, \hat{T}).
\end{equation}
Equations (\ref{eq:es3}) and (\ref{eq:es4}) are therefore equivalent to Einstein's equations in the case of spherical symmetry.

A crucial result of spherical symmetry is that $\scrQ$ is a two dimensional manifold. This implies that ingoing and outgoing null lines span the tangent space. Hence, we can write
\begin{equation*}
\bar{g}=-\Omega^2(u,v)dudv.
\end{equation*}
This allows us to get simple expressions for the Christoffel symbols.
\begin{align*}
\Gamma^u_{vv}&=\Gamma^u_{uv}=\Gamma^v_{uv}=\Gamma^v_{uu}=0,\\
\Gamma^u_{uu}&=\Omega^{-2}\partial_u(\Omega^2),\\
\Gamma^v_{vv}&=\Omega^{-2}\partial_v(\Omega^2).
\end{align*}
In null coordinates $(u,v)$ we can write the equation for the Gaussian curvature as

$$-\frac{1}{2}K \Omega^2=\bar{R}_{uv}=-\partial_v \bar{\Gamma}^u_{uu}=-\partial_v\left(\Omega^{-2} \partial_u \Omega^2\right),$$
where the bar indicates that were working with the induced metric $g_{ab}$ on $\scrQ$. We can rewrite the above expression to obtain:

$$K=2\Omega^{-2}\partial_v(\Omega^{-2}\partial_u \Omega^2)=2 \Omega^{-2}\partial_u\partial_v \log \Omega^2,$$
Hence, (\ref{eq:es3}) can be rewritten to obtain
\begin{equation}
\label{eenull1}
\partial_u\partial_v \log \Omega^2=\frac{\Omega^2}{2r^2}(1-\partial^a r \partial_a r)+\frac{\Omega^2}{2} \tr \hat{T}-\Omega^2 \hat{S}.
\end{equation}
The Einstein-Maxwell stress-energy tensor in null coordinates is given by

$$T=\frac{e^2\Omega^2}{8\pi r^4}dudv+ \frac{e^2}{8 \pi r^2}\slashed{g}_{\s^2},$$
 where $e$ is the total charge (which is topological, because there is no source term in the Maxwell equations). Consequently, 
 
 $$\hat{T}=\left[\frac{e^2}{2 r^4}+\frac{{\Lambda}}{2}\right]\Omega^2 dudv+\left[\frac{e^2}{2r^4}-\frac{\Lambda}{2}\right]r^2 \slashed{g}_{\s^2}.$$ Now we see that
 
$$\frac{1}{2}\tr \hat{T}-\hat{S}=-\left(\frac{e^2}{2r^4}+\frac{\Lambda}{2}\right)-\left(\frac{ e^2}{2r^4}-\frac{\Lambda}{2}\right)=-\frac{e^2}{r^4}.$$

Consequently, equation (\ref{eenull1}) becomes
\begin{equation}
\label{eenull1b}
\partial_u\partial_v \log \Omega^2=\frac{2}{r^2}\partial_u r \partial_v r+\frac{\Omega^2}{2r^2}-\frac{e^2}{r^4}\Omega^2.
\end{equation}
We can also rewrite (\ref{eq:es4}) in null coordinates. We have that

\begin{align*}
\nabla_u \nabla_v r&= \partial_u \partial_v r,\\
0&=\nabla_u \nabla_u r=\partial_u\partial_u r-\Gamma^u_{uu} \partial_u r,\\
0&=\nabla_v \nabla_v r=\partial_v\partial_v r-\Gamma^v_{vv} \partial_u r.
\end{align*}
Consequently,
\begin{equation*}
\Omega^{-2} \partial_u \partial_u r-(\Omega^{-4}\partial_u \Omega^2) \partial_u r=0,
\end{equation*}
and similarly for derivatives with respect to $v$. The above expression gives us \emph{Raychaudhuri's equations}:
\begin{align}
\label{ray1}
\partial_u(\Omega^{-2} \partial_u r)&=0,\\
\label{ray2}
\partial_v(\Omega^{-2} \partial_v r)&=0.
\end{align}

Furthermore,
\begin{equation*}
\begin{split}
\partial_u \partial_v r&=\frac{1}{2r} (1-\partial^cr \partial_c r)g_{uv}-\frac{r \Omega^2}{2}\left( \frac{e^2}{2r^4}+\frac{\Lambda}{2}\right)+\frac{r \Omega^2}{2} \left( \frac{e^2}{r^4}+\Lambda \right)\\
&= -\frac{\Omega^2}{4r}-\frac{1}{r} \partial_u r \partial_v r+\frac{e^2 \Omega^2}{4 r^3}+ \frac{\Lambda}{4} r \Omega^2.
\end{split}
\end{equation*}
Rewriting the above expression, we arrive at the final equation
\begin{equation}
\label{eefin}
\partial_u \partial_v r= -\frac{1}{r} \partial_u r \partial_v r+ \left( e^2-r^2+\Lambda r^4 \right)\frac{\Omega^2}{4r^3}.
\end{equation}
The equations (\ref{eenull1b}), (\ref{ray1}), (\ref{ray2}) and (\ref{eefin}) are Einstein's equations in null coordinates for the spherically symmetric Einstein-Maxwell system with positive cosmological constant.

A useful quantity to consider is the Hawking mass, defined by
\begin{equation*}
m(r):=r\left(\frac{1}{2}+2\Omega^{-2}\partial_ur\partial_vr\right).
\end{equation*}
By filling in the equations (\ref{ray1}), (\ref{ray2})  and (\ref{eefin}) it follows that
\begin{align*}
\partial_u m= \frac{\partial_u r}{2r^2}(e^2+\Lambda r^4),\\
\partial_v m= \frac{\partial_v r}{2r^2}(e^2+\Lambda r^4).
\end{align*}
Consequently, the \emph{renormalised mass} $\varpi:=m+\frac{e^2}{2r}-\frac{\Lambda}{6}r^3$ satisfies
\begin{equation}
\label{eq:varpi}
\partial_u\varpi=\partial_v\varpi=0.
\end{equation}
We can therefore take $\varpi=M$, where $M\in \R$ is constant.
\bibliography{mybib.bib}
\bibliographystyle{amsplain}

\end{document}